\newcolumntype{C}{>{\centering\arraybackslash}p{1.2em}}
\definecolor{gtbg}{RGB}{210,245,210}
\definecolor{mdsbg}{RGB}{255,230,150}
\definecolor{cedbg}{RGB}{255,210,210}
\definecolor{shade}{RGB}{248,248,248}
\definecolor{good}{RGB}{0,120,0}
\definecolor{bad}{RGB}{180,0,0}
\newcommand{\ok}[1]{\textcolor{good}{\textbf{#1}}}
\newcommand{\err}[1]{\textcolor{bad}{\textbf{#1}}}
\theoremstyle{plain} 
\newtheorem{theorem}{Theorem}[section]
\newtheorem{proposition}[theorem]{Proposition}
\newtheorem{lemma}[theorem]{Lemma}
\theoremstyle{definition} 
\newtheorem{definition}[theorem]{Definition}
\theoremstyle{remark}
\renewcommand{\thetable}{\arabic{table}}    
\renewcommand{\fnum@table}{\textbf{Table \thetable.}}
\newcommand{\appendixheader}{%
  \clearpage
  \section*{Appendix}
  \begin{center}
    \vspace{0.5em}
    \rule{0.4\linewidth}{0.4pt}\hspace{1em}$\blacklozenge$\hspace{1em}\rule{0.4\linewidth}{0.4pt}
    \vspace{1.5em}
  \end{center}%
}
\begin{document}
	

\newcommand{\julien}[1]{#1}

\newcommand{\sebastien}[1]{#1}

\newcommand{\sebastienBis}[1]{#1}

\renewcommand{\sectionautorefname}{Sec.}
\renewcommand{\subsectionautorefname}{Sec.}
\renewcommand{\subsubsectionautorefname}{Sec.}

\renewcommand{\figureautorefname}{Fig.}
\renewcommand{\tableautorefname}{Tab.}

\newcommand{\N}{N}  

\newcommand{\RNB}{\mathbb{R}}      

\newcommand{\NNB}{\mathbb{N}}      

\newcommand{\algebra}{\mathcal{M}}  

\newcommand{\borelian}{\mathcal{B}}

\newcommand{\candidate}{\ensuremath{B}}

\newcommand{\candidateTwo}{\ensuremath{C}}

\newcommand{\frechetEnergy}{\mathcal{E}}

\newcommand{\measure}{\mu}    

\newcommand{\completion}{\mathcal{Z}}

\newcommand{\scalarField}{\ensuremath{f}}   

\newcommand{\variableTime}{\ensuremath{t}}  

\newcommand{\manifold}{\mathcal{M}}  

\newcommand{\dimensionvariable}{\ensuremath{d}}

\newcommand{\timedScalarFieldsSeq}{\ensuremath{U}}  

\newcommand{\PDSeq}{\ensuremath{V}}  

\newcommand{\sampleTVPD}{\ensuremath{V}}  

\newcommand{\divisibilitySymbol}{\mid} 

\newcommand{\sublevelset}{L^{-}_{w}}     

\newcommand{\PDS}{\mathcal{D}}   

\newcommand{\PD}{\ensuremath{X}} 

\newcommand{\birthPair}{\ensuremath{b}} 

\newcommand{\deathPair}{\ensuremath{d}} 

\newcommand{\persistancepair}{\ensuremath{x}} 

\newcommand{\persistancepairTwo}{\ensuremath{y}} 

\newcommand{\PDBijection}{\psi}        

\newcommand{\floorCost}{\ensuremath{c}} 

\newcommand{\wasserstein}{\ensuremath{W}}  

\newcommand{\diago}{\Lambda}  

\newcommand{\diagoProj}[1]{\pi\left(#1\right)}    

\newcommand{\deletionSet}{\mathcal{D}} 

\newcommand{\substitutionSet}{\mathcal{S}} 

\newcommand{\insertionSet}{\mathcal{I}} 

\newcommand{\morseIndex}{\mathcal{I}} 

\newcommand{\persistencePairNumber}{\ensuremath{I}} 

\newcommand{\persistencePairNumberBis}{\ensuremath{J}} 

\newcommand{\persistencePairNumberBisBis}{\ensuremath{K}} 

\newcommand{\metricSpace}{\ensuremath{\mathcal{X}}}

\newcommand{\metricDistance}{\ensuremath{d}}

\newcommand{\paramDelta}{\Delta}     

\newcommand{\paramDeltaone}{\Delta_1}  

\newcommand{\paramDeltatwo}{\Delta_2}  

\newcommand{\paramWeight}{\alpha}    

\newcommand{\paramPenalty}{\beta}            

\newcommand{\scalarVariableOne}{\varepsilon} 

\newcommand{\scalarVariableTwo}{\gamma}

\newcommand{\paramIntervalSize}{\eta}        

\newcommand{\paramSGD}{\rho_{\mathrm{s}}}

\newcommand{\paramGGD}{\ensuremath{k}}

\newcommand{\paramGGDBis}{\ensuremath{M}} 

\newcommand{\stepGre}{\rho_{\mathrm{g}}}

\newcommand{\paramGD}{\ensuremath{T}}

\newcommand{\paramGDBis}{\ensuremath{t}}  

\newcommand{\WtwoGeodesic}{\gamma}

\newcommand{\metricGeodesic}{\gamma}

\newcommand{\boundarySet}{\ensuremath{A}}

\newcommand{\PASet}{\mathcal{A}}

\newcommand{\initDPOne}{\ensuremath{A}}

\newcommand{\initDPTwo}{\ensuremath{B}}

\newcommand{\Subdspace}{\ensuremath{s}}   

\newcommand{\TVPDspace}{\ensuremath{\mathcal{S}}}  

\newcommand{\CTVPDspace}{\ensuremath{C}}   

\newcommand{\ITVPDspace}{\ensuremath{I}}   

\newcommand{\PCTVPDspace}{\mathrm{PC}}     

\newcommand{\Img}{\mathrm{Im}}   

\newcommand{\Dom}{\mathrm{Dom}}   

\newcommand{\SubdInterval}{I}    

\newcommand{\TVPDp}{\ensuremath{P}} 

\newcommand{\TVPDq}{\ensuremath{Q}}

\newcommand{\TVPDf}{\ensuremath{F}}

\newcommand{\TVPDg}{\ensuremath{G}}

\newcommand{\TVPDX}{\ensuremath{X}}

\newcommand{\TVPDY}{\ensuremath{Y}}

\newcommand{\functionf}{\ensuremath{f}}

\newcommand{\functiong}{\ensuremath{g}}

\newcommand{\partialAssignment}{\ensuremath{f}}

\newcommand{\partialAssignmentBis}{\ensuremath{g}}

\newcommand{\partialAssignmentBisBis}{\ensuremath{h}}

\newcommand{\dilatedWtwoGeodesic}{\mathcal{H}}

\newcommand{\geodesicCED}{\ensuremath{G}}

\newcommand{\TDAP}{TDA}

\newcommand{\TDAPP}{TDA\ }

\newcommand{\localDistance}{\ensuremath{d}}

\newcommand{\localDistanceAppendix}{\ensuremath{D}}

\newcommand{\assignmentCost}{\operatorname{cost}} 

\newcommand{\TVPDP}{TVPD\ }

\newcommand{\TVPDPP}{TVPD}

\newcommand{\TVPDsP}{TVPDs\ } 

\newcommand{\TVPDsPP}{TVPDs} 

\newcommand{\TVPDM}{\mathrm{TVPD}}

\newcommand{\CEDP}{CED\ }  

\newcommand{\CEDPP}{CED}  

\newcommand{\CEDM}{\mathrm{CED}}

\newcommand{\DP}{\delta}

\newcommand{\dimension}{\ensuremath{d}}

\newcommand{\intervalBounda}{\ensuremath{a_{i}}}

\newcommand{\intervalBoundaBis}{\ensuremath{a_{i,n}}}

\newcommand{\intervalBoundaBisBis}{\ensuremath{a_{i,n+1}}}

\newcommand{\intervalBoundb}{\ensuremath{b_{i}}}

\newcommand{\intervalBoundc}{\ensuremath{c_{j}}}

\newcommand{\intervalBoundd}{\ensuremath{d_{j}}}

\newcommand{\criticalPointOne}{\ensuremath{c}}

\newcommand{\criticalPointTwo}{\ensuremath{c'}}

\newcommand{\limitl}{\ensuremath{l}}

\newcommand{\variablei}{\ensuremath{i}}

\newcommand{\variablej}{\ensuremath{j}}

\newcommand{\variablek}{\ensuremath{k}}

\newcommand{\variableK}{\ensuremath{K}}

\newcommand{\variableKLips}{\ensuremath{K}}

\newcommand{\variablel}{\ensuremath{l}}

\newcommand{\variableL}{\ensuremath{L}}

\newcommand{\variableM}{\ensuremath{M}}

\newcommand{\variablen}{\ensuremath{n}}

\newcommand{\variableN}{\ensuremath{N}}

\newcommand{\variabler}{\ensuremath{r}}

\newcommand{\variables}{\ensuremath{s}}

\newcommand{\variablev}{\ensuremath{v}}

\newcommand{\variablew}{\ensuremath{w}}

\newcommand{\variablex}{\ensuremath{x}}

\newcommand{\variablexBis}{\ensuremath{x}}

\newcommand{\variablexBisBis}{\ensuremath{x}}

\newcommand{\variabley}{\ensuremath{y}}

\newcommand{\variableyMetricSpace}{\ensuremath{y}}

\newcommand{\variablez}{\ensuremath{z}}

\newcommand{\variableZ}{\ensuremath{Z}}

\newcommand{\rayon}{\ensuremath{r}}

\newcommand{\ball}{\mathcal{B}}

\newcommand{\intervalI}{\ensuremath{I}}

\newcommand{\averageNumberDeltaS}{\ensuremath{n}}

\newcommand{\averageNumberPersistencePair}{\ensuremath{p}}

\newcommand{\averageNumberPersistencePairPerTVPD}{\ensuremath{P}}

\newcommand{\sampleSize}{\ensuremath{N}}

\newcommand{\bigLandau}{\ensuremath{\mathcal{O}}}


	\title{Continuous Edit Distance, Geodesics and Barycenters of Time-varying Persistence Diagrams}
	
	\author{Sebastien Tchitchek, Mohamed Kissi, Julien Tierny
		\thanks{S. Tchitchek, M. Kissi, and J. Tierny are with the CNRS and Sorbonne
Universite.
			E-mail: \{sebastien.tchitchek, mohamed.kissi\}@etu.sorbonne-universite.fr, julien.tierny@sorbonne-universite.fr} %
		\thanks{Manuscript received XXXX XX, XXXX; revised XXXXX XX, XXXX.}
        }
	
	\markboth{Journal of \LaTeX\ Class Files,~Vol.~14, No.~8, August~2021}%
	{Shell \MakeLowercase{\textit{et al.}}: A Sample Article Using IEEEtran.cls for IEEE Journals}
	
	
	\maketitle

		\begin{abstract} We introduce the \emph{Continuous Edit Distance} (\CEDPP), a
geodesic and elastic distance for \emph{time-varying persistence diagrams}
(\TVPDsPP).
\julien{The}
	\CEDP extends edit-distance ideas to \julien{\TVPDsP}
	by combining local
substitution costs with penalized deletions/insertions, controlled by two
parameters: \(\paramWeight\) (trade-off between temporal misalignment and
diagram discrepancy) and \(\paramPenalty\) (gap penalty).
We \julien{also}
provide an explicit
construction of
\CEDPP-geodesics. Building on these ingredients, we present two practical
barycenter solvers—\julien{one} stochastic and \julien{one} greedy—that
monotonically decrease the \CEDP Fréchet energy.
Empirically, \julien{the} \CEDP is
robust to additive perturbations
\julien{(both temporal and spatial),}
recovers temporal shifts, and supports
\julien{temporal pattern}
search.
\julien{On real-life datasets,}
\julien{the}
\CEDP \julien{achieves clustering performance comparable or better than}
standard
elastic dissimilarities, while our
\julien{clustering based on \CEDPP-barycenters yields superior
classification results.}
\julien{Overall, the}
\CEDP equips \TVPDP
analysis with a principled distance, interpretable geodesics, and practical
barycenters,
\julien{enabling}
alignment, \julien{comparison,} averaging, and
clustering directly in the space of \TVPDsPP.
A C++ implementation \julien{is provided for}
reproducibility \julien{at the following address 
\href{https://github.com/sebastien-tchitchek/ContinuousEditDistance}{https://github.com/sebastien-tchitchek/ContinuousEditDistance}}.
    
    \end{abstract}

	\begin{IEEEkeywords}
    Topological data analysis, persistent homology, time-varying data.
	\end{IEEEkeywords}

		\section{Introduction}\label{sec:introduction}

\IEEEPARstart{A}{dvancements} in data simulation and acquisition techniques have
led to
\julien{an}
ever-increasing data complexity and volume. Consequently, traditional
analytical approaches often become inadequate, as they were not designed to
handle such intricate and voluminous data. Thus, the introduction of
abstractions capable of summarizing and interpreting this data becomes necessary
to enable its analysis.

\julien{Topological data analysis (TDA) \cite{book} provides a set of methods
for the efficient encoding of the topological features of a dataset into
concise topological representations, such as persistent diagrams
\cite{edelsbrunner02}, merge and contour trees \cite{carr00,
gueunet_tpds19}, Reeb graphs \cite{Parsa12, gueunet_egpgv19}, or Morse-Smale
complexes \cite{Defl15,gyulassy_vis18}. These representations have been
successfully employed in a variety of applications such as
combustion \cite{bremer_eScience09, gyulassy_ev14},
material sciences \cite{beiNuclear16,soler2019ranking},
fluid dynamics \cite{kasten_tvcg11, nauleau_ldav22},
chemistry \cite{Malgorzata23, daniel_vis2025}, astrophysics
\cite{sousbie11, shivashankar2016felix}.
or geometry processing \cite{vintescu_eg17} and
data science \cite{ChazalGOS13, topoMap}. Among the descriptors studied in
TDA, we focus in this work on a very popular representation: the persistence
diagram.}

%

In addition to the increasing geometrical complexity and volume of datasets, users can face time-varying data, where a given phenomenon is represented not by a single dataset but by a succession of datasets. In the framework of persistent homology, if each dataset is converted into a persistence diagram, the user obtains a time-indexed family of persistence diagrams. We will refer to each such family as a time-varying persistence diagram (\TVPDPP) hereafter.

In that context, \TVPDsP have recently found use as a means to study data that
evolves over time. Some applications have been protein folding trajectory
analysis \cite{cohen2006vines}, music classification
\cite{bergomi2020homological}, time-varying scalar field analysis
\cite{soler2019ranking}, and analysis of dynamic functional brain
connectivity \cite{yoo2016topological}.

However, to integrate \TVPDsP into traditional data analysis pipelines,
additional tools are required for their study. In particular, essential
components such as an elastic distance, geodesics, and barycenter computation
methods are currently lacking. Indeed, a distance enables the use of specific
search methods \cite{chavez2001searching}, piecewise constant approximations
\cite{marteau2008time}, and machine-learning methods with theoretical guarantees
\cite{cover1967nearest, carlsson2010characterization,hastie2009elements}.
Geodesics allow, for example, principal geodesic analysis
\cite{fletcher2004principal, pont2022principal} to facilitate variability
analysis and visualization. Barycenters are necessary for performing $k$-means
clustering and to obtain the corresponding centroids
\cite{vidal2019progressive}. Equally important, to deal with the asynchronous
nature of
\julien{a given phenomenon}
over time,
\julien{one needs}
an elastic distance---that is, a
distance that remains robust to time shifts and time dilatation.

This paper addresses this problem by introducing a geodesic and elastic distance
between \TVPDsPP, called Continuous Edit Distance (\CEDPP), and whose
barycenters can be computed.
As its name suggests, \julien{the} \CEDP can be seen as an extension of the
Edit distance with Real Penalty (ERP) \cite{chen2004marriage}, \julien{from
numerical time series to TVPDs.}
Intuitively, our
distance measures the minimal cost to transform one \TVPDP into another,
through elementary operations---deletion, substitution, and insertion.
\julien{Moreover,}
\julien{the}
\CEDP
\julien{can be used as}
a visual comparison tool,
\julien{capturing similarities between}
two \TVPDsPP.


\subsection{Related work}\label{sec:related_work}

The literature related to our work can be classified into two main families: \textit{(i)} topological methods, and \textit{(ii)} elastic dissimilarities.

\noindent\textbf{(i) Topological methods:} Several tools and methods have been
developed for analysing persistence diagrams and \TVPDsPP. Building on
optimal-transport concepts
\cite{villani2008optimal,ambrosio2008gradient,mileyko2011probability} the
Wasserstein distance
\julien{is a popular metric}
for persistence diagrams.
This
distance exhibits stability properties that make it particularly well suited to
the topological analysis  scalar fields
\cite{cohen2005stability}. Based on the Wasserstein metric, several works have
aimed to compute a representative barycenter for a persistence diagram sample.
The pioneering algorithm was proposed by Turner et al. with Frechet mean
computation \cite{turner2013frechetmeansdistributionspersistence}, while later
Lacombe et al. \cite{lacombe2018large} introduced an entropy-regularised
optimal-transport variant, and Vidal et al. \cite{vidal2019progressive}
presented a progressive computation method.

Cohen-Steiner et al. \cite{cohen2006vines} introduced vineyards as the
piecewise-linear trajectories traced by the points of a persistence diagram as
the underlying data evolves piecewise-linearly over time. Vineyards, which
reveal how individual birth-death pairs evolve through time, provide an
interpretable topological summary for dynamic data
\cite{soler2019ranking,cohen2006vines}, with their
piecewise-linear structure being exploited to track persistence pairs
efficiently.

Turner \cite{turner2013frechetmeansdistributionspersistence} defines vineyards as equally long continuous \TVPDsP and, by encoding them as vineyard modules whose behaviour is captured by a matrix representation, furnishes under certain conditions a tractable framework for analysing their evolving topological structure.

Munch et al.,\cite{munch2015probabilistic} replace the classical Fréchet mean with a probabilistic Fréchet mean (PFM), a probability measure on diagram space.  Evaluated along equal-length continuous vineyards, the PFM yields a continuous mean vineyard (avoiding the discontinuities of the classical mean) and can be stably computed on bootstrap samples of time-varying point clouds, making it a tool for statistical analysis.

\noindent\textbf{(ii) Elastic dissimilarities:}  In this section, we briefly review the principal elastic dissimilarities proposed in the literature, from foundational works to more recent contributions.

Firstly,
\julien{the}
Fréchet distance \cite{frechet1906quelques, alt1995computing} is originally
defined as an elastic distance between curves by considering the minimal leash
length required to traverse them in a continuous and monotonic fashion.
Naturally suited for comparing time-parameterized trajectories, it captures both
spatial proximity and ordering of points. However,
\julien{the}
Fréchet distance is known to
be
sensitive to small perturbations, making it unstable in the presence
of noise or irregular sampling \cite{driemel2013jaywalking}. This sensitivity
limits its direct applicability in real-world time series analysis, especially
in domains where robustness is essential.

Levenshtein et al. \cite{levenshtein1966binary} introduce
\julien{the}
edit distance as
\julien{an}
elastic distance between
\julien{character}
strings, defined as the minimum
number of character insertions, deletions, and substitutions required to
transform one string into another \cite{navarro2001guided}. Initially designed
for applications in error correction and computational linguistics, it has since
been widely used in fields such as bioinformatics, natural language processing,
and information retrieval \cite{gusfield1997algorithms}.

Alternatively, Sakoe and Chiba \cite{sakoe2003dynamic} define dynamic time
warping (DTW), an elastic dissimilarity between time series, as an approach to
continuous speech recognition.
\julien{Due }to its efficiency it
is
extensively
applied to other domains \cite{berndt1994using, rath2003word,
vial2009combination}. Subsequently, several methods of time series averaging for
DTW have been developed and applied
\cite{gupta1996nonlinear,petitjean2011global,petitjean2012summarizing,
schultz2018nonsmooth}. However, DTW is not a distance, as it does not satisfy
the triangle inequality nor identity of indiscernibles.

Following \cite{levenshtein1966binary}, various extensions and approximations of
\julien{the} edit distance have been proposed to improve computational
efficiency or adapt it to specific data types. Among these, the Edit distance
with Real Penalty (ERP), introduced by Chen and Ng \cite{chen2004marriage} is
an adaptation of the edit distance to time series
\cite{zhang2010classification,bagnall2017great}. An arbitrary reference element
is first fixed, then one seeks to transform a time series into another at
minimal cost, through elementary operations — deletion, substitution, and
insertion. A deletion or insertion of an element from a time series has a cost
proportional to its deviation from the reference element. A substitution between
two elements from different time series has a cost proportional to their
difference.

Later,
\julien{the}
Time Warp Edit Distance (TWED) is introduced by Marteau \cite{marteau2008time}
as an elastic distance for time series that simultaneously accounts for both
temporal distortions and amplitude variations.
\julien{The}
TWED gives an alternative to the
ERP by incorporating time stamps into the cost scheme, and by modifying the
costs of the elementary operations.
\julien{The}
TWED has been successfully applied in time
series classification \cite{serra2014empirical, marteau2008time}, and clustering
\cite{tang2015flight}, particularly in contexts requiring metric consistency.
\julien{However, the} TWED, despite being a distance, is \sebastien{not well 
adapted to the geometry of the space of persistence diagrams. Indeed, even when 
\julien{the}
TWED is adapted to the persistence-diagram setting, the cost of deleting (or 
inserting) an element does not depend on its persistence. By contrast, in an 
ERP-based 
\julien{model,}
\julien{if}
the empty diagram is used as a reference, the 
deletion/insertion cost of an element 
\julien{becomes}
proportional to its persistence. This 
is consistent with the standard geometric interpretation in TDA, where more 
persistent features are both more significant and more expensive to remove.}

Soft Dynamic Time Warping (Soft-DTW) is introduced by Cuturi and Blondel
\cite{cuturi2017soft} as a smooth, differentiable relaxation of the classical
DTW, obtained by replacing the hard-minimum operator in DTW’s cost with a
parameter-controlled soft-minimum. Controlled by this parameter
\(\scalarVariableTwo>0\), Soft-DTW converges to the exact DTW distance as
\(\scalarVariableTwo\to 0\), while for
\julien{a}
finite \(\scalarVariableTwo\) it yields
a differentiable objective whose gradients can be efficiently computed. This
differentiability enables seamless integration as a loss function in
gradient-based optimisation pipelines, leading to applications in time-series
averaging, prototype learning, and end-to-end neural network training
\cite{tagliaferri2023applications,cuturi2017soft, ma2023n400}. Although Soft-DTW
is an elastic dissimilarity, it does not constitute a distance, violating
identity of indiscernibles and triangle inequality, but its smoothness provides
favourable optimisation landscapes compared to the classical
DTW\cite{tagliaferri2023applications}.

\subsection{Contributions}\label{sec:contributions}

This paper makes the following new contributions:

\begin{itemize}

\item
\noindent\textit{A practical distance between \TVPDsPP:} We extend the Edit
Distance with Real penalty \cite{chen2004marriage} to
\TVPDsPP. Unlike other state-of-the-art distances
that can be adapted to \TVPDsP framework, our distance allows the computation of
barycenter\julien{s} and geodesic\julien{s} between \TVPDsPP.

\item
\noindent\textit{An \julien{algorithm} for computing geodesics between
\TVPDsPP:} Given our metric, we present a simple three-step approach for
computing geodesics between \TVPDsPP.

\item
\noindent\textit{\julien{An approach} for computing the barycenter of \julien{a
set} of \TVPDsP:} We extend two popular minimization schemes for computing
barycenters of \julien{a set of} \TVPDP:
\julien{a deterministic scheme}
(imitating greedy
subgradient descent), and
\julien{a stochastic one}
(imitating stochastic subgradient descent). Both \julien{versions} are
iterative with monotone improvement and come with practical stopping criteria.

\item
\noindent\textit{An application to pattern matching:} We present
an application
to \julien{temporal} pattern matching between \TVPDsPP.

\item
\noindent\textit{An application to clustering:} We illustrate \julien{the
practical relevance of our barycenters in clustering problems.}

\item
\noindent\textit{Implementation:} We provide a C++ implementation of our
algorithms that can be used for reproducibility (\julien{available at this address \href{https://github.com/sebastien-tchitchek/ContinuousEditDistance}{https://github.com/sebastien-tchitchek/ContinuousEditDistance}}).
\end{itemize}

		\section{Preliminaries}\label{sec:preliminaries}

\noindent This section presents the theoretical background required for
\julien{the presentation}
our
work. It
\julien{formalizes}
our input data
\julien{and introduces persistent diagrams and a typical metric for their
comparison.}
We refer the reader to standard textbooks
\cite{book,oudot2015persistence} for an introduction to
\sebastien{TDA}.

\subsection{Input data}\label{sec:input_data}

We define a timed PL-scalar field as an ordered pair $(\functionf,
\variableTime)$, with $	\variableTime\in\RNB$, and $\functionf$ a
piecewise-linear (PL) scalar field $\functionf:\manifold_\functionf\to\RNB$
defined on a PL $(\dimension_{\manifold_\functionf})$--manifold
$\manifold_\functionf$ with $\smash{\dimension_{\manifold_\functionf}\le 3}$
\julien{in our applications}.

Let \( \paramDelta \in \,(\,0, +\infty) \,\). Each input in our study is a
sequence of timed PL-scalar fields
$\timedScalarFieldsSeq_{\variablen}=\bigl((\functionf_\variablen,
\variableTime_\variablen)\bigr)_{0\leq \variablen\leq
\variableN_\timedScalarFieldsSeq}$, with
$\variableN_\timedScalarFieldsSeq\in\NNB^*$, such that $\paramDelta
\,\mid\,(\variableTime_{\variableN_\timedScalarFieldsSeq}-	\variableTime_0)$
(i.e. $\exists \, \variablek \in \NNB^*,  \,
(\variableTime_{\variableN_\timedScalarFieldsSeq}-	\variableTime_0)=\variablek
\cdot \paramDelta \, $). If $\variablen\in\NNB$, for any threshold
$\variablew\in\RNB$, we write
$\sublevelset(\functionf_\variablen)=\functionf^{-1}_{\variablen}\bigl((
-\infty, \variablew]\bigr)$ for the sub-level set of $\functionf_\variablen$ at
$\variablew$. As $\variablew$ grows, the topology of
$\sublevelset(\functionf_\variablen)$ changes only at the critical values of
$\functionf_\variablen$. The corresponding critical points $\criticalPointOne\in
\manifold_{\functionf_{\variablen}}$ can be classified by their Morse index
$\morseIndex(\criticalPointOne)$: $0$ for minima, $1$ for $1$-saddles,
${{\dimension_{\manifold_{\functionf_{\variablen}}  }} -1}$ for
{$(\dimension_{\manifold_{\functionf_{\variablen}}}-1)$-saddles}, and
$\dimension_{\manifold_{\functionf_{\variablen}}  }$ for maxima. We assume, in
practice \cite{edelsbrunner2001hierarchical,edelsbrunner1990simulation},
\julien{that} all critical points are isolated and non-degenerate. Following
the Elder rule \cite{book}, every topological feature (i.e., connected
components, cycles, and voids) that appears during the sweep
$\variablew:-\infty\to+\infty$ of $\sublevelset(\functionf_\variablen)$ can be
associated with a pair of critical points
$(\criticalPointOne,\criticalPointTwo)$ such that
$\functionf_{\variablen}(\criticalPointOne)<\functionf_{\variablen}(
\criticalPointTwo)$ and $\mathcal \intervalI(\criticalPointOne)=\mathcal
\intervalI(\criticalPointTwo)-1$. The older point $\criticalPointOne$ marks the
birth of the feature, whereas the younger point $\criticalPointTwo$ signals its
death. The pair $(\criticalPointOne,\criticalPointTwo)$ is therefore called a
persistence pair. For instance, when two connected components merge at a
critical point $\criticalPointTwo$, the component that appeared last (the
youngest) vanishes while the oldest persists. Representing each persistence pair
by the coordinates
$(\birthPair,\deathPair)=\bigl(\functionf_{\variablen}(\criticalPointOne),
\functionf_{\variablen}(\criticalPointTwo)\bigr)$ produces a two--dimensional
multiset known as the persistence diagram, denoted $\PD_\variablen$. As shown in \autoref{fig:noNoisy-NoisyPersistenceDiagrams}, prominent topological features
correspond to pairs $(\birthPair,\deathPair)$ far from the diagonal $\Lambda
=\{(\birthPair,\deathPair)\in \RNB^2 \mid \birthPair=\deathPair\}$—that is,
pairs whose lifespan $\deathPair-\birthPair$ (called their persistence) is
large—whereas pairs generated by small-amplitude noise accumulate near the
diagonal. By repeating this procedure for each timed PL-scalar field
$(\functionf_\variablen,	\variableTime_\variablen)$ of
$\timedScalarFieldsSeq_\variablen=\bigl((\functionf_\variablen,
\variableTime_\variablen)\bigr)_{{0\leq \variablen\leq
\variableN_\timedScalarFieldsSeq}}$, we then obtain a sequence of timed
persistence diagrams $\PDSeq_\variablen=\bigl((\PD_\variablen,
\variableTime_\variablen)\bigr)_{{0\leq \variablen\leq \variableN_\PDSeq}}$,
with $\variableN_\PDSeq=\variableN_\timedScalarFieldsSeq$.

\begin{figure}
	\centering
\includegraphics[width=\linewidth]{./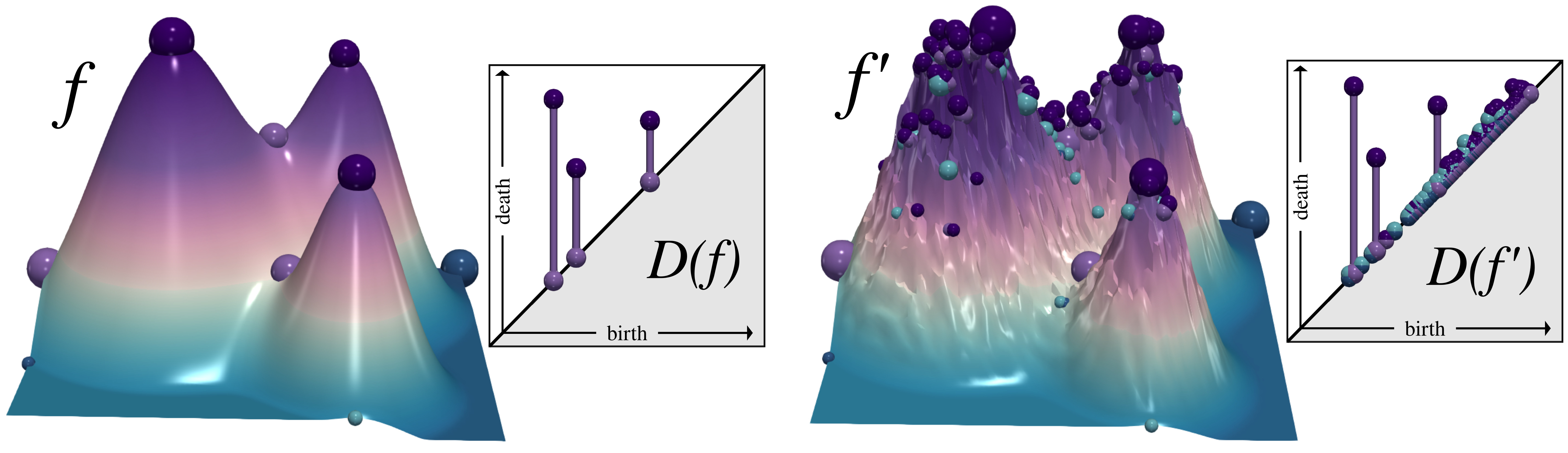}
	\caption{Persistence diagrams $\PDS(\functionf)$ and $\PDS(\functionf')$ of a
noise-free scalar field $\functionf$ (left), and of the same scalar field with
an additive background noise $\functionf'$ (right). The persistence diagrams
show the three main peaks as pairs with large persistence, whereas in the
noise-corrupted diagram $\PDS(\functionf')$ the many pairs with very small
persistence capture only the background
noise.}\label{fig:noNoisy-NoisyPersistenceDiagrams}
\end{figure}

\sebastien{Although our applications focus on timed PL-scalar fields,
the
framework
developed in
\julien{this}
paper
only takes as input such sequences
of timed persistence diagrams. As a result, it is compatible directly with other
types of input data that give rise to timed
persistence diagrams
\julien{(e.g., time-varying point clouds equipped with their
associated Vietoris--Rips filtrations).}}

In what follows, we
enumerate the $\persistencePairNumberBisBis$ points of a persistence diagram
$\PD$ as
$\PD=\{\persistancepair^{1},\dots,\persistancepair^{\persistencePairNumberBisBis
}\}$, and denote $\PD_{\varnothing}=\{\}$ the empty persistence diagram.

\subsection{\julien{Metric for} persistence
diagrams}\label{sec:persistence_diagrams}

Let
$\PD_1=\{\persistancepair_1^1,\dots,\persistancepair_1^{
\persistencePairNumberBisBis_1}\}$ and
$\PD_2=\{\persistancepair_2^1,\dots,\persistancepair_2^{
\persistencePairNumberBisBis_2}\}$ be two persistence diagrams.  To equalise
their cardinalities, we augment each diagram with the diagonal projections of
the off–diagonal points of the other diagram: $\PD_1^{\ast} = \PD_1 \cup
\bigl\{\pi(\persistancepair)\mid \persistancepair\in
\PD_2\setminus\Lambda\bigr\}, $ $ \PD_2^{\ast} = \PD_2 \cup
\bigl\{\pi(\persistancepair)\mid \persistancepair\in
\PD_1\setminus\Lambda\bigr\}$, where the projection $\pi(\birthPair,\deathPair)$
is defined by
$\pi(\birthPair,\deathPair)=\bigl((\birthPair+\deathPair)/2,(\birthPair+
\deathPair)/2\bigr)$.  The sizes of the augmented diagrams coincide, and we set
$\persistencePairNumberBisBis:=|\PD_1^{\ast}|=|\PD_2^{\ast}|$
\julien{and note}
$\PD_{1}^{*}=\{{}^{}\persistancepair_{*1}^{1},\dots,\persistancepair_{*1}^{
\persistencePairNumberBisBis}\}$, and
$\PD_{2}^{*}=\{{}^{}\persistancepair_{*2}^{1},\dots,\persistancepair_{*2}^{
\persistencePairNumberBisBis}\}$.

Let $\intervalI_\persistencePairNumberBisBis=\{1,\dots,\persistencePairNumberBisBis\}$.  A bijection $\psi:\intervalI_\persistencePairNumberBisBis\to \intervalI_\persistencePairNumberBisBis$ specifies a one–to–one matching between the points of $\PD_{1}^{*}$ and $\PD_{2}^{*}$.  We equip $\RNB^2$ with the cost
\( \floorCost(\persistancepair,\persistancepairTwo)= 0, \text{ if }
\persistancepair\in\Lambda\text{ and }\persistancepairTwo\in\Lambda,\,\text{and
}\floorCost(\persistancepair,\persistancepairTwo)=\lVert
\persistancepair-\persistancepairTwo \rVert_2^{\,2}  \text{ otherwise}\),
\julien{where}
$\lVert .\rVert_2$  \julien{is} the \julien{Euclidian} distance
\julien{in}
$\RNB^2$. The
$2$–Wasserstein distance
between the
persistence diagrams $\PD_1^{*}$ and $\PD_2^{*}$  is defined by,
\begin{equation}
\wasserstein_2(\PD_1^{*},\PD_2^{*})=\min_{\psi:\,{bijective}\atop{
\intervalI_\persistencePairNumberBisBis\to
\intervalI_\persistencePairNumberBisBis}}\Bigl(\sum_{\variablej=1}^{
\persistencePairNumberBisBis}
\floorCost\bigl(\persistancepair_{*1}^{\variablej},\persistancepair_{*2}^{\psi(
\variablej)}\bigr)\Bigr)^{1/2}.
	\label{eq:wasserstein}
\end{equation}

Geometrically, $\wasserstein_2$ is the least root–mean–square cost required to transport $\PD_{1}^{*}$ to $\PD_{2}^{*}$ while allowing any point to slide onto the diagonal at zero expense (see \autoref{fig:2-2-L2-PD-Wasserstein_Distance}).

\begin{figure}
	\centering
	\includegraphics[width=\linewidth]{./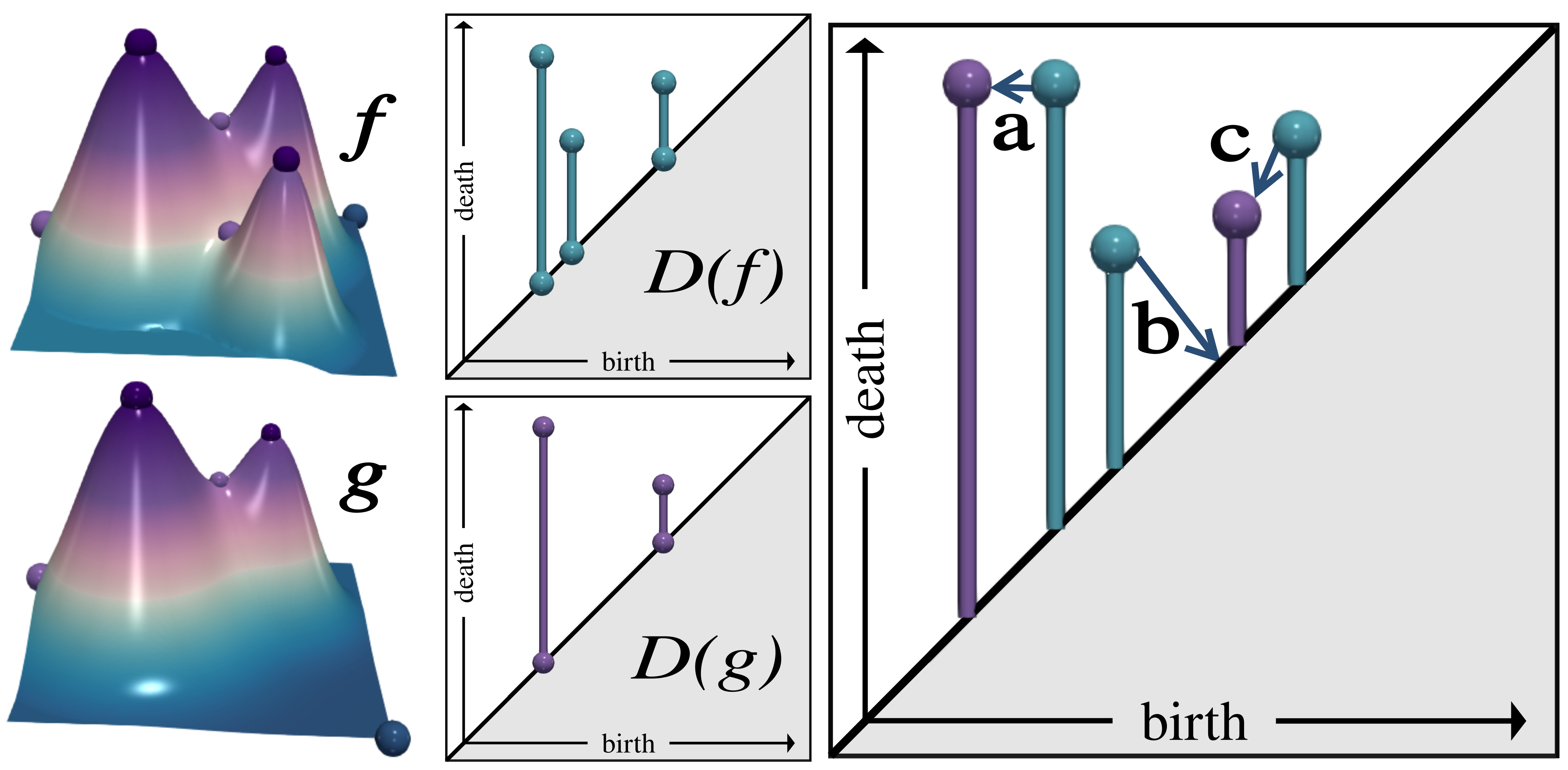}
	\caption{Left: two toy scalar fields $\functionf$ (top) and $\functiong$
(bottom). Center: their persistence diagrams $\PDS(\functionf)$ and
$\PDS(\functiong)$. Right: the optimal 2-Wasserstein matching $\psi$ between
$\PDS(\functionf)$ and $\PDS(\functiong)$, represented by arrows. For
readability, only the persistence pairs of the unaugmented diagrams and the
off-diagonal matchings are displayed. The sum $a^{2}+b^{2}+c^{2}$ of the arrow
lengths equals
$\wasserstein_{2}\bigl(\PDS(\functionf),\PDS(\functiong)\bigr)^{2}$.}
\label{fig:2-2-L2-PD-Wasserstein_Distance}
\end{figure}
		\section{Continuous edit distance between \TVPDsPP}\label{sec:continuous_edit_distance_between_TVPDs}

\noindent In this section, we introduce our distance. We begin by specifying the
space of \TVPDsPP. Next, we provide an overview of \julien{the} \CEDP
and establish its basic properties.
\julien{We}
conclude by explaining how to obtain a \TVPDP from our input data, and by
presenting practical methods for computing the \CEDPP.

\subsection{\TVPDsPP}\label{sec:TVPDs}

We introduce a subfamily of \TVPDsPP---still referred to as \TVPDsPP---defined by piecewise-measurable functions (cf. Appendix \ref{app:appendixA} for measure-theoretic preliminaries) that satisfy certain conditions. Indeed, let \( (\PDS, \wasserstein_2) \) be the metric space of persistence diagrams with the $\wasserstein_2$ distance, provided with its Borel sigma-algebra \( \borelian(\PDS) \), \sebastien{and set a fixed non-empty subset \( \boundarySet \subsetneq \PDS \) (whose role and interpretation in the CED construction will be detailed in \autoref{sec:overview})}. Fix  a constant step $\paramDelta\in \,(\,0, +\infty) \,$, and let \( \TVPDspace^\paramDelta \) be the space of functions \( \TVPDf \) from \(dom\,\TVPDf = \bigcup_{\variablei \in \{1, \dots, \variableN_\TVPDf^\paramDelta\}} \intervalI_\variablei^\TVPDf \subset \RNB\) to \( \PDS \), with \( \variableN_\TVPDf^\paramDelta \in \NNB^* \), and \( ( \intervalI_\variablei^\TVPDf )_{\variablei \in \{1, \dots, \variableN_\TVPDf^\paramDelta\}} \) a disjoint family of intervals of \( \RNB \), such that:

\begin{itemize}
	\item \( \forall \variablei \in \{1, \dots, \variableN_\TVPDf^\paramDelta\}, \, \sup \intervalI_\variablei^\TVPDf - \inf \intervalI_\variablei^\TVPDf = \paramDelta \),
	\item \( \forall (\variablei, \variablei') \in \{1, \dots, \variableN_\TVPDf^\paramDelta\}^2,\ \forall (\variablex, \variablex') \in \intervalI_\variablei^\TVPDf \times \intervalI_{\variablei'}^\TVPDf,\ \variablei < \variablei' \Rightarrow \variablex < \variablex' \),
	\item \( \forall \variableTime \in dom\,\TVPDf,\, \wasserstein_2(\TVPDf(\variableTime), \boundarySet)>0 \),
	\item \( \forall \variablei \in \{1, \dots, \variableN_\TVPDf^\paramDelta\},\, \TVPDf|_{\intervalI_\variablei^\TVPDf} : \bigl( \intervalI_\variablei^\TVPDf,\ \sebastien{\borelian(\RNB)|_{\intervalI_\variablei^\TVPDf}} \bigr) \to \bigl( \PDS,\ \borelian(\PDS) \bigr) \) is \sebastien{Lebesgue}-measurable,
	\item \( \forall \variablei \in \{1, \dots, \variableN_\TVPDf^\paramDelta\},\quad \sebastien{\sup\{ \wasserstein_2(a,b) \mid a \in \mathrm{Im}(\TVPDf|_{\intervalI_\variablei^\TVPDf}),\ b } \in  \sebastien{\mathrm{Im}(\TVPDf|_{\intervalI_\variablei^\TVPDf}) \} < \infty }\).
\end{itemize}

We will omit the superscript $\paramDelta$ in the notation $\variableN_\TVPDf^\paramDelta$ whenever no ambiguity arises. Let \( (\TVPDf ,\TVPDg)\in (\TVPDspace^\paramDelta)^2 \), \( \variablei \in \{1, \dots, \variableN_\TVPDf\}, \variablej \in \{1, \dots, \variableN_\TVPDg\}\). \sebastien{We denote \( \TVPDf_\variablei := \TVPDf|_{\intervalI_\variablei^\TVPDf} \),} and we call \( \TVPDf_\variablei \) a \( \paramDelta \)-subdivision. We consider that two \( \paramDelta \)-subdivisions \( \TVPDf_\variablei, \TVPDg_\variablej \) are equal if: \( \inf \intervalI_\variablei^\TVPDf = \inf \intervalI_\variablej^\TVPDg \), \( \sup \intervalI_\variablei^\TVPDf = \sup \intervalI_\variablej^\TVPDg \), and \( \lambda ( \{ \variablex \in \intervalI_\variablei^\TVPDf \cap \intervalI_\variablej^\TVPDg \mid \TVPDf_\variablei(\variablex) \neq \TVPDg_\variablej(\variablex) \})  = 0  \). We denote \( \Subdspace^\paramDelta \) the \( \paramDelta \)-subdivision space.

In a similar way, let \(  (\TVPDf, \TVPDg) \in (\TVPDspace^\paramDelta)^2 \), with  \(\variableN_\TVPDf =  \variableN_\TVPDg\), we consider that $\TVPDf$, $\TVPDg$ are equal if, \( \forall \variablei \in \{1, \dots, \variableN_\TVPDf\},\,  \TVPDf_\variablei =  \TVPDg_\variablei\). More generally, two measurable applications $\TVPDf$ and $\TVPDg$ from $\metricSpace\subset\RNB $ to $\PDS$ are considered equal in our framework whenever \( \lambda ( \{ \variablex \in \metricSpace,\,  \TVPDf(\variablex) \neq \TVPDg(\variablex) \})  = 0  \).

Then, we call \TVPDsP elements of $\TVPDspace^\paramDelta$ (see
\autoref{fig:TVPD}). Observe that, under this definition—and since every
continuous map is Lebesgue-measurable (cf. Appendix \ref{app:appendixA})—any mapping $\TVPDf$ that
satisfies all the stated conditions, with the fourth condition replaced by,
\(\forall \variablei \in \{1, \dots,\variableN_\TVPDf\},\,
\TVPDf|_{\intervalI_\variablei^\TVPDf} : (\intervalI_\variablei^\TVPDf,\left|
\cdot \right|)\to( \PDS,\ \wasserstein_2)\) is continuous, belongs to
$\TVPDspace^\paramDelta$ and therefore is a \TVPDPP. In the
\julien{remaining,}
any such \TVPDPP, with the additional conditions $\forall \variablei
\in \{1,
\dots,\variableN_\TVPDf\},\,\exists(\limitl_\variablei^+,\limitl_\variablei^-)
\in\PDS^{\,2},\,\text{lim}_{	\variableTime\to (\inf
\intervalI_\variablei^\TVPDf)^+}\TVPDf|_{\intervalI_\variablei^\TVPDf}=
\limitl_\variablei^+\text{ and } \, \text{lim}_{	\variableTime\to (\sup
\intervalI_\variablei^\TVPDf)^-}\TVPDf|_{\intervalI_\variablei^\TVPDf}=
\limitl_\variablei^-$, will be referred to as continuous. We denote
$\CTVPDspace^\paramDelta$ this subset of $\TVPDspace^\paramDelta$ that contains
all continuous \TVPDsPP. Note that, as a result, every Cohen–Steiner
\julien{\cite{cohen2006vines}}
or Turner
\julien{\cite{turner2013frechetmeansdistributionspersistence}}
vineyard is a continuous \TVPDPP.

Finally, remark that for any $\TVPDp \in \TVPDspace^\paramDelta$, thanks to the imposed conditions, we can unambiguously denote $\TVPDp$ as a sequence $(\TVPDp_\variablei)_{1 \leq \variablei < \variableN_\TVPDp}.$ Moreover, if  $\variableN \in \{1, \dots, \variableN_\TVPDp\}$,  then $(\TVPDp_\variablei)_{1 \leq \variablei < \variableN}$  stay obviously in $\TVPDspace^\paramDelta$.

\begin{figure}[!t]  
	\centering
	\includegraphics[width=\linewidth]{./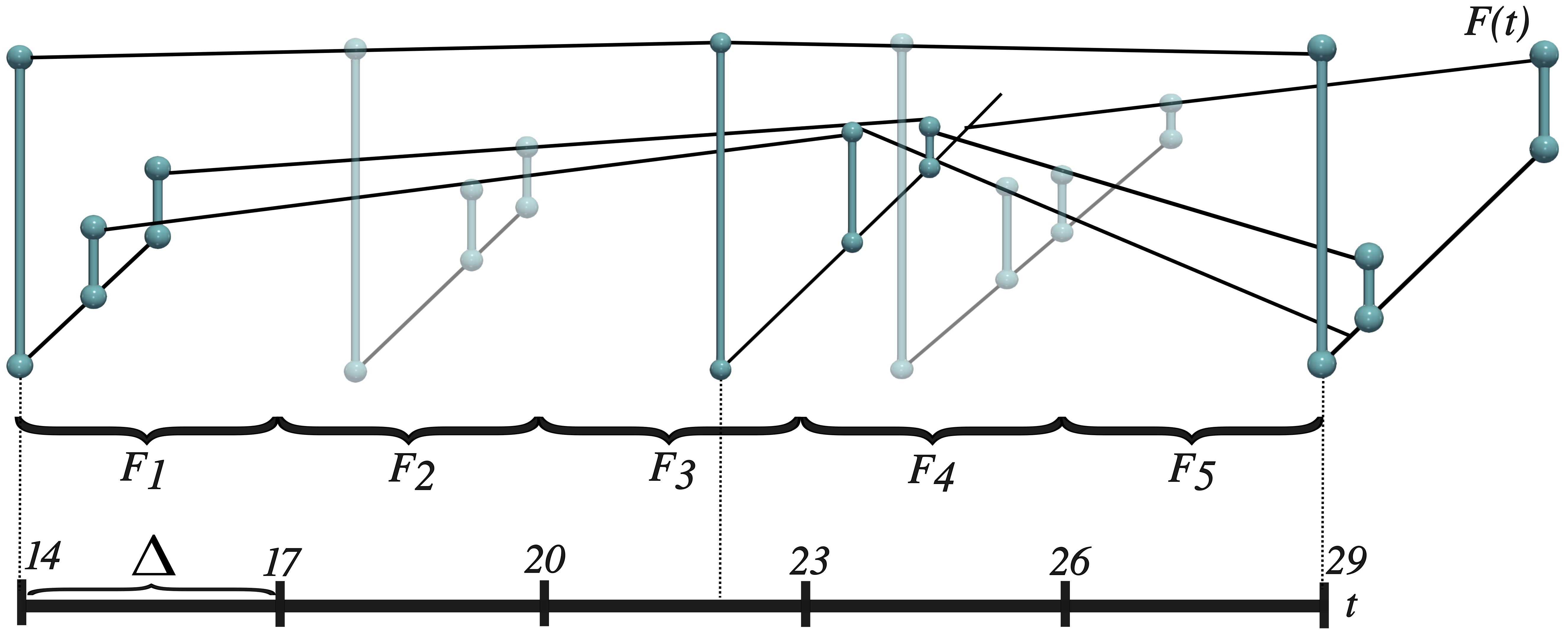}
	\caption{Fix $\boundarySet = \{\PD_{\varnothing}\}\subsetneq \PDS$ as the
singleton containing the empty persistence diagram, and set $\paramDelta = 3$.
The figure depicts a continuous function $\TVPDf\,\colon
dom\,\TVPDf=[14,29]=\!\bigcup_{\variablei=1}^{5} \intervalI_\variablei^\TVPDf
\longrightarrow \PDS,
\intervalI_1^\TVPDf=[14,17),\;\intervalI_2^\TVPDf=[17,20),\;\intervalI_3^\TVPDf=
[20,23),\;\intervalI_4^\TVPDf=[23,26),\;\intervalI_5^\TVPDf=[26,29],$ where
every $\intervalI_\variablei^\TVPDf$ has length
	\julien{$\paramDelta = 3$.}
Opaque diagrams correspond to the input timed persistence diagrams, whereas semi-transparent diagrams are intermediate diagrams inserted by interpolation along $2$-Wasserstein geodesics between successive inputs (see \autoref{sec:input_TVPD}). Since
	$\TVPDf$ is continuous on $[14,29]$, it is measurable on
$[14,29]$, and
on each $\intervalI_\variablei^\TVPDf$. Then every $\TVPDf_\variablei$ is measurable, which implies that each
$\TVPDf_\variablei$ is Lebesgue–measurable, and thus $\TVPDf$ is a TVPD.}\label{fig:TVPD}
\end{figure}

\subsection{Overview}\label{sec:overview}

\sebastien{Let \( \paramWeight \in (0,1) \) and \( \paramPenalty \in (0,1] \) be fixed parameters}. In what follows, we present an overview of the \(\CEDM^\paramDelta_{\paramWeight,\paramPenalty}\) construction developed across \autoref{sec:local_distance} and \autoref{sec:global_distance}.

The \(\CEDM^\paramDelta_{\paramWeight,\paramPenalty}\) distance combines two conceptual layers.

\noindent
\textbf{(i) A local metric on $\paramDelta$-subdivisions}. Let two \TVPDsP $\TVPDp=(\TVPDp_1,\dots,\TVPDp_{\variableN_\TVPDp})\in \TVPDspace^{\paramDelta}$ and $\TVPDq=(\TVPDq_1,\dots,\TVPDq_{\variableN_\TVPDq})\in \TVPDspace^{\paramDelta}$. \autoref{sec:local_distance} defines a metric $\localDistance^{\paramWeight}_{\paramDelta}$ on the space
$\Subdspace^{\paramDelta}$ of $\paramDelta$-subdivisions. For two elements $\TVPDp_\variablei\in \Subdspace^\paramDelta$ from $\TVPDp$ and $\TVPDq_\variablej\in \Subdspace^\paramDelta$ from $\TVPDq$, $\localDistance^{\paramWeight}_{\paramDelta}(\TVPDp_\variablei,\TVPDq_\variablej)$ blends spatial cost—the overall $\wasserstein_2$-distance between the corresponding images $Im(\TVPDp_\variablei)=\{\TVPDp(	\variableTime):	\variableTime\in \intervalI_\variablei^\TVPDp\}$ and $Im(\TVPDq_\variablej)=\{\TVPDq(\variableTime):	\variableTime\in \intervalI_\variablej^\TVPDq\}$---with temporal cost---the shift between the intervals $\intervalI_\variablei^\TVPDp$ and $\intervalI_\variablej^\TVPDq$---weighted respectively by $1-\paramWeight$ and $\paramWeight$. \sebastien{Here, the parameter \(\paramWeight\) controls the trade-off between temporal and spatial
contributions to the local metric $\localDistance^{\paramWeight}_{\paramDelta}$. In this formulation,} $\localDistance^{\paramWeight}_{\paramDelta}(\TVPDp_\variablei,\TVPDq_\variablej)$ is interpreted as the cost of substituting $\TVPDp_\variablei$ with $\TVPDq_\variablej$ (or vice versa).
Additionally, $\localDistance^{\paramWeight}_{\paramDelta}(\TVPDp_\variablei,\boundarySet)$ (resp. $\localDistance^{\paramWeight}_{\paramDelta}(\TVPDq_\variablej,\boundarySet)$) measures the overall $\wasserstein_2$-distance of the image of the $\paramDelta$-subdivision $\TVPDp_\variablei$ (resp $\TVPDq_\variablej$) to the set \boundarySet; this value therefore represents the cost of
deleting that $\paramDelta$-subdivision \sebastien{into
$\boundarySet$ }(resp.\ inserting \sebastien{it from $\boundarySet$}). \sebastien{Therefore, $\boundarySet$ serves as a reference modeling deletions and insertions; for example, a typical choice is $\boundarySet = \{\PD_{\varnothing}\}$.}

\noindent
\textbf{(ii) A global edit distance with penalty between entire time varying persistence diagrams.}
\autoref{sec:global_distance} lifts
the local metric $\localDistance^{\paramWeight}_{\paramDelta}$ to whole \TVPDsP via a $\paramDelta$-partial assignment
$\partialAssignment:\operatorname{dom}\partialAssignment\subset\{1,\dots,\variableN_\TVPDp^\paramDelta\}\rightarrow\operatorname{Im}\partialAssignment\subset\{1,\dots,\variableN_\TVPDq^\paramDelta\},$
required to be strictly increasing. Such an assignment realises three elementary operations: Substitution of $\TVPDp_\variablei$ by $\TVPDq_{\variablej}$ (if $\partialAssignment(\variablei)=\variablej$) with cost $\localDistance^{\paramWeight}_{\paramDelta}(\TVPDp_\variablei,\TVPDq_{\partialAssignment(\variablei)})$; Deletion of $\TVPDp_\variablei$ (if $\variablei\notin\operatorname{dom}\partialAssignment$) with cost $\paramPenalty\cdot\,\localDistance^{\paramWeight}_{\paramDelta}(\TVPDp_\variablei,\boundarySet)$; Insertion of $\TVPDq_\variablej$ (if $\variablej\notin\operatorname{Im}\partialAssignment$) with cost $\paramPenalty\cdot\,\localDistance^{\paramWeight}_{\paramDelta}(\TVPDq_\variablej,\boundarySet)$; where \sebastien{the parameter} $\paramPenalty$ controls the penalty for unmatched $\paramDelta$-subdivisions. $\CEDM^{\paramDelta}_{\paramWeight,\paramPenalty}(\TVPDp,\TVPDq)$ is the minimum total cost over all $\paramDelta$-partial assignments. Here, $\CEDM^{\paramDelta}_{\paramWeight,\paramPenalty}(\TVPDp,\TVPDq)$ is interpreted as the minimal cost of converting $\TVPDp$ into $\TVPDq$ (or vice versa) by means of a $\paramDelta$-partial assignment. 


\noindent
\textbf{Key properties.}
The pair $(\TVPDspace^{\paramDelta},\CEDM^{\paramDelta}_{\paramWeight,\paramPenalty})$
is a metric space \sebastien{(see Appendix \ref{thm:metric_space_global})}. An $\bigLandau(\variableN_\TVPDp^\paramDelta \cdot \variableN_\TVPDq^\paramDelta)$ dynamic-programming scheme
(\autoref{sec:CED_DPC}) evaluates the distance and produces an optimal partial assignment. Finally, continuous \TVPDsP can be approximated arbitrarily well (in \CEDPP, \autoref{sec:pc_TVPD}) by piecewise-constant ones, so the metric is amenable to practical computation on sampled data (\autoref{sec:input_TVPD}).

\subsection{Distance between \( \paramDelta \)-subdivision\julien{s}}
\label{sec:local_distance}

Let \( (p, q) \in (\Subdspace^\paramDelta)^2 \), then \( \exists (\TVPDp, \TVPDq) \in (\TVPDspace^\paramDelta)^2,\ \exists (\variablei,\variablej) \in \{1, \dots, \variableN_\TVPDp\} \times \{1, \dots, \variableN_\TVPDq\} \), such that \( p = \TVPDp_\variablei \) and \( q = \TVPDq_\variablej\). Denoting \(\intervalBounda = \inf(\intervalI_\variablei^\TVPDp), \,\intervalBoundb =\sup(\intervalI_\variablei^\TVPDp),\, \intervalBoundc = \inf(\intervalI_\variablej^\TVPDq), \,\intervalBoundd =\sup(\intervalI_\variablej^\TVPDq)\), we define \( \localDistance^\paramWeight_\paramDelta(p, q) = \localDistance^\paramWeight_\paramDelta(\TVPDp_\variablei, \TVPDq_\variablej) :=\int_{\intervalBounda}^{\intervalBoundb} (1 - \paramWeight) \cdot \wasserstein_2\big( \TVPDp(\variableTime),\ \TVPDq(\variableTime + \intervalBoundc - \intervalBounda) \big) + \paramWeight \cdot |(\variableTime+\intervalBoundc - \intervalBounda)-(\variableTime)|\,dt= \int_{\intervalBounda}^{\intervalBoundb} (1 - \paramWeight) \cdot \wasserstein_2\big( \TVPDp(\variableTime),\ \TVPDq(\variableTime + \intervalBoundc - \intervalBounda) \big) + \paramWeight \cdot |\intervalBoundc - \intervalBounda|\,dt
\). With this definition, we have the result that \( (\Subdspace^\paramDelta,\localDistance^\paramWeight_\paramDelta) \) is a metric space, and so $\localDistance^\paramWeight_\paramDelta$ is a distance on $\Subdspace^\paramDelta$ \sebastien{(see Appendix \ref{lem:metric_space_local})}. Intuitively, the distance $\localDistance^\paramWeight_\paramDelta$ can be understood as the successive sum of spatial and temporal distances between the elements composing each $\paramDelta$-subdivision. The relative contribution of the spatial part is weighted by the parameter $1-\paramWeight$, while that of the temporal part is weighted by $\paramWeight$.

Moreover, with the same notation, we define \(\localDistance^\paramWeight_\paramDelta(p, \boundarySet) = \localDistance^\paramWeight_\paramDelta(\TVPDp_\variablei, \boundarySet) := \int_{\intervalBounda}^{\intervalBoundb} (1 - \paramWeight) \cdot \wasserstein_2\bigl(\TVPDp(\variableTime), \boundarySet\bigr)\,dt\), with  $\forall \variablex \in \PDS,\, \wasserstein_2(\variablexBis, \boundarySet)=\text{inf}\,\bigl(\wasserstein_2(\variablexBis,\variabley),\,\variabley\in \boundarySet\bigr)$. In this case, $\localDistance^\paramWeight_\paramDelta(\TVPDp_\variablei, \boundarySet)$ can be viewed as the successive sum of spatial distances from the elements of the $\paramDelta$-subdivision $\TVPDp_\variablei$ to the set $\boundarySet$. Then we have, as a result, that $\localDistance^\paramWeight_\paramDelta(\TVPDp_\variablei,\TVPDq_\variablej)+\localDistance^\paramWeight_\paramDelta(\TVPDq_\variablej,\boundarySet) \geq \localDistance^\paramWeight_\paramDelta(\TVPDp_\variablei, \boundarySet)$ \sebastien{(see Appendix \ref{lem:triangular_inequality_local_distance_lemma})}.

\subsection{From distance between subdivisions to distance between \TVPDsPP}\label{sec:global_distance}

Set $ (\TVPDp, \TVPDq) \in (\TVPDspace^\paramDelta)^2$.  We call
$\paramDelta$-partial assignment (we omit the $\paramDelta$ when the context is
clear) between  $\TVPDp$  and  $\TVPDq$  any function $\partialAssignment$  from
 dom $\partialAssignment \subset \{1, 2, \dots, \variableN_\TVPDp\}$ to Im
$\partialAssignment$ $\subset \{1, 2, \dots, \variableN_\TVPDq\}$, such that
\partialAssignment~  is strictly increasing, i.e.  $\forall (\variablei,
\variablej)\in$ $(\text{dom }\partialAssignment)^2$, $ \variablei < \variablej
\Rightarrow \partialAssignment(\variablei) < \partialAssignment(\variablej).$ We
denote $\PASet^\paramDelta(\TVPDp, \TVPDq)$ the set of the partial assignments
between $\TVPDp$ and $\TVPDq$ in $\TVPDspace^\paramDelta$, we can see that
specifying an $\partialAssignment \in \PASet^\paramDelta(\TVPDp, \TVPDq)$
directly yields an assignment $\partialAssignment^{-1} \in
\PASet^\paramDelta(\TVPDq, \TVPDp)$.

Then, the $\CEDM^{\paramDelta}_{\paramWeight,\paramPenalty}$ between $\TVPDp=(\TVPDp_\variablei)_{1 \leq \variablei \leq \variableN_\TVPDp}$ and $\TVPDq=(\TVPDq_\variablej)_{1 \leq \variablej \leq \variableN_\TVPDq}$ is defined as: 
\begin{align}
	\CEDM^{\paramDelta}_{\paramWeight,\paramPenalty} (\TVPDp,\TVPDq)=&\mathop{\min}\limits_{\substack{\partialAssignment\in \PASet^\paramDelta(\TVPDp,\TVPDq)}} \text{cost}^\paramDelta_{(\TVPDp,\TVPDq)}(\partialAssignment):= \notag  \\ \mathop{\min}\limits_{\substack{\partialAssignment\in \PASet^\paramDelta(\TVPDp,\TVPDq)}}&\Big(\sum\limits_{\variablei \in \text{ dom }\partialAssignment} \,\,\localDistance^\paramWeight_{\paramDelta}(\TVPDp_\variablei,\TVPDq_{\partialAssignment(\variablei)})\; \tag{2}\\ +&\;\;\sum\limits_{\variablei\, \notin \text{ dom }\partialAssignment}\,\, \paramPenalty\cdot \localDistance^\paramWeight_{\paramDelta}(\TVPDp_\variablei,\boundarySet)\; \tag{3}\\\displaybreak[2] +&\;\:\:\,\sum\limits_{\variablej\, \notin \text{ Im }\partialAssignment}\paramPenalty\cdot \localDistance^\paramWeight_{\paramDelta}(\TVPDq_\variablej,\boundarySet) \Big)\tag{4}
\end{align}

Through this definition, the $\CEDM^{\paramDelta}_{\paramWeight,\paramPenalty}$ between two \TVPDsP is the minimal cost to transform the first \TVPDP into the second (see \autoref{fig:CED}), via a partial assignment, and three elementary operation types—substitution (line 2), deletion (line 3), and insertion (line 4). A substitution replaces a subdivision of the first \TVPDP with the subdivision assigned to it in the second \TVPDPP, at a cost equal to the distance $\localDistance^\paramWeight_\paramDelta$ between the two subdivisions. A deletion removes a subdivision from the first \TVPDPP, at a cost equal to $\paramPenalty$ times the distance $\localDistance^\paramWeight_\paramDelta$ of that subdivision to the set \boundarySet. An addition inserts, into the first \TVPDPP, a subdivision from the second \TVPDPP, at a cost equal to \(\paramPenalty\) times the distance $\localDistance^\paramWeight_\paramDelta$ of that subdivision to the set \(\boundarySet\).

We observe that, if $\variableN_\TVPDp=\variableN_\TVPDq=\variableN$, a direct consequence of the definition is $\CEDM^{\paramDelta}_{\paramWeight,\paramPenalty} (\TVPDp,\TVPDq) \leq \sum_{\variablei \in \{1,\, \dots \,,\,\variableN\}} \,\,\localDistance^\paramWeight_{\paramDelta}(\TVPDp_\variablei,\TVPDq_\variablei)$. Indeed, the identity function $Id:\{1,\, \dots \,,\,\variableN\}\to\{1,\, \dots \,,\,\variableN\},\,\variablei\to \variablei$ is a partial assignment $\bigl(i.e., Id\in\PASet^\paramDelta(\TVPDp,\TVPDq)\bigr)$, then $\CEDM^{\paramDelta}_{\paramWeight,\paramPenalty} (\TVPDp,\TVPDq) \leq $ cost$^\paramDelta_{(\TVPDp,\TVPDq)}(Id)=\sum_{\variablei \in \{1,\, \dots \,,\,\variableN\}} \,\,\localDistance^\paramWeight_{\paramDelta}(\TVPDp_\variablei,\TVPDq_{\variablei})$. 

Note also that for any $(\paramDeltaone,\paramDeltatwo)\in\,(0,\infty)^{\,2},$ such that $\paramDeltaone\mid\paramDeltatwo$, if $(\TVPDp,\TVPDq)\in \TVPDspace^{\paramDeltatwo}\times \TVPDspace^{\paramDeltatwo}$ then as a result $(\TVPDp,\TVPDq)\in \TVPDspace^{\paramDeltaone}\times \TVPDspace^{\paramDeltaone}$, and $\CEDM^{\paramDeltaone}_{\paramWeight,\paramPenalty}(\TVPDp,\TVPDq)\leq$ $\CEDM^{\paramDeltatwo}_{\paramWeight,\paramPenalty}(\TVPDp,\TVPDq)$. Roughly speaking, the inequality follows from the fact that the partial assignment $\partialAssignmentBis:$ dom $\partialAssignmentBis \subset \{1, 2, \dots, \variableN_\TVPDp^{\paramDeltatwo}\}$ $\to$ Im $\partialAssignmentBis$ $\subset \{1, 2, \dots, \variableN_\TVPDq^{\paramDeltatwo}\}$ that realises $\CEDM^{\paramDeltatwo}_{\paramWeight,\paramPenalty}(\TVPDp,\TVPDq) $ $\bigl($i.e., such that $\CEDM^{\paramDeltatwo}_{\paramWeight,\paramPenalty}(\TVPDp,\TVPDq)$ = cost$^{\paramDeltatwo}_{(\TVPDp,\TVPDq)}(\partialAssignmentBis)\bigr)$, possesses an equivalent partial assignment $\partialAssignmentBis':$ dom $\partialAssignmentBis' \subset \{1, 2, \dots, \variableN_\TVPDp^{\paramDeltaone}\}$ $\to$ Im $\partialAssignmentBis'$ $\subset \{1, 2, \dots, \variableN_\TVPDq^{\paramDeltaone}\}$ that realises the same cost in $\TVPDspace^{\paramDeltaone}$ $\bigl($i.e., cost$^{\paramDeltaone}_{(\TVPDp,\TVPDq)}(\partialAssignmentBis')=$ cost$^{\paramDeltatwo}_{(\TVPDp,\TVPDq)}(\partialAssignmentBis)$; \sebastien{see Appendix \ref{prop:monotonicity_Delta}$\bigr)$}. Therefore, $\CEDM^{\paramDeltaone}_{\paramWeight,\paramPenalty}(\TVPDp,\TVPDq)=\mathop{\min}_{\substack{\partialAssignment\in \PASet^{\paramDeltaone}(\TVPDp,\TVPDq)}} \text{cost}^{\paramDeltaone}_{(\TVPDp,\TVPDq)}(\partialAssignment)\leq$ cost$^{\paramDeltaone}_{(\TVPDp,\TVPDq)}(\partialAssignmentBis')=$ cost$^{\paramDeltatwo}_{(\TVPDp,\TVPDq)}(\partialAssignmentBis)=$ $\CEDM^{\paramDeltatwo}_{\paramWeight,\paramPenalty}(\TVPDp,\TVPDq)$. Moreover, by identical reasoning, we observe that for every partial assignment  $\partialAssignmentBisBis\in\PASet^{\paramDeltatwo}(\TVPDp,\TVPDq)$, there exists a partial assignment $\partialAssignmentBisBis'\in\PASet^{\paramDeltaone}(\TVPDp,\TVPDq)$, such that cost$^{\paramDeltaone}_{(\TVPDp,\TVPDq)}(\partialAssignmentBisBis')=$ cost$^{\paramDeltatwo}_{(\TVPDp,\TVPDq)}(\partialAssignmentBisBis)$. Then, the set $\PASet^{\paramDeltaone}(\TVPDp,\TVPDq)$ is at least as extensive as the set $\PASet^{\paramDeltatwo}(\TVPDp,\TVPDq)$; in the sense that $\{\text{cost}^{\paramDeltatwo}_{(\TVPDp,\TVPDq)}(\partialAssignment)\in\RNB_+\,\vert\,\partialAssignment\in\PASet^{\paramDeltatwo}(\TVPDp,\TVPDq)\}\subsetneq\{\text{cost}^{\paramDeltaone}_{(\TVPDp,\TVPDq)}(\partialAssignment)\in\RNB_+\,\vert\,\partialAssignment\in\PASet^{\paramDeltaone}(\TVPDp,\TVPDq)\}$. It is therefore informative to choose $\paramDelta$ small, since a lower parameter value can reveal new, finer assignments. 

\begin{figure}[!t]  
	\centering
	\includegraphics[width=\linewidth]{./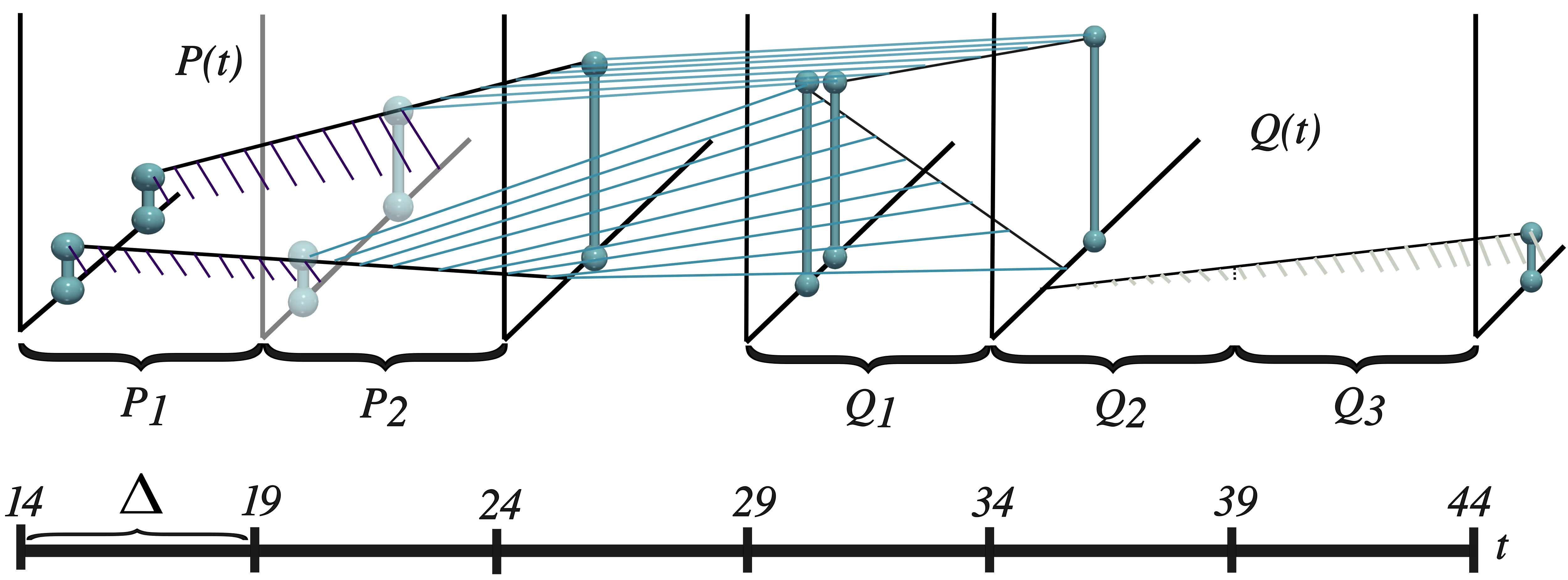}
	\caption{Schematic illustration of the $\CEDM_{\paramWeight,\paramPenalty}^{\paramDelta}$, with $\boundarySet= \{\PD_{\varnothing}\}$ and $\paramDelta = 5$, between two \TVPDsP \TVPDp and \TVPDq. The optimal $\paramDelta$-partial assignment is the function $\partialAssignment:dom\,\partialAssignment=\{2\}\rightarrow\{1\}$. Accordingly, $\CEDM_{\paramWeight,\paramPenalty}^{\paramDelta}(\TVPDp,\TVPDq)$ decomposes into deletion of $\TVPDp_1$ $\bigl($purple hatched strip, cost $\paramPenalty\cdot\,\localDistance_{\paramDelta}^{\paramWeight}(\TVPDp_1,\boundarySet)\bigr)$; substitution of $\TVPDp_2$ by $\TVPDq_1$ (blue hatched strip, cost $\localDistance_{\paramDelta}^{\paramWeight}(\TVPDp_2,\TVPDq_1)$); insertions of $\TVPDq_2$ and $\TVPDq_3$ (gray hatched strips, cost $\paramPenalty\cdot\bigl(\localDistance_{\paramDelta}^{\paramWeight}(\TVPDq_2,\boundarySet)+\localDistance_{\paramDelta}^{\paramWeight}(\TVPDq_3,\boundarySet)\bigr)$. The colored and hatched regions therefore sum to $\CEDM_{\paramWeight,\paramPenalty}^{\paramDelta}(\TVPDp,\TVPDq)$.}\label{fig:CED}
\end{figure}

\subsection{Computation via Dynamic Programming}\label{sec:CED_DPC}

We provide in this subsection a computation method by dynamic programming \cite{bellman1966dynamic}, illustrated in \autoref{fig:CED_Geodesic_Temp}, for the $\CEDM^{\paramDelta}_{\paramWeight,\paramPenalty}$ between two \TVPDsPP: Let $\TVPDp = (\TVPDp_\variablei)_{1 \leq \variablei \leq \variableN_\TVPDp} \in \TVPDspace^\paramDelta$, $\TVPDq = (\TVPDq_\variablej)_{1 \leq \variablej \leq \variableN_\TVPDq} \in \TVPDspace^\paramDelta$. In this subsection, we will note for $\variablev = (\variablex, \variabley) \in \RNB^2$, $\variablev_1 = \variablex$, $\variablev_2 =\variabley$, $\delta_\paramDelta^{\paramWeight,\paramPenalty}(\variablev) =\delta_\paramDelta^{\paramWeight,\paramPenalty }\left((\TVPDp_\variablei)_{0 \leq \variablei \leq \variablev_1}, (\TVPDq_\variablej)_{0 \leq \variablej \leq \variablev_2} \right)$. Then, we define recursively, $\forall \variableK \in \{1, \dots, \variableN_\TVPDp\}$, $\forall \variableK' \in \{1, \dots, \variableN_\TVPDq\}$,

\medskip

$\begin{aligned}&\delta_\paramDelta^{\paramWeight,\paramPenalty} \left(\variableK,\variableK' \right)= \min \Bigr\{ \delta_\paramDelta^{\paramWeight,\paramPenalty} \left(\variableK-1,\variableK'\right)+ \paramPenalty\cdot \localDistance_\paramDelta^\paramWeight (\TVPDp_\variableK,\boundarySet),\\ & \delta_\paramDelta^{\paramWeight,\paramPenalty}\left(\variableK-1, \variableK'-1 \right) + \localDistance_\paramDelta^\paramWeight (\TVPDp_\variableK, \TVPDq_{\variableK'}),\\& \delta_\paramDelta^{\paramWeight,\paramPenalty} \left(\variableK, \variableK'-1\right)+ \paramPenalty \cdot \localDistance_\paramDelta^\paramWeight(\TVPDq_{\variableK'},\boundarySet) \Bigr\}\end{aligned}$

\medskip

\noindent with initialization 
$\delta_\paramDelta^{\paramWeight,\paramPenalty}\bigl(0,0\bigr)=0,\forall \variableK\in\{1,\dots,\variableN_\TVPDp\},\,\delta_\paramDelta^{\paramWeight,\paramPenalty}\bigl(\variableK,0\bigr)=\delta_\paramDelta^{\paramWeight,\paramPenalty}\bigl(\variableK-1,0\bigr)+\paramPenalty \cdot \localDistance_\paramDelta^\paramWeight(\TVPDp_\variableK,\boundarySet),
\text{ and } \forall \variableK'\in\{1,\dots,\variableN_\TVPDq\},\,\delta_\paramDelta^{\paramWeight,\paramPenalty}\bigl(0,\variableK'\bigr)=\delta_\paramDelta^{\paramWeight,\paramPenalty}\bigl(0,\variableK'-1\bigr)+\paramPenalty \cdot \localDistance_\paramDelta^\paramWeight(\TVPDq_{\variableK'},\boundarySet).$ With this definition in place, we have the result that $\CEDM^{\paramDelta}_{\paramWeight,\paramPenalty}(\TVPDp, \TVPDq) = \delta_\paramDelta^{\paramWeight,\paramPenalty} \left(\variableN_\TVPDp,  \variableN_\TVPDq\right)$ \sebastien{(see Appendix \ref{prop:distance_DP_equality})}. 

Moreover, once the recursive computation of $\delta_\paramDelta^{\paramWeight,\paramPenalty} \left( (\TVPDp_\variablei)_{0 \leq \variablei \leq \variableN_\TVPDp}, (\TVPDq_\variablej)_{0 \leq \variablej \leq \variableN_\TVPDq} \right)$ has been carried out, another result is that the partial assignment $\partialAssignment$ achieving the minimal cost defining $\CEDM^{\paramDelta}_{\paramWeight,\paramPenalty}(\TVPDp,\TVPDq)$, i.e. such that $\CEDM^{\paramDelta}_{\paramWeight,\paramPenalty}(\TVPDp, \TVPDq)=$ cost$^\paramDelta_{(\TVPDp,\TVPDq)}(\partialAssignment)$, can be determined by the following procedure:

We define recursively, with initializations $\initDPOne_0 = \emptyset$, $\initDPTwo_0 = (\variableN_\TVPDp,\variableN_\TVPDq)$,

\begin{itemize}
	\item $\initDPTwo_{\variablez+1} = \initDPTwo_\variablez - (1, 0)$ if 
	\[
	\delta_\paramDelta^{\paramWeight,\paramPenalty}(\initDPTwo_\variablez) = \delta_\paramDelta^{\paramWeight,\paramPenalty}\bigl(\initDPTwo_\variablez - (1,0)\bigr) +\paramPenalty\cdot \localDistance^\paramWeight_\paramDelta(\TVPDp_{\initDPTwo_{\variablez_1}}, \boundarySet),
	\]
	
	\item $\initDPTwo_{\variablez+1} = \initDPTwo_\variablez - (1, 1)$ if 
	\[
	\delta_\paramDelta^{\paramWeight,\paramPenalty}(\initDPTwo_\variablez) = \delta_\paramDelta^{\paramWeight,\paramPenalty}\bigl(\initDPTwo_\variablez - (1,1)\bigr) + \localDistance^\paramWeight_\paramDelta(\TVPDp_{\initDPTwo_{\variablez_1}}, \TVPDq_{\initDPTwo_{\variablez_2}}),
	\]
	
	\item $\initDPTwo_{\variablez+1} = \initDPTwo_\variablez - (0, 1)$ if 
	\[
	\delta_\paramDelta^{\paramWeight,\paramPenalty}(\initDPTwo_\variablez) = \delta_\paramDelta^{\paramWeight,\paramPenalty}\bigl(\initDPTwo_\variablez - (0,1)\bigr) + \paramPenalty \cdot \localDistance^\paramWeight_\paramDelta(\TVPDq_{\initDPTwo_{\variablez_2}}, \boundarySet),
	\]

	\item \(\initDPOne_{\variablez+1} = \initDPOne_\variablez \cup \initDPTwo_\variablez \quad \text{if} \quad \initDPTwo_{\variablez+1} = \initDPTwo_\variablez - (1,1)\),
	\item \(\initDPOne_{\variablez+1} = \initDPOne_\variablez \:\:\quad\qquad\text{if }\initDPTwo_{\variablez+1} = \initDPTwo_\variablez - (1, 0)\),
	\item \(\initDPOne_{\variablez+1} = \initDPOne_\variablez \:\:\quad\qquad\text{if } \initDPTwo_{\variablez+1} = \initDPTwo_\variablez - (0, 1)\).   
\end{itemize}	

\medskip

We stop when $\initDPTwo_\variableZ = (0,0)$ for some $\variableZ \in \NNB$, and
then we have as \julien{a} result $\partialAssignment = \initDPOne_\variableZ$
(see \autoref{fig:CED_Geodesic_Temp} \sebastien{and Appendix \ref{lem:dp_backtracking_correct})}.

\begin{figure*}[!t]
  \centering
  \begin{overpic}[width=\linewidth]{./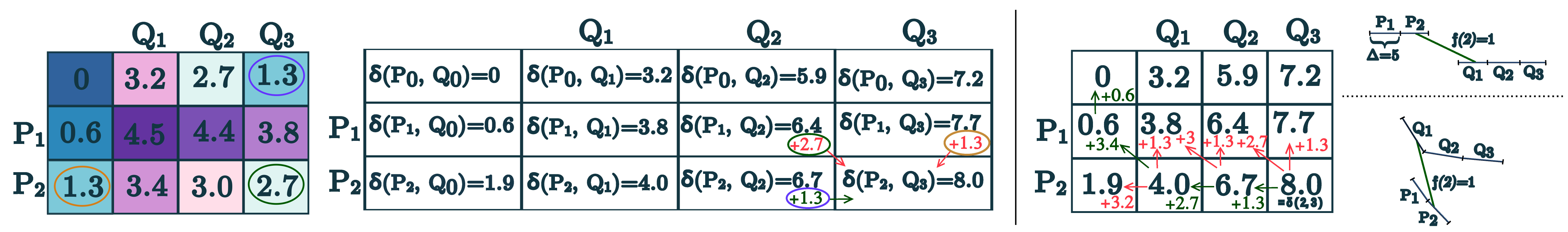}
    \put(1,13){\textbf{(a)}}
    \put(21,13){\textbf{(b)}}
    \put(66,13){\textbf{(c)}}
    \put(97,13){\textbf{(d)}}
  \end{overpic}
	\caption{(a) Cost matrix of the \TVPDsP $\TVPDp=(\TVPDp_{1},\TVPDp_{2})$ and
$\TVPDq=(\TVPDq_{1},\TVPDq_{2},\TVPDq_{3})$ in \autoref{fig:CED}: the first
column and first row contain the initialization terms
\(\localDistance_{\paramDelta}^{\paramWeight}(\cdot,\boundarySet)\); the
remaining cells are the pairwise costs
\(\localDistance_{\paramDelta}^{\paramWeight}(\TVPDp_\variablei,
\TVPDq_\variablej)\). Cell color encodes the cost magnitude. (b) Dynamic
programming computation of the
$\CEDM_{\paramWeight,\paramPenalty}^{\paramDelta}$ between \TVPDsP $\TVPDp$ and
$\TVPDq$.
\julien{For the example of the bottom right entry of the matrix, to}
	compute $\DP_\paramDelta^{\paramWeight,\paramPenalty} \left(
(\TVPDp_\variablei)_{1 \leq \variablei\leq 2}, (\TVPDq_\variablej)_{1 \leq
\variablej \leq 3} \right)$, we take the minimum among
$\DP_\paramDelta^{\paramWeight,\paramPenalty} \left((\TVPDp_\variablei)_{1 \leq
\variablei \leq 1},(\TVPDq_\variablej)_{1 \leq \variablej \leq 3} \right)+
\paramPenalty\cdot \localDistance_\paramDelta^\paramWeight
(\TVPDp_2,\boundarySet)$ (deletion, vertical arrow),
$\DP_\paramDelta^{\paramWeight,\paramPenalty}\left((\TVPDp_\variablei)_{1\leq
\variablei \leq 1}, (\TVPDq_\variablej)_{1 \leq \variablej\leq 2} \right) +
\localDistance_\paramDelta^\paramWeight (\TVPDp_2, \TVPDq_{3})$ (substitution,
diagonal arrow), and $\DP_\paramDelta^{\paramWeight,\paramPenalty} \left(
(\TVPDp_\variablei)_{1 \leq \variablei \leq 2},(\TVPDq_\variablej)_{1 \leq
\variablej \leq 2}\right)+ \paramPenalty \cdot
\localDistance_\paramDelta^\paramWeight(\TVPDq_{3},\boundarySet)$ (insertion,
horizontal arrow). Since the minimum is reached by the insertion term,
$\DP_\paramDelta^{\paramWeight,\paramPenalty} \left( (\TVPDp_\variablei)_{1 \leq
\variablei\leq 2}, (\TVPDq_\variablej)_{1 \leq \variablej \leq 3} \right)=
\CEDM_{\paramWeight,\paramPenalty}^{\paramDelta}(\TVPDp,\TVPDq)$ is set to that
value. (c) Procedure to recover the optimal partial assignment
$\partialAssignment$ that attains
$\CEDM_{\paramWeight,\paramPenalty}^{\paramDelta}(\TVPDp,\TVPDq)$.  Each cell of
the table displays the value
$\DP_{\paramDelta}^{\paramWeight,\paramPenalty}\big((\TVPDp_\variablei)_{1\le
\variablei\le \variabler},(\TVPDq_\variablej)_{1\le \variablej\le \variables}
\big)$;
arrows indicate the three candidate predecessors (deletion  $\uparrow$,
substitution $\nwarrow$, insertion  $\leftarrow$). A green arrow marks the
predecessor realizing the minimum, so the reverse scan yields
$\partialAssignment\colon dom\,\partialAssignment=\{2\}\!\to\!\{1\}$ with
$\partialAssignment(2)=1$. (d,top) Temporal representation of the \TVPDsP
$\TVPDp$ and $\TVPDq$ and their optimal partial assignment $\partialAssignment$
(horizontal axis is time, ticks are the subdivision boundaries of length
$\paramDelta$). Each \TVPDP is represented by its domain of definition, and
green straight line segment\julien{s} represent the optimal assignment
$\partialAssignment$; the vertical spacing between the \TVPDsP is solely to ease
the visualization of \(\partialAssignment\). (d, bottom)
$\localDistance^\paramWeight_\paramDelta$-MDS \sebastien{(multidimensional scaling)} embedding of \TVPDsP $\TVPDp$ and
$\TVPDq$ in $\RNB^{3}$ (each point along a curve represents a persistence
diagram), with the optimal assignment $\partialAssignment$ again drawn as green
straight segment.}
\label{fig:CED_Geodesic_Temp}
\end{figure*}

\subsection{Construction of \julien{a} \TVPDP from
\julien{an input diagram sequence}}
\label{sec:input_TVPD}

In this subsection, we describe how to obtain a \TVPDP from
\julien{an input sequence of persistence diagrams.}

Let
$\PDSeq_\variablen\bigl((\PD_\variablen,\variableTime_\variablen)\bigr)_{{0\leq
\variablen\leq \variableN_\PDSeq}}$ be the sequence of timed persistence
diagrams
(cf. \autoref{sec:input_data}). We
now turn this discrete sequence into a continuous application
\(\TVPDf:([\variableTime_0,\variableTime_{\variableN_\PDSeq}],| \cdot|
)\to(\PDS,\wasserstein_2)\) in three substeps.

\subsubsection{\julien{Contiguous}
geodesics}
For every $\variablen\in\{0,\dots,\variableN_\PDSeq-1\}$ we compute the 2-Wasserstein distance
\(
\wasserstein_2(\PD_\variablen,\PD_{\variablen+1})
\)
and select a $\wasserstein_2$–geodesic
\(
\WtwoGeodesic_\variablen\colon[0,1]\to\PDS
\)
satisfying
\(\WtwoGeodesic_\variablen(0)=\PD_\variablen\) and \(\WtwoGeodesic_\variablen(1)=\PD_{\variablen+1}\).
Such a geodesic exists because \((\PDS,\wasserstein_2)\) is a
geodesic metric space~\cite{turner2013frechetmeansdistributionspersistence}.

\subsubsection{Temporal interpolation}
We map each real time \(\variableTime\in[\variableTime_\variablen,\variableTime_{\variablen+1})\) to the
geodesic parameter
\(
\lambda_\variablen(\variableTime)=\dfrac{\variableTime-\variableTime_\variablen}{\variableTime_{\variablen+1}-\variableTime_\variablen}\in[0,1),
\)
and set $\dilatedWtwoGeodesic_\variablen(\variableTime)\;=\;\WtwoGeodesic_\variablen\!\bigl(\lambda_\variablen(\variableTime)\bigr).$
Consequently,
\(
\dilatedWtwoGeodesic_\variablen(\variableTime_\variablen)=\PD_\variablen,\; \lim_{\variableTime\to \variableTime_{\variablen+1}}  \dilatedWtwoGeodesic_\variablen(\variableTime)=\PD_{\variablen+1},
\)
and \(\dilatedWtwoGeodesic_\variablen\) is continuous on \(\bigl[\variableTime_\variablen,\variableTime_{\variablen+1}\bigr)\) because geodesics are continuous by definition.

\subsubsection{Piecewise–geodesic application}

Gluing the segments together yields the piecewise–geodesic application
\[
\TVPDf:[\variableTime_0,\variableTime_{\variableN_\PDSeq})\,\to\PDS,
\TVPDf(\variableTime)=
\begin{cases}
	\dilatedWtwoGeodesic_0(\variableTime) \quad \quad  \forall \,\variableTime\in[\variableTime_0,\variableTime_1),\\[4pt]
	\dilatedWtwoGeodesic_1(\variableTime) \quad \quad \forall \,\variableTime\in[\variableTime_1,\variableTime_2),\\
	\;\vdots & \\
	\dilatedWtwoGeodesic_{\variableN-1}(\variableTime) \,\,\,\,\,\, \!\forall \,\variableTime\in[\variableTime_{{\variableN_\PDSeq}-1},\variableTime_{\variableN_\PDSeq}),
\end{cases}
\]
and we define $\TVPDf(\variableTime_{\variableN_\PDSeq})=\PD_{\variableN_\PDSeq}$. By construction, $\TVPDf$ is continuous on $[\variableTime_0,\variableTime_{\variableN_\PDSeq}]$. Therefore, if $\TVPDf(\variableTime)\notin \boundarySet$ for every$ \,\variableTime\in[\variableTime_0,\variableTime_{\variableN_\PDSeq}\,]$, then $\TVPDf\in \TVPDspace^{\paramDelta}$ (but also $\TVPDf\in \CTVPDspace^{\paramDelta})$, and consequently $\TVPDf$ is a \TVPDP in the sense of our definition. We refer to such \TVPDsPP, derived from our input data type, as input \TVPDsP (see \autoref{fig:input_TVPD}). We denote by $\ITVPDspace^{\paramDelta}$ the set of input \TVPDsPP, and we have $\ITVPDspace^{\paramDelta}\subsetneq \CTVPDspace^{\paramDelta}\subsetneq \TVPDspace^{\paramDelta}$. It should be noted that the information contained in $\TVPDf$ includes not only the sequence of timed-persistence diagrams $\PDSeq_\variablen=\bigl((\PD_\variablen,\variableTime_\variablen)\bigr)_{{0\leq \variablen\leq \variableN_\PDSeq}}$, but also the intermediate interpolations $\{ \WtwoGeodesic_\variablen\!\bigl(\lambda_\variablen(\variableTime)\bigr)\in \PDS,\variablen\in\{0,\dots,\variableN_\PDSeq-1\},\variableTime\in (0,1)\,\}$. The application $\TVPDf$ is therefore richer in information than the original sequence. 

\begin{figure}
	\centering
	\includegraphics[width=\linewidth]{./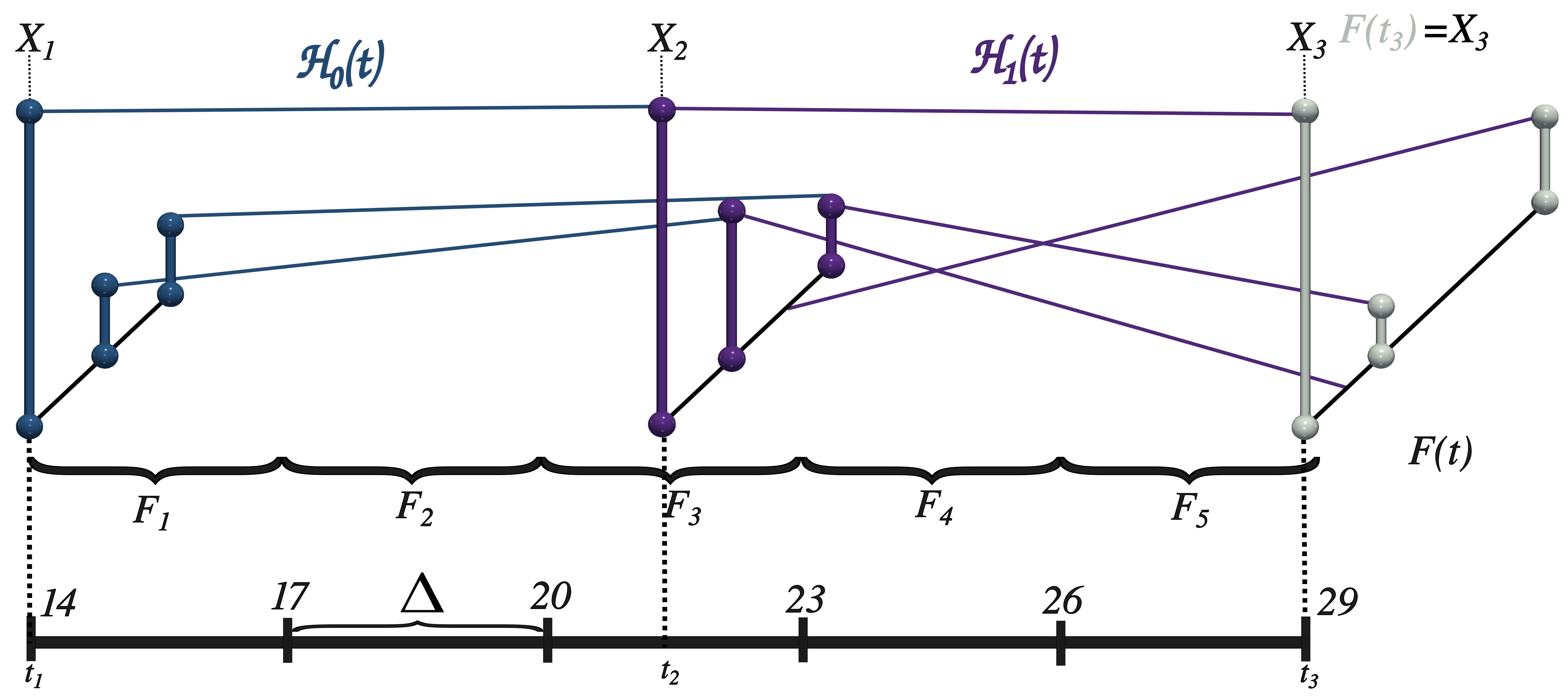}
	\caption{Fix $\boundarySet = \{\PD_{\varnothing}\}$. Illustration of constructing an input \TVPDP $\TVPDf$ from a sequence of timed persistence diagrams $\PDSeq_{\variablen} = \bigl((\PD_{1},\variableTime_{1}),(\PD_{2},\variableTime_{2}),(\PD_{3},\variableTime_{3})\bigr)$. In blue, the dilated $\wasserstein_{2}$-geodesic $\dilatedWtwoGeodesic_{0}(\variableTime)=\WtwoGeodesic_{0}\!\bigl(\lambda_{0}(\variableTime)\bigr)$ joining $\PD_{1}$ to $\PD_{2}$ for $\variableTime\in[\variableTime_{1},\variableTime_{2})$. In purple, the dilated $\wasserstein_{2}$-geodesic $\dilatedWtwoGeodesic_{1}(\variableTime)=\WtwoGeodesic_{1}\!\bigl(\lambda_{1}(\variableTime)\bigr)$ joining $\PD_{2}$ to $\PD_{3}$ for $\variableTime\in[\variableTime_{2},\variableTime_{3})$. In gray, $\TVPDf(\variableTime_{3})$ is define to $\PD_{3}$. The input \TVPDP $\TVPDf=(\TVPDf_{1},\TVPDf_{2},\TVPDf_{3},\TVPDf_{4},\TVPDf_{5})$ is then obtained by gluing the blue, purple, and gray parts (that is, $\TVPDf(\variableTime)=\dilatedWtwoGeodesic_{0}(\variableTime)\text{ for }\variableTime\in[\variableTime_{1},\variableTime_{2}),\TVPDf(\variableTime)=\dilatedWtwoGeodesic_{1}(\variableTime)\text{ for }\variableTime\in[\variableTime_{2},\variableTime_{3})\,,\TVPDf(\variableTime_{3})=\PD_{3})$.}\label{fig:input_TVPD}
\end{figure}

\subsection{Piecewise-constant approximation of input \TVPDsP for practical \CEDP computation}\label{sec:pc_TVPD}

 In order to
 \julien{ease the practical computation of}
 $\CEDM^{\paramDelta}_{\paramWeight,\paramPenalty}$, and later
$\CEDM^{\paramDelta}_{\paramWeight,1}$-geodesic\julien{s},
we introduce in this subsection piecewise-constant
approximations of
\TVPDsP (see \autoref{fig:piecewise-constant_TVPD}).


Let $\TVPDf :dom\,\TVPDf = \bigcup_{\variablei \in \{1, \dots, \variableN_\TVPDf^\paramDelta\}} \intervalI_\variablei^\TVPDf \to \PDS$, be one such continuous \TVPDP of $\TVPDspace^\paramDelta$, that is $\TVPDf\in \CTVPDspace^\paramDelta$. Let $\paramIntervalSize \,\vert \, \paramDelta$, and $\variableM\in\NNB$, such that for every $\variablei \in \{1, \dots, \variableN_\TVPDf^\paramDelta\}$, $\inf \intervalI_\variablei^\TVPDf + \variableM \cdot\paramIntervalSize=\sup \intervalI_\variablei^\TVPDf$, and denote $\tilde{\TVPDf_{\variablei}^{\paramIntervalSize}}:{\intervalI_\variablei^\TVPDf}\to\PDS$ the piecewise-constant application defined by $\tilde{\TVPDf_{\variablei}^{\paramIntervalSize}}(\variableTime)\;=\;\lim_{\variableTime\to \intervalBoundaBis} \TVPDf_{\variablei}(\variableTime),
\quad \forall \,\variableTime\in[\intervalBoundaBis,\intervalBoundaBisBis)\,\cap\,\intervalI_\variablei^\TVPDf,\,$ with $\intervalBoundaBis \;=\; \inf \intervalI_\variablei^\TVPDf + \variablen \cdot\paramIntervalSize,\,\text{for } \variablen=0,\dots,\variableM$. Gluing all the $\tilde{\TVPDf_{\variablei}^{\paramIntervalSize}}$ together, for every $ \variablei \in\,$$ \{1, \dots, \variableN_\TVPDf^\paramDelta\}$, we obtain the piecewise-constant approximation $\tilde{\TVPDf}^{\paramIntervalSize}:dom\,\TVPDf\to\PDS$ of $\TVPDf$, with $\tilde{\TVPDf}^{\paramIntervalSize}\in \TVPDspace^\paramDelta$ as a simple function on each $\intervalI_\variablei^\TVPDf$. As a result \sebastien{(see Appendix \ref{prop:piecewise-constant_TVPD_approximation})}, the subset of $\TVPDspace^\paramDelta$ consisting of piecewise-constant \TVPDsPP, denoted $\PCTVPDspace^\paramDelta$, is dense in the set $\CTVPDspace^\paramDelta$ of continuous \TVPDsPP. Indeed, for every $\TVPDf\in \CTVPDspace^\paramDelta$ and any $\scalarVariableOne>0$, there exists $\paramIntervalSize(\scalarVariableOne)\vert\paramDelta$ such that $\CEDM^{\paramDelta}_{\paramWeight,\paramPenalty}(\TVPDf,\tilde{\TVPDf}^{\paramIntervalSize(\scalarVariableOne)})<\scalarVariableOne$. Because $\ITVPDspace^{\paramDelta}\subsetneq \CTVPDspace^\paramDelta$, we have $\ITVPDspace^\paramDelta \subsetneq \CTVPDspace^\paramDelta\subsetneq \overline{\PCTVPDspace^\paramDelta}$, and then we got the same conclusions for $\ITVPDspace^\paramDelta$. In practice, if \(\TVPDf\colon[\variableTime_0,\variableTime_{\variableN_\PDSeq}]\to\PDS\)
is an input \TVPDP obtained from a timed persistence diagram sequence
\(\PDSeq_\variablen=((\PD_\variablen,\variableTime_\variablen))_{0\le \variablen\le \variableN_\PDSeq}\), it suffices to choose $ \paramIntervalSize(\scalarVariableOne)< \frac{\scalarVariableOne}{(1-\paramWeight)\cdot\,\variableKLips_\PDSeq\cdot\,(\variableTime_{\variableN_\PDSeq}-\variableTime_0)}$, with \(\variableKLips_\PDSeq \;=\;\max_{0\le \variablen\le \variableN_\PDSeq-1}\,\frac{\wasserstein_2(\PD_\variablen,\PD_{\variablen+1})}{\,\variableTime_{\variablen+1}-\variableTime_\variablen\,}\) \sebastien{(cf. Appendix \ref{prop:Lipschitz_TVPD})}.

Since $(\TVPDspace^\paramDelta,\CEDM^{\paramDelta}_{\paramWeight,\paramPenalty})$ is a metric space, applying the triangle inequality and using symmetry property, we obtain $\forall (\TVPDf,\TVPDg)\in (\TVPDspace^\paramDelta )^2,\, \vert \,\CEDM^{\paramDelta}_{\paramWeight,\paramPenalty}(\TVPDf,\TVPDg) -\CEDM^{\paramDelta}_{\paramWeight,\paramPenalty}(\tilde \TVPDf^{\paramIntervalSize(\scalarVariableOne)},\tilde \TVPDg^{\paramIntervalSize(\scalarVariableOne)}) \,\vert \leq \CEDM^{\paramDelta}_{\paramWeight,\paramPenalty}(\TVPDf,\tilde \TVPDf^{\paramIntervalSize(\scalarVariableOne)})+\CEDM^{\paramDelta}_{\paramWeight,\paramPenalty}(\TVPDg,\tilde \TVPDg^{\paramIntervalSize(\scalarVariableOne)})\leq 2\cdot\scalarVariableOne$. In conclusion, we can $(2\cdot\scalarVariableOne)$-approximate the value of $\CEDM^{\paramDelta}_{\paramWeight,\paramPenalty}$ between two input \TVPDsP $\TVPDf$ and $\TVPDg$ (or, more generally, between two continuous \TVPDsPP), by the value of $\CEDM^{\paramDelta}_{\paramWeight,\paramPenalty}$ computed between their piecewise-constant approximations $\tilde \TVPDf^{\paramIntervalSize(\scalarVariableOne)}$ and $\tilde \TVPDg^{\paramIntervalSize(\scalarVariableOne)}$. On a practical level, one may choose $\paramDelta$ to be as small as can be handled computationally, and set a common $\paramIntervalSize = \paramDelta$ for all the \TVPDsP whose pairwise $\CEDM^{\paramDelta}_{\paramWeight,\paramPenalty}$ distances are to be computed, since it is advantageous for both of these parameters to be small. We then have $\localDistance^\paramWeight_\paramDelta(\tilde \TVPDp_{\variablei}^{\paramIntervalSize}, \tilde \TVPDq_{\variablej}^{\paramIntervalSize})=\paramDelta \cdot \Bigl( \wasserstein_{2}\bigl(\TVPDp_{\variablei}(\intervalBounda),\TVPDq_{\variablej}(\intervalBoundc)\bigr)(1-\paramWeight)+\paramWeight\cdot \vert \intervalBoundc-\intervalBounda \vert\Bigr)$, and $\localDistance^\paramWeight_\paramDelta(\tilde \TVPDp_\variablei^{\paramIntervalSize}, \boundarySet) = \paramDelta \cdot (1-\paramWeight)\cdot \wasserstein_{2}\bigl(\TVPDp_{\variablei}(\intervalBounda),\boundarySet\bigr)$.

\begin{figure}
	\centering
	\includegraphics[width=\linewidth]{./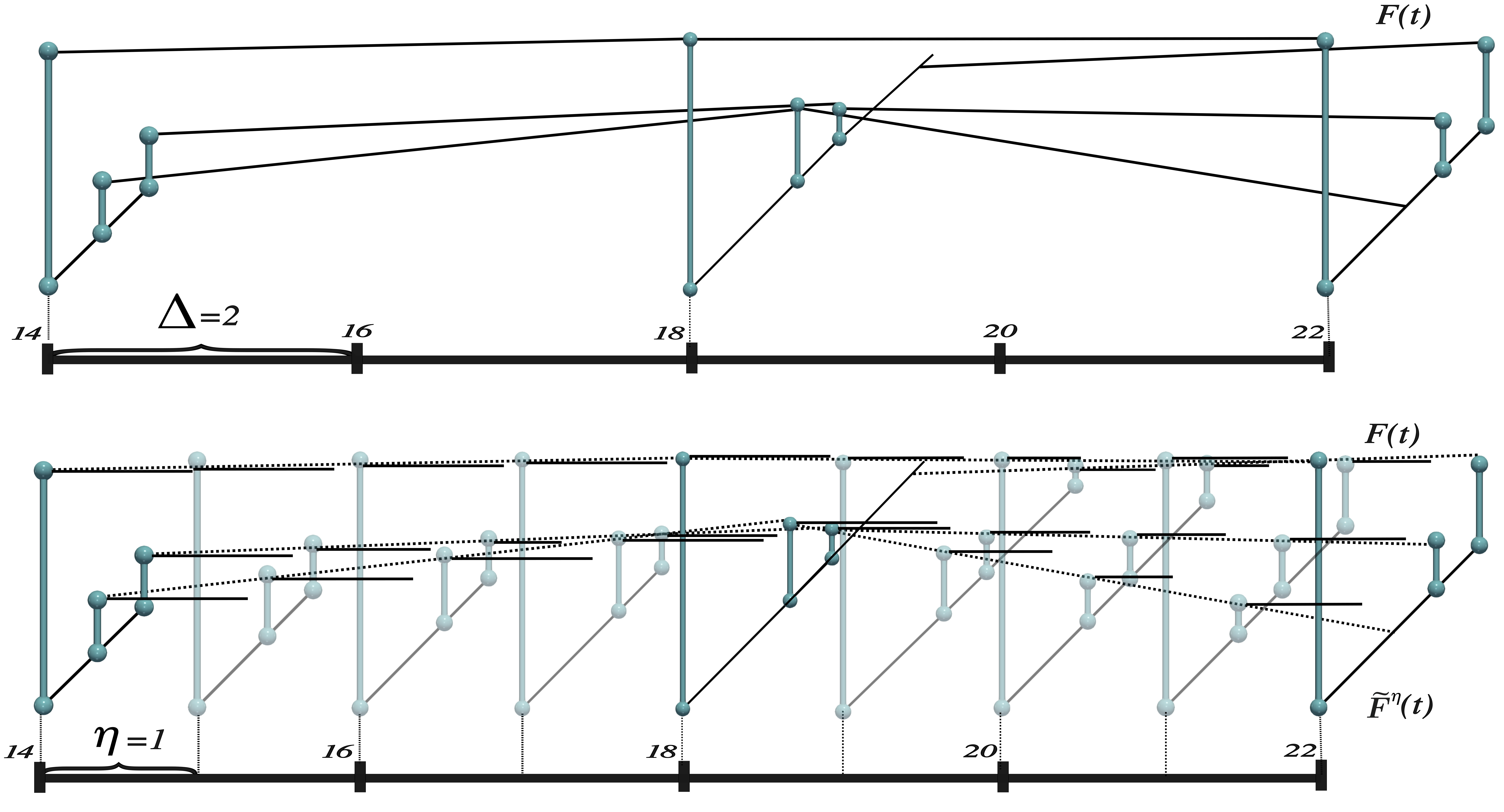}
	\caption{Illustration of the approximation procedure of a continuous \TVPDP
$\TVPDf\in \CTVPDspace^{\paramDelta}$ \julien{(top)} by its piecewise-constant
\TVPDP $\tilde{\TVPDf^{\paramIntervalSize}}\in \PCTVPDspace^{\paramDelta}$ of
parameter $\paramIntervalSize=1$ \julien{(bottom)}.
\julien{A}
smaller $\paramIntervalSize$ produces a correspondingly better
approximation.}\label{fig:piecewise-constant_TVPD}
\end{figure}

    \section{Continuous edit distance geodesics}
\label{sec:CED_geodesic}

\noindent This section formalizes geodesics for the
$\CEDM^{\paramDelta}_{\paramWeight,1}$ metric on \TVPDsPP. We
\julien{discuss its}
existence under mild assumptions and describe an explicit three-step
construction.

\subsection{Definition and overview}\label{sec:overview_geodesic}

In a metric space \((\metricSpace,\metricDistance)\), a geodesic joining two points $(\variablexBisBis,\variableyMetricSpace)\in \metricSpace^{\,2}$ is a continuous application \(\metricGeodesic : [0,\metricDistance\bigl(\variablexBisBis,\variableyMetricSpace\bigr)]\to\metricSpace\) such that, $\metricGeodesic(0)=\variablexBisBis$, $\metricGeodesic\bigl(\metricDistance(\variablexBisBis,\variableyMetricSpace)\bigr)=\variableyMetricSpace$, and \(\metricDistance\bigl(\variablexBisBis,\variableyMetricSpace\bigr)=\sup \sum\,_{\variablei=0}^{\variablek-1} \metricDistance\,\bigl(\metricGeodesic(\variableTime_{\variablei}),\metricGeodesic(\variableTime_{\variablei+1})\bigr)\), where the supremum is taken over all $\variablek\in \NNB^{*}$, and all sequences $\variableTime_0=0< \variableTime_1 < \dots < \variableTime_\variablek=\metricDistance\bigl(\variablexBisBis,\variableyMetricSpace\bigr)$ in $[0,\metricDistance\bigl(\variablexBisBis,\variableyMetricSpace\bigr)]$. A metric space is said to be geodesic if every pair of points can be joined by at least one geodesic.

As a result, if $\boundarySet$ closed and $\forall \variablexBisBis \in \PDS, \, \{\variableyMetricSpace  \in \boundarySet, \metricDistance(\variablexBisBis,\variableyMetricSpace)=\metricDistance(\variablexBisBis,\boundarySet) \}$ is non-empty, then $(\Subdspace^\paramDelta,\localDistance^\paramWeight_{\paramDelta})$ and $\bigl(\TVPDspace^{\paramDelta},\CEDM^{\paramDelta}_{\paramWeight,1}\bigr)$ are geodesics \sebastien{(see Appendix \ref{lem:delta-subdivision_geodesic_space} and \ref{thr:final})}

Many conditions allow the hypothesis, $\forall \variablexBisBis \in \PDS, \, \{\variableyMetricSpace  \in \boundarySet, \metricDistance(\variablexBisBis,\variableyMetricSpace)=\metricDistance(\variablexBisBis,\boundarySet) \}$ is non-empty, to be satisfied. Some examples are : \boundarySet \, compact; $\forall \variablexBisBis \in \PDS, \boundarySet\cap \ball(\variablexBisBis,\rayon+\epsilon)$ relatively compact $\bigl($with any $\epsilon \in \RNB^*_+$, and $\rayon =\wasserstein_2(\variablexBisBis,\boundarySet)\bigr)$.

Intuitively, a \CEDP geodesic transforms a \TVPDP $\TVPDp$ into a \TVPDP $\TVPDq$ by performing, in continuous time, the elementary operations specified by their optimal $\paramDelta$-partial assignment $\partialAssignment\in\PASet^{\paramDelta}(\TVPDp,\TVPDq)$ : first, starting from $\TVPDp$, deleting unmatched $\paramDelta$-subdivisions of $\TVPDp$; then substituting each matched $\paramDelta$-subdivision of $\TVPDp$ with its matched counterpart in $\TVPDq$; and finally inserting the unmatched $\paramDelta$-subdivisions of $\TVPDq$, yielding $\TVPDq$.

\smallskip
\noindent\textbf{Setting.}
Fix two \TVPDsP \((\TVPDp,\TVPDq)\in (\TVPDspace^{\paramDelta})^2\) with optimal partial assignment \(\partialAssignment\in\PASet^{\paramDelta}(\TVPDp,\TVPDq)\). We decompose the index sets \(\{1,\dots,\variableN^\paramDelta_\TVPDp\}\) and \(\{1,\dots,\variableN^\paramDelta_\TVPDq\}\)  as
\[
\begin{aligned}
	&\mathcal \deletionSet_\TVPDp\;=\;\{\,\variablei\mid \variablei\notin\operatorname{dom}\partialAssignment\}\quad
	&&\text{(deletions)},\\
	&\mathcal \substitutionSet_\TVPDp\;=\;\operatorname{dom}\partialAssignment
	&&\text{(substitutions)},\\
	&\mathcal \insertionSet_\TVPDq\;=\;\{\,\variablej\mid \variablej\notin\operatorname{Im}\partialAssignment\}
	&&\text{(insertions).}
\end{aligned}
\]

The total \CEDP cost splits accordingly
\[
\variableL \;=\; \CEDM^{\paramDelta}_{\paramWeight,1}(\TVPDp,\TVPDq)
\;=\;\variableL_{\deletionSet}+\variableL_{\substitutionSet}+\variableL_{\insertionSet},
\]
where
\(\variableL_{\deletionSet}= \sum_{\variablei\in\deletionSet_\TVPDp}
\localDistance^\paramWeight_{\paramDelta}(\TVPDp_\variablei,\boundarySet)\),
\(\variableL_{\substitutionSet}= \sum_{\variablei\in\substitutionSet_\TVPDp}
\localDistance^\paramWeight_{\paramDelta}(\TVPDp_\variablei,\TVPDq_{\partialAssignment(\variablei)})\),
and
\(\variableL_{\insertionSet}=\sum_{\variablej\in\insertionSet_\TVPDq}
\localDistance^\paramWeight_{\paramDelta}(\TVPDq_\variablej,\boundarySet)\). Define the break-points
\(\variableL_{0/3}=0,\;\variableL_{1/3}=\variableL_{\deletionSet},\;
\variableL_{2/3}=\variableL_{\deletionSet}+\variableL_{\substitutionSet},\;
\variableL_{3/3}=\variableL_{\deletionSet}+\variableL_{\substitutionSet}+\variableL_{\insertionSet}=\variableL\).

\subsection{Steps of the geodesic}

We construct a continuous path
\(\geodesicCED:[0,\variableL]\to \TVPDspace^{\paramDelta}\) by concatenating
\emph{three} uniformly-parameterised segments.

\subsubsection{First step of the geodesic
	(\(\variablel\in[0,\variableL_{1/3}]\))}

Starting from \(\geodesicCED(0)=\TVPDp\), only the \(\paramDelta\)-subdivisions indexed by \(\deletionSet_\TVPDp\) move, that is each \(\TVPDp_\variablei\,(\variablei\in\deletionSet_\TVPDp)\) follows a \(\localDistance^\paramWeight_{\paramDelta}\)-geodesic in \(\Subdspace^{\paramDelta}\) joining it to the set~\(\boundarySet\). All other \(\paramDelta\)-subdivisions of $\TVPDp$ stay fixed. At \(\variablel=\variableL_{1/3}\) every deleted \(\paramDelta\)-subdivision of $\TVPDp$ has collapsed onto \(\boundarySet\), yielding the intermediate \TVPDP
\[
\geodesicCED(\variableL_{1/3})=\TVPDp\;\setminus\;\bigl\{\TVPDp_\variablei\mid \variablei\in\mathcal \deletionSet_\TVPDp\bigr\}.
\]

\subsubsection{Second step of the geodesic \(\bigl(\variablel\in(\variableL_{1/3},\variableL_{2/3}]\bigr)\)} 

From \(\geodesicCED(\variableL_{1/3})\) we simultaneously transport each remaining \(\paramDelta\)-subdivision \(\TVPDp_\variablei\,(\variablei\in\substitutionSet_\TVPDp)\) along a $\localDistance^\paramWeight_{\paramDelta}$-geodesic to its counterpart \(\TVPDq_{\partialAssignment(\variablei)}\). At \(\variablel=\variableL_{2/3}\) we reach
\[
\geodesicCED(\variableL_{2/3})=
\bigl\{\TVPDq_{\partialAssignment(\variablei)}\mid \variablei\in\substitutionSet_\TVPDp\bigr\},
\]
that is, a sub-\TVPDP of \(\TVPDq\) lacking the subdivisions indexed by \(\insertionSet_\TVPDq\).

\subsubsection{Third step of the geodesic \(\bigl(\variablel\in(\variableL_{2/3},\variableL_{3/3}]\bigr)\)}

Finally, for each \(\variablej\in\insertionSet_\TVPDq\) we
``spawn'' the \(\paramDelta\)-subdivision \(\TVPDq_\variablej\) out of~\(\boundarySet\), indeed each  \(\variablej\in\insertionSet_\TVPDq\) is formed by following a $\localDistance^\paramWeight_{\paramDelta}$-geodesic from $\boundarySet$ to $\TVPDq_\variablej$. At \(\variablel=\variableL_{3/3}=\variableL\) all insertions have finished and
\(\geodesicCED(\variableL_{3/3})=\TVPDq\).

\medskip

Then \(\geodesicCED\) is a geodesic
joining \(\TVPDp\) to \(\TVPDq\).
The geodesic is not unique in general, but any optimal assignment
yields at least one \CEDPP-geodesic that follows the
\textbf{delete $\;\to\;$ substitute $\;\to\;$ insert} paradigm
described above (see \autoref{fig:CED_Geodesic}).

\begin{figure*}[!t]  
	\centering
	\includegraphics[width=\linewidth]{./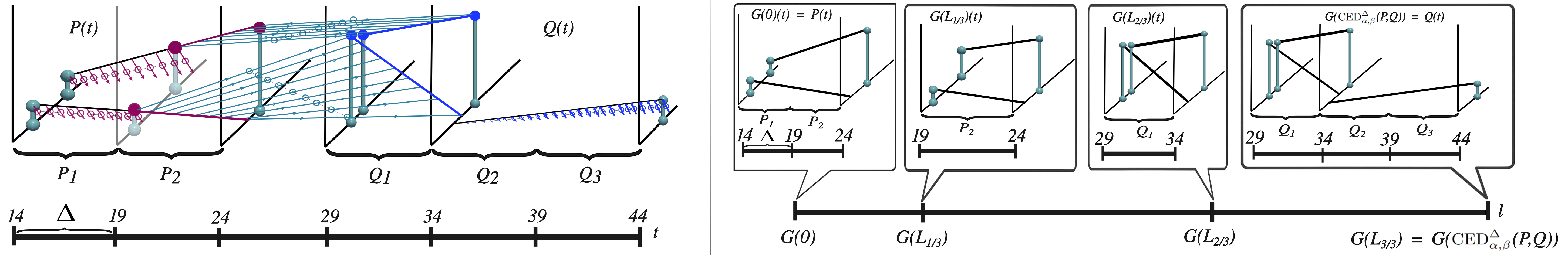}
	\caption{Fix $\boundarySet=\{\PD_{\varnothing}\}$. Left: The figure schematically illustrates the $\CEDM_{\paramWeight,1}^{\paramDelta}$-geodesic between two \TVPDsP $\TVPDp=(\TVPDp_1,\TVPDp_2)$ and $\TVPDq=(\TVPDq_1,\TVPDq_2,\TVPDq_3)$, the optimal $\paramDelta$‑partial assignment being the map $\partialAssignment\colon\operatorname{dom}\partialAssignment=\{2\}\!\to\!\{1\}$. Starting from \TVPDp \,$\bigl($that is, $\geodesicCED(0)=\TVPDp\bigr)$, we first perform the deletion step (in purple on the figure): $\TVPDp_1$ is unmatched under $\partialAssignment$ and therefore moves continuously toward $\boundarySet$ (the persistence diagrams constituting $\TVPDp_1$ head toward the empty diagram), while $\TVPDp_2$ remains fixed. At the end of this step only $\TVPDp_2$ is left $\bigl($that is, $\geodesicCED(\variableL_{1/3})=\TVPDp_2\bigr)$. Second, during the substitution step (in blue on the figure): the $\paramDelta$‑subdivision $\TVPDp_2$ is matched to $\TVPDq_1$ by $\partialAssignment$, then $\TVPDp_2$ travels along the $\localDistance_{\paramDelta}^{\paramWeight}$-geodesic to $\TVPDq_1$. When this motion ends, the intermediate \TVPDP reduces to the single $\paramDelta$‑subdivision $\TVPDq_1$ $\bigl($that is, $\geodesicCED(\variableL_{2/3})=\TVPDq_1\bigr)$. Lastly, during the insertion step (in gray on the figure): the $\paramDelta$-subdivisions $\TVPDq_2$ and $\TVPDq_3$, which are unmatched by $\partialAssignment$, are inserted from $\boundarySet$ while $\TVPDq_3$ stays fixed. After both insertions are complete, the geodesic reaches the target \TVPDP $\TVPDq=(\TVPDq_1,\TVPDq_2,\TVPDq_3)$ $\bigl($that is, $\geodesicCED(\variableL_{3/3})=\TVPDq\bigr)$. Thus the $\CEDM_{\paramWeight,1}^{\paramDelta}$-geodesic from $\TVPDp$ to $\TVPDq$ follows the delete $\rightarrow$ substitute $\rightarrow$ insert scheme encoded by the optimal assignment $\partialAssignment$. Right: The figure illustrates the intermediate \TVPDsP on the geodesic $\geodesicCED$ between $\TVPDp$ and $\TVPDq$ at the main stages, as the variable $\variablel$ varies from $0$ to $\CEDM_{\paramWeight,1}^{\paramDelta}(\TVPDp,\TVPDq).$}\label{fig:CED_Geodesic}
\end{figure*}

	\section{Continuous edit distance barycenters}
\label{sec:barycenter_CED}

\noindent In this section we seek to minimize the
$\CEDM_{\paramWeight,1}^{\paramDelta}$-Fréchet energy of a \TVPDP $\TVPDX$
w.r.t. a sample $\{\sampleTVPD_1,\dots,\sampleTVPD_\variablen\}$:

$$
\frechetEnergy(\TVPDX)=\sum_{\variablei=1}^\variablen \CEDM_{\paramWeight,1}^{\paramDelta}(\TVPDX,\sampleTVPD_\variablei)^2.
$$

To this end, we use two simple, practical schemes that update $\TVPDX$ along \CEDPP-geodesics and keep the best candidate seen so far.

\textbf{\textit{Stochastic geodesic descent:}} Let the initial step size be
$\paramSGD\in(0,1)$; decrease it linearly over the first $\lfloor \paramGD/2
\rfloor$ iterations down to $0.1\cdot\paramSGD$, then hold it constant until
iteration $\paramGD$.

\noindent\textbf{(i)} Initialize by \sebastien{choosing} $\variablei\in\{1,\dots,\variablen\}$ \sebastien{randomly} and set $\TVPDX\leftarrow \sampleTVPD_\variablei$; record $\candidate\leftarrow \TVPDX$ as the current best.
  
\noindent\textbf{(ii)} Iterate: sample $\variablej\in\{1,\dots,\variablen\}$ \sebastien{randomly}; move $\TVPDX$ a \CEDPP-geodesic step of length $\paramSGD \cdot \CEDM^{\paramDelta}_{\paramWeight,1}(\TVPDX,\sampleTVPD_\variablej)$ from $\TVPDX$ toward $\sampleTVPD_\variablej$. If $\frechetEnergy(\TVPDX)<\frechetEnergy(\candidate)$, update $\candidate\leftarrow \TVPDX$. Decrease $\paramSGD$.
  
\noindent\textbf{(iii)} Stop when the relative energy decrease over the last $\paramGGDBis$ iterations is $<1\%$, or after a fixed iteration cap $\paramGD$; return $\candidate$.

\smallskip

\textbf{\textit{Greedy geodesic descent:}}
Let the initial step size be $\stepGre=1/\paramGGD$, with $\paramGGD\in\NNB, \paramGGD>1$.

\noindent\textbf{(i)} Initialize by \sebastien{choosing} $\variablei\in\{1,\dots,\variablen\}$ \sebastien{randomly} and set $\TVPDX\leftarrow \sampleTVPD_\variablei$; record $\candidate\leftarrow \TVPDX$ as the current best.

\noindent\textbf{(ii)} Iterate: for every $\variablej\in\{1,\dots,\variablen\}$, sample candidates $\candidateTwo_{\variablej,\paramGDBis}$ ($\paramGDBis\in\{0,\dots,\paramGGD\}$) along the \CEDPP-geodesic from $\TVPDX$ to $\sampleTVPD_\variablej$ with step $\paramGDBis\cdot\stepGre$; let $\TVPDX$ be the candidate with the smallest energy among all samples $\candidateTwo_{\variablej,\paramGDBis}$. If $\frechetEnergy(\TVPDX)<\frechetEnergy(\candidate)$, set $\candidate\leftarrow \TVPDX$.
  
\noindent\textbf{(iii)} Stop when the relative energy decrease over the last iteration is $<1\%$, or after a fixed iteration cap $\paramGD$; return $\candidate$.

\smallskip

Both schemes are iterative and use monotone acceptance (the recorded best energy is non-increasing).

      \section{Applications}\label{sec:applications}
    
    \noindent We illustrate two representative utilizations of our framework:
(i)
    \julien{temporal}
    pattern tracking via matching and (ii) topological clustering via
barycenters, which respectively leverage \julien{the} \CEDP and the \TVPDP
barycenter
\julien{computations.}

    \begin{figure*}
	\centering
	\includegraphics[width=\linewidth]{./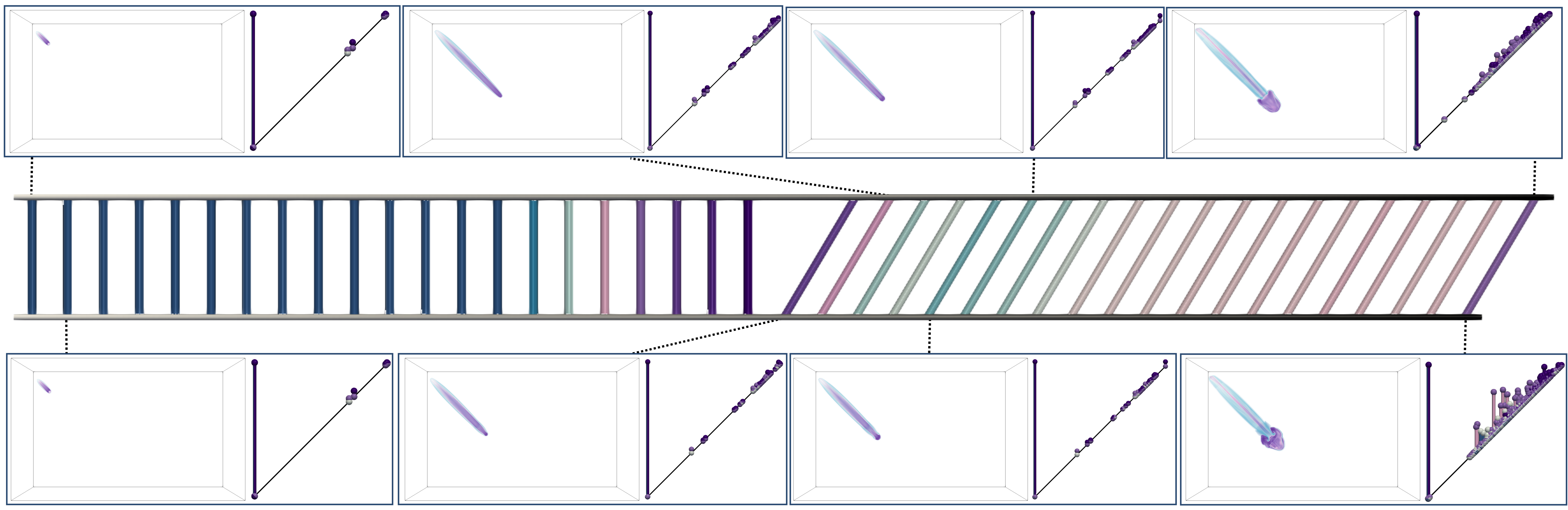}
	\caption{Temporal-shift recovery between two input \TVPDsPP, \emph{YB11} and
\emph{YC11}, from the \emph{Asteroid Impact} dataset. Top/bottom rows show
selected scalar fields and their persistence diagrams for \emph{YB11} (top) and
\emph{YC11} (bottom). The middle strip shows the CED alignment:
time is encoded in grayscale along each sequence (top=\emph{YB11},
bottom=\emph{YC11}); vertical connectors indicate the
\(\paramDelta\)-subdivision matchings, color-coded by assignment cost \sebastien{(from blue for low cost, through pink, to purple for high cost)}.
The explosion occurs at different times—between time steps 6241–6931 in
\emph{YB11} (top, 2nd–3rd snapshots) and 5335–6034 in \emph{YC11} (bottom,
2nd–3rd snapshots). The \CEDP alignment correctly pairs pre- and post-explosion
phases across the two sequences, thus recovering the temporal
shift.}\label{fig:dilation_matching}
    \end{figure*}

    \subsection{\julien{Temporal} pattern tracking via
matching}\label{sec:pattern_tracking_via_matching}
    
    \julien{The}
    \CEDP (\sebastien{\autoref{sec:continuous_edit_distance_between_TVPDs}}) relies on the
optimization of a partial assignment between two input \TVPDsPP. Then, the
resulting matchings can be used
    \julien{for}
    pattern tracking between \TVPDsPP. \autoref{fig:dilation_matching}
and \autoref{fig:subsequence} illustrate this in two ways on input \TVPDsP from the
asteroid impact dataset \cite{taylor2008scivis,imahorn2018visualization}: (i)
temporal-shift recovery by aligning two \TVPDsP with different event times,
shown in \autoref{fig:dilation_matching}; and (ii)
\julien{pattern}
search by aligning a
\TVPDP to one of its sub-\TVPDPP, shown in \autoref{fig:subsequence}. Above
all,
\julien{the}
\CEDP matchings
\julien{can be used as}
a visual comparison tool, allowing to represent where
the similarities lie within two \TVPDsPP. \sebastien{The performance of this
tracking
is discussed
\julien{in}
\sebastien{\autoref{sec:framework_quality}}}.

    \begin{figure}[!t]
	\centering
	\includegraphics[width=\linewidth]{./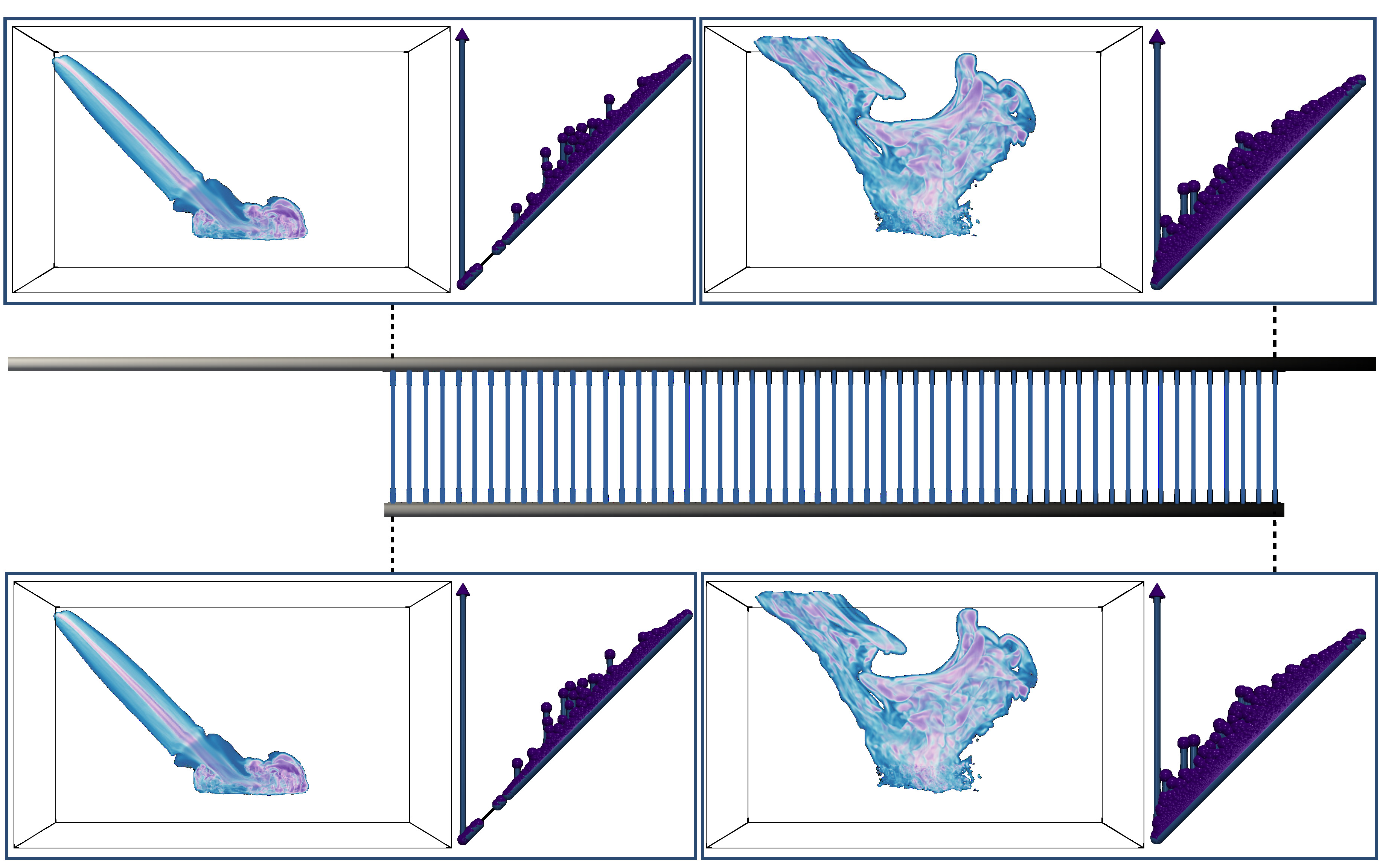}
	\caption{\julien{Temporal pattern}
	search between a \TVPDPP, \emph{YB11}, from the \emph{Asteroid Impact}
dataset and one of its sub-\TVPDsPP. The top and bottom rows show snapshots of
the scalar field and the corresponding persistence diagrams at selected time
steps (top: full \emph{YB11} \TVPDPP; bottom: candidate subsequence). The
central strip visualizes the alignment computed by \CEDPP: both sequences are
laid out along time (\emph{YB11} on top, subsequence on bottom), with time
encoded in grayscale along each sequence; vertical connectors indicate the
\(\paramDelta\)-subdivision matchings, colored by assignment cost (in
this figure uniformly blue, indicating zero cost because the target is an exact
subsequence). Because the target is a true subsequence of the source, the
alignment collapses to a one-to-one mapping on the selected window, effectively
synchronizing the \(\paramDelta\)-subdivisions (identical time stamps on both
sides).
    }\label{fig:subsequence}
    \end{figure}

    \subsection{Barycenters for topological clustering}\label{sec:barycenters_for_topological_clustering}
    
    Clustering partitions a dataset into subsets that are internally close and
mutually well-separated under a task-relevant distance $\metricDistance$. This
yields a principled coarse-graining of the space: it reduces complexity, and
exposes heterogeneity by delineating distinct regimes of behavior. When working
with topological signatures, each cluster captures a typical topological
behavior. To do so within the \TVPDP setting, we instantiate a $k$-means–style
scheme \cite{lloyd1982least,jain2010data} adapted to the \TVPDP geometry: the
centroid operator is given by our \TVPDP barycenter routine (stochastic or
greedy; \sebastien{\autoref{sec:barycenter_CED}}), while pairwise dissimilarities are
evaluated with the \CEDP distance
(\sebastien{\autoref{sec:continuous_edit_distance_between_TVPDs}}).
\autoref{tab:clustering} reports clustering results obtained with this
strategy—using both the stochastic and the greedy barycenter variants—on several
acquired datasets (sea-surface
height~\cite{vidal2019progressive},
\julien{VESTEC~\cite{flatken2023vestec}}
    and
asteroid impact). For comparison, we also report an MDS-based clustering
baseline: for each dissimilarity (L2, Fréchet, TWED, DTW, \CEDPP), we compute
the pairwise distance matrix between \TVPDsPP, apply \sebastien{2D} MDS to this matrix, and
then run $k$-means on the resulting embedding. \sebastien{The evaluation of the clustering performance is discussed at the end of \autoref{sec:framework_quality}.}

    \sebastienBis{\autoref{fig:VESTEC_kmeans} provides a qualitative view of the \CEDPP-based $k$-means on the sample of the four TVPDs from the VESTEC dataset. With $k=2$ and the stochastic barycenter variant, the algorithm separates the TVPDs into two groups that coincide with the ground truth (runs 1–2 vs.\ 3–4). In the MDS embedding induced by the $d^\alpha_\Delta$ distance, each cluster is organized around its CED barycenter, and the overlaid optimal partial matchings show that the TVPDs in a given group are consistently aligned with their centroid across time.}

    \begin{table}[t]
      \refstepcounter{table}\label{tab:clustering}%
      \noindent\textbf{Table \thetable.} Comparison of two clustering pipelines on real \TVPDP datasets (SSH, VESTEC, Asteroid Impact).
  \emph{MDS-clustering}: for each dissimilarities (L2, Fr\'echet, TWED, DTW, \CEDPP), we build the pairwise distance matrix,
  embed the data by classical MDS, then run $k$-means (with $k$ equal to the number of classes).
  \emph{\CEDPP-clustering}: our approach based on \CEDP with two barycenter optimizers (stochastic and greedy).
  Entries are numbers of misclassified subsequences w.r.t.\ the ground truth (lower is better);
  -- indicates not applicable.
  Within the MDS block, \CEDP is competitive or superior, and on \emph{Asteroid
Impact} it is the only
  \julien{dissimilarity}
  with zero error.
  For \CEDPP-clustering, one of the two variants reaches the ground truth on all datasets,
  whereas MDS-clustering makes an error on \emph{VESTEC} for every dissimilarities.
    
      \vspace{4pt}\centering\footnotesize
      \setlength{\tabcolsep}{3pt}
      \renewcommand{\arraystretch}{1.1}
    
      \begin{tabular*}{\columnwidth}{@{\extracolsep{\fill}} l ccccc | cc @{}}
        \toprule
        & \multicolumn{5}{c}{\textbf{MDS-clustering}}
        & \multicolumn{2}{c}{\textbf{\CEDPP-clustering}}\\
        \cmidrule(lr){2-6}\cmidrule(l){7-8}
        \textbf{Dataset} & L2 & Fr\'echet & TWED & DTW & \CEDP & \CEDPP-S & \CEDPP-G \\
        \midrule
        SSH             & 2 & \textbf{0} & \textbf{0} & \textbf{0} & \textbf{0} & \textbf{0} & \textbf{0} \\
        VESTEC          & 2 & 2 & 1 & 1 & 1 & \textbf{0} & \textbf{0} \\
        Asteroid impact & -- & 2 & 1 & 1 & \textbf{0} & \textbf{0} & 1 \\
        \bottomrule
      \end{tabular*}
    \end{table}
    
    \begin{figure}[!t]  
	\centering
	\includegraphics[width=\linewidth]{./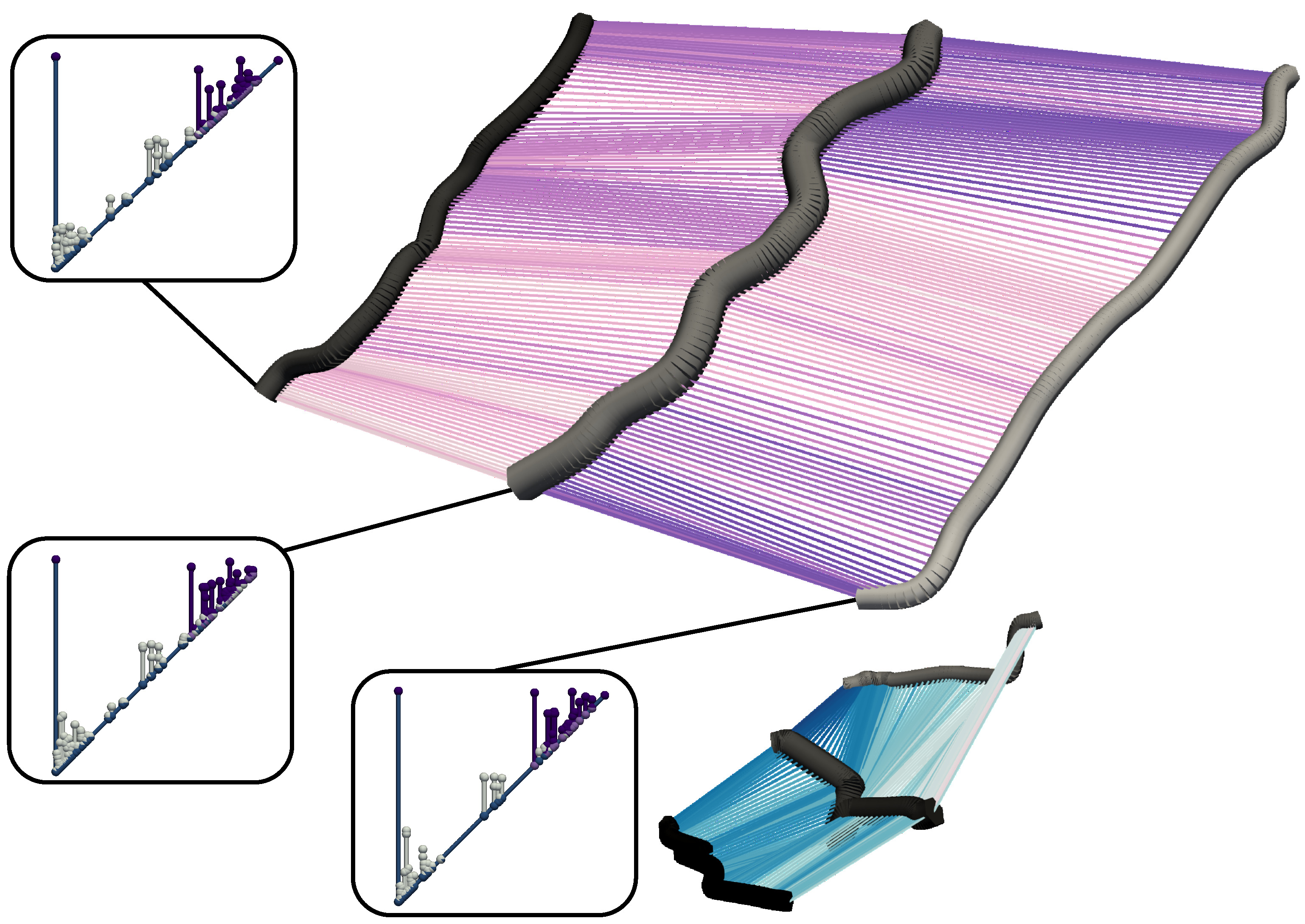}
	\caption{2-means clustering $(k=2)$ of the four VESTEC TVPDs
\julien{(stochastic barycenter routine)}.
Shown are the four TVPDs and the
two cluster centroids returned. For visualization, TVPDs are embedded in
$\mathbb{R}^3$ via an MDS embedding induced by $d^\alpha_\Delta$; For each TVPD,
time is encoded in grayscale, and its optimal \CEDP partial matching to the
assigned centroid is overlaid; with edge colors encoding assignment cost \sebastien{(from blue for low cost, through pink, to purple for high cost)}.
\sebastien{Insets show, for the first cluster and at the initial time step, the
persistence diagrams of the two input TVPDs and the persistence diagram of their
centroid, to which they are matched by CED.} The two clusters
\sebastien{correctly} match the ground truth (runs 1--2 vs.\ 3--4).}
    \label{fig:VESTEC_kmeans}
    \end{figure}

    \section{Results}

    \noindent This section reports experimental results executed on a
workstation equipped with an Intel Xeon CPU (2.9 GHz; 16 cores; 64 GB RAM) and
using TTK
    \cite{ttk17, ttk19, leguillou_tech24}
for persistence diagram computation
\cite{guillou2023discretemorsesandwichfast, leguillou_tpds25}
and matching \cite{vidal2019progressive}. Our method is implemented in
C++
as TTK modules. We performed
the experiments on an ensemble of simulated and acquired 2D/3D datasets—some
reused from prior work (sea-surface
height\sebastien{\cite{vidal2019progressive}},
VESTEC\sebastien{\cite{flatken2023vestec}}) and another drawn from the 2018
SciVis contest (asteroid
impact\sebastien{\cite{taylor2008scivis,imahorn2018visualization}}). For each
experiment whose results are reported in this section, we used
\(\boundarySet=\{\PD_{\varnothing}\}\).
Detailed specifications of these datasets are provided in Appendix C. The parameters \(\paramWeight\), \(\paramDelta\), \(\eta\), and \(\beta\) in all our experiments are set according to the specifications detailed in Appendix~D; these specifications can also be read as a practical guideline for choosing these parameters in applications of \(\CEDM^\paramDelta_{\paramWeight,\paramPenalty}\) to other TVPD datasets.

    \subsection{Framework quality}\label{sec:framework_quality}

    \julien{The}
    \CEDP is a metric (\sebastien{\autoref{sec:continuous_edit_distance_between_TVPDs}}). \autoref{fig:stability} empirically evaluates its robustness to additive
noise (with one temporal-only noise experiment and one spatial-only noise
experiment) for several values of parameter $\paramPenalty$. From a reference
time series of timed PL–scalar fields
$\timedScalarFieldsSeq=\bigl((\functionf_\variablen,
\variableTime_\variablen)\bigr)_{\variablen=0}^{\variableN_\timedScalarFieldsSeq
}$, we generate, for each experiment, 25 noisy sequences
$\timedScalarFieldsSeq^{(\scalarVariableOne)}$ with increasing values
$\scalarVariableOne$. For the temporal-only noise experiment, for each value of
$\scalarVariableOne$ and each
$\variablen\in\{0,\dots,\variableN_\timedScalarFieldsSeq\}$, we add uniform
noise of amplitude $\scalarVariableOne$ to $\variableTime_\variablen$, thereby
obtaining $\timedScalarFieldsSeq^{(\scalarVariableOne)}$. We then report
$\CEDM\big(\TVPDM(\timedScalarFieldsSeq),\,
\TVPDM(\timedScalarFieldsSeq^{(\scalarVariableOne)})\big)$ as a function of
$\scalarVariableOne$.
\julien{The}
curves
grow approximately linearly, indicating
    \julien{a}
    stability of
    \julien{the}
    \CEDP to temporal jitter for reasonable noise levels. For the
spatial-only noise experiment, for each value of $\scalarVariableOne$ and each
$\variablen\in\{0,\dots,\variableN_\timedScalarFieldsSeq\}$, we add
\julien{a}
uniform
noise of amplitude $\scalarVariableOne$ to the values of $\functionf_\variablen$
(that is, from $\functionf_\variablen$ a noisy version
$\functionf_{\variablen}^{\scalarVariableOne}$ is created such that $\vert \vert
\functionf_\variablen-\functionf_{\variablen}^{\scalarVariableOne}\vert \vert
\leq \scalarVariableOne$), yielding
$\timedScalarFieldsSeq^{(\scalarVariableOne)}$. As above, we report
$\CEDM\big(\TVPDM(\timedScalarFieldsSeq),\,
\TVPDM(\timedScalarFieldsSeq^{(\scalarVariableOne)})\big)$ versus
$\scalarVariableOne$; empirically, the dependence on $\scalarVariableOne$
unfolds in four regimes. For small noise amplitude ($\scalarVariableOne<12\%$),
the curve shows a near-linear baseline, the persistence pair birth and death
times (of persistence diagrams \CEDPP-matched at the same time step) shift
almost independently. For intermediate amplitudes ($12\%<\scalarVariableOne<20\%$),
the curve becomes visibly convex as the first combinatorial events (argmax
flips, elder inversions, changes of the killing cell) start to appear and
increase in frequency, which steepens the slope. A threshold-crossing kink then
emerges in a narrow window ($20\%<\scalarVariableOne<25\%$) when many pairs switch
almost simultaneously. Beyond this, for $\paramPenalty=1$ the growth becomes approximately linear as \sebastien{the typical $\wasserstein_2$ contribution between diagrams matched by CED stabilizes}, whereas for ($\paramPenalty<1$) additional slope breaks appear, coinciding with switches of
the \CEDP matching regime (substitutions vs. deletions/insertions). Across both
settings, $\paramPenalty$ tunes tolerance to noise—the slopes and breakpoints
shift as $\paramPenalty$ decreases. These two experiments illustrate that,
\julien{in practice, the}
\CEDP
varies smoothly and predictably under input perturbations, showing its
robustness to additive noise.
    
    \autoref{fig:convergence} illustrates the decrease and convergence of the
$\CEDM_{\paramWeight,1}^{\paramDelta}$-Fréchet energy during the iterative
computation of a \CEDP barycenter for 16 synthetic input \TVPDsPP,
using both the stochastic and the greedy variants. In both
cases, the energy decreases monotonically and stabilizes.

    \sebastien{Next, we illustrate on \autoref{fig:dilation_matching} the
recovery of a temporal shift between two input \TVPDsP from the asteroid impact
dataset (simulations of asteroid–ocean interactions with atmospheric airbursts).
The two runs, YB11 and YC11, share the same early scenario but the airburst
occurs markedly earlier in YC11 than in YB11. In the alignment strip,
\julien{the} \CEDP produces a coherent block of matchings that follows the
pre-explosion phase in both sequences, then shifts to align the post-explosion
regime, rather than simply matching snapshots with similar time stamps. This
behavior shows that \julien{the} \CEDP effectively recovers temporal shifts,
supporting its use for temporal alignment in \TVPDsPP.}
\autoref{fig:subsequence} illustrates the practical relevance of \julien{the}
\CEDP for motif search by aligning an input \TVPDP to one of its own
subsequences. The \CEDPP-induced alignment correctly retrieves the \TVPDP and
subsequence correspondence and synchronizes the $\paramDelta$-subdivisions
(matched subdivisions share identical time stamps), thereby validating \CEDP for
motif search in \TVPDsPP.

    \autoref{tab:clustering} reports the clustering results. On the MDS-based clustering baseline, \CEDP is consistently on par with, or superior to, competing dissimilarities; notably, on Asteroid Impact it is the only one to achieve zero errors with respect to the ground truth. With \CEDPP-clustering, the ground-truth partition is always recovered by at least one of the two variants (stochastic or greedy), whereas the MDS-based pipeline misclassifies on VESTEC for every dissimilaritie. Taken together, the MDS-based results support the use of \CEDP as a distance for \TVPDP analysis, and the \CEDPP-clustering results validate the stochastic and greedy variants as effective methods for clustering \TVPDsPP.

    \begin{figure}[!t]  
      \centering
      \includegraphics[width=\linewidth]{./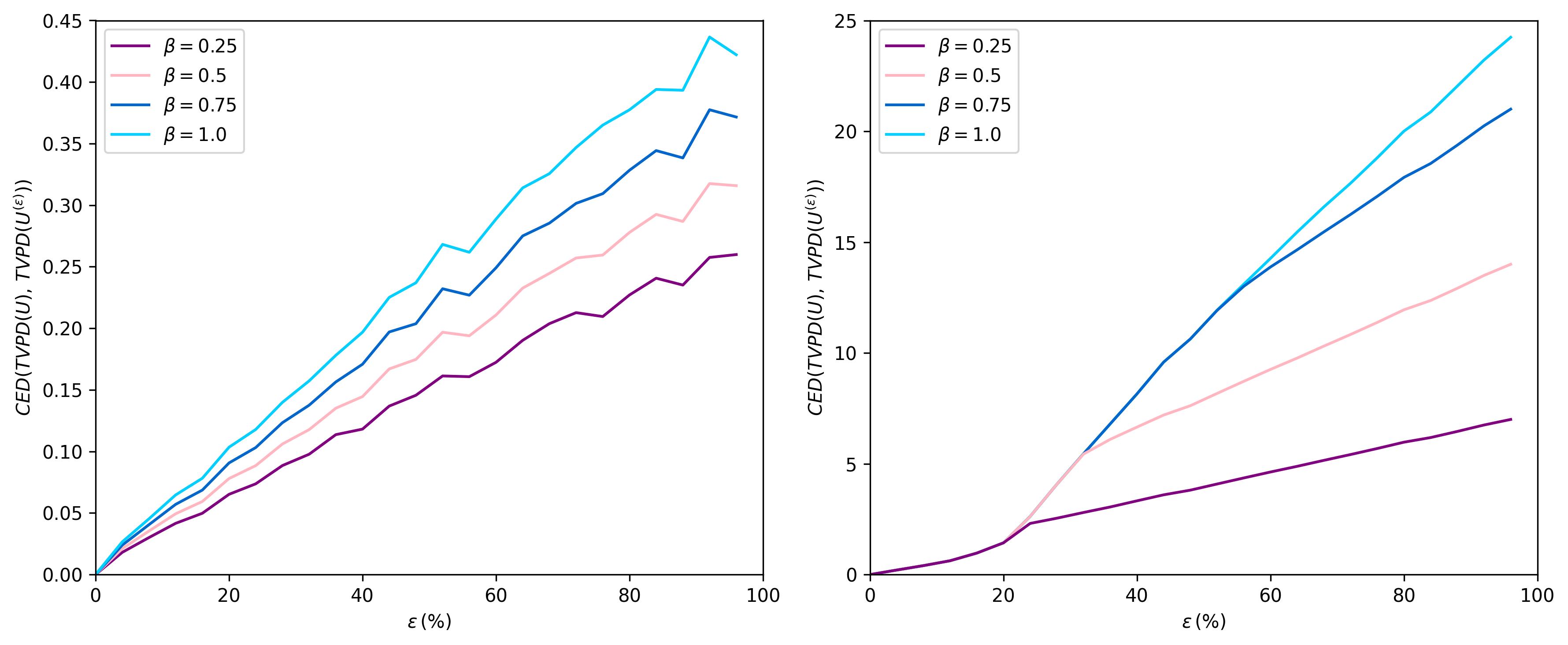}
      \caption{Empirical robustness of \julien{the} \CEDP to \julien{an}
additive noise \julien{$\scalarVariableOne$}: \emph{temporal-only} noise (left)
and \emph{spatial-only} noise (right), for several values of the parameter
\(\paramPenalty\).
\julien{For the temporal-only noise (left), the curves grow approximately
linearly. For the spatial-only noise (right), the distance increase follows
four phases (discussed in the main text).}
  Across both settings, \(\paramPenalty\) tunes the tolerance to noise---the
slopes and \julien{breakpoint shifts} as \(\paramPenalty\) varies.}
      \label{fig:stability}
    \end{figure}

    \begin{figure}[!t]  
	\centering
	\includegraphics[width=\linewidth]{./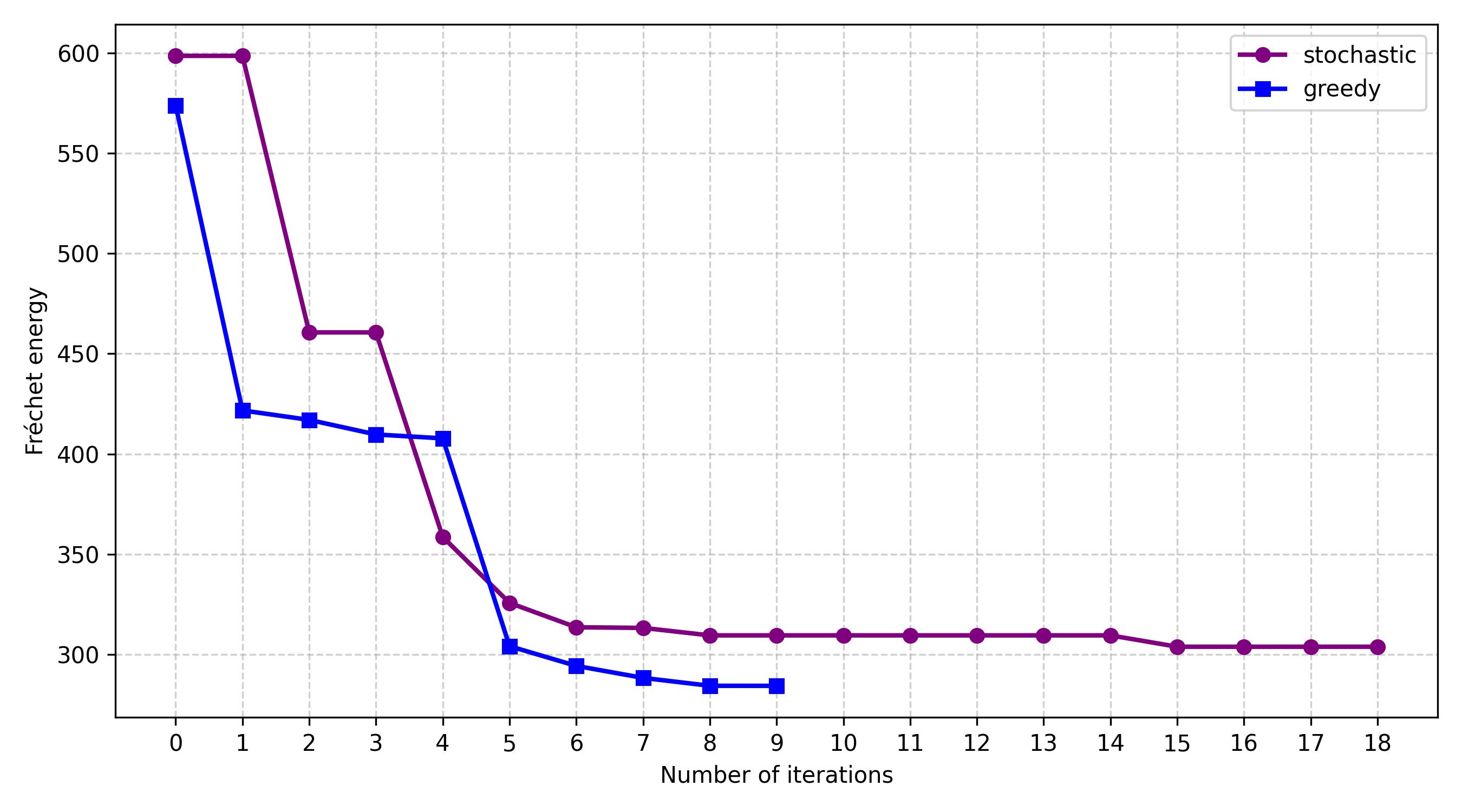}
	\caption{Convergence of the Fr\'echet energy across iterations during the
computation of a \CEDP barycenter for 16 synthetic input \TVPDsPP.
In both cases, the objective
decreases and then stabilizes, indicating convergence to a \CEDP barycenter of
the 16 \TVPDsP (returned as the final output).}\label{fig:convergence}
    \end{figure}
    
    \subsection{Time performance}
    
    \autoref{tab:sequential} reports the practical time performance of our
implementation for $\CEDM^{\paramDelta}_{\paramWeight,1}$ barycenter
computation, for both the stochastic and the greedy variants. We observe that
the running time depends on the number of input \TVPDsP in the sample,
$\sampleSize$, and on the average number of persistence pairs per input \TVPDPP,
$\averageNumberPersistencePairPerTVPD = \averageNumberDeltaS^{\paramDelta}\cdot
\averageNumberPersistencePair$ (where $\averageNumberDeltaS^{\paramDelta}$
denotes the average number of $\paramDelta$-subdivisions per input \TVPDP in the
sample and $\averageNumberPersistencePair$ the average number of persistence
pairs per $\paramDelta$-subdivision of its piecewise-constant approximation).
Indeed, each iteration of our barycenter algorithm requires $\sampleSize$
$\CEDM$-geodesic computations and $\sampleSize\!\cdot\!(\paramGGD-1)$ $\CEDM$
distance computations for the greedy version (for a step size $\paramSGD =
1/\paramGGD$), and a single $\CEDM$-geodesic computation for the stochastic
version. To carry out the dynamic programming
(\sebastien{\autoref{sec:CED_DPC}}) used to compute each
$\CEDM^{\paramDelta}_{\paramWeight,1}$ (or each
$\CEDM^{\paramDelta}_{\paramWeight,1}$ geodesic) between two input \TVPDsP
$\TVPDX$ and $\TVPDY$ from the sample, one must first compute the distance
$\localDistance^{\paramWeight}_{\paramDelta}$ between every pair of
$\paramDelta$-subdivisions of $\tilde \TVPDX^{\paramDelta}$ and $\tilde
\TVPDY^{\paramDelta}$. If $\persistencePairNumber$ denotes the number of
persistence pairs in a $\paramDelta$-subdivision of $\tilde
\TVPDX^{\paramDelta}$ and $\persistencePairNumberBis$ that of $\tilde
\TVPDY^{\paramDelta}$, then computing
$\localDistance^{\paramWeight}_{\paramDelta}$ between them with
    \julien{an exact assignment}
    algorithm takes
$\bigLandau\!\big((\persistencePairNumber{+}\persistencePairNumberBis)^3\big)$.
The computation of $\CEDM^{\paramDelta}_{\paramWeight,1}$ between $\TVPDX$ and
$\TVPDY$ is then performed by dynamic programming over the
$\paramDelta$-subdivisions of $\tilde \TVPDX^{\paramDelta}$ and $\tilde
\TVPDY^{\paramDelta}$, yielding a time complexity
$\bigLandau\!\big(\variableN_\TVPDX^{\paramDelta}\!\cdot
\variableN_\TVPDY^{\paramDelta}\big)$. Once
$\CEDM^{\paramDelta}_{\paramWeight,1}(\TVPDX,\TVPDY)$ has been computed, to
place oneself somewhere on their \CEDPP-geodesic costs at most
$\bigLandau\!\big((\variableN_\TVPDX^{\paramDelta}+
\variableN_\TVPDY^{\paramDelta})\cdot
\averageNumberPersistencePairPerTVPD\big)$. Consequently, the end-to-end cost of
one iteration is on the order of $\bigLandau\!\Bigl((\paramGGD\cdot
\sampleSize)\big((\averageNumberDeltaS^{\paramDelta})^2 \cdot (8
\averageNumberPersistencePair^{3} {+} 1)
+\averageNumberDeltaS^{\paramDelta}\cdot2\averageNumberPersistencePair\big)\Bigr
)$ for the greedy algorithm and
$\bigLandau\!\big((\averageNumberDeltaS^{\paramDelta})^2 \cdot (8
\averageNumberPersistencePair^{3} {+} 1)+
\averageNumberDeltaS^{\paramDelta}\cdot2 \averageNumberPersistencePair\big)$ for
the stochastic algorithm. This explains the increase in running time
observed for the stochastic variant from the VESTEC to the SSH dataset—its
complexity is independent of $\sampleSize$ and depends only on
$\averageNumberPersistencePairPerTVPD$, \sebastien{which grows by slightly less than a
factor of two, while the runtime grows by slightly more. By contrast, the greedy
variant exhibits a marked increase, consistent with its additional scaling in
$\sampleSize$}.

    \begin{table}[t]
  \refstepcounter{table}\label{tab:sequential}%
  \noindent\textbf{Table \thetable.}
  \julien{Running} times (in seconds, 5 run average) for the barycenter
computation using our \emph{greedy} and \emph{stochastic} variants.
  Here, \(\sampleSize\) is the number of input \TVPDsP in the sample, and \(\averageNumberPersistencePairPerTVPD = \averageNumberDeltaS^{\paramDelta}\!\cdot\! \averageNumberPersistencePair\) is the average number of persistence pairs per input \TVPDPP, where \(\averageNumberDeltaS^{\paramDelta}\) denotes the average number of \(\paramDelta\)-subdivisions per input \TVPDP and \(\averageNumberPersistencePair\) the average number of persistence pairs per \(\paramDelta\)-subdivision (in the piecewise-constant approximation). The execution times show that the greedy variant scales with both \(\averageNumberPersistencePairPerTVPD\) and \(\sampleSize\), whereas the stochastic variant scales primarily with \(\averageNumberPersistencePairPerTVPD\).

  \vspace{4pt}\centering
  \resizebox{\columnwidth}{!}{%
  \begin{tabular}{l r r r r r}
    \toprule
    \textbf{Dataset} & \textbf{$\sampleSize$} & \textbf{$\averageNumberDeltaS^{\paramDelta}$} &

\textbf{$\averageNumberPersistencePairPerTVPD=\averageNumberDeltaS^{\paramDelta}
\!\cdot\! \averageNumberPersistencePair$} & \textbf{\julien{Greedy}} &
\textbf{\julien{Stochastic}} \\
    \midrule
    Asteroid Impact    & 6 & 40.56  & 1,313,868 & 19,834.21  & 3,731.03 \\
    Sea Surface Height & 8 & 50  &  82,680  & 4,036.67  & 1,661.90 \\
    VESTEC             & 4 & 120 &  \sebastien{36,242}  & \sebastien{722.46}  & \sebastien{903.81} \\
    \bottomrule
  \end{tabular}%
  }
\end{table}

    \subsection{Limitations}

    A practical limitation of \CEDP is its runtime when used within a clustering
pipeline, which entails many barycenter evaluations—especially for samples with
a large average number of persistence pairs per \TVPDsP
$\averageNumberPersistencePairPerTVPD$. Indeed, as shown in
\autoref{tab:sequential}, barycenter times on acquired datasets are substantial
and scale roughly with $\averageNumberPersistencePairPerTVPD$ for both the
stochastic and the greedy variants. However, several strategies can drastically
reduce the cost of each \CEDP computation—and thus the total time required for
clustering. First, following Vidal
et~al.\cite{vidal2019progressive}, one can efficiently
approximate the Wasserstein distance between two persistence diagrams by
discarding pairs below a given persistence threshold, which significantly lowers
$\averageNumberPersistencePairPerTVPD$ (and therefore the barycenter runtime)
while preserving the signal carried by salient features. Second, locality
constraints such as the Sakoe–Chiba band \cite{keogh2005exact}, originally
proposed for DTW, can be directly adapted to \CEDPP: restricting the warping
path to a diagonal band and computing costs only inside it reduces the dynamic
programming complexity from $\bigLandau(n\cdot m)$ to $\bigLandau(n\cdot w)$
(band half-width $w$), leading to substantial speedups for the \CEDP
computation, and therefore the barycenter runtime—at the expense of possibly
missing large optimal
assignments.

\sebastien{Another limitation, of a theoretical nature but with practical implications, is that our construction of \(\CEDM^\paramDelta_{\paramWeight,\paramPenalty}\)-geodesics is currently restricted to the case $\paramPenalty = 1$; extending it to other values of $\paramPenalty$ is left for future work.}

\sebastien{\julien{Finally, another limitation}
is the reliance on a common subdivision step when 
comparing two input \TVPDsP via \julien{\CEDPP}. When their temporal extents 
differ significantly, any choice of shared subdivision step $\paramDelta$ 
induces a nontrivial lower bound on the CED between them, proportional (through 
the penalty parameter $\paramPenalty$) to the difference in their durations: a 
long \TVPDP $\TVPDp$ can only be matched to a much shorter one $\TVPDq$ up to a 
large number of unmatched $\paramDelta$-subdivisions of $\TVPDp$, each incurring 
a positive deletion cost. In such cases, the distance reflects not only 
differences in topological patterns but also differences in temporal length, 
which may or may not be desirable depending on the application. \sebastienBis{In scenarios where one wishes to discount such temporal-length effects, a proper value of $\beta$ needs to be adjusted. However, as mentioned above, this is only available for distance computation, and not for geodesics or barycenters.}
}

\section{Conclusion}\label{sec:conclusion}

\noindent In this paper we presented the \emph{Continuous Edit Distance}, a geodesic, elastic distance for \TVPDsPP. We established metric and geodesic properties, derived efficient dynamic-programming routines to evaluate the \CEDPP, and provided an explicit \emph{delete\(\rightarrow\)substitute\(\rightarrow\)insert} construction of \CEDPP-geodesics. On top of this, we proposed two practical barycenter solvers—stochastic and greedy—with monotone Fréchet-energy decrease and simple stopping rules, and we released a C++ implementation within TTK.

Empirically, \CEDP is robust to additive perturbations (approximately linear response to temporal jitter; piecewise-linear to spatial noise), yields interpretable alignments that recover temporal shifts and enable motif search. When used within a $k$-means–style pipeline on acquired datasets (sea-surface height, VESTEC, asteroid impact), \CEDP delivers clustering quality on par with or superior to standard elastic dissimilarities, and our \CEDPP-based clustering attains the ground truth with at least one of the two barycenter variants. 

A practical limitation is runtime in clustering pipelines, where many barycenter evaluations are required for samples with large average numbers of persistence pairs. We outlined straightforward accelerations: persistence thresholding to shrink diagram sizes while preserving the signal carried by salient features, and locality constraints (e.g., Sakoe–Chiba bands) to reduce dynamic-programming cost. 

Future work includes multi–step–ahead forecasting with \CEDP as the training objective, enabling early-warning and counterfactual topological analysis of time-varying phenomena that commonly exhibit temporal dilation and shifts (e.g., ocean dynamics, atmospheric processes, functional brain connectivity). We also plan a practical \CEDPP-based linearization of the \TVPDP space—mapping neighborhoods to low-dimensional coordinates—so that standard tools (dimensionality reduction, trend analysis) are directly applicable to \TVPDsPP. \sebastien{Finally, a perspective is to improve computational efficiency by exploiting parallelism in our barycenter solvers, in particular for the greedy geodesic descent scheme, where the candidates sampled along geodesics from the current iterate to each input \TVPDP can be evaluated independently.}

Overall, \CEDP equips \TVPDP analysis with a principled distance, interpretable geodesics, and practical barycenters enabling standard geometric workflows (alignment, averaging, clustering) directly in the space of \TVPDsPP.

	\section*{Acknowledgments}
	\noindent
	\small{This work is partially supported by the European Commission grant
ERC-2019-COG “TORI” (ref. 863464, \url{https://erc-tori.github.io/}).}
		
		\bibliographystyle{IEEEtran}
		\bibliography{refs}

\appendixheader

	\appendices

	\section{Measure theory preliminaries}\label{app:appendixA}
        \label{app:preliminaries}
        \noindent We used concepts from measure theory in the formalization of the CED and the elements to which it applies; we therefore succinctly recall some necessary definitions and results. We refer the reader to the textbook \cite{bogachev2007measure} for a measure theory exposition.
        
        Let \(X\) be a set. A family \(\mathcal{M}\) of subsets of \(X\) is called a \(\sigma\)-algebra on \(X\) if it satisfies the following properties: (\textbf{i}) \(X \in \mathcal{M}\); (\textbf{ii}) If \(A \in \mathcal{M}\), then \((X \setminus A) \in \mathcal{M}\); (\textbf{iii}) If \(A_n \in \mathcal{M}\) for every \(n \in \mathbb{N}\), then $\cup_{n\in\mathbb N}\,A_n \in \mathcal{M}$. The elements of \(\mathcal{M}\) are referred to as the measurable sets of \(X\), and the ordered pair \((X,\mathcal{M})\) is called a measurable space.

        For \(F\) a family of subsets of a set \(X\), we define  \( \sigma(F)\) as the intersection of the \( \sigma\)-algebras $\mathcal{M}$ on X that contain \(F\), and call \(\sigma(F)\) the \(\sigma\)-algebra generated by \(F\); it is the smallest \(\sigma\)-algebra on \(X\) that contains \(F\).
        
        If \((X, d)\) is a metric space, the Borel \(\sigma\)-algebra on \(X\) is the \(\sigma\)-algebra, denoted $\mathcal{B}(X)$, generated by the open balls of \(X\): \( \mathcal{B}(X) = \sigma\bigl(\{ B(x, r) \mid x \in X,\ r > 0 \}\bigr)\), with $B(x, r)=\{\,y\in\mathcal{X},\, d(x,y)<r\}$.

        Considering that \((X,\mathcal{M})\) and \((Y,\mathcal{N})\) are measurable spaces, an application \(f : X \to Y\) is said to be measurable with respect to \(\mathcal{M}\) and \(\mathcal{N}\) if \(f^{-1}(A) \in \mathcal{M}\) for all \(A\in \mathcal{N}\). In this case, we also say that $f : (X,\mathcal{M}) \to (Y,\mathcal{N})$ is measurable.
        
        Moreover, for $(X,d)$ and $(Y,d')$ two metric spaces, a result of measure theory is that a continuous application \(f : X \to Y\) is then measurable with respect to $\mathcal{B}(X)$ and $\mathcal{B}(Y)$. When the choice of metrics is not clear from the context, we will say that $f : (X,d) \to (Y,d')$ is continuous, in order to make continuity with respect to $d$ and $d'$ explicit.
        
        Unless otherwise specified, a function \(f : X \to Y\) is said to be measurable if it is measurable with respect to $\mathcal{B}(X)$ and $\mathcal{B}(Y)$. 
        
        Fix \((X,\mathcal{M})\) a measurable space. A function \(\mu : \mathcal{M} \to [0,+\infty]\) is a positive measure if \(\mu(\emptyset) = 0\), and for any countable family \((A_n)_{n\in\mathbb{N}}\) of pairwise disjoint sets in \(\mathcal{M}\), we have \(\mu\left(\bigcup_{n\in\mathbb{N}} A_n \right) =\sum_{n\in\mathbb{N}}\mu(A_n)\). The triple \((X,\mathcal{M},\mu)\) is called a measure space.

        There exists a unique positive measure \(\lambda\) on \(\bigl(\mathbb{R}, \mathcal{B}(\mathbb{R})\bigr)\), called Lebesgue measure, such that \(\lambda\bigl([a,b]\bigr)=\lambda\bigl((a,b]\bigr)=\lambda\bigl([a,b)\bigr)=\lambda\bigl((a,b)\bigr) = b - a\) for all \(a, b \in \mathbb{R}\) with \(a < b\).
        
        Let \((X,\mathcal{M},\mu)\) be a measure space. A set \(A \subset X\) is defined as negligible if \(A \in \mathcal{M}\) and \(\mu(A) = 0\). The measure \(\mu\) is said to be complete if every subset of a negligible set is also measurable and negligible. Suppose \(\completion{(\mathcal{M}})\) be the collection of all sets \(E \subset X\) for which there exist \(A,B \in \mathcal{M}\) such that \(A \subset E \subset B\) and \(\mu(B \setminus A) = 0\). Define \(\mu^*(E) = \mu(A)\). Then \(\completion{(\mathcal{M}})\) is a \(\sigma\)-algebra on \(X\), \(\mu^*\) is a complete measure extending \(\mu\), and \(\bigl(X,\completion{(\mathcal{M}}),\mu^*\bigr)\) is called the completion of $\mathcal{M}$ for \(\mu\). In practice, we still denote $\mu^{*}$ by $\mu$.

        \sebastien{Let $Y$ be a metric space. If $I \in\mathcal{B}(\RNB)$, we said a map $f : I \to Y$ is Lebesgue--measurable if it is measurable with respect to $\completion{\bigl(\borelian(\RNB)|_{\intervalI}\bigr)}$, the completion of $\borelian(\RNB)|_{\intervalI}$ for the Lebesgue measure $\lambda$, and $\borelian(Y)$, that is,
        \(f^{-1}(B) \in \completion{\bigl(\borelian(\RNB)|_{\intervalI}\bigr)}
          \quad \text{for every } B \in \borelian(Y).
        \) In this case, we also say that $f : (I,\borelian(\RNB)|_{\intervalI}) \to (Y,\borelian(Y))$ is Lebesgue--measurable. As a classical result, if \(f : I \to Y\) is continuous, then $f$ is measurable. Moreover, another result is that if \(f : I \to Y\) is measurable, then $f$ is Lebesgue-measurable. So, if \(f : I \to Y\) is continuous, then $f$ is Lebesgue-measurable.}

	\section{Proofs}

    \textbf{Remark.} \textit{All the abstract results proved in this section apply in particular to the space \( (\PDS_2, \wasserstein_2) \)  of persistence diagrams with finite 2-th moment (implicitly assumed in the main manuscript), endowed with the 2-Wasserstein distance, which is a geodesic Polish metric space \cite{mileyko2011probability,turner2013frechetmeansdistributionspersistence} (in particular separable). In the main manuscript, we moreover take $A \subset E$ to be a singleton, so that for every $x \in E$ the set
    \(
      \{ y \in A : d(x,y) = d(x,A) \}
    \)
    is non-empty (indeed, it coincides with $A$ itself). Hence all the standing assumptions on $(E,d)$ and on $A$ are automatically satisfied in the setting used in the main manuscrit and for the experiments, and the results of the present section apply directly to that case.}

    \medskip

	\noindent Let \( (E, d) \) be a non-empty separable metric space,  
	provided with its Borel sigma-algebra \( \mathcal{B}(E) \), and let \( \alpha \in (0,1) \), \( B \in (0,1]\), and \( A \subsetneq E \) be non-empty.
	
	\bigskip
	Set \( \Delta \in (0, +\infty) \), and let \( \mathcal{S}^\Delta \)  
	be the space of applications \( F \) from a subset
	\[
	dom\,\TVPDf = \bigcup_{\variablei \in \{1, \dots, \variableN_\TVPDf^\paramDelta\}} \intervalI_\variablei^\TVPDf \subset \RNB
	\]
	to \( E \), with \( \variableN_\TVPDf^\paramDelta \in \NNB^* \), and \( ( \intervalI_\variablei^\TVPDf )_{\variablei \in \{1, \dots, \variableN_\TVPDf^\paramDelta\}} \) a disjoint family of intervals of \( \mathbb{R} \),  
	such that:
	
	\begin{itemize}
		\item \( \forall \variablei \in \{1, \dots, \variableN_\TVPDf^\paramDelta\}, \, \sup \intervalI_\variablei^\TVPDf - \inf \intervalI_\variablei^\TVPDf = \paramDelta \),
		\item \( \forall (\variablei, \variablei') \in \{1, \dots, \variableN_\TVPDf^\paramDelta\}^2,\ \forall (\variablex, \variablex') \in \intervalI_\variablei^\TVPDf \times \intervalI_{\variablei'}^\TVPDf,\ \variablei < \variablei' \Rightarrow \variablex < \variablex' \),
		\item \( \forall \variableTime \in dom\,\TVPDf,\, d(\TVPDf(\variableTime), \boundarySet)>0 \),
		\item \( \forall \variablei \in \{1, \dots, \variableN_\TVPDf^\paramDelta\},\, \TVPDf|_{\intervalI_\variablei^\TVPDf} : \bigl( \intervalI_\variablei^\TVPDf,\ \mathcal{Z}(\mathcal{B}(\mathbb{R})|_{I_i^F}) \bigr) \to \bigl( E,\ \borelian(E) \bigr) \) is measurable,
		
		where \( \mathcal{Z}(\mathcal{B}(\mathbb{R})|_{I_i^F}) \) is the completion of  
		\( \mathcal{B}(\mathbb{R})|_{I_i^F} \) for \( \lambda \), the Lebesgue measure,
		
		\item \( \forall \variablei \in \{1, \dots, \variableN_\TVPDf^\paramDelta\},\quad \mathrm{diam}(\mathrm{Im}(F|_{I_i^F})) < \infty \).
	\end{itemize}
	
	We will omit the superscript $\paramDelta$ in the notation $\variableN_\TVPDf^\paramDelta$ whenever no ambiguity arises. Let \( (\TVPDf ,\TVPDg)\in (\TVPDspace^\paramDelta)^2 \), \( \variablei \in \{1, \dots, \variableN_\TVPDf\}, \variablej \in \{1, \dots, \variableN_\TVPDg\}\).

	We denote \( F_i := F|_{I_i^F} \), and we call \( F_i \) a \( \Delta \)-subdivision.
	
	\bigskip
	
	We consider that two \( \Delta \)-subdivisions \( F_i, G_j \) are equal if:
	
	\begin{itemize}
		\item \( \inf I_i^F = \inf I_j^G \),
		\item \( \sup I_i^F = \sup I_j^G \),
		\item \( \lambda\left( \left\{ x \in I_i^F \cap I_j^G \mid F_i(x) \neq G_j(x) \right\} \right) = 0 \),
	\end{itemize}
	
	We denote \( s^\Delta \) the \( \Delta \)-subdivision space.
	
	\bigskip
	
	In a similar way, let \(  (\TVPDf, \TVPDg) \in (\TVPDspace^\paramDelta)^2 \), with  \(\variableN_\TVPDf =  \variableN_\TVPDg\), we consider that $\TVPDf$, $\TVPDg$ are equal if, \( \forall \variablei \in \{1, \dots, \variableN_\TVPDf\},\,  \TVPDf_\variablei =  \TVPDg_\variablei\). More generally, two measurable applications $\TVPDf$ and $\TVPDg$ from $\metricSpace\subset\RNB $ to $E$ are considered equal in our framework whenever \( \lambda ( \{ \variablex \in \metricSpace,\,  \TVPDf(\variablex) \neq \TVPDg(\variablex) \})  = 0  \).
	
	\bigskip
	
	\textbf{Remark :} Moreover if, \[F_i : ( I_i^F, \mathcal{Z}(\mathcal{B}(\mathbb{R})|_{I_i^F})) \longrightarrow (E, \mathcal{B}(E))
	\]
	with \( \inf I_i^F = c,\,  \sup I_i^F = d,\, (c,d) \in \mathbb{R}^2 \), $I_i^F\neq[c,d]$, we can use instead \( \tilde{F}_i \) (still denoted \( F_i \) in practice) :  
	\[
	([c,d], \mathcal{Z}(\mathcal{B}(\mathbb{R})|_{[c,d]})) \longrightarrow (E, \mathcal{B}(E))
	\]
	
	measurable and defined as, \( \forall t \in (c,d),\ \tilde{F}_i(t) = F_i(t) \),\,$\,
	\tilde{F}_i(c) =x, \tilde{F}_i(d) = y, \, \text{with } x,y \text{ some elements of } E.
	$
	
	\bigskip
	
	Indeed, let us suppose \( F \in \mathcal{S}^\Delta \), with \( I_i^F \neq [c,d] \),  
	for \( (c,d,i) \in \mathbb{R}^2 \times \{1, \dots, N_F\} \)
	
	\[
	\text{If } A \in \mathcal{B}(E),
	\]

    \[
\tilde{F}_i^{-1}(A) =
\begin{cases}
F_i^{-1}(A) \cup \{c,d\} & \text{if } x \in A \text{ and } y \in A,\\[0.3em]
F_i^{-1}(A) \cup \{c\}   & \text{if } x \in A \text{ and } y \notin A,\\[0.3em]
F_i^{-1}(A) \cup \{d\}   & \text{if } x \notin A \text{ and } y \in A,\\[0.3em]
F_i^{-1}(A)              & \text{if } x \notin A \text{ and } y \notin A.
\end{cases}
\]
	
	with \( F_i^{-1}(A) \in \mathcal{Z}(\mathcal{B}(I_i^F)) \text{, then with } F_i^{-1}(A) \in \mathcal{Z}(\mathcal{B}([c,d])) \).
	
	Moreover, \( \{c\},\ \{d\},\ \{c,d\} \in \mathcal{Z}(\mathcal{B}([c,d])) \), then
	
	\[
	F_i^{-1}(A) \cup \{c\} \in \mathcal{Z}(\mathcal{B}([c,d])),\quad
	F_i^{-1}(A) \cup \{d\} \in \mathcal{Z}(\mathcal{B}([c,d]))
	\]
	
	\[
	F_i^{-1}(A) \cup \{c,d\} \in \mathcal{Z}(\mathcal{B}([c,d]))
	\] .
	
	And so,  
	\[
	\tilde{F}_i : ([c,d], \mathcal{Z}(\mathcal{B}(\mathbb{R})|_{[c,d]})) \longrightarrow (E, \mathcal{B}(E))
	\]
	
	is measurable and \( \tilde{F}_i = F_i \).
	
	\bigskip
	
	Similarly, if \( F_i : (I_i^F, \mathcal{Z}(\mathcal{B}(\mathbb{R})|_{I_i^F})) \longrightarrow (E, \mathcal{B}(E)) \),
	\bigskip
	
	with \( \inf I_i^F = c,\,  \sup I_i^F = d,\, (c,d) \in \mathbb{R}^2 \), $I_i^F\neq(c,d)$, we can use instead \( \tilde{F}_i \) (still denoted \( F_i \) in practice) :  
	\[
	((c,d), \mathcal{Z}(\mathcal{B}(\mathbb{R})|_{(c,d)})) \longrightarrow (E, \mathcal{B}(E))
	\]
	
	measurable and defined as \( \forall t \in (c,d),\ \tilde{F}_i(t) = F_i(t) \).
	
	\bigskip
	
	Indeed, let us suppose \( F \in \mathcal{S}^\Delta \), with \( I_i^F \neq (c,d) \),  
	for \( (c,d,i) \in \mathbb{R}^2 \times \{1, \dots, N_F\} \)
	
	\[
	\ \text{If } A \in \mathcal{B}(E),
	\]
	
	\[\text{we have }
	\tilde{F}_i^{-1}(A) = F_i^{-1}(A) \cap (c,d)
	\]
	
	with \( F_i^{-1}(A) \in \mathcal{Z}(\mathcal{B}(I_i^F)) \).
	
	\bigskip

	Then \( F_i^{-1}(A) \cap (c,d) \in \mathcal{Z}(\mathcal{B}((c,d)))
	\)
	\text{and }
	\( \tilde{F}_i^{-1}(A) \in \mathcal{Z}(\mathcal{B}((c,d))).
	\)
	
	\bigskip
	
	Finally,  
	\[
	\tilde{F}_i : ((c,d), \mathcal{Z}(\mathcal{B}(\mathbb{R})|_{(c,d)})) \longrightarrow (E, \mathcal{B}(E))
	\]
	
	is measurable and \( \tilde{F}_i = F_i \).
	
	\bigskip
	
	So, in the next we can assume that $I_i^F$ is open, or closed, as we need.
	
	\bigskip
	
	We therefore now assume for the next that if \( F \in \mathcal{S}^\Delta \), then \( \forall i \in \{1, \dots, N_F\}, \) $I_i^F$ is open.
	
	\bigskip
	
	\begin{proposition}\label{prop:W_measurability} 
	Let \( (P, Q) \in (\mathcal{S}^\Delta)^2 \), and \( (i, j) \in \{1, \dots, N_P\} \times \{1, \dots, N_Q\} \).
	
	\bigskip
	
	We denote:
	
	\begin{align*}
	&a_i := \inf(I_i^P),\quad b_i := \sup(I_i^P),\quad
	\\&c_j := \inf(I_j^Q),\quad d_j := \sup(I_j^Q).
	\end{align*}

	\bigskip
	
	Then the application,  
	\[
	W_\alpha^{P_i,Q_j} : \left( I_i^P,\ \mathcal{Z}(\mathcal{B}(\mathbb{R})|_{I_i^P}) \right) \longrightarrow \left( \mathbb{R},\ \mathcal{B}(\mathbb{R}) \right)
	\]
	\[
	t \longmapsto W_\alpha^{P_i,Q_j}(t) = (1 - \alpha) \cdot d\big( P(t),\ Q(t + c_j - a_i) \big) + \alpha \cdot |c_j - a_i|
	\]
	
	is measurable, and Lebesgue-integrable.
	\end{proposition}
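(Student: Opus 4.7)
The plan is to split the statement into a measurability argument and an integrability argument, with the translation $\tau(t) = t + c_j - a_i$ as the central object. Since $b_i - a_i = d_j - c_j = \Delta$, the map $\tau$ sends $I_i^P$ bijectively onto $I_j^Q$, and it is continuous (in fact affine) and Lebesgue-measure-preserving. I will write $W_\alpha^{P_i,Q_j}$ as an affine function of the composition $t \mapsto d\bigl(P_i(t),\, Q_j(\tau(t))\bigr)$, and derive measurability from a layered composition through $E \times E$ and $\mathbb{R}$.

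For the measurability part, first I would observe that $\tau \colon I_i^P \to I_j^Q$ is Borel-measurable, and since affine translations map Borel null sets to Borel null sets, $\tau$ also sends $\mathcal{Z}(\mathcal{B}(\mathbb{R})|_{I_i^P})$-measurable sets correctly: concretely, if $B \in \mathcal{B}(E)$ and $Q_j^{-1}(B) = C \cup N$ with $C \in \mathcal{B}(I_j^Q)$ and $N$ contained in a Lebesgue-null Borel set, then $\tau^{-1}(C) \in \mathcal{B}(I_i^P)$ and $\tau^{-1}(N)$ is contained in a Lebesgue-null Borel subset of $I_i^P$, so $(Q_j \circ \tau)^{-1}(B) \in \mathcal{Z}(\mathcal{B}(\mathbb{R})|_{I_i^P})$. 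Thus $Q_j \circ \tau$ is measurable from $\bigl(I_i^P, \mathcal{Z}(\mathcal{B}(\mathbb{R})|_{I_i^P})\bigr)$ to $(E, \mathcal{B}(E))$. Combined with the analogous measurability of $P_i$, the pair map $t \mapsto \bigl(P_i(t), Q_j(\tau(t))\bigr)$ is measurable into $(E \times E, \mathcal{B}(E) \otimes \mathcal{B}(E))$. Because $(E,d)$ is separable, we have $\mathcal{B}(E) \otimes \mathcal{B}(E) = \mathcal{B}(E \times E)$, so the continuity of $d \colon E \times E \to \mathbb{R}$ yields Borel-measurability of $t \mapsto d\bigl(P_i(t), Q_j(\tau(t))\bigr)$. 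Multiplying by $(1-\alpha)$ and adding the constant $\alpha |c_j - a_i|$ preserves measurability, giving the first half of the claim.

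For the integrability part, I would use the finite-diameter conditions in the definition of $\mathcal{S}^\Delta$. Fixing arbitrary $x_0 \in \mathrm{Im}(P_i)$ and $y_0 \in \mathrm{Im}(Q_j)$, the triangle inequality yields
\[
d\bigl(P_i(t), Q_j(\tau(t))\bigr) \le \mathrm{diam}(\mathrm{Im}(P_i)) + d(x_0, y_0) + \mathrm{diam}(\mathrm{Im}(Q_j)) < \infty
\]
uniformly in $t \in I_i^P$. Hence $W_\alpha^{P_i,Q_j}$ is bounded, and since $\lambda(I_i^P) = \Delta < \infty$, it is Lebesgue-integrable.

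The main subtlety I expect is not the bounding step but the composition step: $Q_j$ is only Lebesgue-measurable (not Borel), while $\tau$ is Borel, and in general pre-images of Lebesgue sets under Borel maps need not be Lebesgue. The argument goes through because $\tau$ is an affine translation, which is measure-preserving and in particular sends null sets to null sets; this is precisely what licenses the passage through the completed $\sigma$-algebras. A secondary point worth spelling out is the identification $\mathcal{B}(E \times E) = \mathcal{B}(E) \otimes \mathcal{B}(E)$, which is what makes the continuous map $d$ usable on the joint image and requires the separability of $(E,d)$ stated at the outset.
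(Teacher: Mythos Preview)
Your proposal is correct and follows essentially the same approach as the paper's proof: both show that the translation $\tau$ (the paper calls it $\mathcal{U}$) carries the completed $\sigma$-algebra on $I_i^P$ to that on $I_j^Q$ via the null-set-preserving property of affine translations, then compose with $Q_j$, pair with $P_i$ into $E\times E$, invoke separability to identify $\mathcal{B}(E\times E)=\mathcal{B}(E)\otimes\mathcal{B}(E)$, and use continuity of $d$; the integrability bound via the triangle inequality and the finite-diameter hypotheses is likewise identical.
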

	\bigskip
	
\begin{proof}
	
	\bigskip
	
	The application  
	\[
	\mathcal{U} : \left( I_i^P,\ \mathcal{Z}(\mathcal{B}(\mathbb{R})|_{I_i^P}) \right) \longrightarrow \left( I_j^Q,\ \mathcal{Z}(\mathcal{B}(\mathbb{R})|_{I_j^Q}) \right)
	\]
	
	is measurable.
	
	\bigskip
	
	Indeed, let \( M \in \mathcal{Z}(\mathcal{B}(\mathbb{R})|_{I_j^Q}) \),
	
	\bigskip
	
	then \( M = A \cup N \) with \( A \in \mathcal{B}(\mathbb{R})|_{I_j^Q} \) and  
	\( N \) a negligible part of \( \mathcal{B}(\mathbb{R})|_{I_j^Q} \),  
	
	Then
	\[
	\mathcal{U}^{-1}(A \cup N) = \mathcal{U}^{-1}(A) \cup \mathcal{U}^{-1}(N).
	\]

	Since,
	
	\[
	\mathcal{U} : \left( I_i^P,\ \mathcal{B}(\mathbb{R})|_{I_i^P} \right) \longrightarrow \left( I_j^Q,\ \mathcal{B}(\mathbb{R})|_{I_j^Q} \right)
	\]
	\[
	t \longmapsto t + c_j - a_i
	\]
	
	is measurable, as continuous function,  
	\( \forall K \in \mathcal{B}(\mathbb{R})|_{I_j^Q} \),  
	\[
	\mathcal{U}^{-1}(K) \in \mathcal{B}(\mathbb{R})|_{I_i^P}.
	\]
	
	So,
	\[
	\mathcal{U}^{-1}(A) \in \mathcal{B}(\mathbb{R})|_{I_i^P}.
	\]
	
	\bigskip
	
	Moreover, \( N \) is a negligible part of \( \mathcal{B}(\mathbb{R})|_{I_j^Q} \), then  
	because \( \lambda \) is invariant by translation,  
	\[
	\exists T \in \mathcal{B}(\mathbb{R})|_{I_j^Q},\ N \subset T,\ \lambda(T) = 0 \Rightarrow \mathcal{U}^{-1}(N) \subset \mathcal{U}^{-1}(T)
	\]
	\[
	\lambda(\mathcal{U}^{-1}(T)) = 0,\ \text{with } \mathcal{U}^{-1}(T) \in \mathcal{B}(\mathbb{R})|_{I_i^P} \quad (\text{Because,}
	\]
	
	\[
	\mathcal{U} : \left( I_i^P,\ \mathcal{B}(\mathbb{R})|_{I_i^P} \right) \longrightarrow \left( I_j^Q,\ \mathcal{B}(\mathbb{R})|_{I_j^Q} \right)
	\]
	\[
	t \longmapsto t + c_j - a_i
	\]
	is continuous).
	
	\bigskip
	
	So \( \mathcal{U}^{-1}(M) \) is the union of a element of  
	\( \mathcal{B}(\mathbb{R})|_{I_i^P} \) and a negligible part of  
	\( \mathcal{B}(\mathbb{R})|_{I_i^P} \), then
	\[
	\mathcal{U}^{-1}(M) \in \mathcal{Z}(\mathcal{B}(\mathbb{R})|_{I_i^P}).
	\]
	
	And \( \mathcal{U} : \left( I_i^P,\ \mathcal{Z}(\mathcal{B}(\mathbb{R})|_{I_i^P}) \right) \to \left( I_j^Q,\ \mathcal{Z}(\mathcal{B}(\mathbb{R})|_{I_j^Q}) \right) \)
	
	\[
	t \longmapsto t + c_j - a_i
	\]
	
	is measurable.
	
	\medskip
	
	Also \( P|_{I_i^P} : \left( I_i^P,\ \mathcal{Z}(\mathcal{B}(\mathbb{R})|_{I_i^P}) \right) \longrightarrow (E,\ \mathcal{B}(E)) \)
	\bigskip
	and
	\bigskip
	\( Q|_{I_j^Q} : \left( I_j^Q,\ \mathcal{Z}(\mathcal{B}(\mathbb{R})|_{I_j^Q}) \right) \longrightarrow (E,\ \mathcal{B}(E)) \)
	are measurable by definition.
	
	\medskip
	
	Then the application
	
	\[
	\widetilde{Q} = \left( Q|_{I_j^Q} \circ \mathcal{U} \right) : \left( I_i^P,\ \mathcal{Z}(\mathcal{B}(\mathbb{R})|_{I_i^P}) \right) \longrightarrow (E,\ \mathcal{B}(E))
	\]
	
	is measurable as composition of measurable applications.
	
	\medskip
	
	The application \( V : \left( I_i^P,\ \mathcal{Z}(\mathcal{B}(\mathbb{R})|_{I_i^P}) \right) \to \left( E^2,\ \mathcal{B}(E) \otimes \mathcal{B}(E) \right) \)
	
	\[
	t \longmapsto V(t) = \left( P(t),\ \widetilde{Q}(t) \right)
	\]
	
	has its component-wise applications measurable, and  
	the target sigma-algebra is the sigma-algebra product, then \( V \) is measurable.
	
	\bigskip
	
	Because \( d \) is continuous (as a distance) on \( T \) the topology product of  
	\( (E, d) \) and \( (E, d) \), then  
	\[
	d : (E^2, \sigma(T)) \rightarrow (\mathbb{R}, \mathcal{B}(\mathbb{R})) \text{ is measurable,}
	\]
	
	where we denote \( \sigma(T) \) the sigma-algebra generated by \( T \).  
	
	\bigskip
	
	Because \( \sigma(T) = \mathcal{B}(E) \otimes \mathcal{B}(E) \) (since \( E \) is separable),  
	and \( d \) is measurable for \( \sigma(T) \), then \( d \) is measurable  
	for \( \mathcal{B}(E) \otimes \mathcal{B}(E) \).
	
	\bigskip
	
	Since  
	\[
	V : \left( I_i^P,\ \mathcal{Z}(\mathcal{B}(\mathbb{R})|_{I_i^P}) \right) \rightarrow (E^2, \mathcal{B}(E) \otimes \mathcal{B}(E))
	\]  
	and  
	\[
	d : (E^2, \mathcal{B}(E) \otimes \mathcal{B}(E)) \rightarrow (\mathbb{R}, \mathcal{B}(\mathbb{R})) \text{ are measurable,}
	\]  
	then  
	\[
	\widetilde{W} := d \circ V : \left( I_i^P,\ \mathcal{Z}(\mathcal{B}(\mathbb{R})|_{I_i^P}) \right) \rightarrow (\mathbb{R}, \mathcal{B}(\mathbb{R}))
	\]  
	is measurable.
	
	\bigskip
	
	Then $W = (1-\alpha) \, . \, \widetilde{W}$ is measurable, as a product of a measurable function by a constant.
	
	\bigskip
	
	\[
	G : \left( \mathbb{R} \times \mathbb{R},\ \mathcal{B}(\mathbb{R}) \otimes \mathcal{B}(\mathbb{R}) \right) \rightarrow (\mathbb{R}, \mathcal{B}(\mathbb{R}))
	\]
	\[
	(x, y) \longmapsto x + y
	\]
	
	is measurable, as continuous function for  
	\[
	\mathcal{B}(\mathbb{R} \otimes \mathbb{R}) = \mathcal{B}(\mathbb{R}) \otimes \mathcal{B}(\mathbb{R}) \quad \text{since } \mathbb{R} \text{ is separable.}
	\]

	\(\text{Because, } 
	I_i^P \in \mathcal{Z}(\mathcal{B}(\mathbb{R})|_{I_i^P}) \text{ , then }
	\)
	
	\[
	H : \left( I_i^P,\ \mathcal{Z}(\mathcal{B}(\mathbb{R})|_{I_i^P}) \right)
	\longrightarrow (\mathbb{R},\ \mathcal{B}(\mathbb{R}))
	\]
	
	\[
	t \longmapsto \alpha \cdot |c_j - a_i|
	\]
	
	is measurable.
	
	\bigskip
	
	Then,
	
	\[
	W_\alpha^{P_i,Q_j} : \left( I_i^P,\ \mathcal{Z}(\mathcal{B}(\mathbb{R})|_{I_i^P}) \right)
	\longrightarrow (\mathbb{R},\ \mathcal{B}(\mathbb{R}))
	\]
	
	\[
	t \longmapsto G\left( W(t),\ H(t) \right)
	\]
	
	is measurable as composition of measurable functions.

	\[
	\forall t \in I_i^P,
	\]
    \begin{gather}0 \leq W_\alpha^{P_i,Q_j}(t) \leq (1 -\alpha) (  d( P( \frac{a_i + b_i}{2} ),Q( \frac{c_j + d_j}{2} ) )+ \notag\\\mathrm{diam}(\mathrm{Im}(P|_{I_i^P})) + \mathrm{diam}(\mathrm{Im}(Q|_{I_j^Q})))+ \alpha \cdot |c_j - a_i|\end{gather}
	
	(by triangular inequality of \( d \)).
	\bigskip
	Then \( W_\alpha^{P_i,Q_j} \) is Lebesgue-integrable.

        \end{proof}

        \begin{proposition}\label{prop:S_measurability}  Let \( P \in \mathcal{S}^\Delta \), and \( i \in \{1, \dots, N_P\} \). We denote \( a_i = \inf(I_i^P),\ b_i = \sup(I_i^P) \).
	
	Then the application,  
	\[
	S_\alpha^{P_i} : \left( I_i^P,\ \mathcal{Z}(\mathcal{B}(\mathbb{R})|_{I_i^P}) \right) \longrightarrow (\mathbb{R}, \mathcal{B}(\mathbb{R}))
	\]
	\[
	t \longmapsto S_\alpha^{P_i}(t) = (1 - \alpha) \cdot d(P(t), A)
	\]
	
	is measurable, and Lebesgue-integrable.
	\end{proposition}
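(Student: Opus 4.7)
The plan is to reduce the claim to two standard facts: (i) the distance-to-a-set function $x\mapsto d(x,A)$ is $1$-Lipschitz (hence Borel measurable) on $(E,d)$, and (ii) composition of measurable maps is measurable. First I would recall that for any non-empty $A\subset E$ the map $\varphi_A: E\to\RNB$, $\varphi_A(x)=d(x,A)=\inf_{y\in A}d(x,y)$, satisfies $|\varphi_A(x)-\varphi_A(x')|\le d(x,x')$ by the triangle inequality, so $\varphi_A$ is continuous and thus measurable as a map $(E,\mathcal{B}(E))\to(\RNB,\mathcal{B}(\RNB))$.

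Next, the defining assumptions on $\mathcal{S}^\Delta$ give that
$P|_{I_i^P}:(I_i^P,\mathcal{Z}(\mathcal{B}(\RNB)|_{I_i^P}))\to(E,\mathcal{B}(E))$
is measurable. Composing with $\varphi_A$ yields that $t\mapsto d(P(t),A)$ is measurable from $(I_i^P,\mathcal{Z}(\mathcal{B}(\RNB)|_{I_i^P}))$ to $(\RNB,\mathcal{B}(\RNB))$, and multiplication by the constant $(1-\alpha)$ preserves measurability. This already gives the first assertion.

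For Lebesgue-integrability, I would obtain a uniform bound on $I_i^P$. Fix any reference point $t_0\in I_i^P$ (for instance $t_0=(a_i+b_i)/2$) and any $y_0\in A$; then by the triangle inequality, for every $t\in I_i^P$,
\[
d(P(t),A)\;\le\;d(P(t),y_0)\;\le\;d(P(t),P(t_0))+d(P(t_0),y_0)\;\le\;\mathrm{diam}(\mathrm{Im}(P|_{I_i^P}))+d(P(t_0),y_0),
\]
which is finite by the fifth bullet in the definition of $\mathcal{S}^\Delta$. Hence $0\le S_\alpha^{P_i}(t)\le(1-\alpha)\bigl(\mathrm{diam}(\mathrm{Im}(P|_{I_i^P}))+d(P(t_0),y_0)\bigr)$ on $I_i^P$, and since $\lambda(I_i^P)=\Delta<\infty$, the function $S_\alpha^{P_i}$ is bounded on a set of finite Lebesgue measure and therefore Lebesgue-integrable.

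No step is really an obstacle; the only mildly delicate point is making sure the target $\sigma$-algebra on the source of $P|_{I_i^P}$ is the completed Borel $\sigma$-algebra $\mathcal{Z}(\mathcal{B}(\RNB)|_{I_i^P})$, but since composition of a $\mathcal{Z}$-measurable map with a Borel-measurable map remains $\mathcal{Z}$-measurable, this causes no difficulty and the proof closes exactly as in Proposition~\ref{prop:W_measurability}, with the role of the pair $(P(t),\widetilde{Q}(t))$ replaced by the single map $P(t)$ and $d(\cdot,\cdot)$ replaced by $\varphi_A=d(\cdot,A)$.
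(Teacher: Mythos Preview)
Your proposal is correct and follows essentially the same approach as the paper: measurability via the $1$-Lipschitz (hence Borel) map $x\mapsto d(x,A)$ composed with the measurable $P|_{I_i^P}$, and integrability via a uniform bound using $\mathrm{diam}(\mathrm{Im}(P|_{I_i^P}))$ plus a fixed reference term on the finite-measure interval $I_i^P$. The only cosmetic difference is that the paper bounds by $d\bigl(P(\tfrac{a_i+b_i}{2}),A\bigr)$ instead of your $d(P(t_0),y_0)$ for a chosen $y_0\in A$, which is immaterial.
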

    
\begin{proof}
	
	The application  
	\[
	d(\cdot, A) : (E, d) \longrightarrow (\mathbb{R}, |\cdot|)
	\]  
	\[
	x \longmapsto d(x, A)
	\]
	
	is continuous, because 1-Lipschitz (as a distance to a subset function in a metric space), so
	
	\[
	d(\cdot, A) : (E, \mathcal{B}(E)) \longrightarrow (\mathbb{R}, \mathcal{B}(\mathbb{R}))
	\]
	\[
	x \longmapsto d(x, A)
	\]
	
	is measurable.
	
	\bigskip
	
	Since  
	\[
	P|_{I_i^P} : \left( I_i^P,\ \mathcal{Z}(\mathcal{B}(\mathbb{R})|_{I_i^P}) \right) \longrightarrow (E, \mathcal{B}(E))
	\]
	
	is measurable, then  
	\[
	\widetilde{S} := d(\cdot, A) \circ P|_{I_i^P}
	\]
	
	is measurable, as composition of measurable functions.
	
	\bigskip
	
	Then, $S_\alpha^{P_i}=(1-\alpha)\cdot \widetilde{S}$ is measurable as a product of a measurable function by a constant.
	
	\bigskip
	
	Finally, \( \forall t \in I_i^P \),
	\begin{align}
	0 \leq S_\alpha^{P_i}(t) &\leq (1-\alpha).\Bigg(d\left( P\left( \frac{a_i + b_i}{2} \right), A \right)
	\notag\\&+ \mathrm{diam}\left( \mathrm{Im}(P|_{I_i^P}) \right)\Bigg)\end{align}
	
	so $S_\alpha^{P_i}$ is Lebesgue-integrable.\qedhere
    
        \end{proof}

    \begin{definition}Let \( (\rho, q) \in (s^\Delta)^2 \). \bigskip
	
	Then \( \exists (P, Q) \in (\mathcal{S}^\Delta)^2,\ \exists (i,j) \in \{1, \dots, N_P\} \times \{1, \dots, N_Q\} \),  
	such that \( \rho = P_i \) and \( q = Q_j \).
	
	\smallskip
	
	Let’s denote  
	\[a_i = \inf(I_i^P),\quad b_i = \sup(I_i^P),\]
	\[c_j = \inf(I_j^Q),\quad d_j = \sup(I_j^Q).
	\]
	
	\smallskip
	
	We define
	\[
	D^\alpha_\Delta(\rho, q) = D^\alpha_\Delta(P_i, Q_j) := \int_{a_i}^{b_i} W_\alpha^{P_i,Q_j}(t)\,dt
	\]
	
	and
	
	\[
	D^\alpha_\Delta(\rho, A) = D^\alpha_\Delta(P_i, A) := \int_{a_i}^{b_i} S_\alpha^{P_i}(t)\,dt
	\]
        
    \end{definition}
	
	\begin{lemma}\label{lem:metric_space_local}
	Let \( (\rho, q, r) \in (s^\Delta)^3 \). \bigskip
	
	So \( \exists (P, Q, R) \in (\mathcal{S}^\Delta)^3,\ \exists (i, j, k) \in \{1, \dots, N_P\} \times \{1, \dots, N_Q\} \times \{1, \dots, N_R\} \),  
	such that \( \rho = P_i,\ q = Q_j,\ r = R_k \)
	
	\bigskip
	
	Then we have,
	
	\begin{itemize}
		\item \( \forall i \in \{1, \dots, N_P\},\ \forall j \in \{1, \dots, N_Q\},\ D^\alpha_\Delta(P_i, Q_j) = D^\alpha_\Delta(Q_j, P_i) \)
		\item \( \forall (i,j,k) \in \{1, \dots, N_P\} \times \{1, \dots, N_Q\} \times \{1, \dots, N_R\}, \)
		\[
		D^\alpha_\Delta(P_i, Q_j) + D^\alpha_\Delta(Q_j, R_k) \geq D^\alpha_\Delta(P_i, R_k)
		\]
		\item \( \forall (i,j) \in \{1, \dots, N_P\} \times \{1, \dots, N_Q\},\ P_i = Q_j \Leftrightarrow D^\alpha_\Delta(P_i, Q_j) = 0 \)
		\item \( \forall (i,j) \in \{1, \dots, N_P\} \times \{1, \dots, N_Q\},\ D^\alpha_\Delta(P_i, Q_j) \geq 0 \)
	\end{itemize}
	
        And so, \( (s^\Delta,D^\alpha_\Delta) \) is a metric space.\end{lemma}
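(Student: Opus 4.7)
The plan is to verify the four metric axioms for $D^\alpha_\Delta$, after checking that the quantity is well defined on the equivalence classes of $\Delta$-subdivisions. Well-definedness follows immediately because the integrals are unchanged by modifications of the integrand on Lebesgue-null sets, and the equality relation between subdivisions is defined precisely up to such sets (together with matching endpoints). Non-negativity is equally direct: the integrand $W_\alpha^{P_i,Q_j}(t)=(1-\alpha)d(P(t),Q(t+c_j-a_i))+\alpha|c_j-a_i|$ is pointwise non-negative, so $D^\alpha_\Delta(P_i,Q_j)\ge 0$.

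For symmetry, I would apply the affine change of variable $s=t+c_j-a_i$ in the integral defining $D^\alpha_\Delta(P_i,Q_j)$. Since $b_i-a_i=d_j-c_j=\Delta$, this map sends $[a_i,b_i]$ onto $[c_j,d_j]$ with unit Jacobian, and using symmetry of $d$ the transformed integral coincides term by term with $D^\alpha_\Delta(Q_j,P_i)$.

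The triangle inequality is the central step. Denoting $e_k=\inf I_k^R$, I would first push $D^\alpha_\Delta(Q_j,R_k)$ onto $[a_i,b_i]$ via the change of variable $t=s-c_j+a_i$. Adding the two resulting integrals then yields a single integral over $[a_i,b_i]$ whose integrand equals
\[
(1-\alpha)\bigl[d(P(t),Q(t+c_j-a_i))+d(Q(t+c_j-a_i),R(t+e_k-a_i))\bigr]+\alpha\bigl[|c_j-a_i|+|e_k-c_j|\bigr].
\]
The triangle inequality for $d$ bounds the spatial bracket from below by $d(P(t),R(t+e_k-a_i))$, while the triangle inequality for $|\cdot|$ gives $|c_j-a_i|+|e_k-c_j|\ge|e_k-a_i|$. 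Monotonicity of the Lebesgue integral then yields the desired bound by $D^\alpha_\Delta(P_i,R_k)$.

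For identity of indiscernibles, the forward direction is immediate from the definition of equality between subdivisions: $a_i=c_j$, $b_i=d_j$, and $P(t)=Q(t)$ for almost every $t\in I_i^P$ make both summands vanish almost everywhere, whence $D^\alpha_\Delta(P_i,Q_j)=0$. Conversely, vanishing of the integral together with non-negativity of the two summands forces both to vanish almost everywhere; the temporal summand $\alpha|c_j-a_i|$ is a constant, so it must vanish pointwise, giving $c_j=a_i$ and hence $d_j=b_i$, while the spatial summand forces $d(P(t),Q(t))=0$ almost everywhere on $I_i^P=I_j^Q$, i.e.\ $P=Q$ a.e.\ on the intersection. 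The only real subtlety throughout is bookkeeping around the equivalence relation on subdivisions; no deep obstacle arises, essentially because the uniform length condition $b_i-a_i=d_j-c_j=\Delta$ makes the affine change of variable work consistently across all three subdivisions.
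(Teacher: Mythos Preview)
Your proof is correct and follows essentially the same strategy as the paper: both arguments reduce to a change of variable that aligns the three $\Delta$-subdivisions over a common interval, followed by the pointwise triangle inequalities for $d$ and $|\cdot|$ under the integral. The only cosmetic difference is that the paper normalizes every integral to $[0,1]$ via $t\mapsto a_i(1-t)+b_i t$, whereas you shift directly between the original intervals; the computations are otherwise identical.
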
  

        \begin{proof}Let’s denote \( a_i =\inf(I_i^P),\ b_i = \sup(I_i^P),\ c_j = \inf(I_j^Q),\ d_j = \sup(I_j^Q) \), \( e_k = \inf(I_k^R) \), and \( f_k = \sup(I_k^R) \).\begin{gather}(I_i^P,\ \mathcal{Z}(\mathcal{B}(\mathbb{R})|_{I_i^P})) \longrightarrow (\mathbb{R},\ \mathcal{B}(\mathbb{R}))\notag\\t \longmapsto (1 - \alpha) \cdot d\left( P(t),\ Q(t + c_j - a_i) \right) + \alpha \cdot |c_j - a_i|\notag\end{gather}is measurable and Lebesgue-integrable (see~\ref{prop:W_measurability}), and\begin{gather}(0, 1) \longrightarrow (a_i,b_i) \text{ is a } \mathcal{C}^1 \text{-diffeomorphism.}\notag\\t \longmapsto a_i \cdot (1 - t) + b_i \cdot t\notag\end{gather}Moreover, \begin{gather}(I_j^Q,\ \mathcal{Z}(\mathcal{B}(\mathbb{R})|_{I_j^Q})) \longrightarrow (\mathbb{R},\ \mathcal{B}(\mathbb{R}))\notag\\t \longmapsto (1 - \alpha) \cdot d\left( Q(t),\ P(t + a_i - c_j) \right) + \alpha \cdot |a_i - c_j|\notag\end{gather}is measurable and Lebesgue-integrable, and\begin{gather}(0, 1) \longrightarrow (c_j,d_j) \text{ is a } \mathcal{C}^1 \text{-diffeomorphism.}\notag\\t \longmapsto c_j \cdot (1 - t) + d_j \cdot t\notag\end{gather}

    Then, using Lebesgue integration by substitution theorem,
        \begin{flalign}&\bullet D^\alpha_\Delta(P_i, Q_j)\notag\\&= \int_{a_i}^{b_i} [ (1 - \alpha) \cdot d( P(t),\ Q(t + c_j - a_i) )\nonumber\\& + \alpha \cdot |c_j - a_i| ] dt\nonumber\\&= (b_i - a_i) \cdot \int_0^1 [ (1 - \alpha) \cdot d( P(a_i \cdot (1 - t) + b_i \cdot t),\ \nonumber\\& Q(a_i \cdot (1 - t) + b_i \cdot t + c_j - a_i) ) + \alpha \cdot |c_j - a_i| ] dt\nonumber\\&= \Delta \cdot \int_0^1 [ (1 - \alpha) \cdot d( P(a_i \cdot (1 - t) + b_i \cdot t),\ \nonumber\\&Q((b_i - a_i) \cdot t + c_j) ) + \alpha \cdot |c_j - a_i| ] dt\nonumber\\&= \Delta \cdot \int_0^1 [ (1 - \alpha) \cdot d( P(a_i \cdot (1 - t) + b_i \cdot t),\ \nonumber\\&Q(c_j \cdot (1 - t) + d_j \cdot t) ) + \alpha \cdot |c_j - a_i| ] dt\nonumber\\&= \Delta \cdot \int_0^1 [ (1 - \alpha) \cdot d( Q(c_j \cdot (1 - t) + d_j \cdot t),\ \nonumber\\& P(a_i \cdot (1 - t) + b_i \cdot t) ) + \alpha \cdot |a_i - c_j| ] dt\nonumber\\&= |d_j - c_j| \cdot \int_0^1 [ (1 - \alpha) \cdot d( Q(c_j \cdot (1 - t) + d_j \cdot t),\ \notag\\&P((b_i - a_i) \cdot t + a_i) ) + \alpha \cdot |a_i - c_j| ] dt\nonumber\\&= |d_j - c_j| \cdot \int_0^1 [ (1 - \alpha) \cdot d( Q(c_j \cdot (1 - t) + d_j \cdot t),\notag\\& P((d_j - c_j) \cdot t + a_i) ) + \alpha \cdot |a_i - c_j| ] dt\nonumber\\&= \int_{c_j}^{d_j} [ (1 - \alpha) \cdot d( Q(t),\ P(t + a_i - c_j) ) + \notag\\&\alpha \cdot |a_i - c_j| ] dt\notag\\&= D^\alpha_\Delta(Q_j, P_i)\end{flalign}
        \begin{flalign}
        &\bullet D^\alpha_\Delta(P_i, Q_j) + D^\alpha_\Delta(Q_j, R_k)\notag\\&= \Delta \cdot \int_0^1 [ (1 - \alpha) \cdot d( P(a_i \cdot (1 - t) + b_i \cdot t),\ \notag \\&Q(c_j \cdot (1 - t) + d_j \cdot t) ) + \alpha \cdot |c_j - a_i| ] dt\notag\\&+ \Delta \cdot \int_0^1 [ (1 - \alpha) \cdot d( Q(c_j \cdot (1 - t) + d_j \cdot t),\ \notag\\&R(e_k \cdot (1 - t) + f_k \cdot t) ) + \alpha \cdot |e_k - c_j| ] dt\notag\\&= \Delta \cdot \int_0^1 [ (1 - \alpha) \cdot ( d( P(a_i \cdot (1 - t) + b_i \cdot t),\ \notag\\&Q(c_j \cdot (1 - t) + d_j \cdot t) ) \notag\\& +\ d( Q(c_j \cdot (1 - t) + d_j \cdot t),\ \notag\\&R(e_k \cdot (1 - t) + f_k \cdot t) ) ) + \alpha \cdot ( |c_j - a_i| + |e_k - c_j| ) ] dt\notag\\&\geq \Delta \cdot \int_0^1 [ (1 - \alpha) \cdot d( P(a_i \cdot (1 - t) + b_i \cdot t),\ \notag\\&R(e_k \cdot (1 - t) + f_k \cdot t) ) + \alpha \cdot |e_k - a_i| ] dt\notag\\&= D^\alpha_\Delta(P_i, R_k) \quad \big(\text{by triangular inequality of \( d \) and $\mid\cdot\mid$}\big)\end{flalign}
		
        $\bullet$ If \(  P_i = Q_j \), then
        \begin{flalign}&D^\alpha_\Delta(P_i,Q_j)\notag\\&=\Delta \cdot \int_0^1 [ (1 - \alpha) \cdot d( P(a_i \cdot (1 - t) + b_i \cdot t),\ \notag\\&Q(c_j \cdot (1 - t) + d_j \cdot t) ) + \alpha \cdot |c_j - a_i| ] dt\notag\\&= \Delta \cdot \int_0^1 [ (1 - \alpha) \cdot d( P(a_i \cdot (1 - t) + b_i \cdot t),\ \notag\\&Q(a_i \cdot (1 - t) + b_i \cdot t) ) + \alpha \cdot |a_i - a_i| ] dt\notag\\&= \Delta \cdot \int_0^1 [ (1 - \alpha) \cdot d( P(a_i \cdot (1 - t) + b_i \cdot t),\ \notag\\&Q(a_i \cdot (1 - t) + b_i \cdot t) ) ] dt\notag\\&= \Delta \cdot (1 - \alpha) \cdot \int_0^1 d( P(a_i \cdot (1 - t) + b_i \cdot t),\ \notag\\&Q(a_i \cdot (1 - t) + b_i \cdot t) ) dt\notag\\&= \Delta \cdot (1 - \alpha) \cdot \int_0^1 0 \cdot dt = 0\quad \text{(Since }\notag\\&\lambda( \{ x \in I_i^P \cap I_j^Q \text{ such that } P_i(x) \neq Q_j(x) \} ) = 0,\notag\\&\text{then } \lambda( \{ t \in ]0,1[ \text{ such that }d( P(a_i \cdot (1 - t) + b_i \cdot t),\ \notag\\&Q(a_i \cdot (1 - t) + b_i \cdot t) ) \neq 0 \} ) = 0)\end{flalign}

        $\bullet$ If \( D^\alpha_\Delta(P_i, Q_j) = 0 \), then
        \begin{gather}\Delta \cdot \int_0^1 [ (1 - \alpha) \cdot d( P(a_i \cdot (1 - t) + b_i \cdot t),\  \notag\\Q(c_j \cdot (1 - t) + d_j \cdot t) ) + \alpha \cdot |c_j - a_i| ] dt = 0,\end{gather}
		
        \noindent which implies,  \( a_i = c_j \) (and so \( b_i = d_j \)), and\begin{gather}\lambda( \{ t \in (0,1) \ \text{such that} \ d( P(a_i \cdot (1 - t) + b_i \cdot t),\ \notag\\Q(a_i \cdot (1 - t) + b_i \cdot t) ) \neq 0 \} ) = 0,\end{gather}
		
	\noindent Thus \[ \lambda\left( \left\{ x \in I_i^P \cap I_j^Q\ \text{such that}\ P_i(x) \neq Q_j(x) \right\} \right) = 0,\]
    
    \noindent and \( a_i = c_j,\ b_i = d_j \), then finally,\[ P_i = Q_j.
		\] \qedhere
    \end{proof}
	\begin{lemma}\label{lem:triangular_inequality_local_distance_lemma}
	
	\bigskip
	
	Let \( (\rho, q) \in (s^\Delta)^2 \), \bigskip
	
	so \( \exists (P, Q) \in (\mathcal{S}^\Delta)^2,\ \exists (i, j) \in \{1, \dots, N_P\} \times \{1, \dots, N_Q\} \),  
	such that \( \rho = P_i,\ q = Q_j.\)
	
	\bigskip
	
	We have, $$D^\alpha_\Delta(P_i,A)+D^\alpha_\Delta(P_i,Q_j) \geq D^\alpha_\Delta(Q_j, A)$$
	\bigskip
	\end{lemma}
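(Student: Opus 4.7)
The plan is to reduce the inequality to a pointwise triangle-type inequality on the integrand, after rewriting the right-hand side by a translation change of variables so that all three quantities are integrals over the \emph{same} interval $[a_i,b_i]$.

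First, I would apply the substitution $s = t - c_j + a_i$ (which sends $[c_j,d_j]$ onto $[a_i,b_i]$ since $b_i - a_i = d_j - c_j = \Delta$) to rewrite
\[
D^\alpha_\Delta(Q_j,A) \;=\; \int_{a_i}^{b_i} (1-\alpha)\,d\bigl(Q(s+c_j-a_i),\,A\bigr)\,ds.
\]
The integrands of $D^\alpha_\Delta(P_i,A)$ and $D^\alpha_\Delta(P_i,Q_j)$ already live on $[a_i,b_i]$, so after this substitution everything is integrated over $[a_i,b_i]$ with respect to Lebesgue measure.

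Next, I would establish the pointwise estimate on the integrands. Fix $t \in [a_i,b_i]$. For every $y \in A$, the triangle inequality for $d$ gives
\[
d\bigl(Q(t+c_j-a_i),\,y\bigr)\;\le\; d\bigl(Q(t+c_j-a_i),\,P(t)\bigr) + d\bigl(P(t),\,y\bigr).
\]
Taking the infimum over $y \in A$ on both sides yields
\[
d\bigl(Q(t+c_j-a_i),\,A\bigr)\;\le\; d\bigl(Q(t+c_j-a_i),\,P(t)\bigr) + d\bigl(P(t),\,A\bigr),
\]
i.e. the usual $1$-Lipschitz property of the distance-to-$A$ function. Multiplying by $(1-\alpha)\ge 0$ and adding the nonnegative term $\alpha\,|c_j-a_i|$ on the right preserves the inequality.

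Finally, I would integrate the resulting pointwise inequality over $[a_i,b_i]$. The left-hand side becomes exactly $D^\alpha_\Delta(P_i,A) + D^\alpha_\Delta(P_i,Q_j)$ (using the definitions of $S^{P_i}_\alpha$ and $W^{P_i,Q_j}_\alpha$ from Propositions~\ref{prop:W_measurability} and~\ref{prop:S_measurability}, which in particular guarantee that all integrals are well-defined and finite), while the right-hand side becomes $D^\alpha_\Delta(Q_j,A)$ by the change-of-variables step above. No step is really delicate here: the only minor point to verify is that the pointwise triangle inequality involves the translated map $t\mapsto Q(t+c_j-a_i)$, whose Lebesgue-measurability on $[a_i,b_i]$ was already established in the proof of Proposition~\ref{prop:W_measurability}, so the integral on the right is the same object as $D^\alpha_\Delta(Q_j,A)$ after the substitution.
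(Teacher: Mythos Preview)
Your proof is correct and essentially identical to the paper's: both use the $1$-Lipschitz property of $d(\cdot,A)$ pointwise, handle the nonnegative term $\alpha|c_j-a_i|$ by adding/dropping it, and apply the translation change of variables $t\leftrightarrow t+c_j-a_i$ to identify the integral over $[a_i,b_i]$ with $D^\alpha_\Delta(Q_j,A)$. (Minor slip: in your last paragraph you swap ``left-hand side'' and ``right-hand side'' relative to the pointwise inequality you just wrote, but the argument itself is fine.)
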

    
	\begin{proof} 
	With the same notation as above,
	\bigskip

	\(
	D^\alpha_\Delta(P_i, A) + D^\alpha_\Delta(P_i, Q_j)
	\)
	
	\(
	= \int_{a_i}^{b_i} (1 - \alpha) \cdot d(P(t), A) \, dt 
	+ \int_{a_i}^{b_i} \big((1 - \alpha) \cdot d\big(P(t), Q(t + c_j - a_i)\big) + \alpha | c_j - a_i| \, \big)dt
	\)
	
	\(
	\geq \int_{a_i}^{b_i} (1 - \alpha) \cdot d(P(t), A) \, dt 
	+ \int_{a_i}^{b_i} \big((1 - \alpha) \cdot d(P(t), Q(t + c_j - a_i))\big) \, dt
	\)
	
	\(
	= \int_{a_i}^{b_i} (1 - \alpha) \cdot \big[ d( P(t),A) + d(P(t), Q(t + c_j - a_i)) \big] \, dt
	\)
	
	\(
	\geq \int_{a_i}^{b_i} (1 - \alpha) \cdot d(Q(t + c_j - a_i),A) \, dt
	\)
	
	 (Because in a metric space $(E,d)$ with $A \subset E$, we have $\forall(x,y) \in E^2, \ d(x,y) + d(y,A) \geq d(x,A)$, indeed $d(.,A)$ is 1-Lipschitz)
	
	\(
	= \int_{c_j}^{d_j} (1 - \alpha) \cdot d(Q(t),A) \, dt = D^\alpha_\Delta(Q_j, A),
	\)
	
	\noindent by Lebesgue integral substitution  
	theorem, with the \( C^1 \)-diffeomorphism,
	
	\[
	\varphi : (c_j,d_j) \longrightarrow (a_i,b_i),\quad t \longmapsto t + a_i - c_j
	\]
	
	and the measurable function,
	
	\[
	((a_i,b_i),\ \mathcal{Z}(\mathcal{B}(\mathbb{R})|_{(a_i,b_i)})) \longrightarrow (\mathbb{R}, \mathcal{B}(\mathbb{R})), 
	\]
	
	\[
	t \longmapsto (1 - \alpha) \cdot d(Q(t + c_j - a_i),A)
	\]
	
	as composition of \( \mathcal{S}_\alpha^{Q_j} (see~\ref{prop:S_measurability})\) and \(\mathcal{U}\). \qedhere
	\end{proof}
        
        If $\TVPDp \in \TVPDspace^\paramDelta$, thanks to the imposed conditions, we can unambiguously denote $\TVPDp$ as a sequence $(\TVPDp_\variablei)_{1 \leq \variablei < \variableN_\TVPDp}.$ Moreover, if  $\variableN \in \{1, \dots, \variableN_\TVPDp\}$,  then $(\TVPDp_\variablei)_{1 \leq \variablei < \variableN}$  stay obviously in $\TVPDspace^\paramDelta$.

        \begin{definition}
        Set $ (\TVPDp, \TVPDq) \in (\TVPDspace^\paramDelta)^2$.  We call
$\paramDelta$-partial assignment (we omit the $\paramDelta$ when the context is
clear) between  $\TVPDp$  and  $\TVPDq$  any function $\partialAssignment$  from
 dom $\partialAssignment \subset \{1, 2, \dots, \variableN_\TVPDp\}$ to Im
$\partialAssignment$ $\subset \{1, 2, \dots, \variableN_\TVPDq\}$, such that
\partialAssignment~  is strictly increasing, i.e.  $\forall (\variablei,
\variablej)\in$ $(\text{dom }\partialAssignment)^2$, $ \variablei < \variablej
\Rightarrow \partialAssignment(\variablei) < \partialAssignment(\variablej).$ We
denote $\PASet^\paramDelta(\TVPDp, \TVPDq)$ the set of the partial assignments
between $\TVPDp$ and $\TVPDq$ in $\TVPDspace^\paramDelta$, we can see that
specifying an $\partialAssignment \in \PASet^\paramDelta(\TVPDp, \TVPDq)$
directly yields an assignment $\partialAssignment^{-1} \in
\PASet^\paramDelta(\TVPDq, \TVPDp)$.

Then, the $\CEDM^{\paramDelta}_{\paramWeight,\paramPenalty}$ between $\TVPDp=(\TVPDp_\variablei)_{1 \leq \variablei \leq \variableN_\TVPDp}$ and $\TVPDq=(\TVPDq_\variablej)_{1 \leq \variablej \leq \variableN_\TVPDq}$ is defined as: 
\begin{align}
	\CEDM^{\paramDelta}_{\paramWeight,\paramPenalty} (\TVPDp,\TVPDq)=&\mathop{\min}\limits_{\substack{\partialAssignment\in \PASet^\paramDelta(\TVPDp,\TVPDq)}} \text{cost}^\paramDelta_{(\TVPDp,\TVPDq)}(\partialAssignment):= \notag  \\ \mathop{\min}\limits_{\substack{\partialAssignment\in \PASet^\paramDelta(\TVPDp,\TVPDq)}}&\Big(\sum\limits_{\variablei \in \text{ dom }\partialAssignment} \,\,\localDistanceAppendix^\paramWeight_{\paramDelta}(\TVPDp_\variablei,\TVPDq_{\partialAssignment(\variablei)})\; \tag{2}\\ +&\;\;\sum\limits_{\variablei\, \notin \text{ dom }\partialAssignment}\,\, \paramPenalty\cdot \localDistanceAppendix^\paramWeight_{\paramDelta}(\TVPDp_\variablei,\boundarySet)\; \tag{3}\\ +&\;\:\:\,\sum\limits_{\variablej\, \notin \text{ Im }\partialAssignment}\paramPenalty\cdot \localDistanceAppendix^\paramWeight_{\paramDelta}(\TVPDq_\variablej,\boundarySet) \Big)\tag{4}
\end{align}
        \end{definition}
        \begin{proposition}
\label{prop:monotonicity_Delta}
Let $0 < \paramDeltaone, \paramDeltatwo < +\infty$ be such that
$\paramDeltatwo = k\,\paramDeltaone$ for some integer $k \geq 1$.
Let $(\TVPDp,\TVPDq) \in \TVPDspace^{\paramDeltatwo} \times \TVPDspace^{\paramDeltatwo}$.
Then $(\TVPDp,\TVPDq) \in \TVPDspace^{\paramDeltaone} \times \TVPDspace^{\paramDeltaone}$ and
\begin{equation}
  \CEDM^{\paramDeltaone}_{\paramWeight,\paramPenalty}(\TVPDp,\TVPDq)
  \;\le\;
  \CEDM^{\paramDeltatwo}_{\paramWeight,\paramPenalty}(\TVPDp,\TVPDq).
\end{equation}
\end{proposition}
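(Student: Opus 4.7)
The plan is to refine every $\paramDeltatwo$-subdivision into $k$ contiguous $\paramDeltaone$-subdivisions and to lift any $\paramDeltatwo$-partial assignment to a $\paramDeltaone$-partial assignment of identical cost; since $\CEDM^{\paramDeltaone}_{\paramWeight,\paramPenalty}(\TVPDp,\TVPDq)$ is the minimum over a set containing this lift, the inequality follows at once. First I would verify that $(\TVPDp,\TVPDq)\in(\TVPDspace^{\paramDeltaone})^2$: each interval $\intervalI_\variablei^{\TVPDp}$ of length $k\paramDeltaone$ splits canonically into $k$ consecutive intervals $J_{\variablei,\variablel}$ of length $\paramDeltaone$, for $\variablel\in\{1,\dots,k\}$, and the five defining conditions of $\TVPDspace^{\paramDelta}$ restrict immediately to these finer intervals: the length condition is built in, the strict-order condition is preserved between blocks, $\wasserstein_2(\TVPDp(\variableTime),\boundarySet)>0$ is pointwise, Lebesgue-measurability of $\TVPDp|_{J_{\variablei,\variablel}}$ follows from that of $\TVPDp|_{\intervalI_\variablei^{\TVPDp}}$, and the diameter of $\Img(\TVPDp|_{J_{\variablei,\variablel}})$ is bounded by that of $\Img(\TVPDp|_{\intervalI_\variablei^{\TVPDp}})$. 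The same reasoning applies to $\TVPDq$, so $\variableN_{\TVPDp}^{\paramDeltaone}=k\,\variableN_{\TVPDp}^{\paramDeltatwo}$ and $\variableN_{\TVPDq}^{\paramDeltaone}=k\,\variableN_{\TVPDq}^{\paramDeltatwo}$.

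Next I would define the lift $\Phi:\PASet^{\paramDeltatwo}(\TVPDp,\TVPDq)\to\PASet^{\paramDeltaone}(\TVPDp,\TVPDq)$ by
\begin{equation*}
\Phi(\partialAssignment)\bigl((\variablei-1)k+\variablel\bigr)=(\partialAssignment(\variablei)-1)k+\variablel,\qquad \variablei\in\operatorname{dom}\partialAssignment,\ \variablel\in\{1,\dots,k\},
\end{equation*}
leaving all other refined indices unassigned. Strict monotonicity of $\Phi(\partialAssignment)$ is immediate: within a single block the map is an index-preserving shift, and across two blocks indexed by $\variablei<\variablei'$ the strict inequality $\partialAssignment(\variablei)<\partialAssignment(\variablei')$ entirely separates the image blocks. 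The key step is then the cost identity. For a substitution $\partialAssignment(\variablei)=\variablej$ at scale $\paramDeltatwo$, the temporal shift $\intervalBoundc-\intervalBounda$ is preserved on every refined pair, since $(\intervalBoundc+(\variablel-1)\paramDeltaone)-(\intervalBounda+(\variablel-1)\paramDeltaone)=\intervalBoundc-\intervalBounda$; hence the integrand of the refined local distance on the $\variablel$-th block coincides with that of the coarse one restricted to that subinterval, and additivity of the Lebesgue integral yields
\begin{equation*}
\localDistanceAppendix^{\paramWeight}_{\paramDeltatwo}(\TVPDp_\variablei,\TVPDq_\variablej)=\sum_{\variablel=1}^{k}\localDistanceAppendix^{\paramWeight}_{\paramDeltaone}\!\bigl(\TVPDp_\variablei^{(\variablel)},\TVPDq_\variablej^{(\variablel)}\bigr),
\end{equation*}
and analogously $\localDistanceAppendix^{\paramWeight}_{\paramDeltatwo}(\TVPDp_\variablei,\boundarySet)=\sum_\variablel \localDistanceAppendix^{\paramWeight}_{\paramDeltaone}(\TVPDp_\variablei^{(\variablel)},\boundarySet)$ for deletions, and likewise for insertions. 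Summing over the three operation-type terms in the definition of $\mathrm{cost}^{\paramDelta}_{(\TVPDp,\TVPDq)}$ gives $\mathrm{cost}^{\paramDeltaone}_{(\TVPDp,\TVPDq)}(\Phi(\partialAssignment))=\mathrm{cost}^{\paramDeltatwo}_{(\TVPDp,\TVPDq)}(\partialAssignment)$.

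Finally, taking $\partialAssignment^{\star}$ optimal at scale $\paramDeltatwo$, we obtain
\begin{equation*}
\CEDM^{\paramDeltaone}_{\paramWeight,\paramPenalty}(\TVPDp,\TVPDq)\;\leq\;\mathrm{cost}^{\paramDeltaone}_{(\TVPDp,\TVPDq)}\!\bigl(\Phi(\partialAssignment^{\star})\bigr)\;=\;\mathrm{cost}^{\paramDeltatwo}_{(\TVPDp,\TVPDq)}(\partialAssignment^{\star})\;=\;\CEDM^{\paramDeltatwo}_{\paramWeight,\paramPenalty}(\TVPDp,\TVPDq),
\end{equation*}
which is the claim. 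The argument is essentially a bookkeeping refinement combined with additivity of the Lebesgue integral, so I do not foresee a serious obstacle; the one point that requires care is ensuring that the temporal shift $\intervalBoundc-\intervalBounda$ is exactly reproduced on each refined block. This is precisely why a block-of-$k$ refinement paired with the identity shift $(\variablei-1)k+\variablel\mapsto(\partialAssignment(\variablei)-1)k+\variablel$ is the correct lift, and it guarantees that the coarse substitution reassembles from the $k$ fine substitutions without any distortion.
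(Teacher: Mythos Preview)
Your proposal is correct and follows essentially the same approach as the paper: refine each $\paramDeltatwo$-subdivision into $k$ consecutive $\paramDeltaone$-subdivisions, lift an optimal $\paramDeltatwo$-assignment block-by-block via the index map $(\variablei-1)k+\variablel\mapsto(\partialAssignment(\variablei)-1)k+\variablel$, and use additivity of the integral together with the preserved temporal shift $c_\ell-a_\ell=c-a$ to show the cost is unchanged. Your write-up is in fact slightly more explicit than the paper's, both in checking the five defining conditions of $\TVPDspace^{\paramDeltaone}$ and in giving a closed-form formula for the lift.
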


\begin{proof}
By definition of $\TVPDspace^{\paramDeltatwo}$, the domain of $\TVPDp$
(resp.\ $\TVPDq$) is a finite union of consecutive intervals of length
$\paramDeltatwo$, which we call
$\paramDeltatwo$--subdivisions.
Since $\paramDeltatwo = k\,\paramDeltaone$, each such
$\paramDeltatwo$--subdivision can be partitioned into $k$ consecutive
intervals of length $\paramDeltaone$. Thus $(\TVPDp,\TVPDq)$ also admit a
representation in $\TVPDspace^{\paramDeltaone} \times \TVPDspace^{\paramDeltaone}$.

Let
\[
  \partialAssignmentBis \in \PASet^{\paramDeltatwo}(\TVPDp,\TVPDq)
\]
be a partial assignment achieving
\(
  \CEDM^{\paramDeltatwo}_{\paramWeight,\paramPenalty}(\TVPDp,\TVPDq)
\),
that is
\begin{equation}
  \CEDM^{\paramDeltatwo}_{\paramWeight,\paramPenalty}(\TVPDp,\TVPDq)
  \;=\;
  \text{cost}^{\paramDeltatwo}_{(\TVPDp,\TVPDq)}(\partialAssignmentBis).
\end{equation}

We now construct a refined partial assignment
$\partialAssignmentBis' \in \PASet^{\paramDeltaone}(\TVPDp,\TVPDq)$ as follows.
Each $\paramDeltatwo$--subdivision of $\TVPDp$ (resp.\ $\TVPDq$) is split into
$k$ consecutive $\paramDeltaone$--subdivisions.
For every matched pair $(i,\partialAssignmentBis(i))$ of
$\paramDeltatwo$--subdivisions, we match, within the corresponding blocks, the
$\ell$-th $\paramDeltaone$--subdivision of $\TVPDp$ to the $\ell$-th
$\paramDeltaone$--subdivision of $\TVPDq$, for $\ell = 1,\dots,k$.
For $\paramDeltatwo$--subdivisions that are unmatched in $\partialAssignmentBis$
(deletions or insertions), we declare the corresponding $k$
$\paramDeltaone$--subdivisions unmatched in $\partialAssignmentBis'$.
Because $\partialAssignmentBis$ is strictly increasing and the refinement inside
each block preserves the order, $\partialAssignmentBis'$ is strictly increasing
as well, hence
$\partialAssignmentBis' \in \PASet^{\paramDeltaone}(\TVPDp,\TVPDq)$.

It remains to compare the costs. Let
\[
  \TVPDp_\variablei
  \quad\text{and}\quad
  \TVPDq_\variablej
\]
be two matched $\paramDeltatwo$--subdivisions in $\partialAssignmentBis$,
with time intervals
\[
  \intervalI_\variablei^\TVPDp = (a,b),\qquad
  \intervalI_\variablej^\TVPDq = (c,d),
\]
so that $b-a = d-c = \paramDeltatwo$.
By definition of the local distance
$\localDistanceAppendix^{\paramWeight}_{\paramDeltatwo}$,
we have
\begin{align}
\label{eq:local_coarse}
  &\localDistanceAppendix^{\paramWeight}_{\paramDeltatwo}
    (\TVPDp_\variablei,\TVPDq_\variablej)
  \;\\&=\;
  \int_{a}^{b}
    \bigl(
      (1-\paramWeight)\,
        \wasserstein_2\bigl(\TVPDp(t),\TVPDq(t + c - a)\bigr)
      + \paramWeight\,|c-a|
    \bigr)\, dt.\notag
\end{align}

On the refined grid with step $\paramDeltaone$, the interval $(a,b)$ is
partitioned into
\begin{align}
  &(a_0,a_1],\ (a_1,a_2],\ \dots,\ (a_{k-1},a_k),
\quad
  \\&a_0 = a,\ a_k = b,\ a_{\ell+1} - a_\ell = \paramDeltaone,\notag
\end{align}
and similarly $(c,d)$ into
\begin{align}
  &(c_0,c_1],\ (c_1,c_2],\ \dots,\ (c_{k-1},c_k),
\quad
 \\& c_0 = c,\ c_k = d,\ c_{\ell+1} - c_\ell = \paramDeltaone,\notag
\end{align}
with $c_\ell - c = a_\ell - a$ for all $\ell$.
For each $\ell \in \{0,\dots,k-1\}$, the corresponding
$\paramDeltaone$--subdivisions $\TVPDp^{(\ell)},\TVPDq^{(\ell)}$
have time intervals $(a_\ell,a_{\ell+1})$ and $(c_\ell,c_{\ell+1})$.
Applying the definition of $\localDistanceAppendix^\paramWeight_{\paramDeltaone}$
to this pair gives
\begin{align}
\label{eq:fine_local_D1_raw}
  &\localDistanceAppendix^{\paramWeight}_{\paramDeltaone}
    (\TVPDp^{(\ell)},\TVPDq^{(\ell)})
  \\&=
  \int_{a_\ell}^{a_{\ell+1}}
    \Bigl(
      (1-\paramWeight)\,
        \wasserstein_2\bigl(\TVPDp(t),
                            \TVPDq(t + c_\ell - a_\ell)\bigr)
      \notag\\\notag&+ \paramWeight\,|c_\ell - a_\ell|
    \Bigr)\, dt.
\end{align}
Since $c_\ell - a_\ell = (c + (a_\ell-a)) - a_\ell = c - a$, \eqref{eq:fine_local_D1_raw}
simplifies to
\begin{align}
\label{eq:fine_local_D1}
  &\localDistanceAppendix^{\paramWeight}_{\paramDeltaone}
    (\TVPDp^{(\ell)},\TVPDq^{(\ell)})
  \\&\notag=
  \int_{a_\ell}^{a_{\ell+1}}
    \Bigl(
      (1-\paramWeight)\,
        \wasserstein_2\bigl(\TVPDp(t),\TVPDq(t + c - a)\bigr)
      \\\notag&+ \paramWeight\,|c-a|
    \Bigr)\, dt.
\end{align}
Summing \eqref{eq:fine_local_D1} over $\ell = 0,\dots,k-1$ and using the
additivity of the integral over the partition of $(a,b)$ yields
\begin{align}
  &\sum_{\ell=0}^{k-1}
    \localDistanceAppendix^{\paramWeight}_{\paramDeltaone}
      (\TVPDp^{(\ell)},\TVPDq^{(\ell)})
  \\&=
  \int_{a}^{b}
    \Bigl(
      (1-\paramWeight)\,
        \wasserstein_2\bigl(\TVPDp(t),\TVPDq(t + c - a)\bigr)
      \notag\\&\notag+ \paramWeight\,|c-a|
    \Bigr)\, dt
  =
  \localDistanceAppendix^{\paramWeight}_{\paramDeltatwo}
    (\TVPDp_i,\TVPDq_j).
\end{align}

An entirely similar computation holds for the deletion and insertion terms,
defined via $\localDistanceAppendix^{\paramWeight}_\paramDelta(\cdot,\boundarySet)$.
For instance, consider a deletion of a $\paramDeltatwo$--subdivision
$\TVPDp_i$ with time interval $\intervalI_i^\TVPDp = (a,b)$.
By definition,
\begin{equation}
\label{eq:coarse_delete_D2}
  \localDistanceAppendix^{\paramWeight}_{\paramDeltatwo}(\TVPDp_i,\boundarySet)
  =
  \int_{a}^{b}
    (1-\paramWeight)\,
      \wasserstein_2\bigl(\TVPDp(t),\boundarySet\bigr)\,dt.
\end{equation}
On the refined grid with step $\paramDeltaone$, the interval $(a,b)$ is
partitioned into $(a_0,a_1],\dots,(a_{k-1},a_k)$ as above, and the corresponding
$\paramDeltaone$--subdivisions $\TVPDp^{(\ell)}$ have time intervals
$(a_\ell,a_{\ell+1})$. For each $\ell\in\{0,\dots,k-1\}$, we have
\begin{equation}
\label{eq:fine_delete_D1}
  \localDistanceAppendix^{\paramWeight}_{\paramDeltaone}
    (\TVPDp^{(\ell)},\boundarySet)
  =
  \int_{a_\ell}^{a_{\ell+1}}
    (1-\paramWeight)\,
      \wasserstein_2\bigl(\TVPDp(t),\boundarySet\bigr)\,dt.
\end{equation}
Summing~\eqref{eq:fine_delete_D1} over $\ell$ and using the additivity of the
integral over the partition of $(a,b)$ yields
\begin{align}
  \sum_{\ell=0}^{k-1}
    \localDistanceAppendix^{\paramWeight}_{\paramDeltaone}
      (\TVPDp^{(\ell)},\boundarySet)
  &=
  \int_{a}^{b}
    (1-\paramWeight)\,
      \wasserstein_2\bigl(\TVPDp(t),\boundarySet\bigr)\,dt
  \notag\\&=
  \localDistanceAppendix^{\paramWeight}_{\paramDeltatwo}(\TVPDp_i,\boundarySet).
\end{align}

The case of insertions is handled in the same way, by applying the definition
of $\localDistanceAppendix^{\paramWeight}_\paramDelta(\cdot,\boundarySet)$ to a
$\paramDeltatwo$--subdivision $\TVPDq_j$ and its refinement into
$\paramDeltaone$--subdivisions $\TVPDq^{(\ell)}$.
Hence, the total deletion and insertion costs are preserved when passing from
$\paramDeltatwo$ to $\paramDeltaone$.

Therefore, the total cost of the refined assignment $\partialAssignmentBis'$ in
$\TVPDspace^{\paramDeltaone}$ satisfies
\[
  \text{cost}^{\paramDeltaone}_{(\TVPDp,\TVPDq)}(\partialAssignmentBis')
  =
  \text{cost}^{\paramDeltatwo}_{(\TVPDp,\TVPDq)}(\partialAssignmentBis)
  =
  \CEDM^{\paramDeltatwo}_{\paramWeight,\paramPenalty}(\TVPDp,\TVPDq).
\]
By minimality of $\CEDM^{\paramDeltaone}_{\paramWeight,\paramPenalty}$ over
$\PASet^{\paramDeltaone}(\TVPDp,\TVPDq)$, we obtain
\[
  \CEDM^{\paramDeltaone}_{\paramWeight,\paramPenalty}(\TVPDp,\TVPDq)
  \le
  \text{cost}^{\paramDeltaone}_{(\TVPDp,\TVPDq)}(\partialAssignmentBis')
  =
  \CEDM^{\paramDeltatwo}_{\paramWeight,\paramPenalty}(\TVPDp,\TVPDq),
\]
which proves the claim.

\end{proof}
	
        \begin{definition}Let $P = (P_i)_{1 \leq i \leq N_P} \in \mathcal{S}^\Delta$, $Q = (Q_j)_{1 \leq j \leq N_Q} \in \mathcal{S}^\Delta$. In this subsection, we will note for $v = (x, y) \in \mathbb{R}^2$, $v_1 = x$, $v_2 =y$, $\delta_\Delta^{\alpha,\beta}(v) =\delta_\Delta^{\alpha,\beta }\left((P_i)_{0 \leq i \leq v_1}, (Q_j)_{0 \leq j \leq v_2} \right)$. Then, we define recursively, $\forall K \in \{1, \dots, N_P\}$, $\forall K' \in \{1, \dots, N_Q\}$,

        \medskip

        $\begin{aligned}&\delta_\Delta^{\alpha,\beta} \left(K,K' \right)= \min \Bigr\{ \delta_\Delta^{\alpha,\beta} \left(K-1,K'\right)+ \beta\cdot \localDistanceAppendix_\Delta^\alpha (P_K,A),\\ & \delta_\Delta^{\alpha,\beta}\left(K-1, K'-1 \right) + \localDistanceAppendix_\Delta^\alpha (P_K, Q_{K'}),\\& \delta_\Delta^{\alpha,\beta} \left(K, K'-1\right)+ \beta \cdot \localDistanceAppendix_\Delta^\alpha(Q_{K'},A) \Bigr\}\end{aligned}$

        \medskip

        \noindent with initialization 
        $\delta_\Delta^{\alpha,\beta}\bigl(0,0\bigr)=0,\forall K\in\{1,\dots,N_P\},\,\delta_\Delta^{\alpha,\beta}\bigl(K,0\bigr)=\delta_\Delta^{\alpha,\beta}\bigl(K-1,0\bigr)+\beta \cdot \localDistanceAppendix_\Delta^\alpha(P_K,A),\text{ and } \forall K'\in\{1,\dots,N_Q\},\,\delta_\Delta^{\alpha,\beta}\bigl(0,K'\bigr)=\delta_\Delta^{\alpha,\beta}\bigl(0,K'-1\bigr)+\beta \cdot \localDistanceAppendix_\Delta^\alpha(Q_{K'},A).$

        Moreover, once the recursive computation of $\delta_\Delta^{\alpha,\beta} \left( (P_i)_{0 \leq i \leq N_P}, (Q_j)_{0 \leq j \leq N_Q} \right)$ has been carried out, we define recursively, with initializations $A_0 = \emptyset$, $B_0 = (N_P,N_Q)$,

\begin{itemize}
	\item $B_{z+1} = B_z - (1, 0)$ if 
	\[
	\delta_\Delta^{\alpha,\beta}(B_z) = \delta_\Delta^{\alpha,\beta}(B_z - (1,0)) +\beta\cdot \localDistanceAppendix^\alpha_\Delta(P_{B_{z_1}}, A),
	\]
	
	\item $B_{z+1} = B_z - (1, 1)$ if 
	\[
	\delta_\Delta^{\alpha,\beta}(B_z) = \delta_\Delta^{\alpha,\beta}(B_z - (1,1)) + \localDistanceAppendix^\alpha_\Delta(P_{B_{z_1}}, Q_{B_{z_2}}),
	\]
	
	\item $B_{z+1} = B_z - (0, 1)$ if 
	\[
	\delta_\Delta^{\alpha,\beta}(B_z) = \delta_\Delta^{\alpha,\beta}(B_z - (0,1)) + \beta \cdot \localDistanceAppendix^\alpha_\Delta(Q_{B_{z_2}}, A),
	\]

	\item \(A_{z+1} = A_z \cup B_z \quad \text{if} \quad B_{z+1} = B_z - (1,1)\),
	\item \(A_{z+1} = A_z \:\:\quad\qquad\text{if }B_{z+1} = B_z - (1, 0)\),
	\item \(A_{z+1} = A_z \:\:\quad\qquad\text{if } B_{z+1} = B_z - (0, 1)\).   
\end{itemize}

\medskip

We stop when $B_Z = (0,0)$ for some $Z \in \mathbb{N}$.
		    
	\end{definition}
	
	\medskip

    \begin{proposition}\label{prop:distance_DP_equality}
Let $(P,Q) \in \mathcal{S}^\Delta$. Then
\[
\delta_\Delta^{\alpha,\beta} \left( (P_i)_{0 \leq i \leq N_P^\Delta}, (Q_j)_{0 \leq j \leq N_Q^\Delta} \right) = \mathrm{CED}^\Delta_{\alpha,\beta}(P,Q).
\]
\end{proposition}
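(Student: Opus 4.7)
The plan is to prove by induction on the couple $(K, K') \in \{0,\dots, N_P^\Delta\} \times \{0,\dots, N_Q^\Delta\}$ (ordered by $K+K'$) that
\[
\delta_\Delta^{\alpha,\beta}(K,K') \;=\; \mathrm{CED}^\Delta_{\alpha,\beta}\bigl(P^{(K)},Q^{(K')}\bigr),
\]
where $P^{(K)} := (P_i)_{1 \le i \le K}$ and $Q^{(K')} := (Q_j)_{1 \le j \le K'}$, with the convention $P^{(0)} = Q^{(0)} = \emptyset$. The case $(K,K')=(N_P^\Delta,N_Q^\Delta)$ then yields the proposition. For the base cases where $K = 0$ or $K' = 0$, the only element of $\mathcal{A}^\Delta(P^{(0)}, Q^{(K')})$ (resp.\ $\mathcal{A}^\Delta(P^{(K)}, Q^{(0)})$) is the empty partial assignment, whose cost reduces to a pure sum of insertion (resp.\ deletion) costs. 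A straightforward induction on the non-trivial index then shows this matches the initialization and the telescoping recurrence for $\delta_\Delta^{\alpha,\beta}(0, \cdot)$ and $\delta_\Delta^{\alpha,\beta}(\cdot, 0)$.

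For the inductive step with $K, K' \ge 1$, I would prove the two inequalities separately. For $\delta(K,K') \le \mathrm{CED}(P^{(K)},Q^{(K')})$, I pick an optimal $f^\star \in \mathcal{A}^\Delta(P^{(K)},Q^{(K')})$ and distinguish three exhaustive cases: (i) $K \notin \mathrm{dom}\,f^\star$; (ii) $K' \notin \mathrm{Im}\,f^\star$; (iii) $K \in \mathrm{dom}\,f^\star$ and $K' \in \mathrm{Im}\,f^\star$. In each case I restrict $f^\star$ to the appropriate smaller prefix: to $(P^{(K-1)}, Q^{(K')})$ in (i), to $(P^{(K)}, Q^{(K'-1)})$ in (ii), and to $(P^{(K-1)}, Q^{(K'-1)})$ in (iii). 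The restriction remains a strictly increasing partial assignment, and the total cost of $f^\star$ splits additively as the cost of the restriction plus either $\beta \cdot D^\alpha_\Delta(P_K, A)$, $\beta \cdot D^\alpha_\Delta(Q_{K'}, A)$, or $D^\alpha_\Delta(P_K, Q_{K'})$, respectively. Applying the induction hypothesis to the smaller pair turns these identities into $\delta$-inequalities, and taking the minimum over the three cases produces exactly the DP recurrence.

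For the reverse inequality $\delta(K,K') \ge \mathrm{CED}(P^{(K)},Q^{(K')})$, I would show that each of the three candidates appearing in the $\min$ defining $\delta(K,K')$ is attainable by an explicit partial assignment on the full prefix. For instance, the induction hypothesis yields an optimal $f \in \mathcal{A}^\Delta(P^{(K-1)}, Q^{(K')})$ realizing $\delta(K-1,K')$; viewing $f$ as an element of $\mathcal{A}^\Delta(P^{(K)}, Q^{(K')})$ (its domain remains $\subset \{1,\dots,K-1\}$ and it is still strictly increasing), $K$ is simply unmatched, and the additive structure of $\mathrm{cost}^\Delta$ gives its new cost as exactly $\delta(K-1, K') + \beta \cdot D^\alpha_\Delta(P_K, A)$. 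Analogous constructions (extending $f$ in $\mathcal{A}^\Delta(P^{(K)}, Q^{(K'-1)})$, or extending $f$ in $\mathcal{A}^\Delta(P^{(K-1)}, Q^{(K'-1)})$ by setting $f(K) = K'$, which preserves strict monotonicity) handle the insertion and substitution terms. Taking the minimum yields $\mathrm{CED}(P^{(K)}, Q^{(K')}) \le \delta(K,K')$.

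The main obstacle I anticipate is the case analysis in the $\le$ direction, specifically the claim that in case (iii) one necessarily has $f^\star(K) = K'$. This does not follow from strict monotonicity alone but from combining it with the constraint $\mathrm{Im}\,f^\star \subset \{1,\dots,K'\}$: if some $i \le K$ satisfies $f^\star(i) = K'$ and $K \in \mathrm{dom}\,f^\star$, then $i = K$ (otherwise $f^\star(K) > K'$). I would isolate this as a small preliminary observation before the case analysis. Everything else reduces to careful bookkeeping of domains and images and to the additivity of $\mathrm{cost}^\Delta$ under restriction, which is routine once the cases are correctly set up.
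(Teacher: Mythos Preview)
Your proposal is correct and follows essentially the same approach as the paper: induction on $K+K'$, a three-way case split on the optimal assignment (deletion of $P_K$, insertion of $Q_{K'}$, or substitution $f^\star(K)=K'$), and explicit extension constructions for the reverse inequality. Your treatment is arguably tidier in two places---you handle the boundary cases $K=0$ or $K'=0$ explicitly (the paper starts its base case at $(1,1)$ and leans on the initialization), and you isolate the observation that case~(iii) forces $f^\star(K)=K'$, which the paper states directly as its decomposition $\{f:f(\mathcal I)=\mathcal J\}\cup\{f:\mathcal I\notin\mathrm{dom}\,f\}\cup\{f:\mathcal J\notin\mathrm{Im}\,f\}$ without further comment.
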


\begin{proof}
Let $\mathcal{I} = N_P$ and $\mathcal{J} = N_Q$. We prove the proposition by induction on $\mathcal{I} + \mathcal{J}$.

\noindent— Suppose that $\mathcal{I} = 1$, and $\mathcal{J} = 1$.

\vspace{-0.5em}
\begin{align*}
\delta_\Delta^{\alpha,\beta}(1,1) &= \min\Big(
\delta_\Delta^{\alpha,\beta}(0,1) + \beta \cdot D^\alpha_\Delta(P_1, A), \\
&\qquad\delta_\Delta^{\alpha,\beta}(0,0) + D^\alpha_\Delta(P_1,Q_1), \\
&\qquad\delta_\Delta^{\alpha,\beta}(1,0) + \beta \cdot D^\alpha_\Delta(Q_1, A)
\Big) \\
&= \min\Big(
\beta \cdot D^\alpha_\Delta(Q_1,A) + \beta \cdot D^\alpha_\Delta(P_1,A),\ \\&D^\alpha_\Delta(P_1,Q_1), 
\beta \cdot D^\alpha_\Delta(P_1,A) + \beta \cdot D^\alpha_\Delta(Q_1,A)
\Big) \\
&= \min_{f \in \mathcal{A}^\Delta(P,Q)} \text{cost}^\paramDelta_{(\TVPDp,\TVPDq)}(\partialAssignment) = \mathrm{CED}^\Delta_{\alpha,\beta}(P,Q).
\end{align*}

Then the proposition is true for $\mathcal{I} = 1$, and $\mathcal{J} = 1$.

\noindent— Suppose that the proposition is true for all indices $(\mathcal{I}',\mathcal{J}')$ with $\mathcal{I}' + \mathcal{J}' < \mathcal{I} + \mathcal{J}$.

Firstly, note that:
\begin{flalign}
&\mathcal{A}^\Delta(P,Q) = 
\left\{ f \in \mathcal{A}^\Delta(P,Q) \mid f(\mathcal{I}) = \mathcal{J} \right\}
\notag\\&\cup
\left\{ f \in \mathcal{A}^\Delta(P,Q) \mid \mathcal{I} \notin \mathrm{dom}(f) \right\}
\notag\\&\cup
\left\{ f \in \mathcal{A}^\Delta(P,Q) \mid \mathcal{J} \notin \mathrm{Im}(f) \right\}.
\end{flalign}

\textit{(i)} Assume that the optimal $\Delta$–partial assignment $f$ between $P$ and $Q$ belongs to\(\left\{ f \in \mathcal{A}^\Delta(P,Q) \mid f(\mathcal{I}) = \mathcal{J} \right\}.\)

Then 
\begin{flalign}&\mathrm{CED}^\Delta_{\alpha,\beta}(P,Q) = \text{cost}^\paramDelta_{(\TVPDp,\TVPDq)}(\partialAssignment)
\notag\\&= \mathrm{cost}^\Delta_{\left((P_i)_{1 \leq i \leq \mathcal{I}-1},\ (Q_j)_{1 \leq j \leq \mathcal{J}-1}\right)}(f^{(\mathcal{I},\mathcal{J})*}) + D^\alpha_\Delta(P_{\mathcal{I}}, Q_{\mathcal{J}})
\end{flalign}

\noindent
with $f^{(\mathcal{I},\mathcal{J})*}$ the $\Delta$-partial assignment between $(P_i)_{1 \leq i \leq \mathcal{I}-1}$ and $(Q_j)_{1 \leq j \leq \mathcal{J}-1}$ obtained from $f$ by removing the substitution of $P_{\mathcal{I}}$ to $Q_{\mathcal{J}}$ from $f$. Thus $f^{(\mathcal{I},\mathcal{J})*}$ is the optimal $\Delta$-partial assignment between $(P_i)_{1 \leq i \leq \mathcal{I}-1}$ and $(Q_j)_{1 \leq j \leq \mathcal{J}-1}$. Indeed if this were not the case, and if $g \neq f^{(\mathcal{I},\mathcal{J})*}$ were the optimal $\Delta$-partial assignment between $(P_i)_{1 \leq i \leq \mathcal{I}-1}$ and $(Q_j)_{1 \leq j \leq \mathcal{J}-1}$, then $g$ to which we add the substitution of $P_{\mathcal{I}}$ by $Q_{\mathcal{J}}$ would have a cost lower than $f$, which would contradict that $f$ is the optimal assignment between $P$ and $Q$.

Then
\begin{flalign}
\mathrm{CED}^\Delta_{\alpha,\beta}&\left((P_i)_{1 \leq i \leq \mathcal{I}-1}, (Q_j)_{1 \leq j \leq \mathcal{J}-1}\right) \notag\\
&= \mathrm{cost}^\Delta_{\left((P_i)_{1 \leq i \leq \mathcal{I}-1}, (Q_j)_{1 \leq j \leq \mathcal{J}-1}\right)}(f^{(\mathcal{I},\mathcal{J})*})
\end{flalign}

\noindent
and so
\begin{flalign}
\mathrm{CED}^\Delta_{\alpha,\beta}(P,Q) =
\mathrm{CED}^\Delta_{\alpha,\beta}\left((P_i)_{1 \leq i \leq \mathcal{I}-1}, (Q_j)_{1 \leq j \leq \mathcal{J}-1}\right)
\notag\\+ D^\alpha_\Delta(P_{\mathcal{I}}, Q_{\mathcal{J}}).
\end{flalign}

Thus,
\begin{flalign}
&\delta_\Delta^{\alpha,\beta}(\mathcal{I}, \mathcal{J}) =\min \Big\{\delta_\Delta^{\alpha,\beta}(\mathcal{I} - 1, \mathcal{J}) + \beta\cdot D^\alpha_\Delta(P_{\mathcal{I}}, A), \notag\\&\delta_\Delta^{\alpha,\beta}(\mathcal{I} - 1, \mathcal{J} -1) + D^\alpha_\Delta(P_{\mathcal{I}}, Q_{\mathcal{J}}), \notag\\&\delta_\Delta^{\alpha,\beta}(\mathcal{I}, \mathcal{J} - 1) +\beta \cdot D^\alpha_\Delta(Q_{\mathcal{J}}, A) \Big\}\\&= \min \Big\{\delta_\Delta^{\alpha,\beta}(\mathcal{I} - 1, \mathcal{J}) + \beta\cdot D^\alpha_\Delta(P_{\mathcal{I}}, A), \notag\\&\mathrm{CED}^\Delta_{\alpha,\beta}\big((P_i)_{1 \leq i \leq\mathcal{I} - 1}, (Q_j)_{1 \leq j \leq \mathcal{J} - 1} \big)+ D^\alpha_\Delta(P_{\mathcal{I}}, Q_{\mathcal{J}}), \notag\\&\delta_\Delta^{\alpha,\beta}(\mathcal{I}, \mathcal{J} - 1) +\beta \cdot D^\alpha_\Delta(Q_{\mathcal{J}}, A) \Big\} \text{(by induction)}\\&=\min \Big\{\delta_\Delta^{\alpha,\beta}(\mathcal{I} - 1, \mathcal{J}) + \beta\cdot D^\alpha_\Delta(P_{\mathcal{I}}, A), \mathrm{CED}^\Delta_{\alpha,\beta}(P,Q), \notag\\&\delta_\Delta^{\alpha,\beta}(\mathcal{I}, \mathcal{J} - 1) +\beta \cdot D^\alpha_\Delta(Q_{\mathcal{J}}, A)\Big\}\leq\mathrm{CED}^\Delta_{\alpha,\beta}(P,Q)\end{flalign}

\textit{(ii)} Assume that the optimal $\Delta$–partial assignment $f$ between $P$ and $Q$ belongs to
$\left\{ f \in \mathcal{A}^\Delta(P,Q) \text{ such that } \mathcal{I} \notin \mathrm{dom}(f) \right\}.$

Then 
\begin{flalign}
&\mathrm{CED}^\Delta_{\alpha,\beta}(P,Q) = \text{cost}^\paramDelta_{(\TVPDp,\TVPDq)}(\partialAssignment)
\\&= \mathrm{cost}^\Delta_{\left((P_i)_{1 \leq i \leq \mathcal{I}-1}, (Q_j)_{1 \leq j \leq \mathcal{J}} \right)}(f^{\mathcal{I}*}) 
+ \beta \cdot D^\alpha_\Delta(P_{\mathcal{I}}, A),
\end{flalign}

\noindent with $f^{\mathcal{I}*}$ the $\Delta$–partial assignment between $(P_i)_{1 \leq i \leq \mathcal{I}-1}$ and 
$(Q_j)_{1 \leq j \leq \mathcal{J}}$ obtained from $f$ by removing the deletion of $P_{\mathcal{I}}$ from $f$. Thus $f^{\mathcal{I}*}$ is the optimal $\Delta$–partial assignment between $(P_i)_{1 \leq i \leq \mathcal{I}-1}$ and $(Q_j)_{1 \leq j \leq \mathcal{J}}$. Indeed if this were not the case, and if $g \neq f^{\mathcal{I}*}$ were the optimal $\Delta$–partial assignment between $(P_i)_{1 \leq i \leq \mathcal{I}-1}$ and $(Q_j)_{1 \leq j \leq \mathcal{J}}$, then $g$ to which we add the deletion of $P_{\mathcal{I}}$ would have a cost lower than $f$, which would contradict that $f$ is the optimal assignment between $P$ and $Q$.

Then
\begin{align}
\mathrm{CED}^\Delta_{\alpha,\beta}(P,Q) =
\mathrm{CED}^\Delta_{\alpha,\beta}\left((P_i)_{1 \leq i \leq \mathcal{I}-1}, (Q_j)_{1 \leq j \leq \mathcal{J}} \right)
\notag\\+ \beta \cdot D^\alpha_\Delta(P_{\mathcal{I}}, A).
\end{align}

Thus,\begin{flalign}
\hspace{0pt}&\delta_\Delta^{\alpha,\beta}(\mathcal{I},\mathcal{J}) 
= \min \Big\{
\delta_\Delta^{\alpha,\beta}(\mathcal{I} - 1, \mathcal{J}) + \beta \cdot D^\alpha_\Delta(P_{\mathcal{I}}, A), \notag\\&\delta_\Delta^{\alpha,\beta}(\mathcal{I} - 1, \mathcal{J} - 1) + D^\alpha_\Delta(P_{\mathcal{I}}, Q_{\mathcal{J}}), \notag\\
&\delta_\Delta^{\alpha,\beta}(\mathcal{I}, \mathcal{J} - 1) + \beta \cdot D^\alpha_\Delta(Q_{\mathcal{J}}, A) \Big\} \\
&= \min \Big\{
\mathrm{CED}^\Delta_{\alpha,\beta}\left((P_i)_{1 \le i \le \mathcal{I} - 1}, (Q_j)_{1 \le j \le \mathcal{J}} \right) \notag\\&+ \beta \cdot D^\alpha_\Delta(P_{\mathcal{I}}, A), 
\delta_\Delta^{\alpha,\beta}(\mathcal{I} - 1, \mathcal{J} - 1) + D^\alpha_\Delta(P_{\mathcal{I}}, Q_{\mathcal{J}}),\notag \\
&\delta_\Delta^{\alpha,\beta}(\mathcal{I}, \mathcal{J} - 1) + \beta \cdot D^\alpha_\Delta(Q_{\mathcal{J}}, A) \Big\} (\text{by induction}) \\&= \min \Big\{
\mathrm{CED}^\Delta_{\alpha,\beta}(P,Q), \notag\\
&\delta_\Delta^{\alpha,\beta}(\mathcal{I} - 1, \mathcal{J} - 1) + D^\alpha_\Delta(P_{\mathcal{I}}, Q_{\mathcal{J}}), \notag\\
&\delta_\Delta^{\alpha,\beta}(\mathcal{I}, \mathcal{J} - 1) + \beta \cdot D^\alpha_\Delta(Q_{\mathcal{J}}, A) \Big\} \\
&\leq \mathrm{CED}^\Delta_{\alpha,\beta}(P,Q).
\end{flalign}

\textit{(iii)} Assume that the optimal $\Delta$–partial assignment $f$ between $P$ and $Q$ belongs to $\left\{ f \in \mathcal{A}^\Delta(P,Q) \text{ such that } \mathcal{J} \notin \mathrm{Im}(f) \right\}.$

\begin{align}
\hspace{0pt}\mathrm{CED}^\Delta_{\alpha,\beta}(P,Q) &= \text{cost}^\paramDelta_{(\TVPDp,\TVPDq)}(\partialAssignment)  \\
&= \mathrm{cost}^\Delta_{\left((P_i)_{1 \leq i \leq \mathcal{I}},\, (Q_j)_{1 \leq j \leq \mathcal{J}-1} \right)}(f^{\mathcal{J}*}) \notag \\
&+ \beta \cdot D^\alpha_\Delta(Q_{\mathcal{J}}, A)
\end{align}


\noindent
with $f^{\mathcal{J}*}$ the $\Delta$–partial assignment between $(P_i)_{1 \leq i \leq \mathcal{I}}$ and $(Q_j)_{1 \leq j \leq \mathcal{J}-1}$ obtained from $f$ by removing the insertion of $Q_{\mathcal{J}}$ from $f$. Thus $f^{\mathcal{J}*}$ is the optimal $\Delta$–partial assignment between $(P_i)_{1 \leq i \leq \mathcal{I}}$ and $(Q_j)_{1 \leq j \leq \mathcal{J}-1}$. Indeed if this were not the case, and if $g \neq f^{\mathcal{J}*}$ were the optimal $\Delta$–partial assignment between $(P_i)_{1 \leq i \leq \mathcal{I}}$ and $(Q_j)_{1 \leq j \leq \mathcal{J}-1}$, then $g$ to which we add the insertion of $Q_{\mathcal{J}}$ would have a cost lower than $f$, which would contradict that $f$ is the optimal assignment between $P$ and $Q$.

\begin{align}
\mathrm{CED}^\Delta_{\alpha,\beta}(P,Q) =
\mathrm{CED}^\Delta_{\alpha,\beta}\left((P_i)_{1 \leq i \leq \mathcal{I}},\, (Q_j)_{1 \leq j \leq \mathcal{J}-1} \right)\notag\\+ \beta \cdot D^\alpha_\Delta(Q_{\mathcal{J}}, A).
\end{align}
Then, 
\begin{align}
\hspace{0pt}&\delta^{\alpha,\beta}_\Delta(\mathcal{I},\mathcal{J}) 
= \min \Big\{ \delta^{\alpha,\beta}_\Delta(\mathcal{I}-1,\mathcal{J}) \notag\\
&\quad + \beta \cdot D^\alpha_\Delta(P_{\mathcal{I}},A), 
\delta^{\alpha,\beta}_\Delta(\mathcal{I}-1,\mathcal{J}-1) 
+ D^\alpha_\Delta(P_{\mathcal{I}},Q_{\mathcal{J}}), \notag\\
&\quad\delta^{\alpha,\beta}_\Delta(\mathcal{I},\mathcal{J}-1) 
+ \beta \cdot D^\alpha_\Delta(Q_{\mathcal{J}},A) \Big\} \\
&= \min \Big\{ \delta^{\alpha,\beta}_\Delta(\mathcal{I}-1,\mathcal{J}) 
+ \beta \cdot D^\alpha_\Delta(P_{\mathcal{I}},A),\notag \\
&\quad \delta^{\alpha,\beta}_\Delta(\mathcal{I}-1,\mathcal{J}-1) 
+ D^\alpha_\Delta(P_{\mathcal{I}},Q_{\mathcal{J}}), \notag\\
&\quad \mathrm{CED}^{\Delta}_{\alpha,\beta}((P_i)_{1\le i\le \mathcal{I}},(Q_j)_{1\le j\le \mathcal{J}-1}) 
+ \beta \cdot D^\alpha_\Delta(Q_{\mathcal{J}},A) \Big\} \\
&= \min \Big\{ \delta^{\alpha,\beta}_\Delta(\mathcal{I}-1,\mathcal{J}) 
+ \beta \cdot D^\alpha_\Delta(P_{\mathcal{I}},A), \notag\\
&\quad \delta^{\alpha,\beta}_\Delta(\mathcal{I}-1,\mathcal{J}-1) 
+ D^\alpha_\Delta(P_{\mathcal{I}},Q_{\mathcal{J}}), 
\mathrm{CED}^\Delta_{\alpha,\beta}(P,Q) \Big\} \notag\\
&\leq \mathrm{CED}^\Delta_{\alpha,\beta}(P,Q)
\end{align}

\textit{(i)} Conversely, if
\begin{equation}
\delta^{\alpha,\beta}_\Delta(\mathcal{I},\mathcal{J}) 
= \delta^{\alpha,\beta}_\Delta(\mathcal{I}-1,\mathcal{J}) 
+ \beta \cdot D^\alpha_\Delta(P_{\mathcal{I}},A), \\
\end{equation}
\text{then by induction } 
\begin{align}
\delta^{\alpha,\beta}_\Delta(\mathcal{I},\mathcal{J}) 
= \mathrm{CED}^\Delta_{\alpha,\beta}\big((P_i)_{1 \le i \le \mathcal{I}-1}, (Q_j)_{1 \le j \le \mathcal{J}}\big) 
\\+ \beta \cdot D^\alpha_\Delta(P_{\mathcal{I}},A) \notag\\
= \mathrm{cost}^\Delta_{((P_i)_{1 \le i \le \mathcal{I}-1}, (Q_j)_{1 \le j \le \mathcal{J}})}(f^{\mathcal{I}*}) 
\notag\\+ \beta \cdot D^\alpha_\Delta(P_{\mathcal{I}},A)
\end{align}
\noindent
with $f^{\mathcal{I}*}$ the optimal $\Delta$–partial assignment between $(P_i)_{1 \le i \le \mathcal{I}-1}$ and $(Q_j)_{1 \le j \le \mathcal{J}}$. Then append to $f^{\mathcal{I}*}$ the deletion of $P_{\mathcal{I}}$ resulting to a partial assignment between $P$ and $Q$ with cost equal to
\begin{align}
\mathrm{cost}^\Delta_{((P_i)_{1 \le i \le \mathcal{I}}, (Q_j)_{1 \le j \le \mathcal{J}})}(f^{\mathcal{I}*}) 
+ \beta \cdot D^\alpha_\Delta(P_{\mathcal{I}},A) 
\notag\\= \delta^{\alpha,\beta}_\Delta(\mathcal{I},\mathcal{J})
\end{align}
thus
\begin{align}
 \delta^{\alpha,\beta}_\Delta(\mathcal{I},\mathcal{J}) 
\ge \min_{f \in \mathcal{A}^\Delta(P,Q)} \mathrm{cost}^\Delta_{(P,Q)}(f) 
= \mathrm{CED}^\Delta_{\alpha,\beta}(P,Q)
\end{align}
\textit{(ii)} If
\begin{equation}
\delta^{\alpha,\beta}_\Delta(\mathcal{I},\mathcal{J}) = \delta^{\alpha,\beta}_\Delta(\mathcal{I}-1,\mathcal{J}-1) + D^\alpha_\Delta(P_{\mathcal{I}}, Q_{\mathcal{J}}), 
\end{equation}
then by induction,  
\begin{align}
\delta^{\alpha,\beta}_\Delta(\mathcal{I},\mathcal{J}) = \mathrm{CED}^\Delta_{\alpha,\beta}((P_i)_{1\le i \le \mathcal{I}-1},(Q_j)_{1\le j \le \mathcal{J}-1}) 
\notag\\+ D^\alpha_\Delta(P_{\mathcal{I}}, Q_{\mathcal{J}}) \\
= \mathrm{cost}^\Delta_{((P_i)_{1\le i \le \mathcal{I}-1},(Q_j)_{1\le j \le \mathcal{J}-1})}(f^{(\mathcal{I},\mathcal{J})*})  \notag\\+D^\alpha_\Delta(P_{\mathcal{I}}, Q_{\mathcal{J}})
\end{align}
\noindent with $f^{(\mathcal{I},\mathcal{J})*}$ the optimal $\Delta$–partial assignment between $(P_i)_{1\le i \le \mathcal{I}-1}$ and $(Q_j)_{1\le j \le \mathcal{J}-1}$. Then append to $f^{(\mathcal{I},\mathcal{J})*}$ the substitution of $P_{\mathcal{I}}$ by $Q_{\mathcal{J}}$, resulting to a partial assignment between $P$ and $Q$ with cost equal to
\begin{align}
\mathrm{cost}^\Delta_{((P_i)_{1\le i \le \mathcal{I}},(Q_j)_{1\le j \le \mathcal{J}})}(f^{(\mathcal{I},\mathcal{J})*}) + D^\alpha_\Delta(P_{\mathcal{I}}, Q_{\mathcal{J}})\notag\\ = \delta^{\alpha,\beta}_\Delta(\mathcal{I},\mathcal{J}),
\end{align}
thus
\begin{equation}
\delta^{\alpha,\beta}_\Delta(\mathcal{I},\mathcal{J}) \ge \min_{f \in \mathcal{A}^\Delta(P,Q)} \mathrm{cost}^\Delta_{(P,Q)}(f) = \mathrm{CED}^\Delta_{\alpha,\beta}(P,Q).
\end{equation}
\textit{(iii)} If
\begin{equation}
\delta^{\alpha,\beta}_\Delta(\mathcal{I},\mathcal{J})
= \delta^{\alpha,\beta}_\Delta(\mathcal{I},\mathcal{J}-1) 
 + \beta\cdot\, D^\alpha_\Delta(Q_{\mathcal{J}},A), \\
\end{equation}
then by induction,
\begin{align} 
\delta^{\alpha,\beta}_\Delta(\mathcal{I},\mathcal{J})
= \mathrm{CED}^\Delta_{\alpha,\beta}\Bigl(
        (P_i)_{\substack{1\le i\le \mathcal{I}}},
 (Q_j)_{\substack{1\le j\le \mathcal{J}-1}}\Bigr) \notag \\
+ \beta\cdot\, D^\alpha_\Delta(Q_{\mathcal{J}},A), \\[2pt]
= \mathrm{cost}^\Delta_{\Bigl(
        (P_i)_{\substack{1\le i\le \mathcal{I}}}, (Q_j)_{\substack{1\le j\le \mathcal{J}-1}}\Bigr)}%
        \bigl(f^{\mathcal{J}*}\bigr) \notag \\
 + \beta\cdot\, D^\alpha_\Delta(Q_{\mathcal{J}},A),
\end{align}
\noindent with $f^{\mathcal{J}*}$ the optimal $\Delta$–partial assignment between $(P_i)_{1\le i \le \mathcal{I}}$ and $(Q_j)_{1\le j \le \mathcal{J}-1}$. Then append to $f^{\mathcal{J}*}$ the insertion of $Q_{\mathcal{J}}$ resulting to a partial assignment between $P$ and $Q$ with cost equal to
\begin{align}
\mathrm{cost}^\Delta_{((P_i)_{1\le i \le \mathcal{I}}, (Q_j)_{1\le j \le \mathcal{J}})}(f^{\mathcal{J}*}) + \beta \cdot D^\alpha_\Delta(Q_{\mathcal{J}},A) \notag\\=\delta^{\alpha,\beta}_\Delta(\mathcal{I},\mathcal{J}),
\end{align}
thus
\begin{align}
\delta^{\alpha,\beta}_\Delta(\mathcal{I},\mathcal{J}) \ge \min_{f \in \mathcal{A}^\Delta(P,Q)} \mathrm{cost}^\Delta_{(P,Q)}(f) \notag\\= \mathrm{CED}^\Delta_{\alpha,\beta}(P,Q).
\end{align}

Thus if the proposition is true for all indices $(\mathcal{I}',\mathcal{J}')$ with $\mathcal{I}' + \mathcal{J}' < \mathcal{I} + \mathcal{J}$, then the proposition is true for indices $\mathcal{I}$ and $\mathcal{J}$.

\end{proof}

	\begin{theorem}\label{thm:metric_space_global}
	The space $(\mathcal{S}^\Delta, \mathrm{CED}^\Delta_{\alpha,\beta})$ is a metric space.
		    
	\end{theorem}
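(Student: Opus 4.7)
My plan is to verify the four metric axioms for $(\mathcal{S}^\Delta, \mathrm{CED}^\Delta_{\alpha,\beta})$: non-negativity, symmetry, identity of indiscernibles, and triangle inequality. I expect the first two to be immediate, the third to hinge on the standing assumption $d(F(t), A) > 0$, and the triangle inequality to be the main obstacle.

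Non-negativity is immediate: the cost functional in the definition of $\mathrm{CED}^\Delta_{\alpha,\beta}$ is a sum of non-negative terms (by Lemma~\ref{lem:metric_space_local} and $\beta > 0$), so its minimum over $\mathcal{A}^\Delta(P,Q)$ is non-negative. For symmetry, the map $f \mapsto f^{-1}$ is a cost-preserving bijection from $\mathcal{A}^\Delta(P,Q)$ to $\mathcal{A}^\Delta(Q,P)$: substitutions are sent to substitutions with the same $D^\alpha_\Delta$-cost by symmetry of $D^\alpha_\Delta$ (Lemma~\ref{lem:metric_space_local}), while deletions in $P$ become insertions in $Q$ (and conversely), both weighted by $\beta$. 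Taking the minimum on each side yields $\mathrm{CED}^\Delta_{\alpha,\beta}(P,Q) = \mathrm{CED}^\Delta_{\alpha,\beta}(Q,P)$.

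For identity of indiscernibles, the forward implication uses the identity assignment and Lemma~\ref{lem:metric_space_local}. The converse is the delicate step: if the minimum is zero, then an optimal $f$ must have every term vanish. Because the definition of $\mathcal{S}^\Delta$ imposes $d(F(t), A) > 0$ pointwise, every deletion cost $\beta \cdot D^\alpha_\Delta(P_i, A)$ and insertion cost $\beta \cdot D^\alpha_\Delta(Q_j, A)$ is strictly positive. Hence $f$ must match every index, forcing $N_P = N_Q$ and $f = \mathrm{id}$ by strict monotonicity. Then each $D^\alpha_\Delta(P_i, Q_i) = 0$, and Lemma~\ref{lem:metric_space_local} gives $P_i = Q_i$ for all $i$, i.e., $P = Q$.

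For the triangle inequality, the plan is to build, from optimal $f \in \mathcal{A}^\Delta(P,Q)$ and $g \in \mathcal{A}^\Delta(Q,R)$, an admissible $h \in \mathcal{A}^\Delta(P,R)$ with $\mathrm{cost}(h) \leq \mathrm{cost}(f) + \mathrm{cost}(g)$. The natural choice is the partial composition $\mathrm{dom}\,h = \{ i \in \mathrm{dom}\,f : f(i) \in \mathrm{dom}\,g \}$ with $h(i) = g(f(i))$, which is strictly increasing since $f$ and $g$ are. I then case-split on each $i \in \{1,\dots,N_P\}$ and each $k \in \{1,\dots,N_R\}$: matched indices get a substitution bound via the triangle inequality of $D^\alpha_\Delta$ applied to $P_i, Q_{f(i)}, R_{h(i)}$; indices that are substituted by $f$ but dropped by $g$ (and the symmetric case on the $R$ side) yield a deletion or insertion in $h$ bounded by combining a substitution cost with a cost to $A$, using Lemma~\ref{lem:triangular_inequality_local_distance_lemma} together with $\beta \leq 1$; indices deleted by $f$ or inserted by $g$ carry over unchanged. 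The delicate point will be the bookkeeping: I must verify that every term of $\mathrm{cost}(f) + \mathrm{cost}(g)$ is charged at most once on the right, and that indices $j \in \{1,\dots,N_Q\}$ with $j \notin \mathrm{Im}\,f$ and $j \notin \mathrm{dom}\,g$ (insertions of $f$ that are also deletions of $g$) appear on the right but never need to be matched on the $h$ side, so the inequality is preserved. Taking the minimum over $h$ then yields $\mathrm{CED}^\Delta_{\alpha,\beta}(P,R) \leq \mathrm{CED}^\Delta_{\alpha,\beta}(P,Q) + \mathrm{CED}^\Delta_{\alpha,\beta}(Q,R)$ and concludes the proof that $(\mathcal{S}^\Delta, \mathrm{CED}^\Delta_{\alpha,\beta})$ is a metric space.
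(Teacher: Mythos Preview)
Your proposal is correct. The non-negativity, symmetry, and identity-of-indiscernibles arguments match the paper's almost verbatim. For the triangle inequality, however, you take a genuinely different route: the paper proves it by induction on $M=I+J+K$ using the dynamic-programming recursion $\delta_\Delta^{\alpha,\beta}$ (Proposition~\ref{prop:distance_DP_equality}), case-splitting on which of the three DP predecessors realises the minimum and invoking Lemma~\ref{lem:triangular_inequality_local_distance_lemma} in the mixed cases. You instead give the classical direct argument---compose the two optimal partial assignments as $h=g\circ f$ on the appropriate domain, and bound each term of $\mathrm{cost}(h)$ by terms of $\mathrm{cost}(f)+\mathrm{cost}(g)$ via the triangle inequality of $D^\alpha_\Delta$ (Lemma~\ref{lem:metric_space_local}) for through-matches and Lemma~\ref{lem:triangular_inequality_local_distance_lemma} together with $\beta\le 1$ for broken matches. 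Your bookkeeping is sound: each substitution, deletion, and insertion term on the right is charged at most once, and the leftover $Q$-indices with $j\notin\mathrm{Im}\,f$ and $j\notin\mathrm{dom}\,g$ are simply discarded with the correct sign. Your approach is more self-contained (it does not rely on the equality $\mathrm{CED}=\delta$) and is the standard edit-distance argument; the paper's DP induction is more mechanical but reuses machinery already in place.
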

	\begin{proof}

\textbullet \quad Triangle inequality\\
We will prove triangle inequality by induction on $M=I+J+K$.
\begin{itemize}
    \item The triangle inequality is true for $M=0$ since $\delta_\Delta^{\alpha,\beta} \left( (P_i)_{0 \leq i \leq 0}, (R_k)_{0 \leq k \leq 0} \right)=0 \geq 0+0 = \delta_\Delta^{\alpha,\beta} \left( (P_i)_{0 \leq i \leq 0}, (Q_j)_{0 \leq j \leq 0} \right)+ \delta_\Delta^{\alpha,\beta} \left( (Q_j)_{0 \leq j \leq 0}, (R_k)_{0 \leq k \leq 0} \right)$

    \item Suppose the triangle inequality is true for all $n\in \{0,...,M-1\}$ for some $M>0$. For simplicity, we will denote $(P_I)=(P_i)_{0\leq i\leq I}$, $(Q_J)=(Q_j)_{0\leq j\leq J}$, $(R_K)=(R_k)_{0\leq k\leq K}$. Then we have different cases :\\
    $1^{st}$ case :\\
    If $\delta_\Delta^{\alpha,\beta} ((P_I),(Q_J))=\delta_\Delta^{\alpha,\beta} ((P_{I-1}) , (Q_J) ) +\beta \cdot D^\alpha_\Delta(P_I,A)$. Then $\delta_\Delta^{\alpha,\beta} ((P_I),(Q_J))+ \delta_\Delta^{\alpha,\beta} ((Q_J),(R_K))=\delta_\Delta^{\alpha,\beta} ((P_{I-1}) , (Q_J )) + \delta_\Delta^{\alpha,\beta} ((Q_J) , (R_K )) +\beta \cdot D^\alpha_\Delta(P_I,A)\\ \geq \delta_\Delta^{\alpha,\beta}((P_{I-1}) , (R_K )) +\beta \cdot D^\alpha_\Delta(P_I,A)$ (by triangular inequality of $\delta_\Delta^{\alpha,\beta}$ until $M-1$)\\
    $\geq \delta_\Delta^{\alpha,\beta}((P_{I}) , (R_K ))$ (by the recursive definition of $\delta_\Delta^{\alpha,\beta}$)
    \\
    $2^{nd}$ case :\\
    If $\left\{\begin{array}{l}
        \delta_\Delta^{\alpha,\beta}((P_I),(Q_J))=\delta_\Delta^{\alpha,\beta} ((P_{I}) , (Q_{J-1} ))\\+\beta \cdot D^\alpha_\Delta(Q_J ,A)\\
        \delta_\Delta^{\alpha,\beta} ((Q_J),(R_K))=\delta_\Delta^{\alpha,\beta}((Q_{J-1}) , (R_{K-1} ))\\ + D^\alpha_\Delta(Q_J ,R_K )
    \end{array}\right.$\\
    Then $\delta_\Delta^{\alpha,\beta}((P_I),(Q_J))+ \delta_\Delta^{\alpha,\beta} ((Q_J),(R_K))\\ \geq (\delta_\Delta^{\alpha,\beta} ((P_{I}) , (Q_{J-1} )) + \delta_\Delta^{\alpha,\beta} ((Q_{J-1}) , (R_{K-1} ))) +(\beta \cdot D^\alpha_\Delta(Q_J ,A)+ D^\alpha_\Delta(Q_J ,R_K)) \\\geq (\delta_\Delta^{\alpha,\beta} ((P_{I}) , (Q_{J-1} )) + \delta_\Delta^{\alpha,\beta} ((Q_{J-1}) , (R_{K-1} ))) +\beta \cdot (D^\alpha_\Delta(Q_J ,A)+ D^\alpha_\Delta(Q_J ,R_K))\\ \geq (\delta_\Delta^{\alpha,\beta} ((P_{I}) , (R_{K-1} ))) +\beta \cdot (D^\alpha_\Delta(Q_J ,A)+ D^\alpha_\Delta(Q_J ,R_K))$\\(by triangular inequality of $\delta_\Delta^{\alpha,\beta}$ until $M-1$)\\ $\geq(\delta_\Delta^{\alpha,\beta} ((P_{I}) , (R_{K-1} ))) +\beta \cdot D^\alpha_\Delta(R_K ,A)\text{ (by Lemma~\ref{lem:triangular_inequality_local_distance_lemma})}\\\geq \delta_\Delta^{\alpha,\beta} ((P_I),(R_K))$\\\\
    $3^{rd}$ case :\\
    If $\left\{\begin{array}{l}
       \delta_\Delta^{\alpha,\beta}((P_I),(Q_J))=\delta_\Delta^{\alpha,\beta}((P_{I-1}) , (Q_{J-1} )) \\+D^\alpha_\Delta(P_I ,Q_J)\\
        \delta_\Delta^{\alpha,\beta}((Q_J),(R_K))=\delta_\Delta^{\alpha,\beta}((Q_{J-1}) , (R_{K-1} )) \\+D^\alpha_\Delta(Q_J ,R_K )
    \end{array}\right.$\\
    $\text{Then }\delta_\Delta^{\alpha,\beta}((P_I),(Q_J))+\delta_\Delta^{\alpha,\beta}((Q_J),(R_K))\geq \delta_\Delta^{\alpha,\beta}((P_{I-1}) , (Q_{J-1} ))+\delta_\Delta^{\alpha,\beta}((Q_{J-1}) , (R_{K-1} )) +D^\alpha_\Delta(P_I ,Q_J)+D^\alpha_\Delta(Q_J ,R_K )$ \\  
    $\geq \delta_\Delta^{\alpha,\beta}((P_{I-1}) , (R_{K-1} )) +D^\alpha_\Delta(P_I ,R_K)$ (by triangular inequalitys of $\delta_\Delta^{\alpha,\beta}$ (until $M-1$) and $D^\alpha_\Delta$ )\\
    $\geq \delta_\Delta^{\alpha,\beta}((P_I),(R_K))$\\
    Then other cases are trivial or equivalent by symmetry to the first tree cases. \\
    \textbullet \quad Symmetry and positivity\\
    Since the distance $D^\alpha_\Delta$ is symmetric and positive, it is easy to show by induction that $\delta_\Delta^{\alpha,\beta}$ is symetric and positive.\\
    \textbullet \quad $P =Q \Rightarrow \mathrm{CED}^\Delta_{\alpha,\beta}(P,Q)=0$\\
    By the identity partial assignment, \\
    $\begin{array}{lll}
        &\mathrm{CED}^\Delta_{\alpha,\beta}(P,Q) \leq \text{cost~} id  = D^\alpha_\Delta(P_1 ,Q_1) + ... \\&+ D^\alpha_\Delta(P_{N_P^\Delta} ,Q_{N_Q^\Delta})\\
        & =  0 + ... + 0\\
        & =  0
    \end{array}$\\

    where $id$ is the identity partial assignment $\{(1, 1),...,({N_P^\Delta}, {N_P^\Delta})\}$.
    \\
     \textbullet \quad $\mathrm{CED}^\Delta_{\alpha,\beta}(P,Q)=0 \Rightarrow  P =Q$\\
     Suppose $ \mathrm{CED}^\Delta_{\alpha,\beta}(P,Q)=0$.\\
     Then, $\mathrm{CED}^\Delta_{\alpha,\beta}(P,Q) = \mathrm{cost}^\Delta_{(P,Q)}(f)$ for $f$ a optimal partial assignment of $(P,Q)$. Since $ \mathrm{CED}^\Delta_{\alpha,\beta}(P,Q)=0$, $f$ doesn't have any deletion or insertion (because $D^\alpha_\Delta(P_i, A)>0$ and $D^\alpha_\Delta(Q_j,A)>0$ by definition of TVPDs). Then $N_P^\Delta=N_Q^\Delta$ and $f=\{(1, 1),...,(N_P^\Delta, N_P^\Delta)\}$. Hence, $\forall i\in \{1,...,N_P^\Delta\}$, $D^\alpha_\Delta(P_i, Q_i)=0$ and thus, $\forall i\in \{1,...,N_P^\Delta\}$, $P_i = Q_i$. Finally, $P = Q$.\qedhere
\end{itemize}
    \end{proof}
	
	\medskip

\begin{lemma}
\label{lem:dp_backtracking_correct}
Let $\TVPDp \in \TVPDspace^\paramDelta$ and $\TVPDq \in \TVPDspace^\paramDelta$.
Then:
\begin{enumerate}
  \item The backtracking procedure starting from
  $\initDPTwo_0 = (\variableN_\TVPDp,\variableN_\TVPDq)$
  and constructing $(\initDPTwo_\variablez)_\variablez$ by steps
  $(-1,0)$, $(-1,-1)$, or $(0,-1)$ as specified, terminates after a finite
  number of steps at $\initDPTwo_\variableZ = (0,0)$ for some
  $\variableZ \in \NNB$.

  \item The set $\partialAssignment := \initDPOne_\variableZ$ is a
  $\paramDelta$-partial assignment between $\TVPDp$ and $\TVPDq$, i.e.\
  $\partialAssignment \in \PASet^\paramDelta(\TVPDp,\TVPDq)$.

  \item The cost of $\partialAssignment$ satisfies
  \[
    \mathrm{cost}^\paramDelta_{(\TVPDp,\TVPDq)}(\partialAssignment)
    \;=\;
    \delta_\paramDelta^{\paramWeight,\paramPenalty}
      \left( (\TVPDp_\variablei)_{0 \leq \variablei \leq \variableN_\TVPDp}, (\TVPDq_\variablej)_{0 \leq \variablej \leq \variableN_\TVPDq} \right),
  \]
  so that $\partialAssignment$ achieves the minimum in the definition of
  $\CEDM^{\paramDelta}_{\paramWeight,\paramPenalty}(\TVPDp,\TVPDq)$, and in
  particular
  \[
    \CEDM^{\paramDelta}_{\paramWeight,\paramPenalty}(\TVPDp,\TVPDq)
    \;=\;
    \mathrm{cost}^\paramDelta_{(\TVPDp,\TVPDq)}(\partialAssignment).
  \]
\end{enumerate}
\end{lemma}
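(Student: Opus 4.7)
The plan is to prove the three claims in the order they are stated, using the recursive definition of $\delta_\paramDelta^{\paramWeight,\paramPenalty}$ together with a telescoping argument that links the backtracked path to the cost decomposition of the resulting partial assignment. Termination (claim 1) follows from the observation that every admissible step decreases $\initDPTwo_{\variablez_1}+\initDPTwo_{\variablez_2}$ by either $1$ or $2$, so the path reaches $(0,0)$ in at most $\variableN_\TVPDp+\variableN_\TVPDq$ steps; well-definedness—the fact that at each non-origin cell at least one of the three candidate predecessors realizes $\delta_\paramDelta^{\paramWeight,\paramPenalty}(\initDPTwo_\variablez)$—is immediate for interior cells since the recursive definition is a minimum of exactly these three terms, and on the boundary rows/columns it is forced by the initialization, which is built from a single deletion (respectively insertion) step.

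For claim 2 I would note that $\initDPOne_\variableZ$ is populated only when the chosen step is a substitution $\initDPTwo_{\variablez+1}=\initDPTwo_\variablez-(1,1)$, at which point the pair $\initDPTwo_\variablez=(\variablei,\variablej)$ is recorded. Since every step strictly decreases each of the two coordinates (by $0$ or $1$), two substitution events necessarily have both coordinates strictly larger in the earlier one. Thus the pairs recorded are strictly decreasing in both coordinates along the backward scan and, reading them forward, constitute a strictly increasing function $\partialAssignment\colon\mathrm{dom}\,\partialAssignment\subset\{1,\dots,\variableN_\TVPDp\}\to\mathrm{Im}\,\partialAssignment\subset\{1,\dots,\variableN_\TVPDq\}$, that is, an element of $\PASet^\paramDelta(\TVPDp,\TVPDq)$.

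Claim 3 is obtained by telescoping. By the selection rule in the backtracking, each transition $\initDPTwo_\variablez\to\initDPTwo_{\variablez+1}$ satisfies
\[
\delta_\paramDelta^{\paramWeight,\paramPenalty}(\initDPTwo_\variablez)
=\delta_\paramDelta^{\paramWeight,\paramPenalty}(\initDPTwo_{\variablez+1})+c_\variablez,
\]
where $c_\variablez$ is, according to the chosen predecessor, $\paramPenalty\cdot\localDistance^\paramWeight_\paramDelta(\TVPDp_{\initDPTwo_{\variablez_1}},\boundarySet)$, $\localDistance^\paramWeight_\paramDelta(\TVPDp_{\initDPTwo_{\variablez_1}},\TVPDq_{\initDPTwo_{\variablez_2}})$, or $\paramPenalty\cdot\localDistance^\paramWeight_\paramDelta(\TVPDq_{\initDPTwo_{\variablez_2}},\boundarySet)$. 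Summing from $\variablez=0$ to $\variableZ-1$ and using $\delta_\paramDelta^{\paramWeight,\paramPenalty}(0,0)=0$ gives $\delta_\paramDelta^{\paramWeight,\paramPenalty}(\variableN_\TVPDp,\variableN_\TVPDq)=\sum_\variablez c_\variablez$. It then remains to identify this sum with $\mathrm{cost}^\paramDelta_{(\TVPDp,\TVPDq)}(\partialAssignment)$: along the backward path the first coordinate decreases strictly from $\variableN_\TVPDp$ to $0$, so every $\variablei\in\{1,\dots,\variableN_\TVPDp\}$ is the first coordinate of exactly one $\initDPTwo_\variablez$ at which a substitution or deletion step occurs, and analogously for the second coordinate. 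The indices $\variablei$ visited by substitution steps are exactly $\mathrm{dom}\,\partialAssignment$ with image $\partialAssignment(\variablei)=\initDPTwo_{\variablez_2}$, those visited by deletion steps are exactly $\{1,\dots,\variableN_\TVPDp\}\setminus\mathrm{dom}\,\partialAssignment$, and symmetrically for insertions. Matching the $c_\variablez$ termwise with lines~(2)--(4) in the definition of $\mathrm{cost}^\paramDelta_{(\TVPDp,\TVPDq)}(\partialAssignment)$ yields $\sum_\variablez c_\variablez=\mathrm{cost}^\paramDelta_{(\TVPDp,\TVPDq)}(\partialAssignment)$. Combined with Proposition~\ref{prop:distance_DP_equality}, which gives $\delta_\paramDelta^{\paramWeight,\paramPenalty}(\variableN_\TVPDp,\variableN_\TVPDq)=\CEDM^\paramDelta_{\paramWeight,\paramPenalty}(\TVPDp,\TVPDq)$, this shows that $\partialAssignment$ attains the minimum in the definition of the $\CEDPP$.

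The only step requiring care is the bookkeeping in claim 3, ensuring that the set of indices visited as deletions (resp.\ insertions) along the path coincides exactly with the complement of $\mathrm{dom}\,\partialAssignment$ (resp.\ $\mathrm{Im}\,\partialAssignment$); everything else is either immediate from the recursion or a routine monotonicity argument on the backward path.
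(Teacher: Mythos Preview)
Your proposal is correct and follows essentially the same approach as the paper: termination via strict decrease of the coordinate sum, the partial-assignment property via strict monotonicity of both coordinates at diagonal steps, and the cost identity via telescoping the defining equalities $\delta_\paramDelta^{\paramWeight,\paramPenalty}(\initDPTwo_\variablez)=\delta_\paramDelta^{\paramWeight,\paramPenalty}(\initDPTwo_{\variablez+1})+c_\variablez$ along the backtracked path. Your treatment is in fact slightly more explicit than the paper's in two places: you note the well-definedness of the backtracking step at boundary cells, and you spell out the bookkeeping that each $\variablei\in\{1,\dots,\variableN_\TVPDp\}$ is the first coordinate of exactly one $\initDPTwo_\variablez$ at which a deletion or substitution occurs (and symmetrically for the second coordinate), whereas the paper simply asserts that the sum of local costs coincides with $\mathrm{cost}^\paramDelta_{(\TVPDp,\TVPDq)}(\partialAssignment)$ by construction.
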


\begin{proof}
Let us write $\initDPTwo_\variablez = (i_\variablez,j_\variablez)$. By construction of the backtracking: each step replaces $\initDPTwo_\variablez$ by $\initDPTwo_{\variablez+1} = \initDPTwo_\variablez-(1,0)$, or $\initDPTwo_{\variablez+1} = \initDPTwo_\variablez-(1,1)$, or $\initDPTwo_{\variablez+1} = \initDPTwo_\variablez-(0,1)$.
Hence
\[
  i_{\variablez+1} \leq i_\variablez,\quad
  j_{\variablez+1} \leq j_\variablez,\quad
  i_{z+1} + j_{z+1} \le i_z + j_z - 1,
\]
so the quantity $i_\variablez + j_\variablez$ strictly decreases at each step. Since $i_\variablez,j_\variablez$ are nonnegative (at every step, $(i_z,j_z)$ is a valid entry of the dynamic-programming table, so $0 \le i_z \le N_{\TVPDp}$ and $0 \le j_z \le N_{\TVPDq}$. In particular, $i_z$ and $j_z$ are nonnegative integers) and we start from $(i_0,j_0) = (\variableN_\TVPDp,\variableN_\TVPDq)$, this process must stop in finitely many steps, and the only possible limit in $\NNB^2$ is $(0,0)$. This proves (1).

Next, by definition, $\initDPOne_\variablez$ is obtained from $\initDPOne_{\variablez-1}$ by adding the current index pair
$\initDPTwo_{\variablez-1} = (i_{\variablez-1},j_{\variablez-1})$ if and only if the step from $\initDPTwo_{\variablez-1}$ to $\initDPTwo_\variablez$ is a diagonal move $(-1,-1)$; in the other two cases (deletion or insertion) the set $\initDPOne_\variablez$ is left unchanged. Thus $\initDPOne_\variableZ$ is exactly the set of index pairs $(i_\variablez,j_\variablez)$ along the backtracking path for which a diagonal step is taken.

Along this path, the indices $(i_\variablez,j_\variablez)$ are strictly decreasing in both coordinates when a diagonal step is taken. Reading these pairs in reverse order (from $(0,0)$ to $(\variableN_\TVPDp,\variableN_\TVPDq)$) therefore yields a family of pairs with strictly increasing first and second coordinates. Hence $\partialAssignment = \initDPOne_\variableZ$ is the graph of a strictly increasing map from a subset of $\{1,\dots,\variableN_\TVPDp\}$ to a subset of $\{1,\dots,\variableN_\TVPDq\}$, i.e.\ $\partialAssignment \in \PASet^\paramDelta(\TVPDp,\TVPDq)$. This proves (2).

We prove point (3) by induction on the backtracking index $\variablez$. Recall that the backtracking produces a sequence
$\bigl(\initDPTwo_\variablez\bigr)_{\variablez}$ with $\initDPTwo_\variablez = (i_\variablez,j_\variablez)$, starting from $(i_0,j_0) = (\variableN_\TVPDp,\variableN_\TVPDq)$ and ending at $(i_\variableZ,j_\variableZ) = (0,0)$, and at each step $\variablez$ we choose $\initDPTwo_{\variablez+1}$ so that one of the three defining equalities of the dynamic-programming recursion holds, namely
\begin{align*}
  \delta_\paramDelta^{\paramWeight,\paramPenalty}(i_\variablez,j_\variablez)
  &=
  \delta_\paramDelta^{\paramWeight,\paramPenalty}(i_\variablez-1,j_\variablez)
  + \paramPenalty \cdot \,\localDistanceAppendix^\paramWeight_\paramDelta(\TVPDp_{i_\variablez},\boundarySet),\\
  \delta_\paramDelta^{\paramWeight,\paramPenalty}(i_\variablez,j_\variablez)
  &=
  \delta_\paramDelta^{\paramWeight,\paramPenalty}(i_\variablez-1,j_\variablez-1)
  + \localDistanceAppendix^\paramWeight_\paramDelta(\TVPDp_{i_\variablez},\TVPDq_{j_\variablez}),\\
  \delta_\paramDelta^{\paramWeight,\paramPenalty}(i_\variablez,j_\variablez)
  &=
  \delta_\paramDelta^{\paramWeight,\paramPenalty}(i_\variablez,j_\variablez-1)
  + \paramPenalty \cdot \,\localDistanceAppendix^\paramWeight_\paramDelta(\TVPDq_{j_\variablez},\boundarySet),
\end{align*}
according to whether the step is a deletion, a substitution, or an insertion.

For each $\variablez\in\{0,\dots,\variableZ-1\}$, let $\mathrm{lc}_\variablez$ denote this \emph{local cost} (deletion, insertion, or substitution) incurred when moving from $\initDPTwo_\variablez$ to $\initDPTwo_{\variablez+1}$. Rearranging the corresponding equality, we obtain
\[
  \delta_\paramDelta^{\paramWeight,\paramPenalty}(i_\variablez,j_\variablez)
  =
  \delta_\paramDelta^{\paramWeight,\paramPenalty}(i_{\variablez+1},j_{\variablez+1})
  + \mathrm{lc}_\variablez.
\]

We now prove by induction on $k\in\{0,\dots,\variableZ\}$ that
\[
  \delta_\paramDelta^{\paramWeight,\paramPenalty}(i_0,j_0)
  =
  \delta_\paramDelta^{\paramWeight,\paramPenalty}(i_k,j_k)
  + \sum_{\variablez=0}^{k-1} \mathrm{lc}_\variablez,
\]
with the convention that the sum is zero for $k=0$.
\begin{itemize}
\item The equality is true for $k=0$, since both sides reduce to $\delta_\paramDelta^{\paramWeight,\paramPenalty}(i_0,j_0)$.

\item 
Assume the statement holds for some $k<\variableZ$. Using the relation above for $\variablez=k$, we have
\[
  \delta_\paramDelta^{\paramWeight,\paramPenalty}(i_k,j_k)
  =
  \delta_\paramDelta^{\paramWeight,\paramPenalty}(i_{k+1},j_{k+1})
  + \mathrm{lc}_k.
\]
Plugging this into the induction hypothesis yields
\[
  \delta_\paramDelta^{\paramWeight,\paramPenalty}(i_0,j_0)
  =
  \delta_\paramDelta^{\paramWeight,\paramPenalty}(i_{k+1},j_{k+1})
  + \sum_{\variablez=0}^{k} \mathrm{lc}_\variablez,
\]
which is exactly the desired formula at rank $k+1$. This completes the induction.

Taking $k=\variableZ$ and using the initialization
$\delta_\paramDelta^{\paramWeight,\paramPenalty}(0,0)=0$ gives
\[
  \delta_\paramDelta^{\paramWeight,\paramPenalty}
    (\variableN_\TVPDp,\variableN_\TVPDq)
  =
  \sum_{\variablez=0}^{\variableZ-1} \mathrm{lc}_\variablez.
\]
By construction of $\partialAssignment$ during backtracking, this sum of local costs coincides with
$\text{cost}^\paramDelta_{(\TVPDp,\TVPDq)}(\partialAssignment)$, which proves
\[
  \text{cost}^\paramDelta_{(\TVPDp,\TVPDq)}(\partialAssignment)
  =
  \delta_\paramDelta^{\paramWeight,\paramPenalty}
    (\variableN_\TVPDp,\variableN_\TVPDq),
\]
and therefore point \textup{(3)} of the lemma.
\end{itemize}
\end{proof}
    
\begin{proposition}\label{prop:piecewise-constant_TVPD_approximation} Let $F\in C^\Delta$.

Then  for every $\varepsilon\in\mathbb{R}^{*}_{+}$, there exists $\tilde{F}^{\eta(\varepsilon)}\in PC^{\Delta}$ such that $\mathrm{CED}^{\Delta}_{\alpha,\beta}(F,\tilde{F}^{\eta(\varepsilon)})<\varepsilon$.
\end{proposition}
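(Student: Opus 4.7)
The plan is to exploit the fact that $\tilde{\TVPDf}^{\paramIntervalSize}$ has \emph{the same domain} as $\TVPDf$, so one can match them with the identity partial assignment, reducing the problem to controlling a sum of local distances $\localDistanceAppendix^{\paramWeight}_{\paramDelta}(\TVPDf_\variablei,\tilde{\TVPDf}_\variablei^{\paramIntervalSize})$, each of which reduces in turn to a $\wasserstein_2$-integral that can be made uniformly small thanks to continuity.

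First, I would fix any divisor $\paramIntervalSize \mid \paramDelta$ and note that, by construction, $\tilde{\TVPDf}^{\paramIntervalSize}\in \PCTVPDspace^{\paramDelta}$ and $\variableN_{\tilde{\TVPDf}^{\paramIntervalSize}}=\variableN_{\TVPDf}$. Using the identity map $Id:\{1,\dots,\variableN_\TVPDf\}\to\{1,\dots,\variableN_\TVPDf\}$ as a $\paramDelta$-partial assignment (as done in Section~\ref{sec:global_distance}),
\[
\CEDM^{\paramDelta}_{\paramWeight,\paramPenalty}(\TVPDf,\tilde{\TVPDf}^{\paramIntervalSize})
\;\le\; \sum_{\variablei=1}^{\variableN_\TVPDf}\localDistanceAppendix^{\paramWeight}_{\paramDelta}(\TVPDf_\variablei,\tilde{\TVPDf}_\variablei^{\paramIntervalSize}).
\]
Since $\TVPDf_\variablei$ and $\tilde{\TVPDf}_\variablei^{\paramIntervalSize}$ share the same interval $\intervalI_\variablei^{\TVPDf}=(\intervalBounda,\intervalBoundb)$, the temporal-shift term vanishes ($\intervalBoundc-\intervalBounda=0$), and the local distance collapses to
\[
\localDistanceAppendix^{\paramWeight}_{\paramDelta}(\TVPDf_\variablei,\tilde{\TVPDf}_\variablei^{\paramIntervalSize})
=(1-\paramWeight)\int_{\intervalBounda}^{\intervalBoundb}\wasserstein_2\bigl(\TVPDf(\variableTime),\tilde{\TVPDf}^{\paramIntervalSize}(\variableTime)\bigr)\,d\variableTime.
\]

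Next, I would invoke the continuity assumption. By definition of $\CTVPDspace^{\paramDelta}$, each restriction $\TVPDf_\variablei:\intervalI_\variablei^{\TVPDf}\to(\PDS,\wasserstein_2)$ is continuous and admits finite $\wasserstein_2$-limits at both endpoints of $\intervalI_\variablei^{\TVPDf}$. Hence $\TVPDf_\variablei$ extends to a continuous map on the compact interval $[\intervalBounda,\intervalBoundb]$, and is therefore \emph{uniformly} continuous there. Given any $\scalarVariableTwo>0$, pick $\eta_\variablei>0$ such that $|\variableTime-\variableTime'|<\eta_\variablei$ implies $\wasserstein_2(\TVPDf(\variableTime),\TVPDf(\variableTime'))<\scalarVariableTwo$. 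Setting $\eta_\star=\min_\variablei \eta_\variablei$ (the minimum is over finitely many indices) and choosing any $\paramIntervalSize\mid\paramDelta$ with $\paramIntervalSize<\eta_\star$, the piecewise-constant construction of Section~\ref{sec:pc_TVPD} ensures that on every subinterval $[\intervalBoundaBis,\intervalBoundaBisBis)$ of length $\paramIntervalSize$ inside $\intervalI_\variablei^{\TVPDf}$, $\tilde{\TVPDf}^{\paramIntervalSize}(\variableTime)=\TVPDf(\intervalBoundaBis)$, so $\wasserstein_2(\TVPDf(\variableTime),\tilde{\TVPDf}^{\paramIntervalSize}(\variableTime))<\scalarVariableTwo$ for every $\variableTime\in[\intervalBoundaBis,\intervalBoundaBisBis)$.

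Finally, combining everything and integrating, for such $\paramIntervalSize$,
\[
\CEDM^{\paramDelta}_{\paramWeight,\paramPenalty}(\TVPDf,\tilde{\TVPDf}^{\paramIntervalSize})
\;\le\;(1-\paramWeight)\,\variableN_\TVPDf\,\paramDelta\,\scalarVariableTwo.
\]
Choosing $\scalarVariableTwo=\scalarVariableOne/\bigl(2(1-\paramWeight)\variableN_\TVPDf\paramDelta+1\bigr)$ (or any strict upper bound) and selecting the corresponding $\paramIntervalSize(\scalarVariableOne)\mid\paramDelta$ gives the desired $\CEDM^{\paramDelta}_{\paramWeight,\paramPenalty}(\TVPDf,\tilde{\TVPDf}^{\paramIntervalSize(\scalarVariableOne)})<\scalarVariableOne$. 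The main subtlety here is the extendibility of each $\TVPDf_\variablei$ to a uniformly continuous map on $[\intervalBounda,\intervalBoundb]$; this is precisely what the endpoint-limit condition in the definition of $\CTVPDspace^{\paramDelta}$ provides, so the rest of the argument is a finite sum of standard uniform-continuity estimates. Note moreover that the bound depends only on $\variableN_\TVPDf$, $\paramDelta$, $\paramWeight$, and the modulus of continuity of $\TVPDf$, which is the basis for the practical Lipschitz-type choice of $\paramIntervalSize(\scalarVariableOne)$ given after the statement for input \TVPDsPP.
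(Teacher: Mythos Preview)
Your proof is correct and follows essentially the same approach as the paper: bound $\CEDM^{\paramDelta}_{\paramWeight,\paramPenalty}$ via the identity partial assignment, observe that the temporal term vanishes, extend each $\TVPDf_\variablei$ to the compact closure $[\intervalBounda,\intervalBoundb]$ using the endpoint-limit condition to obtain uniform continuity, and then integrate the resulting pointwise $\wasserstein_2$-bound. The only cosmetic difference is that the paper picks $\scalarVariableTwo=\scalarVariableOne/\bigl((1-\paramWeight)\,\lambda(\mathrm{dom}\,\TVPDf)\bigr)$ directly so that the final bound equals $\scalarVariableOne$ on the nose, and it writes $\tilde{\TVPDf}^{\paramIntervalSize}(\variableTime)$ as the limit value $\lim_{s\to\intervalBoundaBis}\TVPDf_\variablei(s)$ rather than $\TVPDf(\intervalBoundaBis)$ (which matters only at the left endpoint, a set of measure zero).
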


\begin{proof}

    Let $F :\text{dom F} = \bigcup_{i \in \{1, \dots, N_F^\Delta\}} I_i^F \to E$, be one continuous TVPD of $S^\Delta$, that is $F\in C^\Delta$. Let $i\,\in\, \{1, \dots, N_F^\Delta\},$ then $F_i$ can be extended continuously to the closure $\overline{I_i^F}$, yielding the extension $F_i'$; because $\overline{I_i^F}$ is compact, then $F'_i$ is uniformly continuous on that set. Then $\forall\,\varepsilon> 0,\exists\,\eta_i>0,\forall(t,s)\in \overline{I_i^F},\bigl|t-s\bigr|<\eta_i\;\Longrightarrow\;
    d\,\!\bigl(F'_i(t),F'_i(s)\bigr)<\varepsilon.$ Let $\varepsilon>0$, then $\forall \,i\,\in\, \{1, \dots, N_F^\Delta\},$ there exists $\eta_i>0$, such that, $\forall(t,s)\in{I_i^F},\bigl|t-s\bigr|<\eta_i\;\Longrightarrow\;
    d\bigl(F_i(t),F_i(s)\bigr)<\varepsilon/(\mu(dom\,F)\cdot(1-\alpha))$. Fixing $\eta\,' = \min_{i \in \{1, \dots, N_F^{\Delta^{}}\}}\eta_i$, let $k_{\Delta}\in\mathbb{N}^*$, such that $\Delta/ \, k_{\Delta}<\eta'$, and define $\eta=\Delta/ \, k_{\Delta}$. For every $i \in \{1, \dots, N_F^\Delta\}$, let $M_i\in\mathbb{N}$, such that $\inf I_i^F + M_i \cdot\eta=\sup I_i^F$, and denote $\mathring{F_i}:\overline{I_i^F}\to E$ the piecewise-constant application defined by $\mathring{F_i}(t)\;=\;F'_i(a_{i,n}),
    \quad \forall \,t\in[a_{i,n},a_{i,n+1}),
    \quad$ with $a_{i,n} \;=\; \inf I_i^F + n \cdot\eta,
    \,\text{for } n=0,\dots,M_i$. Denoting $\tilde{F_i}$ the restriction of $\mathring{F_i}$ on $I_i^F$, and gluing all the $\tilde{F_i}$ together, for every $ i \in\,$$ \{1, \dots, N_F^\Delta\}$, we obtain the piecewise-constant approximation $\tilde{F}:\text{dom F }\to E$ of $F$. Obviously $\tilde{F}\in S^\Delta$ as a simple function on each $I_i^F$, moreover we have $\forall\,t\in \text{dom F},\,d(F(t),\tilde{F}(t))<\varepsilon/(\mu(dom\,F)\cdot(1-\alpha))$, which implies CED$^{\Delta}_{\alpha,\beta} (F,\tilde{F}) \leq \sum_{i \in \{1,\, \dots \,,\,N_F^\Delta\}} \,\,D^\alpha_{\Delta}(F_i,\tilde{F}_i)< N_F^\Delta\cdot\Delta\cdot(1-\alpha)\cdot(\varepsilon/\mu(dom\,F)\cdot(1-\alpha))=\mu(dom\,F)\cdot(1-\alpha)\cdot\varepsilon/(\mu(dom\,F)\cdot(1-\alpha))= \varepsilon$.
    
    \end{proof}

\begin{proposition}
\label{prop:Lipschitz_TVPD}
Let
\[
  \PDSeq_\variablen
  = \bigl(\PD_\variablen,\variableTime_\variablen\bigr)_{0\leq\variablen\leq\variableN_\PDSeq}
\]
be a sequence in $E \times \mathbb{R} $, with
\(
  \variableTime_0 < \variableTime_1 < \dots < \variableTime_{\variableN_\PDSeq}.
\)
For each
\(\variablen\in\{0,\dots,\variableN_\PDSeq-1\}\),
let
\(\WtwoGeodesic_\variablen\colon[0,1]\to E\)
be a constant--speed $d$-geodesic satisfying
\(\WtwoGeodesic_\variablen(0)=\PD_\variablen\) and
\(\WtwoGeodesic_\variablen(1)=\PD_{\variablen+1}\).
Define
\(\TVPDf\colon[\variableTime_0,\variableTime_{\variableN_\PDSeq}]\to E\)
by
\[
  \TVPDf(\variableTime)
  = \WtwoGeodesic_\variablen\!\bigl(\lambda_\variablen(\variableTime)\bigr),
  \quad
  \text{for }\variableTime\in[\variableTime_\variablen,\variableTime_{\variablen+1}),
\]
where
\[
  \lambda_\variablen(\variableTime)
  =
  \frac{\variableTime-\variableTime_\variablen}
       {\variableTime_{\variablen+1}-\variableTime_\variablen}
  \in[0,1],
\]
and set
\(\TVPDf(\variableTime_{\variableN_\PDSeq}) = \PD_{\variableN_\PDSeq}\).

Let
\[
  \variableKLips_\PDSeq
  =
  \max_{0\leq\variablen\leq\variableN_\PDSeq-1}
  \frac{
    d(\PD_\variablen,\PD_{\variablen+1})
  }{
    \variableTime_{\variablen+1}-\variableTime_\variablen
  }.
\]
Then, for all
\(\variableTime,\variableTime'\in
[\variableTime_0,\variableTime_{\variableN_\PDSeq}]\),
\[
  d\bigl(\TVPDf(\variableTime),\TVPDf(\variableTime')\bigr)
  \leq
  \variableKLips_\PDSeq\,\cdot \lvert \variableTime'-\variableTime\rvert.
\]
In particular,
\(\TVPDf\) is \(\variableKLips_\PDSeq\)-Lipschitz as a map
\(
  ([\variableTime_0,\variableTime_{\variableN_\PDSeq}],|\cdot|)
  \to (E,d).
\)
\end{proposition}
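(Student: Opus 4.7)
The plan is to derive the Lipschitz estimate directly from the construction of $\TVPDf$, by reducing it to the Lipschitz property of each constant-speed geodesic piece $\WtwoGeodesic_\variablen$ and then summing across breakpoints via the triangle inequality. I will split the argument into two cases depending on whether $\variableTime$ and $\variableTime'$ fall in the same subinterval $[\variableTime_\variablen, \variableTime_{\variablen+1})$ or not. No deeper property of $(E,d)$ beyond being a metric space in which constant-speed geodesics exist will be needed, so the proof is essentially a bookkeeping exercise using the definition of $\WtwoGeodesic_\variablen$ and of $\variableKLips_\PDSeq$.

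First, I would recall that since $\WtwoGeodesic_\variablen\colon[0,1]\to E$ is a constant-speed $d$-geodesic from $\PD_\variablen$ to $\PD_{\variablen+1}$, we have the identity $d(\WtwoGeodesic_\variablen(s),\WtwoGeodesic_\variablen(t))=|s-t|\cdot d(\PD_\variablen,\PD_{\variablen+1})$ for every $s,t\in[0,1]$. Combined with the definition of $\lambda_\variablen$ and of $\variableKLips_\PDSeq$, this yields, for any $\variableTime,\variableTime'$ in the same interval $[\variableTime_\variablen,\variableTime_{\variablen+1})$:
\begin{equation*}
d\bigl(\TVPDf(\variableTime),\TVPDf(\variableTime')\bigr)
=\frac{|\variableTime-\variableTime'|}{\variableTime_{\variablen+1}-\variableTime_\variablen}\,d(\PD_\variablen,\PD_{\variablen+1})
\leq \variableKLips_\PDSeq\,|\variableTime-\variableTime'|.
\end{equation*}
A symmetric argument, using that $\lim_{\variableTime\to\variableTime_{\variablen+1}^{-}}\TVPDf(\variableTime)=\PD_{\variablen+1}=\TVPDf(\variableTime_{\variablen+1})$, shows that the same bound extends to endpoints, so $\TVPDf$ is in fact $\variableKLips_\PDSeq$-Lipschitz on each closed subinterval $[\variableTime_\variablen,\variableTime_{\variablen+1}]$.

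For the cross-interval case, I would assume without loss of generality $\variableTime<\variableTime'$ with $\variableTime\in[\variableTime_\variablen,\variableTime_{\variablen+1})$ and $\variableTime'\in[\variableTime_\variablem,\variableTime_{\variablem+1})$ for some $\variablen<\variablem$, insert the breakpoints $\variableTime_{\variablen+1},\dots,\variableTime_\variablem$, and apply the triangle inequality:
\begin{equation*}
d\bigl(\TVPDf(\variableTime),\TVPDf(\variableTime')\bigr)
\leq d\bigl(\TVPDf(\variableTime),\PD_{\variablen+1}\bigr)
+\sum_{\variablek=\variablen+1}^{\variablem-1}d(\PD_\variablek,\PD_{\variablek+1})
+d\bigl(\PD_\variablem,\TVPDf(\variableTime')\bigr).
\end{equation*}
Each of the three types of terms is controlled by $\variableKLips_\PDSeq$ times the corresponding time gap: the single-interval estimate from the previous step bounds the two extremal terms by $\variableKLips_\PDSeq(\variableTime_{\variablen+1}-\variableTime)$ and $\variableKLips_\PDSeq(\variableTime'-\variableTime_\variablem)$ respectively, while the intermediate terms satisfy $d(\PD_\variablek,\PD_{\variablek+1})\leq \variableKLips_\PDSeq(\variableTime_{\variablek+1}-\variableTime_\variablek)$ by the very definition of $\variableKLips_\PDSeq$. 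Summing the telescoping gaps collapses the right-hand side to $\variableKLips_\PDSeq(\variableTime'-\variableTime)$, which closes the argument.

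The main, and really the only, point that requires some care is verifying that the bound at the breakpoints is sharp in the sense that $\TVPDf$ is continuous at each $\variableTime_{\variablen+1}$ so that the triangle-inequality decomposition above is legitimate; this follows immediately from $\WtwoGeodesic_\variablen(1)=\PD_{\variablen+1}=\WtwoGeodesic_{\variablen+1}(0)$ and is already noted in the construction of input TVPDs in Section 3.6. There is no analytic obstacle and no use of hypotheses beyond those stated.
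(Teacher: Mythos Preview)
Your proposal is correct and follows essentially the same approach as the paper: establish the Lipschitz bound on each closed subinterval $[\variableTime_\variablen,\variableTime_{\variablen+1}]$ via the constant-speed geodesic identity, then for arbitrary $\variableTime<\variableTime'$ insert the intermediate breakpoints, apply the triangle inequality, and telescope. The only cosmetic difference is that the paper phrases the cross-interval step slightly more abstractly as a subdivision $\tau_0<\tau_1<\dots<\tau_r$ obtained by intersecting $[\variableTime,\variableTime']$ with the breakpoints, but the content is identical.
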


\begin{proof}
Fix \(\variablen\in\{0,\dots,\variableN_\PDSeq-1\}\) and let
\(\variableTime,\variableTime'\in
[\variableTime_\variablen,\variableTime_{\variablen+1})\).
Since \(\WtwoGeodesic_\variablen\) is a constant–speed geodesic for
\(d\), we have
\[
  d\bigl(
    \WtwoGeodesic_\variablen(u),
    \WtwoGeodesic_\variablen(v)
  \bigr)
  = |u-v|\cdot \,
    d(\PD_\variablen,\PD_{\variablen+1})
\]
for all \(u,v\in[0,1]\).
By construction of \(\TVPDf\),
\[
  \TVPDf(\variableTime)
  = \WtwoGeodesic_\variablen(\lambda_\variablen(\variableTime)),
  \quad
  \TVPDf(\variableTime')
  = \WtwoGeodesic_\variablen(\lambda_\variablen(\variableTime')).
\]

Moreover, at the breakpoint $\variableTime_{\variablen+1}$ we have
\[
  \TVPDf(\variableTime_{\variablen+1})
  = \PD_{\variablen+1}
  = \WtwoGeodesic_\variablen(1)
  = \WtwoGeodesic_\variablen\bigl(\lambda_\variablen(\variableTime_{\variablen+1})\bigr),
\]
since
\[
  \lambda_\variablen(\variableTime_{\variablen+1})
  =
  \frac{\variableTime_{\variablen+1}-\variableTime_\variablen}
       {\variableTime_{\variablen+1}-\variableTime_\variablen}
  = 1.
\]
So, the identity
\[
  \TVPDf(\variableTime)
  = \WtwoGeodesic_\variablen(\lambda_\variablen(\variableTime))
\]
also holds when $\variableTime=\variableTime_{\variablen+1}$. Thus,
$\forall \variableTime,\variableTime'\in
[\variableTime_\variablen,\variableTime_{\variablen+1}]$,
\begin{align*}
  d\bigl(\TVPDf(\variableTime),\TVPDf(\variableTime')\bigr)
  &=
  \bigl|\lambda_\variablen(\variableTime)
        -\lambda_\variablen(\variableTime')\bigr|\,
   \cdot d(\PD_\variablen,\PD_{\variablen+1}) \\
  &=
  \frac{|\variableTime-\variableTime'|}
       {\variableTime_{\variablen+1}-\variableTime_\variablen}\,
  d(\PD_\variablen,\PD_{\variablen+1}) \\
  &\leq
  \variableKLips_\PDSeq\,\cdot |\variableTime-\variableTime'|.
\end{align*}

For arbitrary
\(\variableTime,\variableTime'\in
[\variableTime_0,\variableTime_{\variableN_\PDSeq}]\)
with \(\variableTime<\variableTime'\),
let
\(\variableTime=\tau_0<\tau_1<\dots<\tau_r=\variableTime'\)
be a subdivision obtained by intersecting
\([\variableTime,\variableTime']\) with the breakpoints
\(\{\variableTime_\variablen\}_{\variablen}\).
Applying the previous bound on each subinterval and using the triangle inequality,
\begin{align*}
  d\bigl(\TVPDf(\variableTime),\TVPDf(\variableTime')\bigr)
  &\leq
  \sum_{k=0}^{r-1}
  d\bigl(\TVPDf(\tau_k),\TVPDf(\tau_{k+1})\bigr) \\
  &\leq
  \sum_{k=0}^{r-1}
  \variableKLips_\PDSeq\,\cdot|\tau_{k+1}-\tau_k| \\
  &=
  \variableKLips_\PDSeq\, \cdot |\variableTime'-\variableTime|.
\end{align*}
This holds for all \(\variableTime,\variableTime'\), which concludes the proof.
\end{proof}

For the geodesic constructions that follow, we further assume that $ (E,d) $ is a geodesic Polish space, and we slightly relax the definition of $\mathcal{S}^\Delta$. We define $\bar{\mathcal{S}}^\Delta$ as the space of functions $F$ satisfying all conditions of $\mathcal{S}^\Delta$ except the condition "$\forall t \in \textit{dom } F, d(F(t), A) > 0$". Similarly, $\bar{s}^\Delta$ denotes the corresponding space of $\Delta$-subdivisions. On $\bar{\mathcal{S}}^\Delta$, we define the equivalence relation $P \sim Q \Leftrightarrow \mathrm{CED}^\Delta_{\alpha,\beta}(P, Q) = 0$, and we work on the quotient metric space $(\bar{\mathcal{S}}^\Delta/\sim, \mathrm{CED}^\Delta_{\alpha,\beta})$. Indeed, all preceding results extend trivially to this setting. The original space $S^\Delta$ embeds naturally into $\bar{S}^\Delta/\sim$ via $P \mapsto [P]$, preserving CED distances, hence for simplicity, we continue to denote this space by $S^\Delta$ and the distance by $\mathrm{CED}^\Delta_{\alpha,\beta}$ in what follows.
    
    \begin{theorem}\label{thm:measurable_selection_1}
    
    Let \( P \in \mathcal{S}^\Delta, Q \in \mathcal{S}^\Delta \). 
    
    If \( (E,d) \) is a geodesic Polish space, then \( \forall i \in \{1,\ldots,N_P\}, \ \forall j \in \{1,\ldots,N_Q\} \),
	
	\[
	\forall T \in (0,1), 
	\]
	
	\[
	\exists G_{i,j}^T \in s^\Delta :
	\left(
	I_{i,j}^T, \completion(\mathcal{B}(\mathbb{R})|_{I_{i,j}^T})
	\right)
	\longrightarrow (E, \mathcal{B}(E))
	\]
	
	measurable, such that for $\lambda$-almost every \(  t \in I_i^P \),
	
	\[
	G_{i,j}^T(t + \inf I_{i,j}^T -a_i ) 
	\in \left\{ G(P(t), Q(t + c_j - a_i))(T) \right\}\subset E,
	\]
	
	where we denote \( a_i = \inf(I_i^P), \ b_i = \sup(I_i^P), \\
	c_j = \inf(I_j^Q) \text{, } d_j = \sup(I_j^Q), \)
	
	\[
	\inf I_{i,j}^T = (1 - T) \cdot a_i + T \cdot c_j,
	\]
	
	\[
	\sup I_{i,j}^T = (1 - T) \cdot b_i + T \cdot d_j
	\]

    and $G(P(t), Q(t + c_j - a_i))$ a d-geodesic from $P(t)$ to $Q(t + c_j - a_i)$ as defined in \autoref{sec:overview_geodesic} of the main manuscrit.

        \end{theorem}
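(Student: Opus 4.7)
\textit{Proof plan.} The strategy is to reduce the statement to a measurable-selection problem on $I_{i,j}^T$ and then invoke Aumann's measurable selection theorem in the Polish target $E$.

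First, I use the affine bijection $t \mapsto s = t + \inf I_{i,j}^T - a_i$, which maps $I_i^P$ onto $I_{i,j}^T$, to define the Lebesgue-measurable maps $\phi_P, \phi_Q : I_{i,j}^T \to E$ by $\phi_P(s) = P(s - \inf I_{i,j}^T + a_i)$ and $\phi_Q(s) = Q(s - \inf I_{i,j}^T + c_j)$; their measurability follows by the translation/composition argument already used in Proposition~\ref{prop:W_measurability}. It then suffices to produce a Lebesgue-measurable $G_{i,j}^T : I_{i,j}^T \to E$ such that for $\lambda$-almost every $s \in I_{i,j}^T$, $G_{i,j}^T(s)$ is a $T$-point of some constant-speed $d$-geodesic from $\phi_P(s)$ to $\phi_Q(s)$.

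Next, I characterize the set of such $T$-points explicitly as
\begin{align*}
F(s) = \{z \in E : \; & d(\phi_P(s), z) = T\,d(\phi_P(s), \phi_Q(s)), \\
& d(z, \phi_Q(s)) = (1-T)\,d(\phi_P(s), \phi_Q(s))\};
\end{align*}
the nontrivial inclusion $\supseteq$ is obtained in any geodesic space by concatenating a geodesic from $\phi_P(s)$ to $z$ with one from $z$ to $\phi_Q(s)$ and reparametrizing at constant speed. Each $F(s)$ is closed (as the joint zero set of continuous distance functions) and nonempty (by geodesicity of $E$). The map
\[
\Psi(s,z) = \bigl(d(\phi_P(s), z) - T\,d(\phi_P(s), \phi_Q(s)),\;d(z, \phi_Q(s)) - (1-T)\,d(\phi_P(s), \phi_Q(s))\bigr)
\]
is Carathéodory (continuous in $z$ and Lebesgue-measurable in $s$) from $I_{i,j}^T \times E$ to $\mathbb{R}^2$; since $E$ is separable, $\Psi$ is jointly measurable, so the graph of $F$, namely $\Psi^{-1}(\{0\})$, is measurable in the product $\sigma$-algebra $\mathcal{Z}(\mathcal{B}(\mathbb{R})|_{I_{i,j}^T}) \otimes \mathcal{B}(E)$.

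Since $\bigl(I_{i,j}^T, \mathcal{Z}(\mathcal{B}(\mathbb{R})|_{I_{i,j}^T}), \lambda\bigr)$ is a complete measure space, $E$ is Polish, and $F$ has nonempty values with measurable graph, Aumann's measurable selection theorem yields a Lebesgue-measurable selector $G_{i,j}^T : I_{i,j}^T \to E$ with $G_{i,j}^T(s) \in F(s)$ for every $s$. It remains to confirm $G_{i,j}^T \in s^\Delta$: its domain has length $\Delta$ by construction; measurability holds by selection; and the triangle-inequality bound
\[
d\bigl(G_{i,j}^T(s_1), G_{i,j}^T(s_2)\bigr) \leq T\,d(\phi_P(s_1), \phi_Q(s_1)) + d(\phi_P(s_1), \phi_P(s_2)) + T\,d(\phi_P(s_2), \phi_Q(s_2))
\]
is finite thanks to the diameter assumptions on $\mathrm{Im}(P|_{I_i^P})$ and $\mathrm{Im}(Q|_{I_j^Q})$, yielding $\mathrm{diam}(\mathrm{Im}(G_{i,j}^T)) < \infty$.

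The main obstacle is the measurable-selection step itself. Geodesicity guarantees that $F(s)$ is pointwise nonempty, but one must extract a single Lebesgue-measurable map realizing a $T$-point simultaneously for almost every $s$, without assuming any canonical or continuous dependence of geodesics on their endpoints. The chosen route avoids this difficulty by packaging the pointwise freedom into a measurable graph and invoking the Polish-target version of Aumann's theorem; verifying the Carathéodory property of $\Psi$ and the completeness of the Lebesgue $\sigma$-algebra are the key technical ingredients needed to make this selection rigorous.
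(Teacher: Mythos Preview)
Your proof is correct for the theorem as stated, and it takes a genuinely different route from the paper. The paper first performs a measurable selection of \emph{whole geodesics}: it endows the space $Geod^*(E)$ of constant-speed geodesics with the uniform metric, shows the graph of the multifunction $(x,y)\mapsto\{\gamma\in Geod^*(E):\gamma(0)=x,\gamma(1)=y\}$ is closed, invokes a Souslin-set selection theorem (Bogachev~6.9.2) to get a universally-measurable $f:E\times E\to Geod^*(E)$, upgrades $f$ to a Borel map $g$ modulo a null set using universal measurability of analytic sets, and finally sets $G_{i,j}^T=(e_T\circ g)\circ V\circ\widetilde U$ with $e_T$ the evaluation at $T$. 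Your approach bypasses all of this by characterising the $T$-points directly as the closed set $F(s)=\{z:d(\phi_P(s),z)=T\,d(\phi_P(s),\phi_Q(s)),\ d(z,\phi_Q(s))=(1-T)\,d(\phi_P(s),\phi_Q(s))\}$ and applying Aumann's theorem straight in the Polish target $E$; this is more elementary and avoids the $Geod^*(E)$ machinery entirely.

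What the paper's approach buys, and yours does not, is \emph{coherence across $T$}: because the same selected geodesic $g(x,y)$ is evaluated at every $T$, the resulting family $(G_{i,j}^T)_{T\in(0,1)}$ satisfies $d\bigl(G_{i,j}^S(\cdots),G_{i,j}^T(\cdots)\bigr)=|S-T|\,d(P(\cdot),Q(\cdot))$ pointwise, which is exactly what Lemma~\ref{lem:delta-subdivision_geodesic_space} needs to show that $T\mapsto G_{i,j}^T$ is a constant-speed $D^\alpha_\Delta$-geodesic. Your per-$T$ selections need not lie on a common geodesic, so that identity would degrade to an inequality. This is not a gap in the present theorem, whose statement is purely ``for each $T$ there exists $G_{i,j}^T$'', but you should be aware that the downstream geodesic construction would require either reverting to the paper's geodesic-valued selection or adding a separate coherence argument.
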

        \begin{proof}
        Let $Geod(E)$ denote the space of geodesics in $E$ (see \autoref{sec:overview_geodesic} of the main manuscrit). Define $Geod^{*}(E)=\{\gamma\in Geod(E)\colon\forall(s,t)\in[0,1]^{2},d(\gamma(s),\gamma(t))=\vert s-t\vert \cdot d(\gamma (0),\gamma (1))\}$. We endow $Geod^{*}(E)$ with the uniform norm to make it a Souslin space.
        
        Let \( i \in \{1,\ldots,N_P\}, j \in \{1,\ldots,N_Q\}, T \in (0,1)\), and \( \Psi \) the multivalued function from \( E \times E \) to the set of nonempty subsets of $Geod^{*}(E)$, defined as,
	\begin{align}
        &\forall (y_1, y_2) \in E \times E,\Psi(y_1,y_2)\notag\\&=\left\{\gamma\in Geod^{*}(E):\gamma(0)=y_{1},\gamma(1)=y_{2}\right\}.
	\end{align}
    
        Let’s denote \( {\Gamma_\Psi}= \big\{((x,y),\gamma)\in E^2\times Geod^{*}(E): \gamma\in \Psi(x,y)\big\}\) its graph; then \({\Gamma_\Psi} \text{ is closed in }(E \times E \times Geod^{*}(E), \mathcal{T}_{(E \times E \times Geod^{*}(E))})\), where $\mathcal{T}$ denote the product topology. Indeed, let \( ((x_n, y_n), \gamma_n)_{n \in \mathbb{N}} \) a sequence in \( {\Gamma_\Psi} \) converging to \( ((x, y), \gamma) \in E^2\times Geod^{*}(E)\). Then,
	\begin{align}
        \lim_{n \to +\infty} d(x_n, x) &=\lim_{n \to +\infty} d(y_n, y) \notag\\&=\lim_{n \to +\infty}\sup_{t\in[0,1]}d(\gamma_n(t), \gamma(t)) =0.
	\end{align}
    
	So, because $E$ is closed,
	\begin{equation}
	\lim_{n \to +\infty} x_n = x\in E, \quad 
	\lim_{n \to +\infty} y_n = y\in E,
	\end{equation}
        \noindent moreover by uniform convergence, in particular, 
	\begin{equation}
        \lim_{n \to +\infty} \gamma_n(0) =\gamma(0),\quad \lim_{n \to +\infty} \gamma_n(1) = \gamma(1).
	\end{equation}
    
        Since $\gamma_n(0)=x_n\to x$ and $\gamma_n(1)=y_n\to y$, we obtain by uniqueness of the limit in a metric space,
        \begin{equation}
        \gamma(0)=x,\qquad \gamma(1)=y.
        \end{equation}
        
        Each $\gamma_n$ is a constant-speed geodesic joining $x_n$ to $y_n$, that is,
        \begin{equation}\label{eq:geod_proof_eq_1}
        \forall (s,t)\in[0,1]^{2},\quad d\big(\gamma_n(s),\gamma_n(t)\big)= |s-t|\cdot\,d(x_n,y_n).
        \end{equation}
        
        As a distance on $E$, \( d(\cdot ,\cdot ) \) is continuous. Morever, $E\times E$ is a metrizable space, then we have the sequential characterization of limits, and $\lim_{n \to +\infty} x_n = x, \lim_{n \to +\infty} y_n = y$, so
        \begin{equation}\label{eq:geod_proof_eq_2}
        \lim_{n \to +\infty} d(x_n,y_n)= d(x,y).
        \end{equation}
        
        Likewise, the uniform convergence of $\gamma_n$ to $\gamma$ implies, for every $(s,t)\in[0,1]^{2}$,
        \begin{equation}
        \lim_{n \to +\infty} \gamma_n(s) =\gamma(s),\quad \lim_{n \to +\infty} \gamma_n(t) = \gamma(t),
        \end{equation}
        thus,
        \begin{equation}\label{eq:geod_proof_eq_3}
        \lim_{n \to +\infty}d\big(\gamma_n(s),\gamma_n(t)\big) = d\big(\gamma(s),\gamma(t)\big).
        \end{equation}

        Given~\eqref{eq:geod_proof_eq_1}, \eqref{eq:geod_proof_eq_2}, and \eqref{eq:geod_proof_eq_3}, we obtain for every $(s,t)\in[0,1]^{2},$
        
        \begin{equation}
        d\big(\gamma(s),\gamma(t)\big)= |s-t|\, d(x,y),
        \end{equation}

        \noindent that is, $\gamma\in \Psi(x,y)$. Finally $x\in E,y\in E,\gamma\in \Psi(x,y)$ so $((x,y),\gamma)\in{\Gamma_\Psi}$. Thus \( {\Gamma_\Psi} \) is closed (indeed $E \times E \times Geod^{*}(E)$ is a metrizable space, as a finite product of metrizable spaces, and hence first-countable) for the product topology, and then a Borel of \( \mathcal{B}(E \times E \times  Geod^{*}(E))\).

        Moreover, \( E\times E \) is a Polish space (as a finite product of Polish spaces) and so a Souslin space; also, $Geod^{*}(E)$ is a Souslin space; thus $E\times E\times Geod^{*}(E)$ is a Souslin space as a finite product of Souslin spaces.

        Since \({\Gamma_\Psi} \) is a Borel subset of the Souslin space  $E\times E\times Geod^{*}(E)$, \( {\Gamma_\Psi} \) is a Souslin set. Then, there exists a map \(f : E \times E \longrightarrow Geod^{*}(E) \text{ that is measurable} \) with respect to, the \( \sigma \)-algebra \( \sigma(\mathcal{S}_{E \times E}) \) generated by all Souslin sets in \( E \times E \), and \( \mathcal{B}(Geod^{*}(E)) \), such that \( f(\omega) \in \Psi(\omega), \) \(\forall \omega \in E \times E\) \cite[Thm.~6.9.2]{bogachev2007measure2}.

	Let us remember that the  
	application
	\begin{align}
        V : \bigl(I_i^P,\; \mathcal{Z}\bigl(\mathcal{B}(\mathbb{R})\!\mid_{I_i^P}\bigr)\bigr)\;\longrightarrow\;\bigl(E^{2},\; \mathcal{B}(E)\otimes\mathcal{B}(E)\bigr),\notag\\ t \longmapsto V(t)=\bigl(P(t),\,Q(t+c_j-a_i)\bigr)
	\notag\end{align}
        \noindent is measurable. Recall also that \(\mathcal{B}(E)\otimes\mathcal{B}(E)=\mathcal{B}(E\times E)\), since \(E\) is separable.
	
        We denote \(\mu=\lambda\circ V^{-1}\), the push-forward measure on \(\bigl(E\times E,\mathcal{B}(E\times E)\bigr)\), where \(\lambda\) is the Lebesgue measure on \(\bigl(I_i^P,\mathcal{Z}(\mathcal{B}(\mathbb{R})\!\mid_{I_i^P})\bigr)\). Moreover, we denote \(\bigl(E\times E,\mathcal{Z}(\mathcal{B}(E\times E))\bigr)\) the completion of \(\bigl(E\times E,\mathcal{B}(E\times E)\bigr)\) for \(\mu\). Since \(E\times E\) is a Polish space, any Souslin set of \(E\times E\) is an analytic set. Also, because any analytic set is universally measurable, and \(\mu\) is a complete measure on \(\bigl(E\times E,\mathcal{Z}(\mathcal{B}(E\times E))\bigr)\) that measures all Borel sets of \(E\times E\), then  
	\begin{equation}
        \sigma(S_{E\times E}) \subset \mathcal{Z}(\mathcal{B}(E\times E)),  
	\end{equation}
	\noindent and so
	\[
	f : \bigl(E\times E,\mathcal{Z}(\mathcal{B}(E\times E))\bigr)
	\longrightarrow
	\bigl(Geod^{*}(E),\mathcal{B}(Geod^{*}(E))\bigr)
	\]
	\noindent is measurable.
	
	Additionally, there exists a measurable function
	\[
	g : (E\times E, \mathcal{B}(E\times E)) \longrightarrow (Geod^{*}(E), \mathcal{B}(Geod^{*}(E)))
	\]
	\(\mu\)-a.s equal to \(f\).
	
        Indeed, first we remark that since \(Geod^{*}(E)\) is metric and separable, then \(Geod^{*}(E)\) has a countable base \(\{A_i\}_{i\in\mathbb{N}}\) and so \(\mathcal{B}(Geod^{*}(E)) = \sigma(\{A_i\}_{i\in\mathbb{N}})\).
	
        Let \(A_i \in \{A_i\}_{i\in\mathbb{N}}\), \(f^{-1}(A_i) \in \mathcal{Z}(\mathcal{B}(E\times E))\), then \(\exists B_{A_i} \in \mathcal{B}(E\times E)\), and \(\exists N_{A_i} \subset E\times E\) a negligible part of \((E\times E, \mathcal{B}(E\times E))\) such that
	\[
	f^{-1}(A_i) = B_{A_i} \cup N_{A_i}.
	\]
	
        Let us denote \(N' = \bigcup_{i\in\mathbb{N}} N_{A_i}\), then \(N' \in \mathcal{Z}(\mathcal{B}(E\times E))\) and
	\begin{equation}
        \mu(N') = \mu\left( \bigcup_{i\in\mathbb{N}} N_{A_i} \right) \leq \sum_{i=0}^{\infty} \mu(N_{A_i}) = \sum_{i=0}^{\infty} 0 = 0.  
	\end{equation}
        Since \(N' \in \completion (\mathcal{B}(E\times E))\), then \(\exists (M,N) \in \mathcal{B}(E\times E)^2\) such that \(M\subset N'\subset N\) and \(\mu(N-M) = 0\). Because \(M\subset N'\) and \(\mu(N')=0\), then \(\mu(M)\leq 0\), so \(\mu(M)=0\). Also, \(\mu(N) = \mu(M) + \mu(N-M) = \mu(M)+0=0\). Then we have \(N'\subset N\), \(N\in\mathcal{B}(E\times E)\), \(\mu(N)=0\).
        
        We define 
        \[g : \bigl(E \times E, \mathcal{B}(E \times E)\bigr) \longrightarrow \bigl(Geod^{*}(E), \mathcal{B}(Geod^{*}(E))\bigr) \] 
        such that
	\[
	g(x) = 
        \begin{cases}f(x) & \text{if } x \in (E \times E) \setminus N,\\y_0 & \text{if } x \in N,\end{cases}
	\]
        with \( y_0 \in Geod^{*}(E) \) an arbitrary fixed element.
	
        Let \( A_i \in \{A_i\}_{i \in \mathbb{N}} \), we have \(g^{-1}(A_i) = \bigl(g^{-1}(A_i) \cap N^c\bigr) \cup \bigl(g^{-1}(A_i) \cap N\bigr).\) For \( x \in N^c \), \( g(x) = f(x) \), then
	\begin{align}
        g^{-1}(A_i) \cap N^c &= f^{-1}(A_i) \cap N^c= (B_{A_i} \cup N_{A_i}) \cap N^c\notag\\&= (B_{A_i} \cap N^c) \cup (N_{A_i} \cap N^c)\notag\\&= (B_{A_i} \cap N^c) \cup \varnothing\notag\\&= B_{A_i} \cap N^c \in \mathcal{B}(E \times E). 
	\end{align}
	
	\smallskip
	
        For \( x \in N \), \( g(x) = y_0\), then
	\begin{equation}
	g^{-1}(A_i) \cap N =
        \begin{cases}N & \text{if } y_0 \in A_i,\\\varnothing & \text{if } y_0 \notin A_i,\end{cases}
	\end{equation}
        thus in any case \( g^{-1}(A_i) \cap N\in \mathcal{B}(E \times E) \).
	
        Finally, \(\bigl(g^{-1}(A_i) \cap N^c\bigr) \cup \bigl(g^{-1}(A_i) \cap N\bigr) \in \mathcal{B}(E \times E)\) and \( g^{-1}(A_i) \in \mathcal{B}(E \times E) \), then  
	\[
        g : \bigl(E \times E, \mathcal{B}(E \times E)\bigr) \longrightarrow \bigl(Geod^{*}(E), \mathcal{B}(Geod^{*}(E))\bigr)
	\]
	is measurable.

        For each $\theta \in[0,1]$, the evaluation map
    
        $$
        e_\theta : Geod^*(E) \to E,\qquad \gamma \mapsto \gamma(\theta),
        $$
        is continuous. Indeed, let $(\gamma_n)_{n\in\mathbb{N}}$ be a sequence in $Geod^{*}(E)$ that converges uniformly to $\gamma$. Fix $\varepsilon>0$. Since $(\gamma_n)$ converges uniformly to $\gamma$, there exists $N(\varepsilon)\in\mathbb{N}$ such that for all $n\ge N(\varepsilon)$,
        $$
        d(\gamma_n(\theta),\gamma(\theta)) \le \sup_{s\in[0,1]} d(\gamma_n(s),\gamma(s)) < \varepsilon,
        $$
        \noindent hence $e_\theta$ is continuous. As a continuous map, 
        $$
        e_T : (Geod^*(E),\mathcal{B}(Geod^*(E))) \to (E,\mathcal{B}(E)),\, \gamma \mapsto \gamma(t),
        $$
        \noindent is measurable. So, as a composition of measurable maps, 
        $$
        m_T = e_{T} \circ g:\bigl(E \times E, \mathcal{B}(E \times E)\bigr) \longrightarrow \bigl(E, \mathcal{B}(E)\bigr),
        $$
        is measurable.
       
        Because \( m_T : \bigl(E \times E, \mathcal{B}(E \times E)\bigr) \longrightarrow \bigl(E, \mathcal{B}(E)\bigr) \) and \( V : \left(I_i^P, \mathcal{Z}\bigl(\mathcal{B}(\mathbb{R})|_{I_i^P}\bigr)\right) \longrightarrow \left(E^2, \mathcal{B}(E \times E)\right) \), \(t \longmapsto V(t) = \bigl(P(t), Q(t+c_j-a_i)\bigr)\)
        are measurables, then \( m_T \circ V: \left(I_i^P, \mathcal{Z}\bigl(\mathcal{B}(\mathbb{R})|_{I_i^P}\bigr)\right) \longrightarrow \left(E, \mathcal{B}(E)\right) \) is measurable. Also, the map
	\[
        \widetilde{U} : \left(I_{i,j}^T, \mathcal{Z}\left(\mathcal{B}(\mathbb{R})|_{I_{i,j}^T}\right)\right) \longrightarrow \left(I_i^P, \mathcal{Z}\left(\mathcal{B}(\mathbb{R})|_{I_{i}^P}\right)\right)
	\]
	\[
        t \longmapsto t + a_i - (1-T)\cdot a_i - T\cdot c_j
	\]
        is measurable (for the same reasons as \(U\)). Then finally, 
	\[
        G_{i,j}^T = (m_T \circ V) \circ \widetilde{U}
	\]
	is measurable and so 
	\begin{equation}
	G_{i,j}^T \in s^\Delta.		    
	\end{equation} \qedhere
        \end{proof}
        
        \begin{lemma}Let $ P \in \mathcal{S}^\Delta.$ If \( (E,d) \) is a geodesic Polish space, $A$ closed and $\forall x \in E, \,  \{y  \in A, d(x,y)=d(x,A) \}$ is non-empty.

\medskip
    
        Then $ \forall i \in \{1,\ldots,N_P\}$, there exists a measurable function $p^i$ : $\left(
        I_{i}^P, \completion(\mathcal{B}(\mathbb{R})|_{I_{i}^P})
        \right)\longrightarrow (E, \mathcal{B}(E))$, such that for $\lambda$-almost every $ t \in I_i^P , \, p^i(t)\in \{y\in A, d(P_i(t),y)=d(P_i(t),A)\}$.
	
	\smallskip
	            
        \end{lemma}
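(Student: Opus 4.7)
The plan is to mirror the argument used for Theorem~\ref{thm:measurable_selection_1}, but with the geodesic-valued selector replaced by a nearest-point selector into $A$. First I would define the multivalued map $\Psi : E \to 2^{A}$ by
$$\Psi(x) = \{y \in A : d(x,y) = d(x,A)\},$$
which is non-empty for every $x \in E$ by hypothesis. The key structural fact is then that the graph
$$\Gamma_{\Psi} = \{(x,y) \in E \times E : y \in A \text{ and } d(x,y) = d(x,A)\}$$
is closed in $E \times E$: $A$ is closed, and both $(x,y) \mapsto d(x,y)$ and $x \mapsto d(x,A)$ are continuous (the latter being $1$-Lipschitz), so $\Gamma_{\Psi}$ is the intersection of $E \times A$ with the preimage of $\{0\}$ under the continuous map $(x,y) \mapsto d(x,y) - d(x,A)$. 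In particular $\Gamma_{\Psi} \in \mathcal{B}(E \times E)$.

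Since $E$ is Polish, $E \times E$ is Polish and hence Souslin, so $\Gamma_{\Psi}$ is a Souslin subset of $E \times E$. I would then invoke the Jankov--von Neumann selection theorem \cite[Thm.~6.9.2]{bogachev2007measure2}, exactly as in the proof of Theorem~\ref{thm:measurable_selection_1}, to obtain a selector $f : E \to E$ with $f(x) \in \Psi(x)$ for every $x \in E$, measurable from $(E, \sigma(\mathcal{S}_{E}))$ to $(E, \mathcal{B}(E))$, where $\mathcal{S}_{E}$ denotes the class of Souslin subsets of $E$.

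Next I would push the Lebesgue measure $\lambda$ through $P_{i}$ to obtain $\nu = \lambda \circ P_{i}^{-1}$ on $(E, \mathcal{B}(E))$. Because every Souslin subset of a Polish space is universally measurable, $\sigma(\mathcal{S}_{E}) \subset \completion(\mathcal{B}(E))$ (the $\nu$-completion), so $f$ is measurable from $(E, \completion(\mathcal{B}(E)))$ to $(E, \mathcal{B}(E))$. Using a countable base of the separable space $E$, exactly as in Theorem~\ref{thm:measurable_selection_1}, I would exhibit a Borel set $N \in \mathcal{B}(E)$ with $\nu(N) = 0$ and produce a Borel-measurable $g : (E, \mathcal{B}(E)) \to (E, \mathcal{B}(E))$ that agrees with $f$ on $E \setminus N$ and is set to an arbitrary fixed element on $N$.

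Finally I would set $p^{i} := g \circ P_{i}$. As a composition of measurable maps, $p^{i} : (I_{i}^{P}, \completion(\mathcal{B}(\mathbb{R})|_{I_{i}^{P}})) \to (E, \mathcal{B}(E))$ is measurable. Moreover, $P_{i}^{-1}(N) \in \completion(\mathcal{B}(\mathbb{R})|_{I_{i}^{P}})$ has $\lambda$-measure zero, and for every $t \in I_{i}^{P} \setminus P_{i}^{-1}(N)$,
$$p^{i}(t) = g(P_{i}(t)) = f(P_{i}(t)) \in \Psi(P_{i}(t)) = \{y \in A : d(P_{i}(t), y) = d(P_{i}(t), A)\},$$
which is the claim. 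The main difficulty — as in Theorem~\ref{thm:measurable_selection_1} — is not the construction of the selector itself, which is immediate from the closedness of $\Gamma_{\Psi}$ and the Jankov--von Neumann theorem, but rather the promotion from a $\sigma(\mathcal{S}_{E})$-measurable selector $f$ to a genuinely Borel-measurable $g$; this promotion relies on separability of $E$ and the use of the push-forward measure $\nu$, and is handled verbatim as in the preceding theorem.
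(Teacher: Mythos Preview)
Your proposal is correct and follows essentially the same approach as the paper: the paper defines the same multivalued nearest-point map (denoted $\Pi$ there), shows its graph is closed via continuity of $d$ and $d(\cdot,A)$ together with closedness of $A$, applies the Jankov--von Neumann selection theorem \cite[Thm.~6.9.2]{bogachev2007measure2} to obtain a $\sigma(\mathcal{S}_E)$-measurable selector $\tilde\rho$, upgrades it to a Borel selector $\rho$ via the push-forward $\mu^* = \lambda \circ P_i^{-1}$ and the countable-base argument from the proof of Theorem~\ref{thm:measurable_selection_1}, and then sets $p_i := \rho \circ P_i$. The only cosmetic difference is that you verify closedness of the graph by writing it as an intersection of a closed set with a continuous preimage, whereas the paper uses an explicit sequential argument.
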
 
	
	\begin{proof}

	For clarity, we present the proof together with that of the next theorem.
	    
	\end{proof}

	\bigskip

	\begin{theorem}\label{thm:measurable_selection_2}Let $ P \in \mathcal{S}^\Delta.$ If \( (E,d) \) is a geodesic Polish space, $A$ closed and $\forall x \in E, \, \{y  \in A, d(x,y)=d(x,A) \}$ \\ is non-empty.
	
	Then $ \forall i \in \{1,\ldots,N_P\}$,
	
	\[
	\forall T \in (0,1), 
	\]
	
	\[
	\exists G_{i}^T \in s^\Delta :
	\left(
	I_{i}^P, \completion(\mathcal{B}(\mathbb{R})|_{I_{i}^P})
	\right)
	\longrightarrow (E, \mathcal{B}(E))
	\]
	
	measurable, such that for $\lambda$-almost every \(  t \in I_i^P \),
	
	\[
	G_{i}^T(t) 
	\in \left\{ G(P(t), p^i(t))(T) \right\}.
	\]
    
	where $G(P(t), p^i(t))$ a d-geodesic from $P(t)$ to $p^i(t)$ as defined in \autoref{sec:overview_geodesic} of the main manuscrit.
		    
	\end{theorem}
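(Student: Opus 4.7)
The plan is to mimic the proof of Theorem~\ref{thm:measurable_selection_1} almost verbatim, with one preliminary step: replace the deterministic second endpoint $Q(t+c_j-a_i)$ by a measurable selection $p^i(t)$ of a nearest point of $P(t)$ in $A$. The main obstacle is establishing the measurability of this selection, since the map $x\mapsto\{y\in A:\,d(x,y)=d(x,A)\}$ is set-valued in general; once $p^i$ is in hand, the rest is a direct application of the geodesic selection already proved.

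First I would establish the lemma. Consider the multifunction $\Phi:E\twoheadrightarrow A$ defined by $\Phi(x)=\{y\in A:\,d(x,y)=d(x,A)\}$, which is nonempty by hypothesis. Its graph $\Gamma_\Phi=\{(x,y)\in E\times A:\,d(x,y)=d(x,A)\}$ is closed in $E\times A$, since both $d$ and $d(\cdot,A)$ (being $1$-Lipschitz) are continuous, so their difference vanishes on a closed set. Because $A$ is closed in the Polish space $E$, the product $E\times A$ is Polish, hence Souslin, and $\Gamma_\Phi$ is a Borel subset of a Souslin space, hence a Souslin set. By the Souslin measurable selection theorem \cite[Thm.~6.9.2]{bogachev2007measure2} applied to $\Gamma_\Phi$, there exists a map $\phi:E\to A$ which is measurable from $\bigl(E,\sigma(\mathcal{S}_E)\bigr)$ to $\bigl(A,\mathcal{B}(A)\bigr)$ and satisfies $\phi(x)\in\Phi(x)$ for every $x\in E$.

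Next I would upgrade $\phi\circ P_i$ to a map that is measurable on $\bigl(I_i^P,\completion(\mathcal{B}(\mathbb{R})|_{I_i^P})\bigr)$, by transporting the completion/pushforward argument used in Theorem~\ref{thm:measurable_selection_1}. Let $\mu=\lambda\circ P_i^{-1}$ on $E$. Since analytic sets in Polish spaces are universally measurable, $\sigma(\mathcal{S}_E)\subset\completion(\mathcal{B}(E))$ under $\mu$, so $\phi$ is $\bigl(\completion(\mathcal{B}(E)),\mathcal{B}(A)\bigr)$-measurable. Repeating verbatim the construction by which $g$ was made to agree with $f$ outside a $\mu$-null Borel set in the proof of Theorem~\ref{thm:measurable_selection_1}, I would produce a Borel map $\widetilde\phi:E\to A$ with $\widetilde\phi=\phi$ $\mu$-a.e.\ Setting $p^i:=\widetilde\phi\circ P_i$ then yields a map $p^i:\bigl(I_i^P,\completion(\mathcal{B}(\mathbb{R})|_{I_i^P})\bigr)\to(E,\mathcal{B}(E))$ that is measurable and satisfies $p^i(t)\in\Phi(P(t))$ for $\lambda$-almost every $t\in I_i^P$. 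This proves the lemma.

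Finally I would reapply Theorem~\ref{thm:measurable_selection_1}'s mechanism in this new setting. Define $V':I_i^P\to E\times E$ by $V'(t)=\bigl(P(t),p^i(t)\bigr)$; this map is measurable, as both components are, and the target $\sigma$-algebra is the product $\mathcal{B}(E)\otimes\mathcal{B}(E)=\mathcal{B}(E\times E)$ (using separability of $E$). The Souslin selection $g:E\times E\to\mathrm{Geod}^*(E)$ constructed in the proof of Theorem~\ref{thm:measurable_selection_1} depends only on $(E,d)$ being a geodesic Polish space, not on the particular choice of endpoints, so it applies here as well; precomposing with $V'$ (after the same pushforward/completion upgrade, now with pushforward measure $\lambda\circ V'^{-1}$ on $E\times E$) and postcomposing with the continuous evaluation $e_T:\gamma\mapsto\gamma(T)$ yields a measurable map $m_T\circ V'$. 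Setting $G_i^T:=m_T\circ V'$ provides a measurable element of $s^\Delta$ such that $G_i^T(t)\in\{G(P(t),p^i(t))(T)\}$ for $\lambda$-almost every $t\in I_i^P$, as required. The only genuine novelty compared with Theorem~\ref{thm:measurable_selection_1} is the lemma on the metric projection, and the closedness of $\Gamma_\Phi$ together with the nonemptiness hypothesis on $\Phi$ is exactly what makes the Souslin selection argument go through a second time.
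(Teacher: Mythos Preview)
Your proposal is correct and follows essentially the same route as the paper: first a Souslin measurable selection on the closed graph of the metric-projection multifunction to produce $p^i$, then a pushforward/completion upgrade to Borel measurability, and finally a direct reapplication of the geodesic-selection machinery of Theorem~\ref{thm:measurable_selection_1}. The only cosmetic differences are that the paper works with the graph inside $E\times E$ rather than $E\times A$, and concludes by simply invoking Theorem~\ref{thm:measurable_selection_1} on the pair $(P_i,p^i)$ rather than re-spelling its mechanism.
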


	\begin{proof}

	Let \( i \in \{1, \dots, N_P\} \),  
	and let \( \Pi \) the multivalued function  
	from \( E \) to the set of non-empty subsets  
	of \( E \), defined as \( \forall x \in E \),  
	\[
	\Pi(x) = \left\{ y \in A, \text{ such that } d(x,y) = d(x,A) \right\}
	\]
	
	Let’s denote \( \Gamma_\Pi \) its graph, then \( \Gamma_\Pi \) is closed in \( (E \times E, \mathcal{B}(E \times E)) \)

\medskip
    
	Indeed, we have  
	\[
	\Gamma_\Pi = \left\{ (x,y) \in E^2, y \in A, d(x,y) = d(x,A) \right\}
	\]
	
	Let \( (x_n, y_n)_{n \in \mathbb{N}} \) a sequence in \( \Gamma_\Pi \) converging  
	to \( (x,y) \in E \times E \).  
	
	Then,  
	\[
	\lim_{n \to +\infty} d(x_n,x) = \lim_{n \to +\infty} d(y_n,y) = 0
	\]
	and so  
	\[
	\lim_{n \to +\infty} x_n = x, \quad \lim_{n \to +\infty} y_n = y
	\]
	
	Moreover \( y \in A \) because \( A \) is closed in  
	\( (E,d) \) first-countable space (because metric),  
	and because \( \forall n \in \mathbb{N}, y_n \in A \).  
	
	Also, \( d(x_n, y_n) = d(x_n, A) \) because  
	
	\[
	(x_n, y_n) \in \Gamma_\Pi
	\]
	
	Then  
	\[
	\lim_{n \to +\infty} d(x_n,y_n) = \lim_{n \to +\infty} d(x_n, A)
	\]
	
	So, because \( d(\cdot,\cdot) \) and \( d(\cdot,A) \) are  
	continuous (because \( (E,d) \) is metric, then $d(\cdot,\cdot)$ is continuous and $d(\cdot,A)$ is $1$-Lipschitz, and hence continuous) and \( \lim_{n \to +\infty} x_n = x \),  
	and \( \lim_{n \to +\infty} y_n = y \), then \( d(x,y) = d(x,A) \) by the sequential characterization of limits in metrizable spaces.
	
	Finally, we have \( (x,y) \in \Gamma_\Pi \) and  
	so \( \Gamma_\Pi \) is closed in \( E^2 \) for the product topology, and then \( \Gamma_\Pi \in \mathcal{B}(E^2) \).  
	
	Moreover \( E \) is a Polish space,  
	then \( E^2 \) is a Polish space, and  
	so a Souslin space.  
	
	Since \( \Gamma_\Pi \) is a Borel subset of  
	the Souslin space \( E^2 \),  
	\( \Gamma_\Pi \) is a Souslin set.  
	
	Then, there exists a mapping  
	\[
	\widetilde{\rho} : E \longrightarrow E \text{ that is measurable}
	\]
	with respect to the \( \sigma \)-algebra \( \sigma(\mathcal{S}_E) \)  
	generated by all Souslin sets in \( E \),  
	and \( \mathcal{B}(E) \), such that \( \widetilde{\rho}(\omega) \in \Pi(\omega) \),  
	\( \forall \omega \in E \,\text{\cite{bogachev2007measure2}}\).
	
	Let us remember that the application \( P_i : \left( I_i^P, \mathcal{Z}(\mathcal{B}(\mathbb{R})|_{I_i^P}) \right) \longrightarrow (E, \mathcal{B}(E)) \),  
	\( t \mapsto P(t) \) is measurable by definition.  
	
	We denote \( \mu^* = \lambda \circ P_i^{-1} \), the pushforward  
	measure on \( (E, \mathcal{B}(E)) \), where \( \lambda \) is  
	the Lebesgue measure on \( \left( I_i^P, \mathcal{Z}(\mathcal{B}(\mathbb{R})|_{I_i^P}) \right) \). We denote \( (E, \mathcal{Z}(\mathcal{B}(E))) \) the completion  
	of \( (E, \mathcal{B}(E)) \) for \( \mu^* \).

	Since $E$ is a Polish space, any Souslin set of $E$ is an analytic set. Also, because any analytic set is universally measurable, and $\mu^*$ is a complete measure on $(E, \mathcal{Z}(\mathcal{B}(E)))$ that measures all Borel of $E$, then $\sigma(\mathcal{S}_E) \subset \mathcal{Z}(\mathcal{B}(E))$, and so $\tilde{\rho} : (E, \mathcal{Z}(\mathcal{B}(E))) \rightarrow (E, \mathcal{B}(E))$ is measurable.
	
    Then there exists a measurable function $\rho : (E, \mathcal{B}(E)) \rightarrow (E, \mathcal{B}(E))$ which is $\mu^*$-a.s. equal to $\tilde{\rho}$ (See the proof in the previous theorem).

    So $p_i:=\rho \circ P_i$ is measurable, with respect to $\big(
        I_{i}^P, \completion(\mathcal{B}(\mathbb{R})|_{I_{i}^P})
        \big)$, and $(E, \mathcal{B}(E))$, as a composition of measurable functions, and we can apply theorem~\ref{thm:measurable_selection_1} with $P_i$ and $p_i=\rho\, \circ P_i$ (the assumption $p_i \in s^\Delta$ is not needed in order to apply the theorem) to conclude the proof. \qedhere
    	    
	\end{proof} 
	
	\textbf{Remark :} Many conditions allow the hypothesis, $\forall x \in E, \,  \{y  \in A, d(x,y)=d(x,A) \}$ non-empty, to be satisfied. Some examples are : A compact; $\forall x \in E, A\cap B(x,r+\epsilon)$ compact (with any $\epsilon \in \mathbb{R}^*_+$, and $r =d(x,A$)); (E,d) complete and $\forall x \in E , A\cap B(x,r+\epsilon)$ relatively compact; Heine-Borel E space; E complete connected Riemannian manifold; (E,d) complete length-metric space and locally compact.

\begin{lemma}\label{lem:delta-subdivision_geodesic_space}
If $(E,d)$ is a geodesic Polish space, then $(s^{\Delta},D^\alpha_\Delta)$ is a geodesic metric space.
\end{lemma}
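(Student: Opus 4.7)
My plan is to produce an explicit constant--speed geodesic between any two $\Delta$--subdivisions $p=P_i$ and $q=Q_j$ (with notation $a_i,b_i,c_j,d_j$ as in Lemma~\ref{lem:metric_space_local}), using the measurable selection provided by Theorem~\ref{thm:measurable_selection_1}, and to check that the induced path realizes the distance in the sense of the definition recalled in Section~\ref{sec:overview_geodesic}.

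For every $T\in(0,1)$, Theorem~\ref{thm:measurable_selection_1} supplies a $\Delta$--subdivision $G^T_{i,j}\in s^\Delta$ defined on the interval $I^T_{i,j}=\bigl[(1-T)a_i+Tc_j,\,(1-T)b_i+Td_j\bigr]$ (which has length $\Delta$), and such that for $\lambda$--a.e.\ $s\in I^P_i$, writing $u=s+T(c_j-a_i)\in I^T_{i,j}$,
\[
G^T_{i,j}(u)=G\bigl(P(s),Q(s+c_j-a_i)\bigr)(T),
\]
where $G(x,y)\colon[0,1]\to E$ denotes a constant--speed $d$--geodesic from $x$ to $y$. I would extend this construction to $T=0,1$ by setting $G^0_{i,j}:=p$ and $G^1_{i,j}:=q$; both identifications are immediate from $G(x,y)(0)=x$ and $G(x,y)(1)=y$ together with the change of variable $u\mapsto s$.

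Next I would compute, for $0\le T_1<T_2\le 1$, the distance $D^\alpha_\Delta(G^{T_1}_{i,j},G^{T_2}_{i,j})$. The left endpoints of $I^{T_1}_{i,j}$ and $I^{T_2}_{i,j}$ differ by $(T_2-T_1)(c_j-a_i)$; pulling back via $u=s+T_1(c_j-a_i)$ and using the constant--speed property of $G(P(s),Q(s+c_j-a_i))$ yields, for a.e.\ $s$,
\[
d\bigl(G^{T_1}_{i,j}(u),G^{T_2}_{i,j}(u+(T_2-T_1)(c_j-a_i))\bigr)=(T_2-T_1)\,d\bigl(P(s),Q(s+c_j-a_i)\bigr).
\]
Plugging this and the temporal shift $(T_2-T_1)|c_j-a_i|$ into the definition of $D^\alpha_\Delta$ and changing variables back to $s\in I^P_i$ gives
\[
D^\alpha_\Delta(G^{T_1}_{i,j},G^{T_2}_{i,j})=(T_2-T_1)\,D^\alpha_\Delta(p,q).
\]
Reparameterizing by $\gamma(\ell):=G^{\ell/L}_{i,j}$ with $L=D^\alpha_\Delta(p,q)$ (when $L=0$, then $p=q$ by Lemma~\ref{lem:metric_space_local} and the constant path works), we obtain $\gamma(0)=p$, $\gamma(L)=q$, and $D^\alpha_\Delta(\gamma(\ell_1),\gamma(\ell_2))=|\ell_2-\ell_1|$ for all $\ell_1,\ell_2\in[0,L]$. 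Continuity of $\gamma$ for $D^\alpha_\Delta$ follows from this isometric scaling, and every partition $0=\ell_0<\cdots<\ell_k=L$ satisfies $\sum_i D^\alpha_\Delta(\gamma(\ell_i),\gamma(\ell_{i+1}))=L$, so the supremum in the geodesic characterization equals $L=D^\alpha_\Delta(p,q)$.

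The main obstacle is not the numerical computation but the measurability bookkeeping: one must check that $G^T_{i,j}$ really lies in $s^\Delta$ (satisfying the measurability and finite--diameter conditions) so that $D^\alpha_\Delta(G^{T_1}_{i,j},G^{T_2}_{i,j})$ is even defined. Measurability is given by Theorem~\ref{thm:measurable_selection_1}, and the finite--diameter bound follows from the constant--speed property combined with the finite diameters of $\Img(P_i)$ and $\Img(Q_j)$, since $d\bigl(G(P(s),Q(s+c_j-a_i))(T),G(P(s'),Q(s'+c_j-a_i))(T)\bigr)$ is controlled by $(1-T)d(P(s),P(s'))+T\,d(Q(s+c_j-a_i),Q(s'+c_j-a_i))$ via the triangle inequality. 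Once these technicalities are settled, the previous distance computation completes the proof that $(s^\Delta,D^\alpha_\Delta)$ is geodesic.
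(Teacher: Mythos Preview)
Your proposal is correct and follows essentially the same route as the paper: build the path $T\mapsto G^T_{i,j}$ from Theorem~\ref{thm:measurable_selection_1}, set $G^0_{i,j}=p$, $G^1_{i,j}=q$, and verify $D^\alpha_\Delta(G^{T_1}_{i,j},G^{T_2}_{i,j})=(T_2-T_1)\,D^\alpha_\Delta(p,q)$ using the constant--speed property of the selected $d$--geodesic (the key point, which you use implicitly and the paper also relies on, is that the selection in Theorem~\ref{thm:measurable_selection_1} picks a single geodesic in $E$ for each pair and then evaluates it at $T$, so the same $\gamma$ appears for $T_1$ and $T_2$). One small caveat: the finite--diameter bound you invoke, $d(\gamma_1(T),\gamma_2(T))\le(1-T)d(x_1,x_2)+T\,d(y_1,y_2)$, is Busemann convexity and does \emph{not} follow from the triangle inequality in a general geodesic space; but this step is unnecessary since Theorem~\ref{thm:measurable_selection_1} already asserts $G^T_{i,j}\in s^\Delta$, and if you insist on checking it, routing through $P(s)$ gives the valid bound $d(G^T(s),G^T(s'))\le d(P(s),P(s'))+T\,d(P(s),Q(s+c_j-a_i))+T\,d(P(s'),Q(s'+c_j-a_i))$.
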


\begin{proof}
Let $\rho,q \in s^\Delta$. Then there exist $P,Q \in \mathcal{S}^\Delta$ and indices $i \in \{1,\dots,N_P\}$, $j \in \{1,\dots,N_Q\}$ such that $\rho = P_i$ and $q = Q_j$. Write $ a_i = \inf(I_i^P), \ b_i = \sup(I_i^P), c_j = \inf(I_j^Q)$, and  $d_j = \sup(I_j^Q)$. By the definition of $\mathcal{S}^\Delta$,
\[
  b_i - a_i = d_j - c_j = \Delta.
\]

For $T \in (0,1)$, let $G_{i,j}^T \in s^\Delta$ be given by Theorem~\ref{thm:measurable_selection_1}, with
\(
  \inf I_{i,j}^T
  =
  \bigl(1-T)\cdot a_i + T\cdot c_j,\;\text{and } \sup I_{i,j}^T=(1-T)\cdot b_i + T\cdot d_j\bigr.
\).  For $t \in (0,1)$, set
\[
  \phi_P(t) = a_i \cdot (1-t) + b_i \cdot t \in I_i^P,
  \phi_Q(t) = c_j \cdot (1-t) + d_j \cdot  t \in I_j^Q.
\]
Since $b_i-a_i = d_j-c_j = \Delta$, we have
\[
  \phi_Q(t) = \phi_P(t) + c_j - a_i,\quad t\in(0,1).
\]

By Theorem~\ref{thm:measurable_selection_1}, for every $s\in I_i^P$,
\[
  G_{i,j}^T\bigl(s + \inf I_{i,j}^T - a_i\bigr)
  \in
  \bigl\{\gamma\bigl(P(s),Q(s+c_j-a_i)\bigr)(T)\bigr\},
\]
where $\gamma(x,y)$ denotes a constant–speed $d$-geodesic in $E$ from $x$ to $y$ (for clarity in the following proof, we will no longer use the letter $G$ to denote $d$-geodesics). Taking $s=\phi_P(t)$ and using $\phi_P(t)+c_j-a_i=\phi_Q(t)$ gives, for all $t\in(0,1)$,
\[
  G_{i,j}^T\Bigl(\inf I_{i,j}^T \cdot (1-t)+\sup I_{i,j}^T\, \cdot t\Bigr)
 \] \[ =
  G_{i,j}^T\bigl((1-T) \cdot \phi_P(t)+T \cdot \phi_Q(t)\bigr)
  \]\[
  = \gamma\bigl(P(\phi_P(t)),Q(\phi_Q(t))\bigr)(T).
\]

Now let $0 \leq S < T \leq 1$. For $S,T\in(0,1)$, by the constant–speed property of $\gamma$ we obtain, for all $t\in(0,1)$,
\begin{align*}
  d\Bigl(&
    G_{i,j}^S\bigl(\inf I_{i,j}^S(1-t)+\sup I_{i,j}^S\,t\bigr),
   \\& G_{i,j}^T\bigl(\inf I_{i,j}^T(1-t)+\sup I_{i,j}^T\,t\bigr)
  \Bigr) \\
  &\qquad=
  \bigl|T-S\bigr|\,
  d\bigl(P(\phi_P(t)),Q(\phi_Q(t))\bigr).
\end{align*}
Moreover, the left endpoints of $I_{i,j}^S$ and $I_{i,j}^T$ satisfy
\begin{align*}
  \inf I_{i,j}^T - \inf I_{i,j}^S
  &=
  \bigl[(1-T) \cdot a_i+T \cdot c_j\bigr] \\&- \bigl[(1-S) \cdot a_i+S \cdot c_j\bigr]
  \\&=
  (T-S)(c_j-a_i),
\end{align*}
so the temporal shift in $D^\alpha_\Delta$ between $G_{i,j}^S$ and $G_{i,j}^T$ is $|T-S|\cdot|c_j-a_i|$.

By the definition of $D^\alpha_\Delta$ (using the common parametrization $t\mapsto \inf I_{i,j}^{\boldsymbol{\cdot}}(1-t)+\sup I_{i,j}^{\boldsymbol{\cdot}}t$) we therefore have
\begin{align}
  &D^\alpha_\Delta\bigl(G_{i,j}^S,G_{i,j}^T\bigr)
  = \Delta \int_0^1
    (1-\alpha)\,
    d\bigl(
      G_{i,j}^S(\cdots),G_{i,j}^T(\cdots)
    \bigr)\,dt
    \;\notag\\&+\;
    \alpha\,\Delta\,\bigl| \inf I_{i,j}^T - \inf I_{i,j}^S \bigr|
  \\\notag
  &= \Delta \int_0^1
    (1-\alpha)\,
    |T-S|\,
    d\bigl(P(\phi_P(t)),Q(\phi_Q(t))\bigr)\,dt
    \;\\&+\;
    \alpha\,\Delta\,|T-S|\,|c_j-a_i|
  \\\notag
  &= |T-S|\,
    \Bigl[
      \Delta \int_0^1
      (1-\alpha)\,
      d\bigl(P(\phi_P(t)),Q(\phi_Q(t))\bigr)\,dt
      \;\\&+\;
      \alpha\,\Delta\,|c_j-a_i|
    \Bigr]
  \\
  &= |T-S| \cdot D^\alpha_\Delta(P_i,Q_j)\notag.
\end{align}

We now extend the definition to the endpoints by setting
\[
  G_{i,j}^0 := P_i,
  \qquad
  G_{i,j}^1 := Q_j,
\]
and define
\[
  g_{i,j}:[0,1]\to s^\Delta,\qquad
  g_{i,j}(T) = G_{i,j}^T.
\]
Then for all $S,T\in[0,1]$ we have
\[
  D^\alpha_\Delta\bigl(g_{i,j}(S),g_{i,j}(T)\bigr)
  = |T-S|\;D^\alpha_\Delta(P_i,Q_j),
\]
so $g_{i,j}$ is a constant–speed geodesic in $(s^\Delta,D^\alpha_\Delta)$ joining $\rho=P_i$ to $q=Q_j$.

Since $\rho,q\in s^\Delta$ were arbitrary, this shows that $(s^\Delta,D^\alpha_\Delta)$ is a geodesic metric space.

\bigskip

\textit{Remark :} We used in this proof that, for every $T\in(0,1)$, and  for every $t\in(0,1)$, we have

\begin{gather}
G_{i,j}^T\big(\inf I_{i,j}^T \cdot (1-t)+\sup I_{i,j}^T  \cdot t\big)\in
\big\{\gamma\big(P(a_i \cdot (1-t)\notag\\+\,b_i \cdot  t),\,Q(c_j \cdot (1-t)+d_j  \cdot t)\big)(T)\big\}. 
\end{gather}

Indeed, as said previously, set $T\in(0,1)$, by Theorem~\ref{thm:measurable_selection_1}, $\forall s\in(a_i,b_i)$ we have

\begin{gather} G_{i,j}^T\big(s+\inf I_{i,j}^T-a_i\big)\notag\\\in
\big\{\gamma(P(s),\,Q(s+c_j-a_i))(T)\big\}. 
\end{gather}

Then, because,
\begin{align}\phi_P:(0,1)\to(a_i,b_i),\quad \phi_P(t)=a_i\cdot (1-t)+b_i\cdot t,\\\phi_Q:(0,1)\to(c_j,d_j),\quad \phi_Q(t)=c_j\cdot (1-t)+d_j\cdot t,
\end{align}

\noindent we have for $t\in(0,1)$, 

\begin{align*}
&G_{i,j}^T\big(s+\inf I_{i,j}^T-a_i\big)\notag\\&\in
\big\{\gamma(P(s),\,Q(s+c_j-a_i))(T)\big\}\\& \implies G_{i,j}^T\big(\phi_P(t)+\big((1-T)a_i+Tc_j\big)-a_i \big)\notag\\&\in
\big\{\gamma(P(\phi_P(t)),\,Q(\phi_P(t)+c_j-a_i))(T)\big\}\\& \implies G_{i,j}^T\big((1-T)\phi_P(t)+T\big(c_j+\phi_P(t)-a_i\big))\notag\\&\in\big\{\gamma(P(\phi_P(t)),\,Q(c_j+(b_i-a_i)t))(T)\big\}\\&\implies G_{i,j}^T\big((1-T)\phi_P(t)+T\big(c_j+(b_i-a_i)t)\big))\notag\\&\in\big\{\gamma(P(\phi_P(t)),\,Q(c_j+(b_i-a_i)t))(T)\big\}\\&\implies G_{i,j}^T\big((1-T)\phi_P(t)+T\big(c_j+(d_j-c_j)t)\big))\notag\\&\in\big\{\gamma(P(\phi_P(t)),\,Q(c_j+(d_j-c_j)t))(T)\big\}\\&\implies G_{i,j}^T\big((1-T)\phi_P(t)+T\big(c_j(1-t)+d_j t )\big))\notag\\&\in\big\{\gamma(P(\phi_P(t)),\,Q(c_j(1-t)+d_j t ))(T)\big\}\\&\implies G_{i,j}^T\big((1-T)\phi_P(t)+T\big(\phi_Q(t))\big))\notag\\&\in\big\{\gamma(P(\phi_P(t)),\,Q(\phi_Q(t)))(T)\big\}\\&\implies G_{i,j}^T\big((1-T)\big(a_i(1-t)+b_i t\big)+T\big(c_j(1-t)+d_j t\big)\big))\notag\\&\in\big\{\gamma(P(\phi_P(t)),\,Q(\phi_Q(t)))(T)\big\}\\&\implies G_{i,j}^T\big([(1-T)a_i+Tc_j](1-t)+[(1-T)b_i+Td_j]t\big))\notag\\&\in\big\{\gamma(P(\phi_P(t)),\,Q(\phi_Q(t)))(T)\big\}\\&\implies G_{i,j}^T\big(\inf I_{i,j}^T(1-t)+\sup I_{i,j}^T t\big))\notag\\&\in\big\{\gamma(P(\phi_P(t)),\,Q(\phi_Q(t)))(T)\big\}. 
\end{align*}

To make the proof more concrete, we now compute the length of $g_{i,j}$ using uniform partitions of $[0,1]$, indeed let $N\in\mathbb{N}^{*}$, then
\begin{align}
&\sum_{k=0}^{N-1} D^\alpha_\Delta\!( G_{i,j}^{k/N},\, G_{i,j}^{(k+1)/N} )
\\& \displaybreak[2]=\sum_{k=0}^{N-1} \Delta \cdot\int_0^1 (1-\alpha)\cdot d\Big( G_{i,j}^{k/N}\!\big( \inf I^{k/N}_{i,j}\cdot(1-t)\notag\\&+\sup I^{k/N}_{i,j}\cdot t\big),\, 
G_{i,j}^{(k+1)/N}\!\big( \inf I^{(k+1)/N}_{i,j}\cdot(1-t)\notag\\&+\sup I^{(k+1)/N}_{i,j}\cdot t\big) \Big)
\,dt+ \alpha\cdot |c_j-a_i|\cdot \frac{\Delta + k\cdot(\Delta-\Delta)}{N}\,
\\&= \sum_{k=0}^{N-1} \Delta \cdot \int_0^1 (1-\alpha)\cdot \frac{1}{N}\, d( P(a_i\cdot(1-t)+b_i\cdot t),\, \notag\\&Q(c_j\cdot(1-t)+d_j\cdot t) )
\,dt+ \alpha\cdot \frac{|c_j-a_i|}{N}\, 
\\&= \sum_{k=0}^{N-1} \frac{1}{N}\cdot D^\alpha_\Delta(P_i,Q_j)= D^\alpha_\Delta(P_i,Q_j).
\end{align}
Moreover, we have
\begin{align}
\lim_{N\to+\infty} D^\alpha_\Delta&\!\left( G_{i,j}^{k/N},\, G_{i,j}^{(k+1)/N} \right)
\\&= \lim_{N\to+\infty} \frac{1}{N}\cdot D^\alpha_\Delta(P_i,Q_j)=0,
\end{align}
Thus finally, with $g_{i,j}$ the map defined by,
$$
g_{i,j}:[0,1]\longrightarrow s^\Delta,\qquad T\longmapsto g_{i,j}(T)=G_{i,j}^T,
$$
we have $g_{i,j}$ continuous, and
\begin{align}
\sup \sum\,_{i=0}^{k}\,D^\alpha_\Delta\!\left( g_{i,j}(T_{i}),\, g_{i,j}(T_{i+1}) \right)= D^\alpha_\Delta(P_i,Q_j),
\end{align}
\noindent (where the supremum is taken over all $k\in \mathbb{N}^{*}$, and all sequences $T_0=0< T_1 < \dots < T_k=1$ in $[0,1])$, so $g_{i,j}$ is a geodesic.

\end{proof}

    \begin{theorem}\label{thr:final}
If $(E,d)$ is a geodesic Polish space, and $A\subsetneq E$ is closed such that, $\forall x \in E, \{y \in A ,\, d(x,y)=d(x,A)\}$ is non-empty, then \[(S^\Delta, \mathrm{CED}^\Delta_{\alpha,1}) \]

is a geodesic space.
\end{theorem}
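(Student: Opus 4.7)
The plan is to turn the explicit ``delete $\rightarrow$ substitute $\rightarrow$ insert'' construction of Section \ref{sec:CED_geodesic} into a metric geodesic, which amounts to showing that the path $\geodesicCED$ from \autoref{fig:CED_Geodesic} is continuous and has length exactly $L = \CEDM^{\paramDelta}_{\paramWeight,1}(\TVPDp,\TVPDq)$ for every pair $(\TVPDp,\TVPDq)\in (\TVPDspace^{\paramDelta})^2$. The starting point is to fix any optimal partial assignment $\partialAssignment\in\PASet^{\paramDelta}(\TVPDp,\TVPDq)$ (it exists since $\PASet^{\paramDelta}(\TVPDp,\TVPDq)$ is finite) and the associated decomposition $(\deletionSet_\TVPDp,\substitutionSet_\TVPDp,\insertionSet_\TVPDq)$ and breakpoints $\variableL_{0/3},\variableL_{1/3},\variableL_{2/3},\variableL_{3/3}=\variableL$, as in the setting paragraph of \autoref{sec:overview_geodesic}.

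Next, I would invoke \autoref{lem:delta-subdivision_geodesic_space} under the standing hypotheses: since $(E,d)$ is geodesic Polish and the selection set $\{y\in\boundarySet\mid d(x,y)=d(x,\boundarySet)\}$ is non-empty for every $x\in E$, every $\paramDelta$-subdivision admits a constant-speed $\localDistance^\paramWeight_\paramDelta$-geodesic to another subdivision and to $\boundarySet$ (Theorems~\ref{thm:measurable_selection_1} and \ref{thm:measurable_selection_2} provide the underlying measurable selections). I pick: $g^{\deletionSet}_\variablei:[0,1]\to \Subdspace^{\paramDelta}$ from $\TVPDp_\variablei$ to $\boundarySet$ for $\variablei\in\deletionSet_\TVPDp$; $g^{\substitutionSet}_\variablei:[0,1]\to \Subdspace^{\paramDelta}$ from $\TVPDp_\variablei$ to $\TVPDq_{\partialAssignment(\variablei)}$ for $\variablei\in\substitutionSet_\TVPDp$; and $g^{\insertionSet}_\variablej:[0,1]\to \Subdspace^{\paramDelta}$ from $\boundarySet$ to $\TVPDq_\variablej$ for $\variablej\in\insertionSet_\TVPDq$. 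I then define $\geodesicCED:[0,\variableL]\to \TVPDspace^{\paramDelta}$ by running all movements of a given phase simultaneously with a common progress parameter: on $[0,\variableL_{1/3}]$, replace each $\TVPDp_\variablei$ ($\variablei\in\deletionSet_\TVPDp$) by $g^{\deletionSet}_\variablei(\variablel/\variableL_{1/3})$ and leave the other subdivisions unchanged; on $[\variableL_{1/3},\variableL_{2/3}]$, replace each $\TVPDp_\variablei$ ($\variablei\in\substitutionSet_\TVPDp$) by $g^{\substitutionSet}_\variablei((\variablel-\variableL_{1/3})/\variableL_\substitutionSet)$ and treat the deleted subdivisions as collapsed onto $\boundarySet$; on $[\variableL_{2/3},\variableL]$, insert $g^{\insertionSet}_\variablej((\variablel-\variableL_{2/3})/\variableL_\insertionSet)$ on the interval $\intervalI^\TVPDq_\variablej$ for each $\variablej\in\insertionSet_\TVPDq$. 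Since $\partialAssignment$ is strictly increasing, the interpolated intervals arising in the substitution phase stay ordered and disjoint, so each $\geodesicCED(\variablel)$ is well-defined as an element of $\TVPDspace^{\paramDelta}$ (up to equivalence).

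The crux is then to establish the $1$-Lipschitz bound $\CEDM^{\paramDelta}_{\paramWeight,1}\bigl(\geodesicCED(\variables),\geodesicCED(\variablet)\bigr)\le \variablet-\variables$ for all $0\le \variables<\variablet\le \variableL$. Within a single phase, I would plug the ``identity'' partial assignment (matching each current subdivision to its counterpart at the later time) into the definition of $\CEDM^{\paramDelta}_{\paramWeight,1}$ and sum the local $\localDistance^\paramWeight_\paramDelta$-distances, each of which equals $(\variablet-\variables)/\variableL_{\text{phase}}$ times the total phase cost by the constant-speed property of $g^{\cdot}_\cdot$; summing yields exactly $\variablet-\variables$. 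Cross-phase cases then follow by one triangle inequality through the intermediate breakpoint(s). Combined with the trivial lower bound $\variableL=\CEDM^{\paramDelta}_{\paramWeight,1}(\TVPDp,\TVPDq)\le \sum \CEDM^{\paramDelta}_{\paramWeight,1}(\geodesicCED(\variableTime_\variablei),\geodesicCED(\variableTime_{\variablei+1}))$ (triangle inequality) and with the reverse bound $\sum\le \variableL$ obtained from $1$-Lipschitzness, the supremum in the metric-geodesic definition equals $\variableL$, establishing that $\geodesicCED$ is a geodesic.

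The main obstacle I anticipate is the \emph{continuity at the two breakpoints} $\variableL_{1/3}$ and $\variableL_{2/3}$, because the underlying domain of $\geodesicCED(\variablel)$ genuinely changes there: on the left of $\variableL_{1/3}$ the deletion subdivisions are still present with values in a neighborhood of $\boundarySet$, while on the right they must be regarded as removed (and similarly for insertions at $\variableL_{2/3}$). Reconciling the two sides requires working in the quotient space where sequences $\TVPDf\sim\TVPDg$ whenever $\CEDM^{\paramDelta}_{\paramWeight,1}(\TVPDf,\TVPDg)=0$, as alluded to in the text preceding \autoref{thm:measurable_selection_1}; then the deletion cost $\localDistance^\paramWeight_\paramDelta(\boundarySet,\boundarySet)=0$ shows that a $\paramDelta$-subdivision identically equal to $\boundarySet$ can be freely added or removed without changing the CED-class, which yields $\geodesicCED(\variableL_{1/3}^-)=\geodesicCED(\variableL_{1/3}^+)=\TVPDp\setminus\{\TVPDp_\variablei\mid \variablei\in\deletionSet_\TVPDp\}$ and analogously at $\variableL_{2/3}$. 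Once this quotient-level continuity is in place, the rest of the argument is a routine combination of the $1$-Lipschitz estimate with the $\localDistance^\paramWeight_\paramDelta$-geodesic lemma, and the theorem follows.
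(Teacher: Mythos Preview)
Your proposal is correct and follows essentially the same three-phase delete--substitute--insert construction as the paper's proof, invoking the same measurable-selection theorems and the same quotient $\bar{\TVPDspace}^{\paramDelta}/\!\sim$ to handle the breakpoints. The one difference worth noting is organizational: the paper verifies the length of \emph{each} phase separately by proving both an upper bound (via the identity assignment, as you do) and a matching lower bound (by computing $\CEDM^{\paramDelta}_{\paramWeight,1}$ between the phase endpoints, e.g.\ $\CEDM^{\paramDelta}_{\paramWeight,1}(\TVPDp,\TVPDp\setminus\{\TVPDp_\variablei:\variablei\in\deletionSet_\TVPDp\})=\variableL_{\deletionSet}$, via an optimality-of-$\partialAssignment$ contradiction argument). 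Your global $1$-Lipschitz bound plus the single triangle inequality $\variableL=\CEDM^{\paramDelta}_{\paramWeight,1}(\TVPDp,\TVPDq)\le\sum\CEDM^{\paramDelta}_{\paramWeight,1}(\geodesicCED(\variableTime_\variablei),\geodesicCED(\variableTime_{\variablei+1}))$ bypasses those three intermediate optimality computations entirely, which is a genuine streamlining; the price is that you do not obtain the exact value of $\CEDM^{\paramDelta}_{\paramWeight,1}$ at the breakpoints as a by-product.
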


\begin{proof}

Let $P \in S^\Delta$, $Q \in S^\Delta$, with optimal partial assignment $f \in \mathcal{A}^\Delta(P,Q)$. If $\ell \in [0, L_{1/3})$, then let $\widetilde{G}^\ell = (\widetilde{G}^\ell_i)_{1 \le i \le N_P^\Delta} \in S^\Delta$ defined by
\begin{equation}
\forall i \in \mathcal{S}_P,\quad 
\widetilde{G}^\ell_i = P_i,\text{ and  } 
\forall i \in \mathcal{D}_P,\quad 
\widetilde{G}^\ell_i = G_i^{\frac{\ell}{L_{1/3}}},
\label{eq:def_G_l}
\end{equation}

\noindent as defined in Theorem~\ref{thm:measurable_selection_2} for $P_i$, from $P_i$ to $A\,(\text{i.e., }p_i=\rho \circ P_i)$.

\medskip

If $\ell = L_{1/3}$, let
\begin{equation}
\widetilde{G}^\ell = P \setminus \{P_i,\ i \in \mathcal{D}_P\}.
\label{eq:G_l_at_L}
\end{equation}

\medskip

Let $(\ell_j)$ be a sequence in $[0, L_{1/3}]$ such that
\[
\ell_0 = 0 < \ell_1 < \cdots < \ell_K = L_{1/3}.
\]

Then,
\begin{multline}
\sum_{j=0}^{K-1} \mathrm{CED}_{\alpha,1}^\Delta\big( \widetilde{G}^{\ell_j}, \widetilde{G}^{\ell_{j+1}} \big)
\\
\le
\sum_{j=0}^{K-2} \sum_{i=0}^{N_P^\Delta}
D^\alpha_\Delta\big(\widetilde{G}_i^{\ell_j}, \widetilde{G}_i^{\ell_{j+1}}\big)
+ \sum_{i \in \mathcal{D}_P} D^\alpha_\Delta\big(\widetilde{G}_i^{\ell_{K-1}}, A\big).
\label{eq:ced_sum_bound}
\end{multline}

Moreover, 
\begin{align*}
&
\sum_{j=0}^{K-2} \sum_{i \in \mathcal{D}_P}
D^\alpha_\Delta\big(
  \widetilde{G}^{\ell_j}_i,
  \widetilde{G}^{\ell_{j+1}}_i
\big)
+ \sum_{i \in \mathcal{D}_P}
D^\alpha_\Delta\big(
  \widetilde{G}^{\ell_{K-1}}_i,
  A
\big)
\\[0.3em]
&=
\sum_{j=0}^{K-2} \sum_{i \in \mathcal{D}_P}
\frac{\ell_{j+1} - \ell_j}{L_{1/3}}
\cdot D^\alpha_\Delta(P_i, A)
+ \sum_{i \in \mathcal{D}_P}
D^\alpha_\Delta\big(
  \widetilde{G}_i^{\ell_{K-1}},
  A
\big)
\\[0.3em]
&=
\frac{1}{L_{1/3}}
\sum_{i \in \mathcal{D}_P}
D^\alpha_\Delta(P_i, A)
\sum_{j=0}^{K-2} (\ell_{j+1} - \ell_j)
\\
&\quad
+ \sum_{i \in \mathcal{D}_P}
\frac{\ell_K - \ell_{K-1}}{L_{1/3}}
\cdot D^\alpha_\Delta(P_i, A)
\\[0.3em]
&=
\frac{1}{L_{1/3}}
\sum_{i \in \mathcal{D}_P}
\Big[
D^\alpha_\Delta(P_i, A)
\big(
  \sum_{j=0}^{K-2} (\ell_{j+1} - \ell_j)
  \\&+ (\ell_K - \ell_{K-1})
\big)
\Big]
\\[0.3em]
&=
\frac{1}{L_{1/3}}
\sum_{i \in \mathcal{D}_P}
D^\alpha_\Delta(P_i, A)
\cdot L_{1/3}
\\[0.3em]
&=
\sum_{i \in \mathcal{D}_P}
D^\alpha_\Delta(P_i, A)
\end{align*}

Then,
\begin{equation}
\begin{aligned}
\sup
\sum
\mathrm{CED}_{\alpha,1}^\Delta\big(
  \widetilde{G}^{\ell_j},
  \widetilde{G}^{\ell_{j+1}}
\big)
&\leq
\sum_{i \in \mathcal{D}_P}
D^\alpha_\Delta(P_i, A)
\end{aligned}
\end{equation}

\noindent
where the supremum is taken over all
$K \in \mathbb{N}^*$, and all sequences\\
$\ell_0 = 0 < \ell_1 < \ldots < \ell_K = L_{1/3}$
in $[0, L_{1/3}]$.

\medskip

Moreover,
\begin{equation}
\begin{aligned}
\sum_{j=0}^{K-1}
\mathrm{CED}_{\alpha,1}^\Delta(
  \widetilde{G}^{\ell_j},
  \widetilde{G}^{\ell_{j+1}}
)
&\ge
\mathrm{CED}_{\alpha,1}^\Delta(
  \widetilde{G}^0,
  \widetilde{G}^{L_{1/3}}
)
\\&=
\mathrm{CED}_{\alpha,1}^\Delta(P, \widetilde{G}^{L_{1/3}})
\\&=\mathrm{CED}_{\alpha,1}^\Delta\!\left(
  P,\,
  P \setminus \{ P_i,\ i \in \mathcal{D}_P \}
\right)
\\
&=
\sum_{i \in \mathcal{D}_P} D^\alpha_\Delta(P_i, A)
\end{aligned}
\end{equation}

Indeed, if \( f \in \mathcal{A}^\Delta(P, Q) \) is optimal, then 
\[ f^* \in \mathcal{A}^\Delta\left(P, \bigl\{\TVPDq_{\partialAssignment(\variablei)}\mid \variablei\in\substitutionSet_\TVPDp\bigr\}\right) \] defined as
\begin{equation}
\begin{aligned}
f^* : \{1, \ldots, N_P^\Delta\}
&\to
\{ j \text{ such that } \exists i \in \{1,\ldots, N_P^\Delta\},\\
&\qquad j = f(i) \},
\\
i &\mapsto f(i)
\end{aligned}
\end{equation}
and it is optimal for \( (P, \bigl\{\TVPDq_{\partialAssignment(\variablei)}, \variablei\in\substitutionSet_\TVPDp\bigr\}) \). Indeed, by contradiction, if another one were optimal instead, this would induce an optimal assignment for \( (P, Q) \) with a cost smaller than \( f \), which would contradict the fact that \( f \) is optimal for \( (P, Q) \).

Moreover, since \( f^* \) is optimal for \( (P, \bigl\{\TVPDq_{\partialAssignment(\variablei)}, \variablei\in\substitutionSet_\TVPDp\bigr\}) \), we have
\[
f^{**} \in \mathcal{A}^\Delta(P, \widetilde{G}^{L_{1/3}}) \text{ defined as}
\]
\begin{equation}
\begin{aligned}
f^{**} : i \in \{1, \ldots, N_P^\Delta\}
&\to  \mathcal{S}_P,
\\
i &\mapsto i, \text{ if } i \in \mathcal{S}_P
\end{aligned}
\end{equation}
and it is optimal. Indeed, \( f^* \) is optimal for \( (P, \bigl\{\TVPDq_{\partialAssignment(\variablei)}, \variablei\in\substitutionSet_\TVPDp\bigr\}) \). Thus, for every \( i \in \mathcal{S}_P \), if we move \( Q_{f(i)} \) towards \( P_i \), then the cost of the assignment decreases and remains optimal; in the limit, we reach \( P_i \), which gives the optimal assignment \( f^{**} \) between \( P \) and \( \widetilde{G}^{L_{1/3}} \), with cost
\[
\sum_{i \in \mathcal{D}_P} D^\alpha_\Delta(P_i, A).
\]

Thus, we finally obtain
\begin{equation}
\begin{aligned}
\sup
\sum_{j=0}^{K-1}
\mathrm{CED}^{\Delta}_{\alpha,1}\big(
  \widetilde{G}^{\ell_j},
  \widetilde{G}^{\ell_{j+1}}
\big)
&=
\sum_{i \in \mathcal{D}_P}
D^\alpha_\Delta(P_i, A).
\end{aligned}
\end{equation}

Moreover, the map
\[
G_D : [0, L_{1/3}] \to \mathcal{S}^\Delta,
\qquad
\ell \mapsto \widetilde{G}^\ell
\]
is trivially uniformly continuous by construction, since
\begin{multline}
\lim_{\ell \to L}
\mathrm{CED}^{\Delta}_{\alpha,1}\big(
  \widetilde{G}^{\ell},
  \widetilde{G}^L
\big)
\le
\lim_{\ell \to L}
\sum_{i \in \mathcal{D}_P}
D^\alpha_\Delta\big(
  \widetilde{G}^{\ell}_{i},
  \widetilde{G}^L_{i}
\big)
\\
=
\lim_{\ell \to L}
\frac{1}{L_{1/3} }
\sum_{i \in \mathcal{D}_P}
|\ell - L|\,
D^\alpha_\Delta(P_i, A)
= 0.
\end{multline}

Next, if \( \ell \in [L_{1/3}, L_{2/3}] \), then let
\[
\begin{aligned}
\widetilde{G}^\ell
&=
(\widetilde{G}^\ell_i)_{1 \le i \le N_P^\Delta,\ i \in \mathcal{S}_P}
\in \mathcal{S}^\Delta
\end{aligned}
\quad \text{defined by}
\]
\[
\forall i \in \mathcal{S}_P,\quad
\widetilde{G}^\ell_i
=
G_{i,f(i)}^{
   \frac{\ell - L_{1/3}}{L_{2/3} - L_{1/3}}
}
\quad \] as defined in Theorem~\ref{thm:measurable_selection_1}, from
$P_i$ to $ Q_{f(i)}$.
\medskip

Let $(\ell_j)_{0 \le j \le K \in \mathbb{N}^*}$
be a sequence in $[L_{1/3}, L_{2/3}]$
such that
\[
\ell_0 = L_{1/3}
< \ell_1 < \cdots < \ell_K = L_{2/3}.
\]

Then,
\begin{equation}
\sum_{j=0}^{K-1}
\mathrm{CED}^{\Delta}_{\alpha,1}\big(
  \widetilde{G}^{\ell_j},
  \widetilde{G}^{\ell_{j+1}}
\big)
\le
\sum_{j=0}^{K-1}
\sum_{i \in \mathcal{S}_P}
D^\alpha_\Delta\big(
  \widetilde{G}^{\ell_j}_i,
  \widetilde{G}^{\ell_{j+1}}_i
\big).
\end{equation}

Moreover,
\begin{equation}
\begin{aligned}
&\sum_{j=0}^{K-1}
\sum_{i \in \mathcal{S}_P}
D^\alpha_\Delta\big(
  \widetilde{G}^{\ell_j}_i,
  \widetilde{G}^{\ell_{j+1}}_i
\big)
\\&=
\sum_{j=0}^{K-1}
\sum_{i \in \mathcal{S}_P}
\left(
  \frac{\ell_{j+1} - \ell_j}{L_{2/3} - L_{1/3}}
\right)
D^\alpha_\Delta(P_i, Q_{f(i)})
\\
&=
\frac{1}{L_{2/3} - L_{1/3}}
\sum_{i \in \mathcal{S}_P}
D^\alpha_\Delta(P_i, Q_{f(i)})
\sum_{j=0}^{K-1}
(\ell_{j+1} - \ell_j)
\\
&=
\frac{1}{L_{2/3} - L_{1/3}}
\sum_{i \in \mathcal{S}_P}
D^\alpha_\Delta(P_i, Q_{f(i)})
\big( L_{2/3} - L_{1/3} \big)
\\
&=
\sum_{i \in \mathcal{S}_P}
D^\alpha_\Delta(P_i, Q_{f(i)}).
\end{aligned}
\end{equation}

Thus,
\begin{equation}
\begin{aligned}
\sup
\sum_{j=0}^{K-1}
\mathrm{CED}^\Delta_{\alpha,1}\big(
  \widetilde{G}^{\ell_j},
  \widetilde{G}^{\ell_{j+1}}
\big)
&\le
\sum_{i \in \mathcal{S}_P}
D^\alpha_\Delta(P_i, Q_{f(i)}).
\end{aligned}
\end{equation}

where the supremum is taken over all $K \in \mathbb{N}^*$ and all sequences
\[
\ell_0 = L_{1/3} < \ell_1 < \cdots < \ell_K = L_{2/3}
\quad\text{in } [L_{1/3}, L_{2/3}].
\]

Moreover,
\begin{equation}
\begin{aligned}
\sum_{j=0}^{K-1}
\mathrm{CED}^\Delta_{\alpha,1}\big(
  \widetilde{G}^{\ell_j},
  \widetilde{G}^{\ell_{j+1}}
\big)
&\ge
\mathrm{CED}^\Delta_{\alpha,1}\big(
  \widetilde{G}^{L_{1/3}},
  \widetilde{G}^{L_{2/3}}
\big)
\\
&=
\mathrm{CED}^\Delta_{\alpha,1}\Big(
  P \setminus \{ P_i,\ i \in \mathcal{D}_P \},
\\[-0.3em]
&\qquad\qquad
  \{ Q_{f(i)},\ i \in \mathcal{S}_P \}
\Big)
\\
&=
\sum_{i \in \mathcal{S}_P}
D^\alpha_\Delta\big(
  P_i,
  Q_{f(i)}
\big),
\end{aligned}
\end{equation}
again by contradiction.

Then,
\begin{equation}
\begin{aligned}
\sup
\sum_{j=0}^{K-1}
\mathrm{CED}^\Delta_{\alpha,1}\big(
  \widetilde{G}^{\ell_j},
  \widetilde{G}^{\ell_{j+1}}
\big)
&=
\sum_{i \in \mathcal{S}_P}
D^\alpha_\Delta\big(
  P_i,
  Q_{f(i)}
\big).
\end{aligned}
\end{equation}

Moreover,
\[
G_S : [L_{1/3}, L_{2/3}] \rightarrow \mathcal{S}^\Delta,
\qquad
\ell \mapsto \widetilde{G}^\ell
\]
is trivially continuous by construction, since
\begin{equation}
\begin{aligned}
&\lim_{\ell \rightarrow L}
\mathrm{CED}^\Delta_{\alpha,1}\big(
  \widetilde{G}^\ell,
  \widetilde{G}^L
\big)
\\&\leq
\lim_{\ell \rightarrow L}
\sum_{i \in \mathcal{S}_P}
D^\alpha_\Delta\big(
  \widetilde{G}^{\ell}_i,
  \widetilde{G}^L_i
\big)
\\
&=
\frac{1}{L_{2/3} - L_{1/3}}
\lim_{\ell \rightarrow L}
\sum_{i \in \mathcal{S}_P}
|\ell - L|\,
D^\alpha_\Delta\big(
  P_i,
  Q_{f(i)}
\big)
\\
&= 0.
\end{aligned}
\end{equation}

Finally, if $\ell \in ( L_{2/3}, L_{3/3} ]$, then let
\[
\widetilde{G}^\ell
=
\big(
  \widetilde{G}^\ell_i
\big)_{1 \leq i \leq N_Q^\Delta}
\in \mathcal{S}^\Delta
\text{ defined by}
\]

$\forall i \notin \mathcal{I}_Q,\quad 
\widetilde{G}^\ell_i = Q_i,\text{ and  }\forall i \in \mathcal{I}_Q,\quad
\widetilde{G}^\ell_i
=
G^{\frac{{\ell} - L_{2/3}}{L_{3/3} - L_{2/3}}}_{i}
\quad 
$
\medskip

\noindent as defined in Theorem~\ref{thm:measurable_selection_2} for $Q_i$, but from $A\,(\text{i.e., }p_i=\rho \circ Q_i)$ to $Q_i$.

\medskip

Let $(\ell_j)_{0 \leq j < K \in \mathbb{N}^*}$ be a sequence in
$[L_{2/3}, L_{3/3}]$ such that
\[
\ell_0 = L_{2/3}
< \ell_1 < \cdots < \ell_K = L_{3/3}.
\]

Then,
\begin{equation}
\begin{aligned}
\sum_{j=0}^{K-1}
\mathrm{CED}^\Delta_{\alpha,1}\big(
  \widetilde{G}^{\ell_j},
  \widetilde{G}^{\ell_{j+1}}
\big)
&\leq
\sum_{j=1}^{K-1}
\sum_{i=0}^{N_Q^\Delta}
D^\alpha_\Delta\big(
  \widetilde{G}^{\ell_j}_i,
  \widetilde{G}^{\ell_{j+1}}_i
\big)
\\
&\quad+
\sum_{i \in \mathcal{I}_Q}
D^\alpha_\Delta\big(
  \widetilde{G}^{\ell_1}_i,
  A
\big).
\end{aligned}
\end{equation}

\begin{align*}
&=
\left(
  \sum_{j=1}^{K-1}
  \sum_{i \in \mathcal{I}_Q}
  D^\alpha_\Delta\big(
    \widetilde{G}^{\ell_j}_i,
    \widetilde{G}^{\ell_{j+1}}_i
  \big)
\right)
+
\sum_{i \in \mathcal{I}_Q}
  D^\alpha_\Delta\big(
    \widetilde{G}^{\ell_1}_i,
    A
  \big)
\\[0.4em]
&=
\left(
  \sum_{j=1}^{K-1}
  \sum_{i \in \mathcal{I}_Q}
  \frac{(\ell_{j+1} - \ell_j)}{L_{3/3}-L_{2/3}}\,
  D^\alpha_\Delta(Q_i, A)
\right)
\\&\qquad+
\sum_{i \in \mathcal{I}_Q}
  D^\alpha_\Delta\big(\widetilde{G}^{\ell_1}_i,A
  \big)
\\[0.4em]
&=
\left(
  \frac{1}{L_{3/3} - L_{2/3}}
  \sum_{i \in \mathcal{I}_Q}
  D^\alpha_\Delta(Q_i, A)
  \sum_{j=1}^{K-1}
  (\ell_{j+1} - \ell_j)
\right.
\\
&\qquad\left.
  + \sum_{i \in \mathcal{I}_Q}
    \frac{\ell_1 - \ell_0}{L_{3/3} - L_{2/3}}\,
    D^\alpha_\Delta(Q_i, A)
\right)
\\[0.4em]
&=
\frac{1}{L_{3/3} - L_{2/3}}
\sum_{i \in \mathcal{I}_Q}
\Big(
  D^\alpha_\Delta(Q_i, A)
  \big(
    \sum_{j=1}^{K-1}
    (\ell_{j+1} - \ell_j)
    \\&+ (\ell_1 - \ell_{0})
  \big)
\Big)
\\[0.4em]
&=
\sum_{i \in \mathcal{I}_Q} D^\alpha_\Delta(Q_i, A).
\end{align*}

Then
\begin{equation}
\sup \sum_{j=0}^{K-1}
\mathrm{CED}^\Delta_{\alpha,1}\big(
  \widetilde{G}^{\ell_j},
  \widetilde{G}^{\ell_{j+1}}
\big)
\leq
\sum_{i \in \mathcal{I}_Q}
D^\alpha_\Delta(Q_i, A).
\end{equation}

\noindent
where the supremum is taken over all
$K \in \mathbb{N}^\star$ and all sequences
$\ell_0 = L_{2/3} < \ell_1 < \ldots < \ell_K = L_{3/3}$
in $[L_{2/3}, L_{3/3}]$.

\medskip
\noindent
Moreover,
\begin{equation}
\sum_{j=0}^{K-1}
\mathrm{CED}^\Delta_{\alpha,1}\big(
  \widetilde{G}^{\ell_j},
  \widetilde{G}^{\ell_{j+1}}
\big)
\geq
\mathrm{CED}^\Delta_{\alpha,1}\big(
  \widetilde{G}^{L_{2/3}},
  \widetilde{G}^{L_{3/3}}
\big).
\end{equation}

\begin{align*}
&=
\mathrm{CED}^\Delta_{\alpha,1}\big(
  \widetilde{G}^{L_{2/3}},
  Q
\big)
\\
&=
\mathrm{CED}^\Delta_{\alpha,1}\big(
  \{ Q_{f(i)} \mid i \in \mathcal{S}_P \},
  Q
\big)
\\
&=
\sum_{i \in \mathcal{I}_Q} D^\alpha_\Delta(Q_i, A)
\end{align*}

\noindent
(by the same argument as previously).

\medskip
\noindent
We thus obtain
\begin{equation}
\sum_{j=0}^{K-1}
\mathrm{CED}^\Delta_{\alpha,1}\big(
  \widetilde{G}^{\ell_j},
  \widetilde{G}^{\ell_{j+1}}
\big)
=
\sum_{i \in \mathcal{I}_Q} D^\alpha_\Delta(Q_i, A).
\end{equation}

\noindent
Moreover,
$G_I : [L_{2/3}, L_{3/3}] \to \mathcal{S}^\Delta$,
$\ell \mapsto \widetilde{G}^\ell$ is continuous, again by the same argument as previously.

\medskip
\noindent
By concatenating the three continuous paths $G_D$, $G_S$, and $G_I$,
we obtain a continuous path $G$ of length
$\mathrm{CED}^\Delta_{\alpha,1}(P, Q)$
such that $G(0) = P$ and $G(L_{3/3}) = Q$.

    \end{proof}

	\section{Data specification}

    This section catalogues the ensemble datasets used in our study. For each dataset, we report its provenance, the format in which we handle it, any preprocessing applied, and the associated ground-truth classification. All ensembles were obtained from publicly available sources.

    To streamline reproducibility, we supply scripts that automatically (i) retrieve the data, (ii) run the TTK pipeline for preprocessing, and (iii) export standardized VTK files embedding the ground-truth labels as VTK “Field Data.” For convenience, we also publish a ready-to-use archive containing the curated ensembles in VTK format. All scripts and curated data are available at: \href{https://github.com/sebastien-tchitchek/ContinuousEditDistance}{https://github.com/sebastien-tchitchek/ContinuousEditDistance}. In addition, the code package ships with the full sets of TVPDs computed from these inputs.

    \subsection{Asteroid impact}

    Asteroid Impact (SciVis Contest 2018) comprises six timed PL-scalar fields—YA11, YB11, YC11, YA31, YB31, YC31 sequences—totaling approximately 300 GB. The raw data are available at: \url{https://oceans11.lanl.gov/deepwaterimpact/}. Each member simulates either a direct ocean impact or an atmospheric airburst whose blast wave interacts with the sea surface. Two asteroid diameters are explored (first digit 1 → 100 m; 3 → 250 m) and three impact/airburst altitudes A: sea-surface impact; B: explosion at 5 km; C explosion at 10 km). We analyze the matter-density scalar field, which clearly separates asteroid, water, and ambient air. This ensemble is a parameter study; here we examine how the asteroid’s size affects the resulting wave. In our pipeline, salient maxima capture effectively the asteroid and large water splashes; thus, each member is represented as a time series of persistence diagrams of maxima. The ground-truth labels group members by asteroid diameter, so the classification task is to assign each series to its correct diameter class. The ground-truth classification is as follows:

    \begin{itemize}
    \item \textbf{Class 1} (3 members): yA11, yB11, yC11
    \item \textbf{Class 2} (3 members): yA31, yB31, yC31
    \end{itemize}

    \subsection{Sea surface height}

    This ensemble contains 48 members provided as 2D regular grids at 1440×720 resolution. Each member is a global sea-surface height observation acquired in January, April, July, and October 2012. The raw data can be found at the following address: \url{https://apdrc.soest.hawaii.edu/erddap/griddap/hawaii_soest_90b3_314d_ab45.html}. The ensemble has been used in prior work \cite{favelier2018persistenceatlascriticalpoint,vidal2019progressive, pont2021wasserstein}, and corresponding VTK files and persistence diagrams are available at \url{https://github.com/julesvidal/wasserstein-pd-barycenter}. The features of interest are ocean eddy centers, which are reliably captured by height extrema. Accordingly, each observation can be represented by a persistence-diagram representation of the ssh field. The ground truth groups observations by month—four classes (January, April, July, October) representative of seasons—so the task is to identify, for any given observation, its correct month/season. Concretely in our TVPD setting, we consider the four time series of 12 observations/diagrams (one series per month). Next, each monthly series is further split by taking alternating time stamps (even/odd indices), yielding eight sub-series of six observations each; these naturally cluster into four groups—one per month—each grouping the two sub-series derived from the same month. The ground-truth labels are:

    \begin{itemize}
    \item \textbf{Class 1} (2 members): 201201-even, 201201-odd
    \item \textbf{Class 2} (2 members): 201204-even, 201201-odd
    \item \textbf{Class 3} (2 members): 201207-even, 201201-odd
    \item \textbf{Class 4} (2 members): 201210-even, 201201-odd
    \end{itemize}

    \subsection{VESTEC}

    In VESTEC \cite{flatken2023vestec}, space weather—understood here as the collection of physical phenomena in the solar system, particularly near Earth, with emphasis on magnetic and radiative effects—is investigated through the analysis of magnetic reconnection events in the magnetosphere. Given the complexity of these processes, only standard theoric simulations are currently performed. Accordingly, an ensemble of 3D magnetic-reconnection simulations was generated under varied initial conditions, with the project’s goal being to build a simulation-and-analysis pipeline for decision support during catastrophic events. In previous work \cite{flatken2023vestec}, four simulation runs were executed on the same 3D domain (128 × 64 × 64) for 2,500 time steps, with distinct input parameter sets (variations of the initial magnetic field and particle types present in the domain). The code used for these simulations is available at: \url{https://github.com/KTH-HPC/iPIC3D}. For every time step in each run, the persistence diagram of the magnetic-field magnitude were computed, yielding a corpus of 10,000 diagrams archived in a Cinema database. In our TVPDs application context, we embedded these four simulations into three dimensions using multidimensional scaling (MDS); within each run, we then selected a subsequence so that the runs separated clearly into two clusters under the MDS embedding. Specifically, we retained similar subsequences for simulations 1–2 and for simulations 3–4, ensuring that these two groups were well separated. The ground-truth classification is:

    \begin{itemize}
    \item \textbf{Class 1} (2 members): VESTEC1, VESTEC2
    \item \textbf{Class 2} (3 members): VESTEC3, VESTEC4
    \end{itemize}

    \section{Parameter settings}

    \noindent
    This appendix details the parameter settings used in the experiments reported in the main text.
    Beyond these specific choices, the discussion is intended as a practical guideline for selecting
    \(\paramWeight\), \(\paramDelta\), \(\eta\) and \(\beta\) when
    applying CED to other TVPD datasets.

    \sebastien{\textbf{Parameter settings.} In all experiments, we fix \(\paramWeight < 10^{-4}\). This choice is motivated by the fact that, across all datasets considered, temporal distances between time samples in the TVPDs to be compared are several orders of magnitude larger than the \(\wasserstein_2\) distances between the persistence diagrams composing these TVPDs. Setting \(\paramWeight\) to such a small value lets the Wasserstein term
dominate the metric, so that \(\CEDM^{\paramDelta}_{\paramWeight,\paramPenalty}\) essentially compares TVPDs through their topological content while still respecting their temporal
structure. The same value of \(\alpha\) was enforced for the other dissimilarity measures
(L2, Fréchet, TWED, DTW) in our MDS-based clustering experiment.}

\sebastien{As discussed
\julien{in the main manuscript} (see \sebastienBis{\autoref{sec:global_distance} and \autoref{sec:pc_TVPD}}), from a theoretical standpoint the
most relevant choice is to take $\paramDelta$ as small as possible, in order to
increase the resolution of the assignments, and to set $\eta = \paramDelta$,
which simplifies the computations. However, in practice this has to be balanced
against reasonable
running times. \sebastienBis{A simple practical strategy is to fix a target average number of $\paramDelta$-subdivisions per input TVPD in the sample (e.g., 100), and then choose $\paramDelta$ so that the resulting TVPDs have approximately this average number of subdivisions.} \sebastienBis{In our experiments, we adopt a more data-adaptive protocol in which the number of
$\paramDelta$-subdivisions, and hence the values of $\paramDelta$ and $\eta$, adapts to the input data in order to obtain more accurate numerical
approximations. For these experiments, the protocol is as follows:} for each dataset in our
clustering study, we choose a common approximation tolerance \(\scalarVariableOne>0\). We then select a subdivision step \(\paramDelta\) satisfying \(
  \paramDelta < \min_{\PDSeq} \frac{\scalarVariableOne}
           {(1-\paramWeight)\,\variableKLips_{\PDSeq}\,
    (\variableTime_{\variableN_\PDSeq}-\variableTime_0)}\), (recall that $\eta\leq\paramDelta$ by construction) where the minimum is taken over all TVPDs $\PDSeq$ in the dataset \sebastienBis{(see \autoref{sec:pc_TVPD} of the main manuscript for the definitions of \(\variableKLips_{\PDSeq}\) and \(\variableTime_{\variableN_\PDSeq}\))}, and we set \(\eta = \paramDelta\). This guarantees that, for all input TVPDs in the dataset, the corresponding piecewise-constant approximation incurs a \(\CEDM^{\paramDelta}_{\paramWeight,\paramPenalty}\) error on the order of \(\scalarVariableOne\). To select the tolerance \(\scalarVariableOne\) for a dataset, we first fixed a moderately small subdivision step \(\paramDelta\) (and set \(\eta = \paramDelta\), as discussed above) and computed all the pairwise \(\CEDM^{\paramDelta}_{\paramWeight,\paramPenalty}\) distances between the input TVPDs in the sample. We then inspected the typical scale of these distances and chose \(\scalarVariableOne\) as a small fraction of that scale (e.g., significantly smaller than the typical pairwise distance, typically on the order of one fiftieth to one hundredth of that scale in practice, depending on the available computational resources), so that the error introduced by the piecewise-constant approximation remained small relative to the scale of the \(\CEDM^{\paramDelta}_{\paramWeight,\paramPenalty}\) distances within the dataset. This procedure yields \(\paramDelta = 0.1\) for sea-surface height, \(\paramDelta = 0.25\) for VESTEC, and \(\paramDelta = 1000\) for asteroid impact.}

\sebastien{For the tracking experiments, which were all conducted on the asteroid impact dataset, the subdivision step \(\paramDelta\) (and hence \(\eta\) as well, with
\(\eta = \paramDelta\)) was further reduced in order to increase the resolution of the optimal assignments and to more precisely highlight the accuracy of pattern synchronization achieved by \CEDPP, whether in the temporal-shift–recovery (\(\paramDelta=250\)) or pattern-search setting (\(\paramDelta=500\)).}

\sebastien{Finally, we fix \(\beta = 1\) by default in all experiments, consistently with our geodesic construction, so that all reported results rely on the same CED geometry and are directly comparable.}

	\end{document}